\newtheorem{thm}{Theorem}[section]
\newtheorem{lem}[thm]{Lemma}
\newtheorem{cor}[thm]{Corollary}
\newtheorem{prop}[thm]{Proposition}
\theoremstyle{definition}
\newtheorem{defi}[thm]{Definition}
\newtheorem{question}[thm]{Question}
\newtheorem{example}[thm]{Example}
\theoremstyle{remark}
\newtheorem{rmk}[thm]{Remark}
\newtheorem{problem}[thm]{Problem}
\newcommand{\Hil}{\mathcal{H}}
\newcommand{\slot}{{~\cdot~}}
\newcommand{\Spin}{\mathop{\mathsf{Spin}}}
\newcommand{\SO}{\mathop{\mathsf{SO}}}
\newcommand{\SU}{\mathop{\mathsf{SU}}}
\newcommand{\grE}{\mathop{\mathsf{E}}}
\newcommand{\grG}{\mathop{\mathsf{G}}}
\newcommand{\grF}{\mathop{\mathsf{F}}}
\newcommand{\PSU}{\mathop{\mathsf{PSU}}}
\newcommand{\U}{{\mathsf{U}}}
\newcommand{\Mob}{\mathsf{M\ddot ob}}
\newcommand{\Sc}[1][]{\mathbb{S}^{1#1}}
\newcommand{\cF}{\mathcal{F}}
\newcommand{\cO}{\mathcal{O}}
\newcommand{\cB}{\mathcal{B}}
\newcommand{\cE}{\mathcal{E}}
\newcommand{\cD}{\mathcal{D}}
\newcommand{\cC}{\mathcal{C}}
\newcommand{\cK}{\mathcal{K}}
\newcommand{\A}{\mathcal{A}}
\newcommand{\cI}{\mathcal{I}}
\newcommand{\M}{\mathcal{M}}
\newcommand{\cM}{\mathcal{M}}
\newcommand{\N}{\mathcal{N}}
\newcommand{\RR}{\mathbb{R}}
\newcommand{\TT}{\mathbb{T}}
\newcommand{\CC}{\mathbb{C}}
\newcommand{\KK}{\mathbb{K}}
\newcommand{\ZZ}{\mathbb{Z}}
\newcommand{\NN}{\mathbb{N}}
\DeclareMathOperator{\Diff}{Diff}
\DeclareMathOperator{\End}{End}
\DeclareMathOperator{\Rep}{Rep}
\DeclareMathOperator{\Hom}{Hom}
\DeclareMathOperator{\Vir}{Vir}
\DeclareMathOperator{\Ad}{Ad}
\DeclareMathOperator{\im}{Im}
\DeclareMathOperator{\id}{id}
\DeclareMathOperator{\tr}{tr}
\DeclareMathOperator{\B}{B}
\DeclareMathOperator{\supp}{supp}
\DeclareMathOperator{\Aut}{Aut}
\newcommand{\punkt}{\,\mathrm{.}}
\newcommand{\e}{\mathrm{e}}
\newcommand{\ima}{\mathrm{i}}
\renewcommand{\Im}{\im}
\newcommand{\op}{\mathrm{op}}
\DeclareRobustCommand{\eg}{e.g.\@\xspace}
\DeclareRobustCommand{\cf}{cf.\@\xspace}
\DeclareRobustCommand{\ie}{i.e.\@\xspace}
\DeclareRobustCommand{\etc}{%
    \@ifnextchar{.}%
        {etc}%
        {etc.\@\xspace}%
}
\newcommand{\threeone}{III$_1$\@\xspace}
\renewcommand{\H}{\Hil}
\DeclareMathOperator*{\Dim}{Dim}
\DeclareMathOperator*{\Out}{Out}
\DeclareMathOperator*{\QAut}{qAut}
\DeclareMathOperator*{\Vect}{Vect}
\DeclareMathOperator*{\Maps}{Maps}
\DeclareMathOperator*{\UCP}{UCP}
\DeclareMathOperator*{\FPdim}{FPdim}
\DeclareMathOperator{\Irr}{Irr}
\DeclareMathOperator*{\Stoch}{Stoch}
\DeclareMathOperator*{\Conv}{Conv}
\newcommand{\aad}{\mathrm{ad}}
\newcommand{\Hg}{\mathrm{Hg}}
\newcommand{\cG}{\mathcal{G}}
\newcommand{\cV}{\mathcal{V}}
\newcommand{\bim}[4][]{{}\prescript{\vphantom{#1}}{#2}{#3}^{#1}_{#4}}
\newcommand{\LR}{\mathrm{LR}}
\renewcommand{\N}{N}
\renewcommand{\M}{M}
\newcommand{\rev}[1]{{#1}^\mathrm{rev}}
\newcommand{\CS}{/\!/}
\newcommand\myarrow{to}
\renewcommand\myarrow{to reversed}
\newcommand{\tikzmath}[2][0.50]
{\vcenter{\hbox{\begin{tikzpicture}[scale=#1] #2\end{tikzpicture}}}
}
\newcommand{\tikzmatht}[2][0.30]
{\vcenter{\hbox{\begin{tikzpicture}[scale=#1]#2
				 \end{tikzpicture}}}
}
\newcommand{\colM}{black!20}
\newcommand{\colN}{black!10}
\newcommand{\colP}{black!30}
\newcommand{\mydot}[1]{\begin{scope}[shift={#1}] \fill[shift only] (0,0) circle (1.5pt); \end{scope}}
\begin{document}
\date{\today}
\dateposted{\today}
\newcommand{\mytitle}{%
Generalized Orbifold Construction for Conformal Nets%
}
\title{\mytitle}
\curraddr{Vanderbilt University, Department of Mathematics, 1326 Stevenson
Center, Nashville, TN 37240, USA}
\author{Marcel Bischoff}
\email{marcel.bischoff@vanderbilt.edu}
\thanks{Supported by NSF Grant DMS-1362138}
\dedicatory{Dedicated to the memory of John E.\ Roberts}
\begin{abstract}
Let $\mathcal{B}$ be a conformal net. 
We give the notion of a proper action of a finite hypergroup acting by vacuum
preserving unital completely positive (so-called stochastic) maps, which
generalizes the proper actions of finite groups.  Taking fixed points under
such an action gives a finite index subnet $\mathcal{B}^K$ of $\mathcal{B}$,
which generalizes the $G$-orbifold.  Conversely, we show that if
$\mathcal{A}\subset \mathcal{B}$ is a finite inclusion of conformal nets, then
$\mathcal{A}$ is a generalized orbifold $\mathcal{A}=\mathcal{B}^K$ of the
conformal net $\mathcal{B}$ by a unique finite hypergroup $K$.  There is a
Galois correspondence between intermediate nets $\mathcal{B}^K\subset
\mathcal{A} \subset \mathcal{B}$ and subhypergroups $L\subset K$ given by
$\mathcal{A}=\mathcal{B}^L$. In this case, the fixed point of
$\mathcal{B}^K\subset \mathcal{A}$ is the generalized orbifold by the
hypergroup of double cosets $L\backslash K/ L$.

If $\mathcal{A}\subset \mathcal{B}$ is an finite index inclusion of completely
rational nets, we show that the inclusion $\mathcal{A}(I)\subset
\mathcal{B}(I)$ is conjugate to a Longo--Rehren inclusion. This implies that if
$\mathcal{B}$ is a holomorphic net, and $K$ acts properly on $\mathcal{B}$,
then there is a unitary fusion category $\mathcal{F}$ which is a
categorification of $K$ and $\mathrm{Rep}(\mathcal{B}^K)$ is braided equivalent
to the Drinfel'd center $Z(\mathcal{F})$.  More generally, if $\mathcal{B}$ is
completely rational conformal net and $K$ acts properly on $\mathcal{B}$, then
there is a unitary fusion category $\mathcal{F}$ extending
$\mathrm{Rep}(\mathcal{B})$, such that $K$ is given by the double cosets of the
fusion ring of $\mathcal{F}$ by the Verlinde fusion ring of $\mathcal{B}$ and
$\mathrm{Rep}(\mathcal{B}^K)$ is braided equivalent to the Müger centralizer of
$\mathrm{Rep}(\mathcal{B})$ in the Drinfel'd center $Z(\mathcal{F})$.
\end{abstract}

\maketitle

\tableofcontents

\section{Introduction}
For (rational) chiral conformal theory, there are two main axiomatizations: 
conformal nets and vertex operator algebras.
In both frameworks there is a notion of finite conformal inclusions, (finite) 
extensions and subtheories.
Both settings have a form of rationality in which the representation categories 
are modular tensor categories.
In this case chiral extensions and their representation theory is well
understood through 
    commutative algebra objects (called Q-systems for nets) 
in the representation category and 
dyslexic modules (called ambichiral sectors for nets), respectively, see 
    \cite{KiOs2002,HuKiLe2014} for VOAs and \cite{LoRe1995,Mg2010,BiKaLo2014}
    for conformal nets.
A model independent understanding of subtheories exists only in the case of
fixed points with respect to
a finite group $G$, so-called {$G$-orbifolds}, see \cite{DoMa1997} for
VOAs and \cite{Xu2000-2} for conformal nets. 
Nevertheless, the structure is already very interesting in this setting. 
It leads to the theory of twisted representation and $G$-crossed braided
tensor categories \cite{Mg2005}.
The present paper tries to fill the gap by introducing a model independent
theory of more general fixed points.

Exotic subfactors and fusion categories lead to new modular tensor categories
via the quantum double construction and there is some indication that these
(maybe all) are realized by finite index subnets of holomorphic nets (=
conformal nets with trivial representation category) \cite{Bi2015}. The first
idea is to look into finite index subnets of already constructed conformal nets.
E.g.\ Evans and Gannon give indication that there should be a subtheory of the
chiral theory associated with the $A_2\times E_6$ lattice (which embeds into
the holomorphic $E_8$ theory) which should give the double of the Haagerup
subfactor as a representation category.
We mention that the study of conformal inclusions/embeddings \cite{ScWa1986}
which were studied in the framework of conformal nets in \cite{Xu1998,Xu1998-2},
gives many examples of finite index subnets.
But given a conformal net $\cB$ a general theory and characterization of
finite index subnets $\A\subset \cB$ has not been established.
Related to this, Evans and Gannon \cite{EvGa2011} asked if one can orbifold a
holomorphic net by something more general than a group.

The goal of this paper is to define a generalized notion of an orbifold, which
cover all finite conformal inclusions. 
This should be a generalizing of the fixed point by  a finite group, a so-called
$G$-orbifold.
Such a $G$-orbifold is given by automorphisms
$\{\alpha_g\in \Aut(\A)\}_{g\in G}$ of vacuum preserving automorphisms of 
the net. 

In our approach groups are generalized to hypergroups and 
actions by vacuum preserving automorphisms to actions by stochastic maps. 

Stochastic maps are unital completely positive maps preserving a state and 
arise in the study of non-commutative probability spaces. 
A non-commutative probability space is a pair $(M,\varphi)$ of a von Neumann
algebra $M$ and a faithful normal state $\varphi$.
In particular, every local algebra $\A(I)$ together with its vacuum state 
$\varphi=(\Omega,\slot\Omega)$ is a non-commutative probability space. 

We remark that a subfactor $\A(I)\subset \cB(I)$ itself can be seen as a 
generalization of a group fixed point, but given a net $\A(I)$ there is no 
indication to see when a subfactor $N\subset \A(I)$ comes from a subnet. 
Further, we point out that a phenomenon of decategorification occurs. 
This already occurs in the case when we have a $G$-orbifold of a holomorphic 
net. 
Namely, we get a class $[\omega]\in H^3(G,\TT)$, which is exactly 
the data which gives a categorification of $G$ as a unitary fusion category.
But the action of $G$ itself does not involve $[\omega]\in H^3(G,\TT)$.
In general, in the holomorphic case we show that we get a hypergroup acting 
and that we get a categorification in terms of a unitary fusion category. 

We point out that there is a proposal to use defects to study generalized 
orbifolds, \cite{FrFuRuSc2010}. 
But there the point is that if we have a chiral theory $\A\subset \cB$ we can
get an associated full conformal field theory by knowing the correlators of \eg 
the Cardy case full conformal field theory of $\A$. 
In particular, the knowledge of $\A$ is already assumed. 
Defects can be defined for conformal nets on the line as in \cite{BiKaLoRe2014}
and there is a connection between the action of the stochastic map and the 
physical behaviour of the defect. Namely, let $\A\subset \cB$ and consider all 
$\A$-topological $\cB$-$\cB$ defects. Then $\A$ can be characterized to be the 
maximal subnet of $\cB$, which is invisible for all $\cB$-$\cB$ defects. 
By identifying the left and right copy of $\cB$ we get an action on the 
observables of $\cB$ as in \cite[Sec 5.4]{BiKaLoRe2014}. 
Our approach to generalized orbifolds presented in this paper are based on this
observation. We will present the relationship between generalized orbifolds 
presented here and phase boundaries in \cite{BiKaLoRe2014} in a future 
publication.

\subsection*{The main results}
We introduce the following subfamilies of conformal nets:
$$
\{\text{holomorphic nets}\}
\subset
\{\text{quantum double nets}\}
\subset
\{\text{completely rational nets}\}
\subset
\{\text{conformal nets}\}
$$
By conformal net $\A$ we mean a M\"obius covariant local and irreducible net on
the circle $\Sc$. 
It is completely rational, if it has finite $\mu$-index, is strongly additive,
and fulfills the split property.
In this case, $\Rep(\A)$ is a unitary modular tensor category \cite{KaLoMg2001}.
If it happens to be the trivial category (which is equivalent with the
$\mu$-index $\mu(\A$) begin equal to $1$), then we call $\A$ a holomorphic net.
If $\A\subset \cB$ a finite index inclusion ($[\cB:\A]<\infty$) and $\cB$
holomorphic, we call $\A$ a quantum double net. 
This property is equivalent to $\Rep(\A)$ being braided equivalent to the
unitary Drinfel'd center also called quantum double $Z(\cF)$ \cite{Mg2003II} of
a unitary fusion category $\cF$ (see \eg \cite{Mg2010,Bi2015}).

\begin{prop}[see Corollary \ref{cor:NetNoMultiplicities}] 
  \label{prop:IntroNoMultiplicities}
  Let $\A \subset \cB$ be a finite index subnet, then the canonical endomorphism
  $\gamma\in\End(\cB(I))$ of the inclusion $\A(I)\subset \cB(I)$ has no
  multiplicities, \ie $\Hom(\gamma,\gamma)=\gamma(\cB(I))'\cap \cB(I)$ is a
  commutative algebra.
\end{prop}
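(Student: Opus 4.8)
The plan is to compute the self-intertwiner algebra $\End(\gamma)=\gamma(\cB(I))'\cap\cB(I)$ through the Q-system of the inclusion and to trace its commutativity back to the \emph{locality} of $\cB$. Write $\iota\colon\A(I)\hookrightarrow\cB(I)$ for the inclusion and $\bar\iota$ for a conjugate, so that $\gamma=\iota\bar\iota$ and the dual canonical endomorphism $\theta=\bar\iota\iota\in\End(\A(I))$ is a DHR endomorphism of $\A$ localized in $I$. I would encode the finite inclusion by the Q-system (Frobenius algebra) $\theta$ in the braided C*-tensor category $\Rep(\A)$, and then use the standard correspondence between $\cB(I)$-$\cB(I)$ sectors and $\theta$-$\theta$-bimodules in $\Rep(\A)$. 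Under this correspondence the canonical endomorphism $\gamma=\iota\,\mathbf 1\,\bar\iota$ is precisely the induction of the vacuum sector, i.e.\ the \emph{free} bimodule on the tensor unit, whose underlying object is $\theta\otimes\theta$.

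By the free--forgetful adjunction for bimodules, composition of bimodule endomorphisms of this free object translates into the convolution product on the space of invariants, yielding a unital algebra isomorphism
\[
\End(\gamma)\;\cong\;\Hom_{\Rep(\A)}(\mathbf 1,\theta\otimes\theta),
\]
where the right-hand side carries the product $\eta_1*\eta_2=m\circ(\eta_1\otimes\eta_2)$ induced by the multiplication $m$ of the braided tensor-square algebra $\theta\otimes\theta$. (It is worth stressing that this is \emph{not} the algebra $\End_{\Rep(\A)}(\theta)$, which has the same dimension but need not be commutative; only the bimodule endomorphisms of the free object land on the commutative side.) It then suffices to show that $\theta\otimes\theta$ is a commutative algebra and that invariants of a commutative algebra form a commutative algebra; the latter is immediate, since for $\eta_1,\eta_2\colon\mathbf 1\to\theta\otimes\theta$ the braiding is trivial on the unit, so commutativity of $m$ gives $\eta_1*\eta_2=m\circ\epsilon\circ(\eta_1\otimes\eta_2)=\eta_2*\eta_1$.

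It therefore remains to establish commutativity of $\theta$ itself, and this is exactly where the conformal-net hypotheses enter: because $\cB$ is a \emph{local} extension of $\A$, the Q-system $\theta$ is commutative with respect to the DHR braiding $\epsilon$ of $\Rep(\A)$ (its multiplication is invariant under $\epsilon_{\theta,\theta}$), and the braided tensor square of a commutative algebra is again commutative. I expect the main point to be this step—extracting commutativity of the Q-system from locality of the larger net, together with the clean identification of $\gamma$ with the free bimodule on the vacuum so that $\End(\gamma)$ becomes the invariant (convolution) algebra rather than $\End_{\Rep(\A)}(\theta)$. Once that identification and commutativity are in place, commutativity of $\End(\gamma)$, hence multiplicity-freeness of $\gamma$, follows formally; as a sanity check, in the $G$-orbifold case $\theta$ corresponds to the commutative algebra of functions on $G$ and $\Hom(\mathbf 1,\theta\otimes\theta)\cong\mathrm{Fun}(G)$, recovering the expected commutative relative commutant even for nonabelian $G$.
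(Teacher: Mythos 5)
Your overall architecture is the same as the paper's: locality of $\cB$ makes the dual canonical Q-system $\Theta=(\theta,w,x)$ commutative (this is \cite[4.4.\ Corollary]{LoRe1995}, exactly how the proof of Corollary \ref{cor:NetNoMultiplicities} begins), and commutativity of $\End(\gamma)$ is then to be extracted from a convolution-type product built out of $\theta$, as in Proposition \ref{prop:NoMultiplicities}. The fatal step is your claim (b): in a genuinely braided (non-symmetric) category, the braided tensor square $\theta\otimes\theta$ of a commutative algebra is \emph{not} commutative in general. Test the commutativity condition for $\theta\otimes\theta$ on the summands $(\rho\otimes\id)\otimes(\id\otimes\rho')$ of $(\theta\otimes\theta)\otimes(\theta\otimes\theta)$, with $\rho,\rho'\prec\theta$ irreducible: the multiplications degenerate to unit laws, and what remains is precisely the requirement that the monodromy $\varepsilon(\rho',\rho)\varepsilon(\rho,\rho')$ be the identity on $\rho\otimes\rho'$. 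Commutativity of $\theta$ only provides the single-braiding identity $x=\varepsilon(\theta,\theta)x$, which can never cancel this double braiding. The paper's own Table \ref{table:TwoElements} supplies a counterexample: the commutative $E_6$ Q-system $\theta=\id\oplus\rho_6$ in $\Rep(\A_{\SU(2)_{10}})$ (giving $\A_{\SU(2)_{10}}\subset\A_{\Spin(5)_1}$; here $\rho_j$ is the spin-$j/2$ sector) has nontrivial self-monodromy: on the subobject $\rho_2\prec\rho_6\rho_6$ the monodromy acts by the scalar $\e^{2\pi\ima/6}\neq 1$, since $\rho_2$ has conformal weight $1/6$ while $\rho_6$ has weight $1$. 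So $\theta\otimes\theta$ is not commutative, your step (b) is false, and your step (c) (invariants of a commutative algebra form a commutative convolution algebra, which is correct) has no hypothesis to feed on.

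What is true, and what the paper actually proves, is the strictly weaker statement that the convolution product is commutative on the relevant Hom-space, and this uses commutativity \emph{together with} cocommutativity and naturality, not commutativity of the ambient algebra $\theta\otimes\theta$. Concretely, identify $\End(\gamma)\cong\Hom(\theta,\theta)$ and endow the latter with the convolution $s\ast t=x^\ast(s\otimes t)x$; from $\varepsilon(\theta,\theta)x=x$ and unitarity of the braiding one gets $x^\ast\varepsilon(\theta,\theta)^{-1}=x^\ast$, so naturality, $(t\otimes s)=\varepsilon(\theta,\theta)^{-1}(s\otimes t)\varepsilon(\theta,\theta)$, yields $t\ast s=x^\ast\varepsilon(\theta,\theta)^{-1}(s\otimes t)\varepsilon(\theta,\theta)x=x^\ast(s\otimes t)x=s\ast t$. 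Be aware also that your step (a) is not the formality you present it as: the free--forgetful adjunction identifies $\End(\gamma)$ with $\Hom(\id,\theta\otimes\theta)$ carrying the \emph{nested} product $(\eta_1,\eta_2)\mapsto(x^\ast\otimes x^\ast)\circ(1_\theta\otimes\eta_1\otimes 1_\theta)\circ\eta_2$, in which no braiding occurs at all; rewriting this as a convolution of any kind is exactly where $x=\varepsilon(\theta,\theta)x$ and naturality must be inserted, and that manipulation is the entire content of the paper's displayed computation in Proposition \ref{prop:NoMultiplicities} (the rigorous form of the identity $\cF(ab)=\cF(a)\ast\cF(b)=\cF(b)\ast\cF(a)=\cF(ba)$ it invokes). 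So your skeleton is repairable, but as written the real work is hidden inside (a), and the justification (b) that was supposed to replace it is wrong.
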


One might wonder if in this case $\A(I)\subset\cB(I)$ could be seen as the fixed
points by an outer action of a Hopf algebra. 
This would mean that the inclusion $\A(I)\subset\cB(I)$ has depth 2. 
This is indeed the case if and only if it is a group fixed point.
\begin{cor}%
  If $\A(I)\subset \cB(I)$ is finite index and depth 2, or equivalently fixed
  point by a Kac algebra, then it is a group fixed point.
\end{cor}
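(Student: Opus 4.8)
The plan is to take the stated equivalence at face value and reduce the claim to a single computation with the canonical endomorphism. By hypothesis the inclusion $\A(I)\subset\cB(I)$ is finite index and depth $2$, equivalently it is the fixed point $\A(I)=\cB(I)^{H}$ under an outer action of a finite-dimensional Kac algebra $H$; writing $N=\A(I)$ and $M=\cB(I)$, my goal is to show that necessarily $H\cong\CC[G]$ for a finite group $G$, so that $N=M^{\CC[G]}=M^{G}$ is a genuine group fixed point. The only extra input I would use from earlier in the paper is \rprop{IntroNoMultiplicities}, which states that the canonical endomorphism $\gamma\in\End(M)$ is multiplicity-free, i.e.\ $\gamma\cong\bigoplus_{j}\rho_{j}$ is a sum of \emph{distinct} irreducible $M$-$M$ sectors $\rho_{j}$, each occurring once.

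The key structural fact I would invoke is that for a Kac algebra fixed point the $M$-$M$ sectors generated by $\gamma$ form a fusion category $\cC\simeq\Rep(\hat H)$, the representation category of the dual Kac algebra, inside which $\gamma$ is precisely the \emph{regular} $Q$-system. Concretely this means the multiplicity of each irreducible sector $\rho_{j}$ in $\gamma$ equals its statistical dimension, so that
\[
  \gamma\;\cong\;\bigoplus_{j}(\dim\rho_{j})\,\rho_{j}
  \qquad\text{and}\qquad
  [M:N]=\dim\gamma=\sum_{j}(\dim\rho_{j})^{2}=\dim\hat H .
\]
I would justify this either directly from $M=N\rtimes\hat H$, so that $\gamma$ is the regular/function algebra of $\hat H$, or by the sanity check against the two extreme cases: for $N=M^{G}$ one gets $\cC\simeq\Vect_{G}$ with $\gamma\cong\bigoplus_{g}\alpha_{g}$, while for $N=M^{C(G)}$ one gets $\cC\simeq\Rep(G)$ with $\gamma\cong\bigoplus_{\pi}(\dim\pi)\,\rho_{\pi}$, the latter carrying genuine multiplicities once $G$ is non-abelian.

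Combining the two facts is then immediate: multiplicity-freeness forces every multiplicity $\dim\rho_{j}$ to equal $1$, so each $\rho_{j}$ is an automorphism (an invertible sector) and $\cC$ is pointed. Equivalently $\hat H$ has only one-dimensional irreducible representations, hence $\hat H$ is commutative; by the structure theorem for commutative finite-dimensional Kac algebras $\hat H\cong C(G)$ for a finite group $G$, and dually $H\cong\CC[G]$. Since $H\cong\CC[G]$ is the group algebra, its action is an honest outer action of $G$ by automorphisms $\{\alpha_{g}\}_{g\in G}$ and $N=M^{H}=M^{G}$. To pass from the single inclusion $\A(I)\subset\cB(I)$ to the net-level statement $\A=\cB^{G}$ I would feed this outer $G$-action back through the orbifold theory of \cite{Xu2000-2} (equivalently through the converse direction set up in this paper), using M\"obius covariance and locality to globalize the $\alpha_{g}$ to a net automorphism group with fixed-point net $\A=\cB^{G}$.

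The hard part will be the middle step: rigorously identifying $\gamma$ with the regular $Q$-system of $\Rep(\hat H)$, that is, establishing the equality ``multiplicity $=$ statistical dimension'' rather than merely ``$\gamma$ lies in a fusion category''. This is exactly where the Kac algebra (depth $2$) hypothesis does the real work, since it is what rules out a nontrivial associator: without it one could only conclude that $\cC$ is a pointed category $\Vect_{G}^{\omega}$ with a possibly nontrivial class $\omega\in H^{3}(G,\TT)$, which would \emph{not} arise from a genuine group action, and the Hopf (Kac) structure is precisely what forces $\omega$ to be trivial. A secondary, more routine technical point is verifying that the abstractly produced $G$-action is compatible with the net, which should follow from covariance.
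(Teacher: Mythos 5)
Your proposal is correct and takes essentially the same route as the paper's proof, which combines Longo's depth-2 duality \cite{Lo1994} --- stating precisely your key fact that $[\gamma]=\bigoplus_i n_i[\beta_i]$ with $n_i=d\beta_i$, i.e.\ that $\gamma$ is the regular Q-system, so the step you flagged as ``the hard part'' need not be reproved --- with Proposition \ref{prop:IntroNoMultiplicities} to force $d\beta_i=1$ for all $i$. The only difference is in the routine endgame: where you pass through commutativity of $\hat H$ to conclude $H\cong\CC[G]$, the paper instead cites \cite[Theorem 4.1]{Iz1991} or the fact that the canonical hypergroup of Proposition \ref{prop:IntroHypergroupAction} is a group.
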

\begin{proof} 
  For depth 2 we have $[\gamma] =\bigoplus_i n_i[\beta_i]$ with $d\beta_i=n_i$
  \cite{Lo1994}, but from Proposition \ref{prop:IntroNoMultiplicities} follows
  that $\gamma$ has no multiplicities and therefore $d\beta_i=1$ and the
  statement follows from  \cite[Theorem 4.1]{Iz1991} or the fact that the
  hypergroup $K$ in Proposition \ref{prop:IntroHypergroupAction} is indeed a
  group.
\end{proof}

In Section \ref{sec:HypergroupActionNet} (see Definition
\ref{defi:HypergroupActionVNA} and \ref{defi:HypergroupActionNet}) we define a
proper action of a hypergroup $K$ on a conformal net $\cB$, which generalizes
the action of a finite group $G$ by inner symmetries.
A hypergroup is a finite set $K=\{c_0,\cdots,c_n\}$ which is the basis of a
$\ast$-algebra fulfilling certain axioms (see Definition \ref{defi:Hypergroup}).
Each element $c_i$ has a \textbf{weight} $w_i=w_{c_i} \geq 1$ and $K$ is a 
finite group if and only if $w_i=1$ for all $c_i\in K$.
The \textbf{weight of a hypergroup $K$} is defined to be 
$D(K)=\sum_{i=0}^n w_i$, so in particular $D(G)=|G|$ for a finite group $G$.
\begin{thm}[see Theorem \ref{thm:GeneralizedOrbifoldGivesSubnet} and
  \ref{thm:CanonicalHypergroupFromSubnet}]
  \label{thm:IntroMainThm}
  Let $\cB$ be a conformal net.
  \begin{enumerate}
    \item
      Let $K$ be a hypergroup acting properly on $\cB$, then the fixed point net
      $I\mapsto \cB(I)^K$ turns out to be a finite index subnet of $\cB$, called
      the \textbf{K-orbifold net}.
      The index $[\cB:\cB^K]$ equals $D(K)$.
    \item
      Conversely, let $\A \subset \cB$ be a finite index subnet, then there is a
      canonical (and unique up to equivalence) proper hypergroup action
      of a hypergroup $K$ on $\cB$, such that $\A=\cB^K$. 
      This construction recovers the action of $K$ from (1) up to equivalence.
    \item
      There is a Galois correspondence between intermediate nets $\tilde \A$
      with $\cB^K\subset \tilde \A \subset \cB$ and subhypergroups 
      $L\subset K$. 
      In this case $\tilde \A=\cB^{L}$ and there is a canonical action of
      the hypergroup of double cosets $K\CS L$ on $\tilde \A$, such that 
      $\A = \tilde\A^{K \CS L}$.
  \end{enumerate} 
\end{thm}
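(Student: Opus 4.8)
The plan is to prove Theorem \ref{thm:IntroMainThm} by reducing the net-theoretic statements to the local subfactor $\A(I) \subset \cB(I)$, leveraging the operator-algebraic structure of the inclusion. The central object is the canonical endomorphism $\gamma \in \End(\cB(I))$, which by Proposition \ref{prop:IntroNoMultiplicities} has no multiplicities, so that $\Hom(\gamma, \gamma)$ is commutative. The commutativity of this finite-dimensional $C^\ast$-algebra is precisely what will carry the hypergroup structure: its minimal central projections index the irreducible sectors $[\beta_i]$ appearing in $[\gamma] = \bigoplus_i [\beta_i]$, and the dimensions $d_i = d\beta_i$ will become the weights $w_i$ of the hypergroup. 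The vacuum state $\varphi = (\Omega, \slot \, \Omega)$ supplies the state-preserving structure needed to pass from sectors to stochastic (vacuum-preserving unital completely positive) maps.

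\textbf{Part (1).} Given a proper hypergroup action of $K$, first I would verify that the fixed-point assignment $I \mapsto \cB(I)^K$ is isotonic, local, and Möbius covariant, so that it defines a subnet. The key quantitative claim is $[\cB : \cB^K] = D(K)$. The plan is to compute the index locally: each stochastic map $c_i$ corresponds to a completely positive map on $\cB(I)$ whose Jones-type weight contributes $w_i$ to the total index. Summing the contributions reproduces $D(K) = \sum_i w_i$. Here the properness of the action is exactly the condition ensuring that the averaging projection onto fixed points is faithful and that the index is finite and equals the hypergroup weight rather than something larger.

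\textbf{Part (2).} For the converse, given a finite-index subnet $\A \subset \cB$, I would decompose the canonical endomorphism as $[\gamma] = \bigoplus_i [\beta_i]$ into irreducibles, which by Proposition \ref{prop:IntroNoMultiplicities} are multiplicity-free. I would then define $K = \{c_i\}$ with $c_i$ corresponding to the sector $[\beta_i]$, the weight $w_i = d\beta_i$, involution given by conjugate sectors, and the fusion/structure constants read off from the decomposition of products $[\beta_i][\beta_j] = \bigoplus_k N_{ij}^k [\beta_k]$ (normalized by dimensions so the hypergroup axioms of Definition \ref{defi:Hypergroup} hold). The nontrivial verification is that these constants are nonnegative and satisfy the hypergroup associativity and normalization axioms; this follows from the positivity of the underlying fusion coefficients together with the multiplicity-freeness. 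Associated stochastic maps are then built from the $\beta_i$ together with their standard left inverses, which are automatically vacuum-preserving because the $\beta_i$ are localized endomorphisms and $\Omega$ is their common invariant vector. Uniqueness up to equivalence follows because the sector decomposition of $\gamma$ is canonical, and one checks that applying Part (1) to this $K$ recovers $\A = \cB^K$, closing the loop.

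\textbf{Part (3).} For the Galois correspondence, intermediate nets $\cB^K \subset \tilde\A \subset \cB$ correspond locally to intermediate subfactors $\cB^K(I) \subset \tilde\A(I) \subset \cB(I)$, which in turn correspond to subobjects of the canonical endomorphism, hence to subsets of the sectors $\{[\beta_i]\}$ closed under the fusion structure — that is, to subhypergroups $L \subset K$. I would establish the bijection $\tilde\A = \cB^L$ in both directions and then identify the relative fixed-point structure: the sectors of the inclusion $\tilde\A \subset \cB$ within the larger inclusion organize into double cosets $L \backslash K / L$, yielding the double-coset hypergroup action with $\A = \tilde\A^{K\CS L}$. \textbf{The hardest part} will be Part (3), specifically proving that every intermediate net arises as a fixed-point net $\cB^L$ for a genuine subhypergroup (surjectivity of the correspondence), since this requires showing that an arbitrary intermediate subfactor is compatible with the net structure and that the corresponding subset of sectors is closed under the hypergroup operations — the decategorification phenomenon flagged in the introduction means one cannot simply transport categorical subobject arguments, and the double-coset structure must be verified to satisfy the hypergroup axioms directly.
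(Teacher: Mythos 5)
Your Part (2) contains a genuine error that the paper's own examples rule out. You propose to define the hypergroup structure constants by decomposing products of the sectors of the canonical endomorphism, $[\beta_i][\beta_j]=\bigoplus_k N_{ij}^k[\beta_k]$, "normalized by dimensions so the hypergroup axioms hold." This cannot work: the set of irreducible sectors appearing in $\gamma$ is in general \emph{not} closed under fusion, and even where it is, the dimension-normalized fusion coefficients neither sum to one nor reproduce the composition law of the stochastic maps. Concretely, for the inclusion $\A_{\SU(2)_{10}}\subset\A_{\Spin(5)_1}$ in the paper's Table \ref{table:TwoElements} one has $[\gamma]=[\id]\oplus[\beta_1]$ with $d\beta_1=2+\sqrt3$, and the hypergroup relation is $c_1c_1=\frac1d c_0+\frac{d-1}{d}c_1$ with $d-1=1+\sqrt 3$ irrational; closure under fusion would force $[\beta_1]^2=[\id]\oplus m[\beta_1]$ with $m=2\sqrt3$, which is absurd. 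This is exactly the decategorification phenomenon you flag in Part (3) — the hypergroup of a subnet is generally \emph{not} a fusion ring (Lemma \ref{lem:HypergroupNoFusionRing}) — and it already invalidates your construction in Part (2). The paper instead builds the maps directly from the Q-system data, $\phi_i=\frac{d_\gamma}{d_i}\,\iota(w^\ast)v_i\beta_i(\slot)v_i^\ast\iota(w)$ with $v_i\in\Hom(\beta_i,\gamma)$, and reads off $C_{ij}^k$ from their composition: since $\Hom(\gamma,\gamma)$ is commutative with basis of minimal projections $v_kv_k^\ast$ (this is where multiplicity-freeness enters), the element $\iota(w^\ast)(v_i\otimes v_j)(v_i\otimes v_j)^\ast\iota(w)\in\Hom(\gamma,\gamma)$ expands with nonnegative coefficients (\ref{eq:Coefficients}); these involve the Q-system multiplication $x$, not merely the integers $N_{ij}^k$.

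Three further steps are glossed over in ways that matter. First, your uniqueness argument ("the sector decomposition of $\gamma$ is canonical") only pins down candidate maps of the special form above; to exclude a different hypergroup action with the same fixed points one needs the paper's simplex characterization (Proposition \ref{prop:Simplex}): the $\A(I)$--$\A(I)$-bimodular maps in $\Stoch_\Omega(\cB(I))$ form exactly the $n$-simplex $\Conv(\{\phi_0,\ldots,\phi_n\})$, whose extreme points are the $\phi_k$ by minimality of their Stinespring representations. Second, vacuum preservation of $\phi_i$ does not follow from "$\Omega$ is their common invariant vector" — the $\beta_i$ are nontrivial (solitonic) endomorphisms and do not fix $\Omega$; the paper proves instead $E\circ\phi_k=E$ and the $\Omega$-adjoint property via sphericality. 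Third, in Part (1) the Möbius covariance of $I\mapsto\cB(I)^K$ is a genuine theorem, not a routine check: a proper action is only assumed compatible with inclusions of intervals, so one must show the Jones projection commutes with $U(g)$, which the paper does via Takesaki's theorem and the Bisognano--Wichmann property (Lemma \ref{lem:CENet}); similarly the sharp equality $[\cB(I):\cB(I)^K]=D(K)$ (rather than $\leq$) uses affine independence of the $\phi_k$ together with the coincidence of the minimal index with its completely positive version for infinite factors. Your Part (3), by contrast, is essentially the paper's route (projections in $\Hom(\gamma,\gamma)$, i.e.\ subsets of $K$ satisfying (\ref{eq:subQsystem}), correspond to subhypergroups), modulo the same caveat that "closed under the fusion structure" must mean closed under the hypergroup product, not sector fusion.
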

Let us from know on suppose that $\cB$ is \textbf{completely rational} (then
every finite index subnet $\A\subset\cB$ is completely rational by
\cite{Lo2003}). 
In this case we get a complete characterization of how the actions of
hypergroups look like.

We have the following categorical result.
\begin{prop}[Proposition \ref{prop:LagrangianQSystemDualToLR}]
  \label{prop:IntroLR}
  Let $\Theta$ be a Lagrangian Q-system in a UMTC $\cC$, and $\cD\cong
  \bim\Theta\cC \Theta$ the dual category.
  Then the dual Q-system $\Gamma\in\cD$ is the Longo--Rehren Q-system
  associated with
  $\cC_\Theta\cong \cD^+$.
\end{prop}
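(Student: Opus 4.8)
The plan is to unpack the definitions of "Lagrangian Q-system" and "Longo--Rehren Q-system" and show the two coincide after passing to the dual category. Recall the setup: $\Theta$ is a commutative (Lagrangian) Q-system in the UMTC $\cC$, so the category of $\Theta$-modules $\cC_\Theta$ is trivial (this is what Lagrangian means -- its dimension squared equals that of $\cC$, forcing the local modules to be trivial), while the full module category $\bim\Theta\cC\Theta$ is the dual category $\cD$. The Longo--Rehren Q-system in a modular category is the canonical Q-system $\Gamma_{\mathrm{LR}}=\bigoplus_i \rho_i\boxtimes\bar\rho_i$ living in $\cC\boxtimes\rev\cC$ whose modules recover $\cC$ itself. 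So the claim is that the dual Q-system $\Gamma\in\cD$ obtained from $\Theta$ matches this canonical Longo--Rehren object under the identification $\cD^+\cong\cC_\Theta$.

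First I would set up the $\alpha$-induction functors $\alpha^\pm\colon \cC\to\cD$ associated to the Q-system $\Theta$, and recall the general structure theorem (Böckenhauer--Evans--Kawahigashi, and its Q-system/subfactor-theoretic form) that for a modular $\cC$ the braided subcategory generated by $\alpha^+$ and $\alpha^-$ images, namely $\cD^+\vee\cD^-$, together with the identification of the ambichiral part, realizes $\cD$ as tensor-equivalent to a Longo--Rehren--type quantum double. Concretely I would invoke the known fact that for a Lagrangian $\Theta$, the dual category $\cD$ is braided equivalent to the Drinfeld center $Z(\cC_\Theta)$; but since $\cC_\Theta$ is trivial here, the relevant statement is rather the reciprocity between $\cC$ and $\cD$ expressed through $\alpha$-induction. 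The dual Q-system $\Gamma$ is by definition the image of $\Theta$ under the canonical construction that sends a Q-system and its dual back and forth, i.e. $\Gamma=\gamma_{\cD}$ is the canonical endomorphism of the inclusion read inside $\cD$.

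The core computation is to identify $[\Gamma]$ as a sector in $\cD$ and match it with the Longo--Rehren sum. The key step is the chiral decomposition: using $\sigma$-positivity/negativity and the homomorphism property of $\alpha$-induction, I would show $[\Gamma]=\bigoplus_\beta [\beta\times\bar\beta]$ where $\beta$ runs over $\cD^+\cong\cC_\Theta$, exactly the form of a Longo--Rehren canonical Q-system associated with the modular category $\cD^+$. This uses that $\cD$ decomposes (as a category with two commuting braided embeddings of $\cD^+$ and $\cD^-$) in the standard way, and that the Longo--Rehren Q-system is characterized, up to isomorphism of Q-systems, by having this multiplicity-free double-sum canonical endomorphism together with the canonical Longo--Rehren multiplication. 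I would then check the multiplicative structure matches: the intertwiners defining the Q-system multiplication on $\Gamma$ coincide with the Longo--Rehren intertwiners, which follows from the compatibility of $\alpha$-induction with the braiding (the braiding is precisely what enters the Longo--Rehren multiplication).

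The main obstacle I expect is bookkeeping the braiding conventions so that the dual Q-system multiplication, built from the original commutative structure on $\Theta$ via the duality, lands on the \emph{correct} braiding (left versus right, $\cD^+$ versus $\rev{\cD^-}$) demanded by the Longo--Rehren definition; a sign/orientation mismatch here would give the Longo--Rehren Q-system of the reverse category rather than $\cC_\Theta\cong\cD^+$ as claimed. I would resolve this by carefully tracking which half-braiding the canonical duality of Q-systems produces, leaning on the established dictionary (e.g. in the references on $\alpha$-induction and the Böckenhauer--Evans--Kawahigashi graph/center formalism) relating $\sigma$-restriction to the original braiding, and verifying the statement is consistent with the dimension count $\dim\cD=\dim\cC$ and with the triviality of $\cC_\Theta$.
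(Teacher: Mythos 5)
The decisive gap is your step (ii). You reduce the proposition to (i) the sector computation that $[\Gamma]$ decomposes as the Longo--Rehren object $\bigoplus_{\beta\in\Irr(\cD^+)}[\beta\boxtimes\beta^\op]$ under $\cD\cong\cD^+\boxtimes(\cD^+)^\op$, and (ii) the claim that the multiplication morphism of $\Gamma$ coincides with the Longo--Rehren multiplication, which you assert ``follows from the compatibility of $\alpha$-induction with the braiding.'' Step (ii) is the entire content of the proposition, and it does not follow from step (i) plus naturality: a Q-system is strictly more data than its underlying sector. For instance, for an outer action of a finite group $G$ (say $G=\ZZ_2\times\ZZ_2$) on a type III factor $N$, the single object $\bigoplus_{g\in G}\alpha_g\in\End(N)$ carries Q-system structures parametrized by $H^2(G,\TT)$ (the cocycle-twisted crossed products), pairwise inequivalent in general; so matching $[\Gamma]$ with the Longo--Rehren object can never, by itself, identify $A\subset B$ as conjugate to a Longo--Rehren inclusion. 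This is exactly why the paper does not argue by direct comparison. Its proof identifies the Q-system itself via a doubling trick: $\Theta\otimes\Theta^\op$ is a Lagrangian Q-system in $\bim S\cC S\cong\cC\boxtimes\rev{\cC}$ for $S=A\otimes A^\op$, hence by Rehren's theorem on canonical tensor product subfactors \cite{Re2000} (see \cite[Proposition 5.2]{BiKaLoRe2014}) it is the $\alpha$-induction Q-system; by \cite{Ka2002} the resulting inclusion $A\otimes A^\op\subset B\otimes B^\op$ is dual to the Longo--Rehren inclusion of $\bim B\cC B$; and the Galois correspondence for intermediate inclusions, applied to $A\otimes A^\op\subset A\otimes B^\op\subset B\otimes B^\op$, then shows that $A\subset B$, being conjugate to $A\otimes B^\op\subset B\otimes B^\op$, is dual to the Longo--Rehren inclusion of $\bim[+]B\cC B$. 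Unless you import or reprove results of this strength, your step (ii) cannot be closed; and once you do import them, your step (i) becomes unnecessary.

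Two further misstatements would also need repair. Lagrangian does \emph{not} mean that $\cC_\Theta$ is trivial: what is trivial is the category $\cC_\Theta^0$ of dyslexic (local) modules, \ie $\bim[0]B\cC B\cong\Vect$, while the module category $\cC_\Theta\cong\cD^+$ has Frobenius--Perron dimension $\sqrt{\FPdim\cC}$. Indeed the proposition asserts that $\Gamma$ is the Longo--Rehren Q-system \emph{of} $\cC_\Theta$, which would be vacuous were $\cC_\Theta$ trivial, and your own use of $\Irr(\cD^+)\cong\Irr(\cC_\Theta)$ as the index set of the Longo--Rehren sum contradicts your triviality claim. Similarly, ``$\cD$ is braided equivalent to $Z(\cC_\Theta)$'' is wrong on both sides: $\cD=\bim\Theta\cC\Theta$ is not naturally braided, and the correct statement, proved and used in the paper, is that $\cC$ itself is braided equivalent to $Z(\cC_\Theta)\cong Z(\cD^+)$. (Also $\cD^+$ need not be modular: for $\cC=Z(\cF)$ with $\Theta$ the canonical Lagrangian algebra, $\cD^+\simeq\cF$ is an arbitrary unitary fusion category.) Your closing worry about braiding orientations, $\cD^+$ versus $\rev{\cD^-}$, is legitimate, but it is downstream of the main gap above.
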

In the case $\cB$ is even \textbf{holomorphic} we first get the following 
Corollary of Propopsition \ref{prop:IntroLR}.
\begin{cor}
  Let $\A \subset \cB$ be a finite index subnet, $\cB$ holomorphic, then the
  dual Q-system of the inclusion $\A(I)\subset \cB(I)$ is a Longo--Rehren
  Q-system, in other words $\A(I)\subset \cB(I)$ is isomorphic to  a
  Longo--Rehren inclusion.
\end{cor}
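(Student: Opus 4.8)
The plan is to translate the inclusion into categorical data and then apply Proposition~\ref{prop:IntroLR} essentially verbatim. Write $\cC=\Rep(\A)$; since $\cB$ is holomorphic it is completely rational, hence so is its finite index subnet $\A$ by \cite{Lo2003}, and $\cC$ is a UMTC. The finite index inclusion $\A\subset\cB$ is encoded up to conjugacy by its dual canonical endomorphism, which makes the object $\theta=\bar\iota\iota\in\cC$ (with $\iota$ the inclusion $\A(I)\hookrightarrow\cB(I)$) into a Q-system $\Theta$ in $\cC$. This is the standard correspondence between finite local extensions of nets and Q-systems in the representation category, under which the representation category of the extension is recovered as the category of local (dyslexic) $\Theta$-modules, $\Rep(\cB)\simeq\cC_\Theta^0$ as braided categories \cite{LoRe1995,Mg2010,BiKaLo2014}.

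First I would check that holomorphicity of $\cB$ forces $\Theta$ to be Lagrangian. Indeed $\cB$ holomorphic means $\Rep(\cB)\simeq\Vect$, so by the identification above $\cC_\Theta^0\simeq\Vect$, which is exactly the defining property of a Lagrangian Q-system in the modular category $\cC$. As a consistency check, the dimensions balance: from the index formula $\mu(\A)=[\cB:\A]^2\,\mu(\cB)$ together with $\mu(\cB)=1$ one gets $\dim\cC=\mu(\A)=[\cB:\A]^2=d(\Theta)^2$, which is the numerical form of the Lagrangian condition.

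With $\Theta$ a Lagrangian Q-system in the UMTC $\cC$, I would then invoke Proposition~\ref{prop:IntroLR}: the dual Q-system $\Gamma$, living in the dual category $\cD\cong\bim\Theta\cC\Theta$, is the Longo--Rehren Q-system associated with $\cC_\Theta\cong\cD^+$. It only remains to recognize $\Gamma$ as the dual Q-system of the inclusion $\A(I)\subset\cB(I)$, namely the Q-system built from the canonical endomorphism $\gamma\in\End(\cB(I))$ in the system of $\cB$-$\cB$ sectors generated by the inclusion; since a Longo--Rehren Q-system describes precisely a Longo--Rehren inclusion, this yields the claim. I expect the only real difficulty to be bookkeeping rather than conceptual: one must match the abstract dual category $\bim\Theta\cC\Theta$, its distinguished part $\cD^+$, and the Longo--Rehren Q-system produced by Proposition~\ref{prop:IntroLR} with the concrete systems of $\cB$-$\cB$ and $\A$-$\A$ sectors coming from $\A(I)\subset\cB(I)$, so that the abstractly constructed Longo--Rehren structure is identified with the one intrinsic to the net inclusion and not merely an isomorphic copy.
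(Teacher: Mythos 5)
Your proposal is correct and follows essentially the same route as the paper, which presents this statement as an immediate consequence of Proposition \ref{prop:LagrangianQSystemDualToLR}: the local extension $\A\subset\cB$ gives a commutative Q-system $\Theta$ in the UMTC $\Rep(\A)$, holomorphicity of $\cB$ together with the $\mu$-index formula $\mu(\A)=[\cB:\A]^2\mu(\cB)$ shows $\Theta$ is Lagrangian, and the proposition then identifies the dual (canonical) Q-system $\Gamma=(\gamma,v,\iota(w))$ of $\A(I)\subset\cB(I)$ with a Longo--Rehren Q-system. The bookkeeping you flag at the end is exactly the content of that proposition, since the abstract dual category $\bim\Theta\cC\Theta$ is the category of $\cB$-$\cB$ sectors generated by the inclusion, so no additional identification is needed.
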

There is the following natural open question which is also related to the
question if all finite index finite depth subfactors come from conformal nets
\cite{Bi2015VFR}.
\begin{question}
  Let $\cF$ be a unitary fusion category. Is there a completely rational net
  $\A$, such that $\Rep(\A)$ is braided equivalent to the (unitary) Drinfel'd
  center $Z(\cF)$? 
\end{question}
The following theorem says that such a net $\A$ is always a generalized orbifold
of a holomorphic net.
The special case of the theorem where $K=G$ is a finite group was announced by
Müger \cite[3.6 Corollary]{Mg2010} with the missing proofs contained in the
categorical work \cite{DaMgNiOs2013}.
Namely, for a holomorphic net $\cB$ and a finite group $G\subset \Aut(\cB)$ 
(\ie a proper action of $G$ on $\cB$) we
have $\Rep(\cB^G)\cong \Rep(D^\omega(G))\cong Z(\Vect^\omega_G)$ for some
$[\omega]\in H^3(G,\TT)$ and conversely, if $\Rep(\A)\cong Z(\Vect^\omega_G)$
then there is a holomorphic net $\cB$ with $\A=\cB^G$.
Our analogous but much more general result using generalized orbifolds is:
\begin{thm}[holomorphic case, see Theorem
  \ref{thm:HypergroupCompletelyRational}]
  Let $\cB$ be a holomorphic conformal net with a proper action of a hypergroup
  $K$.
  Then there is a unitary fusion category $\cF$, such that $K=K_\cF$ (\ie $K$ is 
  hypergroup of the fusion ring of $\cF$) and $\Rep(\cB^K)$ is braided
  equivalent $Z(\cF)$.  
  
  Conversely, if $\A$ is a completely rational net with $\Rep(\A)$ braided 
  equivalent to the Drinfel'd center $Z(\cF)$ for a unitary fusion category
  $\cF$, then there is a holomorphic net $\cB$ and an action of the hypergroup 
  $K_\cF$ associated with $\cF$, such that $\cB^{K_\cF}=\A$. 
\end{thm}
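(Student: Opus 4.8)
The plan is to combine the orbifold correspondence of Theorem~\ref{thm:IntroMainThm} with the categorical Longo--Rehren duality of Proposition~\ref{prop:IntroLR} and the standard Lagrangian-algebra/Drinfel'd-center correspondence.

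For the forward direction I would start from the finite index subnet $\A=\cB^K$ provided by Theorem~\ref{thm:IntroMainThm}(1), with $[\cB:\A]=D(K)$. Since $\cB$ is holomorphic it is completely rational, so by \cite{Lo2003} the subnet $\A$ is completely rational and $\Rep(\A)$ is a UMTC. The inclusion $\A(I)\subset\cB(I)$ is encoded by a commutative Q-system $\Theta$ in $\Rep(\A)$, and holomorphicity of $\cB$ is equivalent to $\Rep(\cB)\cong\Rep(\A)^{\mathrm{loc}}_\Theta$ being trivial, \ie to the category of local (dyslexic) $\Theta$-modules being trivial; this is precisely the statement that $\Theta$ is \emph{Lagrangian}. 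Proposition~\ref{prop:IntroLR} then exhibits $\A(I)\subset\cB(I)$ as a Longo--Rehren inclusion and produces a unitary fusion category $\cF\cong\Rep(\A)_\Theta$ (the chiral part $\cD^+$ of the dual category $\cD=\bim\Theta{\Rep(\A)}\Theta$), together with a braided equivalence $\Rep(\A)\cong Z(\cF)$ coming from the Lagrangian-algebra/center correspondence \cite{DaMgNiOs2013}.

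It then remains to identify $K$ with $K_\cF$. Here I would use the explicit form of the canonical action from Theorem~\ref{thm:IntroMainThm}(2): the basis elements of $K$ are indexed by the irreducible $\cB$-$\cB$ sectors $\beta_i$ occurring in the canonical endomorphism $\gamma$, which by Proposition~\ref{prop:IntroNoMultiplicities} occur without multiplicity, and these are exactly the simple objects of $\cF$. Matching the weight $w_{c_i}$ of $c_i\in K$ with the squared dimension $d(\beta_i)^2$ and checking that the stochastic structure constants of $K$ reproduce the dimension-normalized fusion coefficients of $\cF$ shows that $\cF$ categorifies $K$, \ie $K=K_\cF$; as a consistency check one has $D(K)=[\cB:\A]=\FPdim(\Theta)=\dim\cF=D(K_\cF)$. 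I expect this identification --- unwinding the relation between the stochastic maps that define the $K$-action and the sectors $\beta_i$, and matching the structure constants --- to be the main technical obstacle.

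For the converse, suppose $\Rep(\A)\cong Z(\cF)$; I would take the canonical Lagrangian algebra $L=I(\mathbf{1}_\cF)\in Z(\cF)$, the image of the tensor unit under the adjoint of the forgetful functor $Z(\cF)\to\cF$, for which $Z(\cF)_L\cong\cF$ and the local modules are trivial. Transporting $L$ across the braided equivalence gives a Lagrangian Q-system $\Theta$ in $\Rep(\A)$, and by the Q-system extension theory for conformal nets \cite{LoRe1995,BiKaLo2014} there is a local finite index extension $\cB\supset\A$ realizing $\Theta$; triviality of the local $\Theta$-modules forces $\Rep(\cB)$ to be trivial, so $\cB$ is holomorphic. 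Finally Theorem~\ref{thm:IntroMainThm}(2) equips $\cB$ with a canonical proper hypergroup action whose fixed point net is $\A$, and by the forward direction this hypergroup is $K_\cF$, so $\cB^{K_\cF}=\A$. The delicate point in this direction is to verify that the abstract Q-system genuinely integrates to a conformal net and that the resulting action satisfies the paper's notion of properness, which is supplied by the cited extension technology together with Theorem~\ref{thm:IntroMainThm}.
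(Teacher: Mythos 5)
Your proposal is correct and follows essentially the paper's own route: the forward direction recognizes $\A(I)\subset\cB(I)$ as a Longo--Rehren inclusion via the Lagrangian dual canonical Q-system (Proposition \ref{prop:LagrangianQSystemDualToLR}) and identifies $K=K_\cF$ from the structure-constant computation of Example \ref{ex:LR} together with the uniqueness of the canonical action (Theorem \ref{thm:CanonicalHypergroupFromSubnet}), while the converse integrates the canonical Lagrangian algebra $I(1)\in Z(\cF)$ to a holomorphic extension and applies the same identification. The only packaging difference is that you invoke Proposition \ref{prop:IntroLR} directly on $\A(I)\subset\cB(I)$, whereas the paper's formal proof (via Proposition \ref{prop:MoreKGradedExtensions}) tensors with $\cB(I)^{\op}$ in the full-center construction so as to also cover the general completely rational case; in the holomorphic case the two arguments coincide, since Proposition \ref{prop:LagrangianQSystemDualToLR} is itself proved by that tensoring trick.
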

We note that in this case $\cB^K$ is a quantum double net and $\cF$
is a categorification of $K$.
An interesting problem seems to be the following: 
Given a holomorphic net $\cB$, find all finite index subnets $\A$. 
Then each subnet $\A\subset \cB$ gives rise to a fusion category $\cF_\A$.

We now want to discuss the case, where $\cB$ is only assumed to be completely
rational.
Here one might ask: What are the possible representation categories of finite
index subnets of a given net completely rational net $\cB$ with known
representation category?
\begin{question}
  Let $\cB$ be a completely rational net and $\cC$ a unitary modular tensor
  category.  
  Is there a finite index subnet $\A\subset \cB$ with $\Rep(\A)$ braided
  equivalent to $\cC$?
\end{question}
In this case it is necessary that $\cC$ and $\Rep(\cB)$ belong to the same Witt
class \cite{DaMgNiOs2013}.
There is a more refined necessary---but not sufficient (see below)---condition 
for the existence of such a net $\A$:
\begin{prop}[see Propopsition \ref{prop:MainNecessary}]
  \label{prop:IntroNecessary}
  Let $\cB$ be a completely rational conformal net with $\cD:=\Rep(\cB)$. 
  A necessary condition for the existence of a finite index subnet $\A\subset
  \cB$ with $\Rep(\A)$ braided equivalent to $\cC$ is:
  
  There is a fusion category $\cF$ and an injective (full) central functor
  $\rev{\cD}\to \cF$, such that $\cC$ is braided equivalent to the M\"uger
  centralizer $C_{Z(\cF)}(\rev{\cD})$ (also denoted by ${\rev{\cD}}'\cap
  Z(\cF)$). 
\end{prop}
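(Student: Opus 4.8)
The plan is to realize the hypothetical subnet as a commutative Q-system and then feed it into the known description of the Drinfel'd center of its module category. First I would use that a finite index inclusion $\A\subset\cB$ with $\Rep(\A)\cong\cC$ exhibits $\cB$ as a \emph{local} extension of $\A$. By the Longo--Rehren correspondence between local extensions and commutative Q-systems (and since $\cB$ completely rational forces $\A$ completely rational by \cite{Lo2003}, so that $\cC=\Rep(\A)$ is a UMTC), this inclusion is encoded by a connected commutative étale Q-system $\Theta$ in $\cC$; connectedness $\Hom(\mathbf 1,\Theta)=\CC$ is just irreducibility of the inclusion. Moreover $\cD=\Rep(\cB)$ is recovered as the category $\cC_\Theta^{\mathrm{loc}}$ of local (dyslexic) $\Theta$-modules, and $\cD$ is again modular since $\cB$ is completely rational.

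Next I would set $\cF:=\cC_\Theta$, the category of right $\Theta$-modules in $\cC$; since $\Theta$ is connected, commutative and étale and $\cC$ is fusion, the braiding of $\cC$ makes $\cC_\Theta$ a unitary fusion category. The key input is the computation of its Drinfel'd center due to \cite{DaMgNiOs2013}: for a connected étale algebra $\Theta$ in a nondegenerate braided fusion category $\cC$ there is a braided equivalence
$$
Z(\cF)=Z(\cC_\Theta)\;\simeq\;\cC\boxtimes\rev{(\cC_\Theta^{\mathrm{loc}})}=\cC\boxtimes\rev{\cD}.
$$
Under it the inclusion $\mathbf 1\boxtimes\rev{\cD}\hookrightarrow\cC\boxtimes\rev{\cD}=Z(\cF)$ is a fully faithful braided embedding, and composing with the forgetful functor $Z(\cF)\to\cF$ yields the central functor $\rev{\cD}\to\cF$ demanded by the statement; on underlying objects this is the full inclusion $\cC_\Theta^{\mathrm{loc}}\hookrightarrow\cC_\Theta$ of local modules among all modules, so the central functor is injective and full.

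Finally I would compute the Müger centralizer inside $Z(\cF)=\cC\boxtimes\rev{\cD}$. For a Deligne product of braided categories one has $C_{\cC\boxtimes\rev{\cD}}(\mathbf 1\boxtimes\rev{\cD})=\cC\boxtimes C_{\rev{\cD}}(\rev{\cD})$, and since $\cD$ (hence $\rev{\cD}$) is modular its Müger center $C_{\rev{\cD}}(\rev{\cD})$ is trivial. Therefore $C_{Z(\cF)}(\rev{\cD})\cong\cC\boxtimes\mathbf 1\cong\cC$, which is precisely the asserted necessary condition.

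The main obstacle, beyond quoting the center formula of \cite{DaMgNiOs2013}, is bookkeeping the braiding orientation: one must verify that it is $\rev{\cD}$ rather than $\cD$ that embeds into $Z(\cF)$ and whose centralizer returns $\cC$. This is dictated by the convention under which a local module acquires its half-braiding in $Z(\cC_\Theta)$ from the braiding with $\Theta$, and matching this orientation with the net-theoretic $\alpha^\pm$-induction data is the one place where signs must be tracked with care; everything else is a formal assembly of standard facts.
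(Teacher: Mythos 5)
Your proof is correct and is essentially the paper's own argument: the paper likewise takes $\cF$ to be the module category of the commutative Q-system encoding $\A\subset\cB$ (realized there as the $\alpha^+$-induction category $\bim[+]B\cC B\cong\cC_\Theta$), invokes the same decomposition $Z(\cF)\simeq\cC\boxtimes\rev{\cD}$ with $\cD\cong\bim[0]B\cC B\cong\cC_\Theta^{\mathrm{loc}}$, and reads off $\cC$ as the M\"uger centralizer of $\rev{\cD}$. The only difference is presentational: you quote the categorical center formula of Davydov--M\"uger--Nikshych--Ostrik and obtain centrality from the full inclusion of local modules, whereas the paper phrases everything through $\alpha$-induction and cites B\"ockenhauer--Evans--Kawahigashi for centrality --- the same facts in different clothing.
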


We note that this condition is not sufficient and stress the fact that the
existence of subnets is not a purely categorical problem in the sense that it
not only depends on $\Rep(\cB)$, but depends on the explicit net $\cB$.

For example, take the moonshine net $\A^\sharp$ and $\cF\cong \Vect_G$ for some
finite group $G$ which does not embed into the monster group. 
Then there (trivially) exists such a braided central functor as above, but no
action of $G$ on $\A^\sharp$ since $\Aut(\A^\sharp)$ is the monster group.
A second family of examples are the Virasoro nets for $c<1$ \cite{KaLo2004}
which are rational. 
The Virasoro net is minimal \cite{Ca1998}, so it has no proper subnets at all.
But one can easily see that there is a $\ZZ_2$-simple current extension $\cB_k$
of $\A_{\SU(2)_k} \otimes\A_{\SU(2)_1}\otimes\tilde{\A}_k$ fulfilling
$\Rep(\cB_k)\cong\Rep(\Vir_{c_k})$ with $c_k=6-\frac1{(k+2)(k+3)}$. 
Here $\tilde{\A}_k$ is a net with
$\Rep(\tilde{\A}_k)\cong\rev{\Rep(\A_{\SU(2)_k})}$ constructed in \cite{Bi2015}.
But $\cB_k$ has many non-trivial subnets. 

We now state the general characterization result for proper finite hypergroup
actions on a completely rational net $\cB$.
\begin{thm}[see Therorem \ref{thm:HypergroupCompletelyRational}]
  \label{prop:IntroHypergroupAction}
  Let $\cB$ be a completely rational conformal net with a proper action of a
  hypergroup $K$.
  Then there exists a unitary fusion category $\cF$, such that $K$ is equivalent
  to $K_\cF \mathbin{\CS}  K_\cB$, where $K_\cB$ is the hypgergroup associated
  with $\Rep(\cB)$.
  
  Furthermore, there is a central inclusion $\cG:=\rev{\Rep(\cB)}\subset \cF$ 
  and $\Rep(\cB^K)$ is braided equivalent to $C_{Z(\cF)}(\cG)$, 
  the M\"uger centralizer of $\cG$ in $Z(\cF)$ (also denoted by $\cG'\cap
  Z(\cF)$).
\end{thm}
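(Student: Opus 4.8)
The plan is to reduce the statement to the local subfactor $N := \A(I) \subset M := \cB(I)$ for a fixed interval $I$ and to extract $\cF$ from the system of $\alpha$-induced $M$-$M$ sectors. First, by Theorem~\ref{thm:IntroMainThm} the proper $K$-action produces the finite-index subnet $\A := \cB^K \subset \cB$ with $[\cB:\A] = D(K)$, and by \cite{Lo2003} the subnet $\A$ is again completely rational; hence $\cE := \Rep(\A)$ and $\cC := \Rep(\cB)$ are both unitary modular tensor categories. Since the extension is local, the inclusion $\A \subset \cB$ is encoded by a connected \emph{commutative} Q-system (étale algebra) $\Theta \in \cE$ whose category of dyslexic (local) modules satisfies $\cE_\Theta^0 \cong \cC$. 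I would then define $\cF := \cE_\Theta$, the category of right $\Theta$-modules in $\cE$; because $\Theta$ is commutative this carries the relative tensor product $\otimes_\Theta$ and is a unitary fusion category, realized net-theoretically as the system of $M$-$M$ sectors obtained by $\alpha$-induction from $\cE$.

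The categorical heart is the center computation. Applying the formula $Z(\cE_\Theta) \cong \cE \boxtimes \rev{\cE_\Theta^0}$ of \cite{DaMgNiOs2013} for a connected étale algebra in a modular category yields $Z(\cF) \cong \Rep(\A) \boxtimes \rev{\Rep(\cB)} = \cE \boxtimes \cG$, where $\cG := \rev{\Rep(\cB)}$; this is the categorical incarnation of the (previously established) fact that $\A(I)\subset\cB(I)$ is conjugate to a Longo--Rehren inclusion, cf. Proposition~\ref{prop:IntroLR} applied to $Z(\cF)$ with its canonical Lagrangian algebra. The identity is consistent with the $\mu$-index relation $\mu(\A) = [\cB:\A]^2\,\mu(\cB)$, which gives $\dim Z(\cF) = \mu(\A)\,\mu(\cB)$. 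The local modules $\cE_\Theta^0 = \Rep(\cB)$ embed fully and faithfully into $\cF = \cE_\Theta$, and their braiding furnishes the half-braidings exhibiting $\cG$ as the second Deligne factor of $Z(\cF)$; the composite $\cG \to Z(\cF) \to \cF$ is the required central inclusion. Since $\cG$ is nondegenerate (the reverse of a modular category) and sits in $Z(\cF)$ as a Deligne factor, M\"uger's theorem on modular subcategories \cite{Mg2003II} gives $C_{Z(\cF)}(\cG) \cong \cE = \Rep(\A) = \Rep(\cB^K)$ as braided categories, which is the second assertion.

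It remains to identify $K$ with $K_\cF \CS K_\cB$. Here $K_\cB = K_{\Rep(\cB)} \subset K_\cF$ is the sub-hypergroup coming from the full embedding $\Rep(\cB) = \cE_\Theta^0 \subset \cF$, and $K$ is the hypergroup canonically attached to $\A\subset\cB$ in Theorem~\ref{thm:IntroMainThm}, whose elements are the irreducible constituents $\beta_i$ of the canonical endomorphism $\gamma \in \End(M)$ with weights $w_i = d(\beta_i)$; recall $\gamma$ has no multiplicities by Proposition~\ref{prop:IntroNoMultiplicities}. I would show that the constituents of $\gamma$ are in weight-preserving bijection with the double cosets $K_\cB \backslash K_\cF / K_\cB$ and that the hypergroup fusion agrees, by analyzing $\gamma$ through $\alpha$-induction: the sectors of $\Rep(\cB)$, sitting inside $\cF$, act on the $\alpha$-induced $M$-$M$ system from both sides, $\gamma$ is invariant under this two-sided action, and so its decomposition is organized precisely by double cosets of $\Rep(\cB)$ in $K_\cF$. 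The total-weight bookkeeping is the double coset normalization $D(K) = D(K_\cF)/D(K_\cB) = [\cB:\A]$, which degenerates in the holomorphic case $\Rep(\cB)=\Vect$ (so $K_\cB$ trivial) to $K = K_\cF$.

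The main obstacle I anticipate is this last step. The center computation and the M\"uger centralizer identification are essentially formal once $\cF = \cE_\Theta$ is correctly set up and the Longo--Rehren conjugacy is in hand. By contrast, pinning down the double coset description of $K$ requires matching not only the weights but the full hypergroup product---that the sector fusion of the $\beta_i$ descends exactly to the convolution of double cosets---which forces one to control the concrete interplay between the stochastic $K$-action, the conditional expectation onto $\A(I)$, and the two-sided $\Rep(\cB)$-action on the $\alpha$-induced system. The holomorphic case, where the double cosets collapse to $K_\cF$ and $C_{Z(\cF)}(\cG)=Z(\cF)\cong\Rep(\A)$, serves as the basic consistency check for the whole construction.
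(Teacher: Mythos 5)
Your first two paragraphs reproduce, correctly, what the paper itself does: the paper also takes $\cF$ to be the $\alpha$-induction category $\bim[+]B\cC B\cong \cC_\Theta$ (Proposition \ref{prop:MainNecessary}), computes $Z(\cF)\cong \Rep(\A)\boxtimes\rev{\Rep(\cB)}$, and reads off $\Rep(\cB^K)\cong C_{Z(\cF)}(\rev{\Rep(\cB)})$ from M\"uger's decomposition theorem, exactly as you propose. The genuine gap is your third step, and it is not peripheral: the identification $K\cong K_\cF\CS K_\cB$ is the actual content of the theorem, and you only announce a strategy (``I would show that the constituents of $\gamma$ are in weight-preserving bijection with the double cosets\ldots'') without an argument. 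A weight-preserving bijection of constituents is far from sufficient: by Theorem \ref{thm:CanonicalHypergroupFromSubfactor} the hypergroup attached to $\A(I)\subset\cB(I)$ is determined by the structure constants $C_{ij}^k$ of the composed stochastic maps $\phi_i\circ\phi_j$, so one must match these with the convolution constants of $K_\cF\CS K_\cB$, which are defined through the Haar idempotent $e_{K_\cB}$. Your proposed mechanism (``two-sided action of $\Rep(\cB)$ on the $\alpha$-induced system, under which $\gamma$ is invariant'') is never set up: note that the irreducible constituents of $\gamma=\iota\bar\iota$ are $B$--$B$ sectors lying in $\bim B\cC B$, which is in general strictly larger than $\cF\cong\bim[+]B\cC B$, so it is not even specified in your sketch in which fusion ring the double cosets are being formed, nor why sector fusion of the $\beta_i$ descends to double-coset convolution. (You do correctly invoke uniqueness of the proper action, via Theorem \ref{thm:IntroMainThm}(2), to pass from ``some'' hypergroup to the given $K$; that part is fine.)

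For comparison, the paper closes precisely this gap without ever computing with $\gamma$ directly, by a doubling trick (proof of Proposition \ref{prop:MoreKGradedExtensions}): embed the inclusion into the chain $A\otimes B^\op\subset B\otimes B^\op\subset B_2$ given by the full center/Longo--Rehren construction; invoke \cite{Ka2002} to recognize \emph{both} $B\otimes B^\op\subset B_2$ (w.r.t.\ $\Rep(\cB)$) and $A\otimes B^\op\subset B_2$ (w.r.t.\ an extension $\cF\supset\Rep(\cB)$) as Longo--Rehren inclusions; then apply two results proved earlier with the graphical calculus, namely Example \ref{ex:LR} (a Longo--Rehren inclusion w.r.t.\ a fusion category $\cG$ realizes exactly the hypergroup $K_\cG$, with its structure constants) and Propositions \ref{prop:Galois} and \ref{prop:QuotientGalois} (intermediate algebras correspond to subhypergroups $L\leq K$, and $B^K\subset B^L$ carries the canonical action of $K\CS L$). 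Since $B\otimes B^\op=B_2^{K_\cB}$ is intermediate in $A\otimes B^\op=B_2^{K_\cF}\subset B_2$, Galois theory gives that $A\otimes B^\op\subset B\otimes B^\op$, hence $A\subset B$, has canonical hypergroup $K_\cF\CS K_\cB$, and uniqueness identifies it with $K$. If you insist on your direct route, what you would end up proving is essentially Proposition \ref{prop:QuotientGalois} by hand inside $\bim B\cC B$; the step you are missing, and the one that makes the paper's argument short, is the recognition via \cite{Ka2002} that the doubled inclusion is itself Longo--Rehren, so that its hypergroup is known explicitly from Example \ref{ex:LR}.
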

While writing this manuscript, the author observed that a similar action of the
double coset algebra was given in \cite[Section 2.11 and Theorem 3.8]{Xu2014}.
But there the focus was on intermediate nets in the case where a subnet is
already known.
A similar action on charged intertwiners arose in
\cite{BiKaLoRe2014,BiKaLoRe2014-2} which was motivation for the present work.
An action by stochastic maps on conformal nets seem to not have appeared in the
literature before.

We get the following formulae for the index of an inclusion and the $\mu$-index:
\begin{align*}
  [\cB(I):\cB^{K\CS H}(I)]&= D(K \CS H)\equiv \frac{D(K)}{D(H)}\,,
  &\mu(\cB^K)& = \mu(\cB)\cdot D(K)^2 \,.
\end{align*}
We remember that the $\mu$-index $\mu(\cB)$ of a completely rational conformal
net $\cB$ coincides with the global dimension $\Dim(\Rep(\cB))$ of its
representation category.

An interesting problem seems to be: Let $\cB$ be a diffeomorphism covariant
completely rational net with central charge $c>1$, find all finite index
subnets, or more general, find the lattice of all irreducible subnets
$\A\subset\cB$, or find all finite index subnets.  
\subsection*{The structure of this article}
In Section \ref{sec:Pre}, we give some preliminaries on unitary fusion
categories, hypergroups and conformal nets.
In Section \ref{sec:HypergroupActionNet}, we define generalized orbifolds in 
terms of actions of finite hypergroups and show that their fixed points give
finite index subnets.
In Section \ref{sec:Reconstruct}, we construct a hypergroup action from an 
arbitrary finite index inclusion of nets. We show that the action is unique and 
that if we start with a generalized orbifold of Section 
\ref{sec:HypergroupActionNet},
this reconstructs the original hypergroup and action. 
Intermediate nets correspond to subhypergroups and we get an action of the
hypergroup of double cosets on the intermediate net. 
In Section \ref{sec:CommutativeQSystems}, we give characterization
results on commutative Q-systems in unitary modular tensor categories.
This gives categorical restrictions on possible inclusions of completely
rational conformal nets and a complete characterization of actions of
hypergroups on completely rational conformal nets and their generalized
orbifolds.
In Section \ref{sec:Infinite}, we give some outlook on possible generalization
to infinite inclusions of conformal nets.

In Appendix \ref{app:CP} we collect some results on completely positive
and stochastic maps. 
In Appendix \ref{app:TensorCategories}, we state some results for tensor 
categories which are more general than the one for unitary fusion categories
in the one in Section \ref{sec:CommutativeQSystems}.
Some of the results are implicitly in the literature or can derived from them.

\subsection*{Acknowledgements}
The author likes to thank Luca Giorgetti, Yasuyuki Kawahigashi, Roberto Longo,
Karl-Henning Rehren and Feng Xu for remarks on earlier versions of this
manuscript and Masaki Izumi and Pinhas Grossmann for discussions. The results
were improved and the manuscript completed while the author visited the
Hausdorff Trimester Program ``Von Neumann Algebras'' and the author is grateful
for the hospitality of the Hausdorff Research Institute for Mathematics (HIM)
in Bonn.

\section{Prelimaries}
\label{sec:Pre}
\subsection{Unitary Fusion Categories}
Let $M$ be a type III factor. We denote by $\End_0(M)$ the strict and rigid
C${}^\ast$--tensor category of normal unital $\ast$-endomorphisms $\rho\colon M
\to M$ with finite dimension $d\rho=[M : \rho(M)]^{\frac12}$, where $[M:N]$ is
the minimal Jones index.  
The tensor product is given by the composition $\rho\otimes \sigma =\rho\circ
\sigma$ and morphisms are given by interwiners $\Hom(\rho,\sigma)=\{t\in M :
t\rho(m)=\sigma(m)t\}$. For $r\in \Hom(\rho,\tilde\rho)$ and $s\in
\Hom(\sigma,\tilde\sigma)$ the tensor product of morphisms is given by $r\otimes
s := r\rho(s)\equiv \tilde \rho(s) r$. 
We denote $\langle\rho,\sigma\rangle = \dim\Hom(\rho,\sigma)$.
For each $\rho\in \End_0(M)$ there is a conjugate $\overline \rho$ and a
standard solution \cite{LoRo1997} of the conjugate equation.  This gives an
(essentially unique) spherical structure \cite{LoRo1997}.
An object is called irreducible, if $\Hom(\rho,\rho)=\CC\cdot 1$, which is
exactly the case if $\rho(M)\subset M$ is irreducible, \ie $\rho(M)'\cap M
=\CC\cdot 1$.
By a sector $[\rho]$ we denote the unitary equivalence class  $\{\Ad u \circ
  \rho : u \textbf{ unitary in } M\}$ of an endomorphism $\rho$.
There is a direct sum which well-defined on sectors, namely
$[\rho]\oplus[\sigma]$ is given by the sector of $r_1\rho(\slot)r_1^\ast +
r_2\sigma(\slot)r_2^\ast$ where $r_i^\ast r_j=\delta_{ij}1$ and
$r_1r_1^\ast+r_2r_2^\ast=1$ is a representation of the generators of the Cuntz
algebra $\cO_2$ in $M$.
We refer to \cite{BiKaLoRe2014-2} for more details.

Let $\cF$ be a rigid C${}^\ast$--tensor category.
Let $\rho\in\cF$ be irreducible and $\sigma\in\cF$ arbitrary. 
Then $\Hom(\rho,\sigma)$ is a Hilbert space with scalar product:
\begin{align*}
  (s,t)_{\Hom(\rho,\sigma)} &=\Phi_\rho(s^\ast t)
  &\Longleftrightarrow &&  (s,t)_{\Hom(\rho,\sigma)}\cdot 1_\rho&= s^\ast t\,,
\end{align*}
where $\Phi_\rho$ is the standard left inverse (see \cite{LoRo1997}) of $\rho$.
Note that the first definition generalizes if $\rho$ is not irreducible, and the
second if $\cF$ is not rigid.

A \textbf{unitary fusion category $\cF$} is a semisimple rigid C${}^\ast$-tensor
category with finitely many isomorphism classes of irreducible objects.  
It can always be realized (in an essentially unique way) as a full and replete
subcategory of $\End(M)$, which is closed under direct sums and subobjects,
where $M$ is the hyperfinite type \threeone factor \cite{Po1993,HaYa2000} (see
also \cite{Iz2015}).

This way $\cF\subset \End_0(M)$ is completely specified by a choice of finitely
many irreducible sectors $\Irr(\cF)$, such that 
\begin{itemize}
  \item $[\id_M]\in \Irr(\cF)$,
  \item $\Irr(\cF)$ is closed  under fusion
    \begin{align*}
      [\rho\circ\sigma] &=\bigoplus_{[\tau]\in \Irr(\cF)} N_{\rho,\sigma}^\tau
      [\tau]\,, &(\rho,\sigma \in \cF)
    \end{align*}
    for some non-negative integer coefficients
    $\{N_{\rho,\sigma}^{\tau}=
      \dim\Hom(\rho\otimes\sigma,\tau)\}_{[\rho],[\sigma],[\tau]\in
      \Irr(\cF)}$, 
  \item $\Irr(\cF)$ is closed under conjugates/duals, \ie if $[\rho]\in
    \Irr(\cF)$ then there is a conjugate $[\bar\rho]\in \Irr(\cF)$.
\end{itemize}
The coefficients $\{N_{\rho,\sigma}^{\tau}\}$ are called the \textbf{fusion
coefficients}.
They are the structure constants of the associated fusion ring $\ZZ\Irr(\cF)$,
see below.

The dimension function $[\rho]\mapsto d\rho$ coincides with the unique positive
character $d\colon \Irr(\cF)\to \RR_{\geq 1}$ on the fusion ring, see 
Lemma \ref{lem:FP}.  
The complex vector space $\CC\Irr(\cF)$ has a normalized trace
$\tr([\rho])=\delta_{[\rho],[\id_M]}$ and is a finite dimensional
C${}^\ast$-algebra and therefore isomorphic to a multi-matrix algebra
$\bigoplus_i M_{d_i}(\CC)$.

Let $\cC$ be a unitary braided fusion category, \ie there is a natural family of 
unitaries
$\{\varepsilon(\rho,\sigma)\in\Hom(\rho,\sigma)\}$ fulfilling the usual
definition of a braiding.

The \textbf{M\"uger centralizer} $C_\cC(\cF)=\cF'\cap \cC$ of a full
subcategory $\cF$ of a unitary braied fusion category $\cC$ is defined to be
the full subcategory 
\begin{align}
  C_\cC(\cF) &= \{ \rho\in \cC :
    \varepsilon(\rho,\sigma)\varepsilon(\sigma,\rho)=1_{\sigma\otimes\rho}
  \text{ for all } \sigma \in \cF\}\,.
\end{align}
We call $\cC$ a \textbf{unitary modular tensor category (UMTC)} 
if the \textbf{Müger center}
$\cC'\cap \cC$ is trivial, \ie $C_\cC(\cC)\cong \Vect$. 
Further, if $\cD$ is UMTC which is a full subcategory of a UMTC $\cC$,
then $\cC$ is braided equivalent to $\cD\boxtimes C_\cC(\cD)$ by \cite[Theorem
4.2]{Mg2003-MC}.

\subsection{Subfactors in Unitary Fusion Categories}
We give a short background to subfactors related to a given unitary fusion
category. 
We refer to \cite{BiKaLo2014,BiKaLoRe2014-2}.
Let us assume that $\bim N \cF N\subset \End(N)$ is a unitary fusion category.
We can consider a subfactor $\iota(N)\subset M$, with finite index.
Then there is a dual homorphism $\bar\iota \colon M \to  N$. 
We from now on ask that an overfactor $N\subset M$ fulfills 
$\bar\iota\circ \iota \in \bim N\cC N$. 
We get isometries $w\colon \id_N \to \bar\iota\circ \iota$ 
and $v \colon \id_M \to \iota\circ\bar\iota$ fulfilling
the standard conjugate equation
\begin{align}
  (1_{\bar\iota} \otimes v^\ast) (w \otimes 1_{\bar\iota}) \equiv \bar\iota(v^\ast)w
  &= \lambda \cdot 1_{\bar\iota}\equiv \lambda \cdot 1_N \\
  (1_\iota \otimes w^\ast)(v\otimes 1_\iota)\equiv \iota(w^\ast)v &=
  \lambda\cdot  1_\iota \equiv \lambda \cdot 1_M
\end{align}
where $\lambda = [M:N]^{-\frac12}$, where $[M:N]$ is the minimal index.

A triple $\Theta=(\theta,w,x)$ with $\theta\in \End(\N)$ and isometries
$w\colon\id_N\to \theta$ and $x\colon\theta\to \theta^2$,
which we will graphically display as 
\begin{align}
  \sqrt[4]{d\theta}\,w&=
  \tikzmatht{
		\fill[\colN,rounded corners] (-1.5,-1) rectangle (1.5,1.5);
    \useasboundingbox (-1,-1.5)--(1,2);
    \draw[thick] (0,0)--(0,1.5) node[above]{$\theta$};
    \mydot{(0,0)};
    \node at (0,0) [right] {$w$};
  }
  &
  \sqrt[4]{d\theta}\,x&=
  \tikzmatht{
		\fill[\colN,rounded corners] (-2,-1) rectangle (2,1.5);
    \draw[thick] (-1,1.5) node[above]{$\theta$}--(-1,1) arc (180:360:1)--(1,1.5)
      node[above]{$\theta$} (0,0)--(0,-1) node[below]{$\theta$};
    \mydot{(0,0)};
    \node at (0,0) [above] {$x$};
  }
\end{align}
is called a \textbf{Q-system} (\cf \cite{Lo1994,LoRo1997}) if it fulfills
\begin{align*}
  xx&=\theta(x)x & (x\otimes 1_\theta)x &= (1_\theta\otimes x) x
  &\text{(associativity)}\\
  w^\ast x &= \theta(w^\ast)x=\lambda 1_\theta 
  &(w^\ast \otimes 1_\theta)x &= (1_\theta\otimes w^\ast)x=\lambda 1_\theta 
  &\text{(unit law)}
\end{align*}
where $\lambda=\sqrt{d\theta}^{-1}$. In graphical notation this reads:
\begin{align*}
  \tikzmatht{
		\fill[\colN,rounded corners] (-2,-1) rectangle (3,2.5);
    \draw[thick] (0,-1) node [below] {$\theta$}--(0,0);
    \mydot{(0,0)};
    \draw[thick] (-1,2.5) node [above] {$\theta$}--(-1,1) arc (180:360:1);
    \draw[thick] (0,2.5) node [above] {$\theta$}--(0,2) arc (180:360:1)--(2,2.5)
      node [above] {$\theta$};
    \mydot{(1,1)};
  }
  &=
  \tikzmatht{
    \begin{scope}[xscale=-1]
		  \fill[\colN,rounded corners] (-2,-1) rectangle (3,2.5);
      \draw[thick] (0,-1) node [below] {$\theta$}--(0,0);
      \mydot{(0,0)};
      \draw[thick] (-1,2.5) node [above] {$\theta$}--(-1,1) arc (180:360:1);
      \draw[thick] (0,2.5) node [above] {$\theta$}--(0,2) arc (180:360:1)--(2,2.5)
        node [above] {$\theta$};
      \mydot{(1,1)};  
    \end{scope}
  }
  \,,
  &
  \tikzmatht{
		\fill[\colN,rounded corners] (-2,-1) rectangle (2,2.5);
    \draw[thick] (0,-1) node [below] {$\theta$}--(0,0);
    \draw[thick] (-1,2.5) node [above] {$\theta$}--(-1,1) arc (180:360:1)--
    (1,1.5);
    \mydot{(1,1.5)};
    \mydot{(0,0)};
  }
  &=
  \tikzmatht{
    \begin{scope}[xscale=-1]
  		\fill[\colN,rounded corners] (-2,-1) rectangle (2,2.5);
      \draw[thick] (0,-1) node [below] {$\theta$}--(0,0);
      \draw[thick] (-1,2.5) node [above] {$\theta$}--(-1,1) arc (180:360:1)--
        (1,1.5);
      \mydot{(1,1.5)};
      \mydot{(0,0)};
    \end{scope}
  }
  = 
  \tikzmatht{
		\fill[\colN,rounded corners] (-1.5,-1) rectangle (1.5,2.5);
    \draw[thick] (0,-1) node [below] {$\theta$}--(0,2.5) node [above] {$\theta$};
  }
  \punkt
\end{align*}
Overfactors $M\supset N$ with $\bar\iota\circ\iota\in\bim N\cF N$ 
up to conjugation
are in one-to-one correspondence with simple equivalence classes of 
Q-systems $(\theta,v,x)$ in $\bim N \cF N$.
The Q-system is given by $\theta=\bar\iota\circ\iota$, $w$
and $x := (1_{\bar\iota}\otimes v \otimes 1_{\iota})
  =\bar\iota(v)\in \Hom(\theta,\theta\circ\theta)$.
The subfactor  $N\subset M$ is called \textbf{irreducible} if 
$\iota(N)'\cap  M=\CC \cdot 1_M$, which is equivalent with 
$\dim \Hom(\id_N,\theta)=1$. We call such a Q-system irreducible or 
connected. If not otherwise specified we mean by a Q-system a connected
Q-system. We note that connected Q-systems are automatically simple. 
We denote by $\bim M \cC M$ the unitary fusion category generated by 
$\beta \prec \iota \circ \rho \circ \bar\iota$. 
In this case we say, that $\bim N\cC N$ an $\bim M\cC M$ are Morita equivalent,
which correspond to weak monoidal equivalence, see 
\cite{Mg2003}.

We are often interested in the case that 
$\bim N\cC N$ is a UMTC and  $N\subset M$ coming from a commutative Q-system
in $\bim N\cC N$.
Then a Q-system is called \textbf{commutative}
if $\varepsilon(\theta,\theta)x = x$, diagramatically:
\begin{align*}		
	\tikzmatht{
		\fill[\colN,rounded corners] (-2,-.5) rectangle (2,4.5);
    \draw[thick] (1,4.5) node [above] {$\theta$} --(1,1.5) arc
      (360:180:1)--(-1,4.5) node [above] {$\theta$};
    \draw[thick] (0,-0.5) node [below] {$\theta$}--(0,0.5) node [below right]
      {$$};
	  \mydot{(0,0.5)};
	}
	=
	\tikzmatht{
		\fill[\colN,rounded corners] (-2,-.5) rectangle (2,4.5);
	  \draw[thick]  (-1,4.5) node [above] {$\theta$} 
	    .. controls (-1,3) and (1,3) .. (1,1.5) 
    	arc (360:180:1)
    	.. controls (-1,3) and (1,3) ..  (1,4.5) node [above] {$\theta$};
	  \draw[thick] (0,-0.5) node [below] {$\theta$}--(0,0.5) node [below right] {$$};
    \draw [double,ultra thick, \colN] (-1,1.5) .. controls (-1,3) and (1,3) ..
      (1,4.5) node [above] {$\theta$};
	   \mydot{(0,0.5)};
     \draw [thick] (-1,1.5) .. controls (-1,3) and (1,3) ..	(1,4.5) node
        [above] {$\theta$};
	   \mydot{(0,0.5)};
  }
  \,,
\end{align*}
Let $\iota(N)\subset M$ associated to Q-system in $\bim N\cC N$ 
For $\rho \in\bim N\cC N$ we define its \textbf{$\alpha$-induction} by
\begin{align*}
	\alpha^\pm_\rho=\bar\iota^{-1}\circ \Ad (\varepsilon^\pm(\rho,\theta))  
  \circ \lambda\circ \bar \iota \in \End(\M)
  \,.
\end{align*}
It turns out that $\alpha^\pm\colon \bim N\cC N \to \bim M\cC M$.
We denote by $\bim[\pm]M\cC M=\langle\alpha^+(\rho):\rho\in \bim N\cC N\}$ 
the unitary fusion category 
generated by $\alpha^\pm$-induction, respectively.
Because $\bim N \cC N$ is a UMTC it follows that $\bim[+]M\cC M\cup \bim[-]M\cC
M$ generates $\bim M \cC M$. We denote by $\bim[0] M\cC M = \bim[+]M\cC M\cap
\bim[-]M\cC M$ the \textbf{ambichiral} category.
\subsection{The Drinfel'd center}
\newcommand{\hb}{\varepsilon} %
The (unitary) \textbf{Drinfel'd center} or \textbf{quantum double} $Z(\bim N\cF
N)$  is the category with objects $(\sigma, \hb_\sigma)$, where $\sigma \in \bim
N\cF N$ and a (unitary) \textbf{half-braiding} $\hb_\sigma=\{
\hb_\sigma(\rho)\}_{\rho\in\bim N\cF N}$, \ie a family of unitaries
$\hb_\sigma(\rho)\in \Hom(\sigma\rho,\rho\sigma)$, such that for every
$t\in\Hom(\rho,\tau)$ 
$$
  (t\otimes 1_\sigma)\cdot \hb_\sigma(\rho) =
  \hb_\sigma(\tau)\cdot(1_\sigma\otimes t)
$$ 
and 
$$
	\hb_\sigma(\rho\tau) = (1_\rho \otimes \hb_\sigma(\tau))\cdot \hb_\sigma(\rho)
	\,.
$$
We introduce the following intuitive graphical notation for half-braidings:
\begin{align}
	\hb_\sigma(\tau)= 
	\tikzmath{
		\fill[\colN,rounded corners] (-.5,0) rectangle (1.5,2);
		\node (is) at (0,0) [below] {$\scriptstyle \sigma$};
		\node (it) at (1,0) [below] {$\scriptstyle \tau$};
		\node (os) at (1,2) [above] {$\scriptstyle \sigma$};
		\node (ot) at (0,2) [above] {$\scriptstyle \tau$};
		\draw [thick] (it) to [in=270,out=90] (ot);
		\draw [thick,-\myarrow] (is) to [in=230,out=90] (.33,.8);
		\draw [thick,\myarrow-](.67,1.2) to [in=270,out=50] (os);
		\draw [thick,-\myarrow] (is) to [in=230,out=90] (.33,.8);
		\draw [thick,\myarrow-](.67,1.2) to [in=270,out=50] (os);
	}\,.
\end{align}
The hooks at the end of the braiding symbolizes that the naturality in
$\sigma$ does  a priori not hold. 
Using this notations, the conditions on a half-brading reads as: 
\begin{align}
	\tikzmath{
		\fill[\colN,rounded corners] (-.5,-1) rectangle (1.5,3);
		\node (isa) at (0,-1) [below] {$\scriptstyle \sigma$};
		\node (ita) at (1,-1) [below] {$\scriptstyle \rho$};
		\node (osa) at (1,3) [above] {$\scriptstyle \sigma$};
		\node (ota) at (0,3) [above] {$\scriptstyle \tau$};
		\draw [thick] (ita)--(1,0) to [in=270,out=90] (0,2)--(ota);
		\draw [thick,-\myarrow] (isa)--(0,0) to [in=230,out=90] (.33,.8);
		\draw [thick,\myarrow-] (.67,1.2) to [in=270,out=50] (1,2)--(osa); 
    \fill[white] (-.25,2) rectangle (.25,2.5);
    \draw(-.25,2) rectangle (.25,2.5);
    \node at (0,2.25) {$\scriptstyle t$};
    \node at (.5,1) [right] {$\scriptstyle \hb_\sigma$};
	}
  &=
 	\tikzmath{
		\fill[\colN,rounded corners] (-.5,-1) rectangle (1.5,3);
		\node (isa) at (0,-1) [below] {$\scriptstyle \sigma$};
		\node (ita) at (1,-1) [below] {$\scriptstyle \rho$};
		\node (osa) at (1,3) [above] {$\scriptstyle \sigma$};
		\node (ota) at (0,3) [above] {$\scriptstyle \tau$};
		\draw [thick] (ita)--(1,0) to [in=270,out=90] (0,2)--(ota);
		\draw [thick,-\myarrow] (isa)--(0,0) to [in=230,out=90] (.33,.8);
		\draw [thick,\myarrow-] (.67,1.2) to [in=270,out=50] (1,2)--(osa); 
    \fill[white] (.75,-0.5) rectangle (1.25,0);
    \draw (.75,-0.5) rectangle (1.25,0);
    \node at (1,-0.25) {$\scriptstyle t$};
    \node at (.5,1) [right] {$\scriptstyle \hb_\sigma$};
	}
  \,,
  &
  \tikzmath{
		\fill[\colN,rounded corners] (-.5,-1) rectangle (1.5,3);
		\node (is) at (0,-1) [below] {$\scriptstyle \sigma$};
		\node (it) at (1,-1) [below] {$\scriptstyle \rho\tau$};
		\node (os) at (1,3) [above] {$\scriptstyle \sigma$};
		\node (ot) at (0,3) [above] {$\scriptstyle \rho\tau$};
		\draw [thick] (it)--(1,0) to [in=270,out=90] (0,2)--(ot);
		\draw [thick,-\myarrow] (is)--(0,0) to [in=230,out=90] (.33,.8);
		\draw [thick,\myarrow-](.67,1.2) to [in=270,out=50] (1,2)--(os);
	}
  &=
  \tikzmath{
		\fill[\colN,rounded corners] (-.5,0) rectangle (2.5,4);
		\node (is) at (0,0) [below] {$\scriptstyle \sigma$};
		\node (ir) at (1,0) [below] {$\scriptstyle \rho$};
		\node (it) at (2,0) [below] {$\scriptstyle \tau$};
		\node (os) at (2,4) [above] {$\scriptstyle \sigma$};
		\node (or) at (0,4) [above] {$\scriptstyle \rho$};
		\node (ot) at (1,4) [above] {$\scriptstyle \tau$};
		\draw [thick] (ir) to [in=270,out=90] (0,2)--(or);
		\draw [thick,-\myarrow] (is) to [in=230,out=90] (.33,.8);
		\draw [thick,\myarrow-](.67,1.2) to [in=270,out=50] (1,2);
		\draw [thick,-\myarrow] (1,2) to [in=230,out=90] (1.33,2.8);
		\draw [thick,\myarrow-](1.67,3.2) to [in=270,out=50] (os);
    \draw [thick] (it)--(2,2) to [in=270,out=90] (ot);
		\draw [thick,-\myarrow] (is) to [in=230,out=90] (.33,.8);
	}
  \,.
\end{align}

The morphisms are given by:
\begin{align*}	
	&\Hom((\rho,\hb_\rho),(\sigma,\hb_\sigma)) \\&\quad=\{
  t \in \Hom(\rho,\sigma) : (1_\tau\otimes t)\cdot \hb_\rho(\tau) =
  \hb_\sigma(\tau)\cdot(t\otimes 1_\tau) \text{ for all }\tau\in \bim N\cC N \}
  \\ &\quad=\left\{
  t \in \Hom(\rho,\sigma):
	\tikzmath{
		\fill[\colN,rounded corners] (-.5,-1) rectangle (1.5,3);
		\node (isa) at (0,-1) [below] {$\scriptstyle \rho$};
		\node (ita) at (1,-1) [below] {$\scriptstyle \tau$};
		\node (osa) at (1,3) [above] {$\scriptstyle \sigma$};
		\node (ota) at (0,3) [above] {$\scriptstyle \tau$};
		\draw [thick] (ita)--(1,0) to [in=270,out=90] (0,2)--(ota);
		\draw [thick,-\myarrow] (isa)--(0,0) to [in=230,out=90] (.33,.8);
		\draw [thick,\myarrow-] (.67,1.2) to [in=270,out=50] (1,2)--(osa); 
    \fill[white] (.75,2) rectangle (1.25,2.5);
    \draw(.75,2) rectangle (1.25,2.5);
    \node at (1,2.25) {$\scriptstyle t$};
    \node at (.5,1) [right] {$\scriptstyle \hb_\rho$};
	}
  =
 	\tikzmath{
		\fill[\colN,rounded corners] (-.5,-1) rectangle (1.5,3);
		\node (isa) at (0,-1) [below] {$\scriptstyle \rho$};
		\node (ita) at (1,-1) [below] {$\scriptstyle \tau$};
		\node (osa) at (1,3) [above] {$\scriptstyle \sigma$};
		\node (ota) at (0,3) [above] {$\scriptstyle \tau$};
		\draw [thick] (ita)--(1,0) to [in=270,out=90] (0,2)--(ota);
		\draw [thick,-\myarrow] (isa)--(0,0) to [in=230,out=90] (.33,.8);
		\draw [thick,\myarrow-] (.67,1.2) to [in=270,out=50] (1,2)--(osa); 
    \fill[white] (-.25,-0.5) rectangle (.25,0);
    \draw (-.25,-0.5) rectangle (.25,0);
    \node at (0,-0.25) {$\scriptstyle t$};
    \node at (.5,1) [right] {$\scriptstyle \hb_\sigma$};
	}
  \text{ for all }\tau\in \bim N\cF N 
  \right\}\,.
\end{align*}
and the tensor product by
\begin{align}
  (\rho,\hb_\rho)\otimes(\sigma,\hb_\sigma) 
   &= (\rho\sigma, (\hb_\rho(\slot) \otimes 1_\sigma)\cdot(1_\rho\otimes
   \hb_\sigma(\slot)))\,,\\
  \tikzmath{
		\fill[\colN,rounded corners] (-.5,-1) rectangle (1.5,3);
		\node (is) at (0,-1) [below] {$\scriptstyle \rho\sigma$};
		\node (it) at (1,-1) [below] {$\scriptstyle \tau$};
		\node (os) at (1,3) [above] {$\scriptstyle \rho\sigma$};
		\node (ot) at (0,3) [above] {$\scriptstyle \tau$};
		\draw [thick] (it)--(1,0) to [in=270,out=90] (0,2)--(ot);
		\draw [thick,-\myarrow] (is)--(0,0) to [in=230,out=90] (.33,.8);
		\draw [thick,\myarrow-](.67,1.2) to [in=270,out=50] (1,2)--(os);
	}
  &=
  \tikzmath{
		\fill[\colN,rounded corners] (-.5,0) rectangle (2.5,4);
		\node (ir) at (0,0) [below] {$\scriptstyle \rho$};
		\node (is) at (1,0) [below] {$\scriptstyle \sigma$};
		\node (it) at (2,0) [below] {$\scriptstyle \tau$};
		\node (os) at (2,4) [above] {$\scriptstyle \sigma$};
		\node (ot) at (0,4) [above] {$\scriptstyle \tau$};
		\node (or) at (1,4) [above] {$\scriptstyle \rho$};
		\draw [thick,-\myarrow] (is) to [in=230,out=90] (1.33,.8);
		\draw [thick,\myarrow-](1.67,1.2) to [in=270,out=50] (2,2)--(os);
    \draw [thick,-\myarrow] (ir)--(0,2) to [in=230,out=90] (.33,2.8);
		\draw [thick,\myarrow-](.67,3.2) to [in=270,out=50] (or);
    \draw [thick] (it)  to [in=270,out=90] (1,2)  to [in=270,out=90] (ot);
	}
  \,.
\end{align}
The tensor product of morphisms is the usual one.
Namely, it is easy to check that if
$s\in\Hom((\sigma,\hb_\sigma),(\sigma',\hb_{\sigma'}))$ and $t\in
\Hom((\tau,\hb_\tau),(\tau',\hb_{\tau'}))$ then 
$$
  t\otimes s 
  \in\Hom((\sigma,\hb_\sigma)\otimes(\tau,\hb_\tau),
    (\sigma',\hb_{\sigma'})\otimes (\tau',\hb_{\tau'}))
  \equiv \Hom( (\sigma\tau,\hb_\sigma\sigma(\hb_\tau(\slot))),
    (\sigma'\tau',\hb_{\sigma'}\sigma'(\hb_{\tau'}(\slot))) )\,.
$$
Note that this is again a strict tensor category and it is braided, with the
braiding given as:
\begin{align}
	\tikzmath{
		\fill[\colN,rounded corners] (-.5,0) rectangle (1.5,2);
		\node (is) at (0,0) [below] {$\scriptstyle S$};
		\node (it) at (1,0) [below] {$\scriptstyle T$};
		\node (os) at (1,2) [above] {$\scriptstyle S$};
		\node (ot) at (0,2) [above] {$\scriptstyle T$};
		\draw [thick] (it) to [in=270,out=90] (ot);
		\draw [double, ultra thick, \colN] (.33,.8) -- (.67,1.2);
		\draw [thick] (is) to [in=230,out=90] (.33,.8) --
    (.67,1.2) to [in=270,out=50] (os);
	}
  =
	\tikzmath{
		\fill[\colN,rounded corners] (-.5,0) rectangle (1.5,2);
		\node (is) at (0,0) [below] {$\scriptstyle \sigma$};
		\node (it) at (1,0) [below] {$\scriptstyle \tau$};
		\node (os) at (1,2) [above] {$\scriptstyle \sigma$};
		\node (ot) at (0,2) [above] {$\scriptstyle \tau$};
		\draw [thick] (it) to [in=270,out=90] (ot);
		\draw [thick,-\myarrow] (is) to [in=230,out=90] (.33,.8) ;
    \draw [thick,\myarrow-](.67,1.2) node [right] {$\scriptstyle \hb_\sigma$} to
      [in=270,out=50] (os);
	}
  \,,
\end{align}
where $S=(\sigma,\hb_\sigma)$ and $T=(\tau,\hb_\tau)$.

\subsection{Longo--Rehren Subfactors and Drinfel'd center}
\label{sec:LR}
Let $\cF$ be a unitary fusion category (UFC). 
We may assume that $ \cF =\bim N\cF N\subset \End(N)$ for $N$ a hyperfinite
type III${}_1$ factor. 
For example, $\cF$ might be the even part $\cF=\langle \bar\iota\iota\rangle$
of a finite index, finite depth subfactor $\iota(N)\subset M$.

Starting from this data we can build the Longo--Rehren inclusion
\cite{LoRe1995} as follows.
Let $B=N\otimes N^\op$ and let $j\colon N \to N^\op$ be an anti-linear
isomorphism. For $\beta\in\End_0(N)$ we define $\beta^\op=j\circ \beta \circ
j^{-1}\in \End_0(N^\op)$. 
We denote by $\bim B\cG B$ the unitary fusion category $\langle \rho\otimes
\sigma^\op: \rho,\sigma\in \cF\rangle\subset \End_0(B)$, see
\cite{LoRe1995,Iz2000}. This means, we have $\bim B\cG B \cong \cF \boxtimes
\cF^\op$.
There is a Q-system
$(\gamma,v\in\Hom(\id_B,\gamma),z\in \Hom(\gamma,\gamma\gamma))$ inside 
$\bim B\cG B$ given by \cite{LoRe1995,Iz2000}:
\begin{align}
    \gamma &= \sum \Ad v_i \circ (\rho_i\otimes \rho_i^\op)\,,&
    v&= v_0\,,&
    z &= \frac1{\sqrt{\Dim\cF}}\bigoplus_{ijk}\sum_{t\in
      \mathrm{B}(\rho_i,\rho_j\rho_k)} \sqrt{\frac{d\rho_id\rho_j}{d\rho_k}} 
      t\otimes j(t) \,,
\end{align}
where $\{v_i\}$ are generators of $\cO_{n+1}$ and $\B(\rho_i,\rho_j\rho_k)$
is an orthonormal basis of $\Hom(\rho_i,\rho_j\rho_k)$.
We get a subfactor $A = E(B)\subset B$, where $E(\slot)=z^\ast \gamma(\slot) z$
is the conditional expectation associated with $\Gamma$. 
We denote by $\iota\colon A \to B$ the canonical inclusion, then there is a
dual $\bar \iota\colon B \to A$, such that $\gamma=\iota\bar\iota$, name
$\bar\iota$ coincides with $\gamma$ seen as a map $B\to A$. 
Let $\bim A \cG A$ be the unitary fusion category  $\langle \bar\iota(\rho\otimes
\sigma^\op)\iota: \rho,\sigma\in \cF\rangle\subset \End_0(A)$.
By \cite{Iz2000,Mg2003II} we get an equivalence of unitary fusion categories
$\eta\colon Z(\cF) \to \bim A\cG A$, in particular $\bim A \cG A$ has a
non-degenerate braiding and therefore possesses the structure of a UMTC.
The dual Q-system $\Theta=(\theta=\bar\iota\iota, w,x=\bar\iota(v))$ is a
commutative Q-system in $\bim A \cG A$. 
Since $(d\theta)^2=(d\gamma)^2=(\Dim \cF)^2=\Dim(\bim A\cG A)$ it is a Lagrangian
Q-system, which is defined as follows.
\begin{defi}
  We call an irreducible commutative Q-system $\Theta=(\theta,x,w)$ in a UMTC 
  $\cC$ \textbf{Lagrangian} if $(\dim \theta)^2= \Dim \cC$.
\end{defi}
We remark that $[\alpha^+_{\eta(\beta,\hb_\beta)}]=[\beta\otimes \id]$ by
\cite[Corollary 6.3]{Iz2000}.
\begin{prop}
  Let $\bim A\cG A \cong Z(\cG)$ be the unitary modular tensor category from
  above.
  Then the category generated by $\alpha^+$--induction 
  $\bim[+] B\cG B=\langle \alpha^+_\rho :\rho\in\bim A \cG A\rangle$ coincides 
  with $\langle \beta\otimes \id:\beta \in \cF\rangle \cong \cF$.
\end{prop}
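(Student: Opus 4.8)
The plan is to deduce the statement from the objectwise computation recorded in the remark just above, namely $[\alpha^+_{\eta(\beta,\hb_\beta)}]=[\beta\otimes\id]$ (\cite[Corollary 6.3]{Iz2000}), and to upgrade this identity on the objects of $\bim A\cG A\cong Z(\cF)$ to an equality of the subcategories of $\bim B\cG B$ that they generate under direct sums and subobjects. Since $\alpha^+\colon\bim A\cG A\to\bim B\cG B$ is a unital additive tensor functor that preserves subobjects (an isometry $v\in\Hom(\beta,\rho)$ still intertwines $\alpha^+_\beta$ and $\alpha^+_\rho$), it suffices to compare the simple objects occurring on each side.

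First I would check the inclusion $\bim[+] B\cG B\subseteq\langle\beta\otimes\id:\beta\in\cF\rangle$. Every object of $\bim A\cG A$ equals $\eta(\delta,\hb_\delta)$ for some $(\delta,\hb_\delta)\in Z(\cF)$, and the remark gives $[\alpha^+_{\eta(\delta,\hb_\delta)}]=[\delta\otimes\id]$; since the right-hand category is already closed under direct sums and subobjects, every generator of $\bim[+] B\cG B$ lies in it. The reverse inclusion carries the only real content, and it rests on the dominance of the forgetful functor $F\colon Z(\cF)\to\cF$, $(\beta,\hb_\beta)\mapsto\beta$: for each simple $\beta\in\cF$ one has $\beta\prec F(I(\beta))$, where $I$ is the two-sided adjoint induction functor and $F(I(\beta))=\bigoplus_{[\sigma]\in\Irr(\cF)}\sigma\beta\bar\sigma$ contains $\beta$ as its $\sigma=\id$ summand. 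Transporting through $\eta$ and applying the remark yields $\beta\otimes\id\prec\alpha^+_{\eta(I(\beta))}\in\bim[+] B\cG B$, and closedness under subobjects gives $\beta\otimes\id\in\bim[+] B\cG B$ for every simple $\beta$. Hence $\langle\beta\otimes\id:\beta\in\cF\rangle\subseteq\bim[+] B\cG B$, and the two categories coincide.

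It remains to identify this common category with $\cF$. Here I would observe that $\beta\mapsto\beta\otimes\id$ is a unital tensor functor $\cF\to\bim B\cG B$ which is fully faithful, because in $B=N\otimes N^\op$ one has $\Hom(\beta\otimes\id,\beta'\otimes\id)=\Hom(\beta,\beta')\otimes\Hom(\id,\id)=\Hom(\beta,\beta')$; thus it is an equivalence onto its image $\langle\beta\otimes\id:\beta\in\cF\rangle\cong\cF$. The main obstacle is therefore conceptual rather than computational: the forward inclusion and the final identification are immediate from functoriality of $\alpha^+$ and the product structure of $B$, while the substance is the reverse inclusion, which forces one to invoke surjectivity up to subobjects of the Drinfel'd-center forgetful functor so that every $\beta\otimes\id$ is genuinely realized by $\alpha^+$-induction.
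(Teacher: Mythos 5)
Your proof is correct and follows essentially the route the paper intends: the paper states this proposition without a separate proof, placing it directly after the remark that $[\alpha^+_{\eta(\beta,\hb_\beta)}]=[\beta\otimes \id]$ (\cite[Corollary 6.3]{Iz2000}), and that remark is exactly the ingredient on which your argument is built. The two points you spell out --- dominance of the forgetful functor $F\colon Z(\cF)\to\cF$, via $\beta\prec F(I(\beta))=\bigoplus_{[\sigma]\in\Irr(\cF)}\sigma\beta\bar\sigma$, to get the reverse inclusion, and full faithfulness of $\beta\mapsto\beta\otimes\id$, using $(1\otimes N^\op)'\cap(N\otimes N^\op)=N\otimes\CC$, to identify the image with $\cF$ --- are precisely the details the paper leaves implicit, so your write-up is a faithful completion of the paper's (omitted) proof rather than a different method.
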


\subsection{Hypergroups and Fusion Rings}
We introduce the notion of a hypergroup.  The reason is that we need a
generalization of a fusion ring, where the coefficients are not necessarily
integral.  A fusion ring gives ``up to a different normalization'' a hypergroup. 
The
converse is in general not true.  The normalization of the hypergroup has a more
probabilistic nature, while the fusion ring has more of a categorical or
representation theoretical nature. 
\begin{defi}
  \label{defi:Hypergroup}
  A (finite) hypergroup is a set $K=\{c_0,\ldots,c_n\}$
  with an evolution $c_i\mapsto c_{\bar i}$ and 
  a structure of an associative unital $\ast$-algebra structure on $\CC K$:
  \begin{align}
    c_ic_j&=\sum_k  C_{ij}^k c_k\,,&
    c_k^\ast &=c_{\bar k}\,,&
    c_0^\ast &=c_0\,,
  \end{align}
  with unit $c_0$, such that 
  \begin{enumerate}
    \item $C_{ij}^k\geq 0$,
    \item $\sum_k C_{ij}^k=1$ for all $i,j$, and 
    \item $C^0_{ij}=C^0_{ji}$ and 
        $C_{ij}^0 > 0$ iff $j=\bar i$.
  \end{enumerate}
\end{defi}
The condition  $C^0_{ij}=C^0_{ji}$ turns out to be automatic \cite{SuWi2003}.
We note that $\CC K:=\mathrm{span}_\CC(K)$ is a C${}^\ast$-algebra with
normalized trace defined by $\tr(c_k)=\delta_{k,0}$.
\begin{defi} 
  A \textbf{Haar element} is an element $e\in\Conv(K)$, such that
  $e^\ast=e=e^2$ and $c_ke=ec_k=e$ for all $c_k\in K$.  
\end{defi}
We define the \textbf{weight} of an element  $c_i\in K$ to be  $w_i=(C^0_{i\bar
i})^{-1}$ and the \textbf{weight $D(K)$ of the hypergroup $K$} to be
$D(K)=\sum_{k} w_k$. 
Then it follows that $w_i=w_{\bar i}$.  
\begin{lem}
\label{lem:Frob}
We have
\begin{align}
  C_{ij}^k&=
 \frac{w_k}{w_i} C_{k\bar j}^i  
  =\frac{w_k}{w_j} C_{\bar i k}^j
  =\frac{w_k}{w_i} C^{\bar i}_{j\bar k}
  \,. %
\end{align}
\end{lem}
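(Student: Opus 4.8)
The plan is to encode all three identities as symmetries of a single $S_3$-symmetric trilinear form built from the trace on $\CC K$. First I would record the elementary facts about the trace $\tr(c_k)=\delta_{k,0}$. It is a trace: since $\tr(c_ic_j)=C_{ij}^0$, axiom (3) ($C_{ij}^0=C_{ji}^0$) gives $\tr(c_ic_j)=\tr(c_jc_i)$, hence $\tr(ab)=\tr(ba)$ by bilinearity. Axiom (3) also forces $C_{ij}^0>0$ only when $j=\bar i$, so that, recalling $w_i^{-1}=C_{i\bar i}^0$,
\[
  \tr(c_ic_j)=C_{ij}^0=\delta_{j,\bar i}\,w_i^{-1}\,.
\]

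Next I would establish the one substantive formula of the proof, expressing the structure constants through the trace. Expanding $c_ic_jc_m=\sum_l C_{ij}^l\,c_lc_m$ and applying $\tr$ picks out, by the previous step, only the term $l=\bar m$, so that (using $w_m=w_{\bar m}$, noted just before the lemma)
\[
  F(i,j,m):=\tr(c_ic_jc_m)=w_m^{-1}\,C_{ij}^{\bar m}\,,
  \qquad\text{equivalently}\qquad
  C_{ij}^k=w_k\,F(i,j,\bar k)\,.
\]
Thus everything reduces to the symmetries of $F$. Two are available. Cyclicity $F(i,j,m)=F(j,m,i)=F(m,i,j)$ is immediate from traciality. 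The bar--reversal symmetry $F(i,j,m)=F(\bar m,\bar j,\bar i)$ comes from the $\ast$-structure: all values of $F$ are non-negative reals, and $\tr(a^\ast)=\overline{\tr(a)}$ (immediate from $\tr(c_k)=\delta_{k,0}$ and $c_k^\ast=c_{\bar k}$), so $F(i,j,m)=\overline{F(i,j,m)}=\tr\bigl((c_ic_jc_m)^\ast\bigr)=\tr(c_{\bar m}c_{\bar j}c_{\bar i})=F(\bar m,\bar j,\bar i)$. Together these generate the full $S_3$-action permuting the three slots of $F$.

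Finally I would read off the three identities, using $C_{ij}^k=w_kF(i,j,\bar k)$ throughout. The third equality is pure cyclicity: $\tfrac{w_k}{w_i}C^{\bar i}_{j\bar k}=\tfrac{w_k}{w_i}\,w_iF(j,\bar k,i)=w_kF(j,\bar k,i)=w_kF(i,j,\bar k)=C_{ij}^k$. The first equality uses bar--reversal: $\tfrac{w_k}{w_i}C_{k\bar j}^i=\tfrac{w_k}{w_i}\,w_iF(k,\bar j,\bar i)=w_kF(k,\bar j,\bar i)$, and bar--reversal gives $F(i,j,\bar k)=F(k,\bar j,\bar i)$. The second combines both: $\tfrac{w_k}{w_j}C_{\bar i k}^j=\tfrac{w_k}{w_j}\,w_jF(\bar i,k,\bar j)=w_kF(\bar i,k,\bar j)$, and applying bar--reversal followed by a cyclic shift yields $F(i,j,\bar k)=F(k,\bar j,\bar i)=F(\bar i,k,\bar j)$.

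The only genuine content is the identity $\tr(c_ic_jc_m)=w_m^{-1}C_{ij}^{\bar m}$ together with the observation that $F$ is $S_3$-symmetric; the three displayed relations are then just the three non-identity conjugacy classes applied to a single slot pattern. The main (mild) obstacle is purely bookkeeping: keeping the bars and the weight factors $w$ correctly placed when translating between $F$ and $C$, and invoking $w_i=w_{\bar i}$ consistently at each step.
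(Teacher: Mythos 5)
Your proof is correct and is essentially the paper's own argument in systematized form: the trace $\tr$ with $\tr(c_k)=\delta_{k,0}$ is exactly the $c_0$-coefficient functional, so your identity $F(i,j,m)=w_m^{-1}C_{ij}^{\bar m}$ together with its bar-reversal symmetry reproduces the paper's comparison of the $c_0$-coefficients of $(c_ic_j)c_{\bar k}$ with those of its adjoint $(c_kc_{\bar j})c_{\bar i}$. The only cosmetic difference is that you package the argument as an $S_3$-type symmetry of the trilinear form $F$, making traciality (cyclicity) explicit, where the paper handles the remaining identities with ``derived analogously.''
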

\begin{proof}
  The first equation follows from comparing the $c_0$ coefficients of
  $(c_ic_j)c_{\bar k}=w_k^{-1} C_{ij}^kc_0+\cdots$
  and $(c_ic_jc_{\bar k})^\ast=(c_kc_{\bar j})c_{\bar i}=w_i^{-1} 
    C_{k\bar j}^ic_0+\cdots$.
  The other equations are derived analogously.
\end{proof}
\begin{prop} 
  The convex sum
  \begin{align}
    e_K=\frac1D \sum_{c_k\in K} w_kc_k
  \end{align}
  defines a Haar element on $K$.
\end{prop}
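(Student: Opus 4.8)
The plan is to verify the three defining properties of a Haar element from Definition~\ref{defi:Hypergroup} in turn, the only substantial input being Frobenius reciprocity (Lemma~\ref{lem:Frob}). First I would check that $e_K$ genuinely lies in $\Conv(K)$: since $D=D(K)=\sum_k w_k$, the scalars $w_k/D$ are non-negative and sum to $1$, so $e_K$ is a convex combination of the basis elements $c_k$. Next, self-adjointness $e_K^\ast=e_K$ follows from $c_k^\ast=c_{\bar k}$ together with the symmetry $w_k=w_{\bar k}$ recorded just after the definition of the weight: reindexing the sum by $k\mapsto\bar k$ gives $e_K^\ast=\frac1D\sum_k w_k c_{\bar k}=\frac1D\sum_k w_{\bar k}c_{\bar k}=e_K$.

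The heart of the argument is the left absorption property $c_j e_K=e_K$ for every $c_j\in K$. Expanding with the structure constants yields $c_j e_K=\frac1D\sum_l\bigl(\sum_k w_k C_{jk}^l\bigr)c_l$, so it suffices to establish the identity $\sum_k w_k C_{jk}^l=w_l$ for all $j,l$. Here I would invoke Lemma~\ref{lem:Frob} in the form $C_{jk}^l=\frac{w_l}{w_k}C_{\bar j l}^k$ (the middle equation with indices relabelled). The weights then cancel, leaving $\sum_k w_k C_{jk}^l=w_l\sum_k C_{\bar j l}^k=w_l$, where the last step is exactly the row-sum normalization axiom $\sum_k C_{ij}^k=1$. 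This identity is precisely where the genuine hypergroup structure enters, and I expect it to be the main obstacle; the rest is bookkeeping.

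Finally, right absorption and idempotency follow formally. Applying left absorption with $\bar j$ in place of $j$ gives $c_{\bar j}e_K=e_K$; taking adjoints and using $e_K^\ast=e_K$ and $c_{\bar j}^\ast=c_j$ yields $e_K c_j=e_K$ for all $c_j\in K$. For idempotency I would write $e_K^2=\frac1D\sum_j w_j\,(c_j e_K)=\frac1D\sum_j w_j\,e_K=e_K$, the last equality again because $\sum_j w_j=D$. Collecting the four verified properties $e_K\in\Conv(K)$, $e_K^\ast=e_K$, $c_k e_K=e_K c_k=e_K$, and $e_K^2=e_K$ shows that $e_K$ is a Haar element, completing the proof.
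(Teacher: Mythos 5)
Your proof is correct and takes essentially the same route as the paper: the substantive step in both is combining Frobenius reciprocity (Lemma \ref{lem:Frob}) with the normalization $\sum_k C_{ij}^k=1$ to obtain the absorption identity $\sum_k w_k C_{jk}^l = w_l$. Your derivation of right absorption and idempotency formally from left absorption and self-adjointness is just a minor streamlining of the paper's direct computations (and in fact avoids a small index slip in the paper's $c_\ell e_K$ display).
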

\begin{proof}
Self-adjointness $e_K^\ast=e_K$ follows immediately. Further, we have:
\begin{align}
  e_Kc_\ell&= \frac1{D} \sum_{k}w_k c_kc_\ell
    =\frac1{D} \sum_{k,m} w_k C_{k\ell}^mc_m
    =\frac1{D} \sum_{k,m} w_m C_{m\bar\ell}^kc_m
    =\frac1{D} \sum_{m} w_mc_m
    =e_K\,,\\
  c_\ell e_K&= \frac1{D} \sum_{k}w_k c_\ell c_k
    =\frac1D \sum_{k,m} w_k C_{\ell k}^m c_m
    =\frac1D \sum_{k,m} w_m C_{m\bar k}^\ell c_m
    =\frac1D \sum_{m} w_m[m]
    =e_K\,,\\
  e_K^2&= \frac1{D^2} \sum_{k,\ell}w_kw_\ell c_kc_\ell
    =\frac1{D^2} \sum_{k,\ell,m} w_kw_\ell C_{k\ell}^mc_m
    =\frac1{D^2} \sum_{k,\ell,m} w_mw_\ell C_{m\bar\ell}^kc_m
    \\&=\frac1{D^2} \sum_{\ell,m} w_mw_\ell c_m
    =\frac1{D} \sum_{m} w_mw_\ell c_m
    =e_K\,.\qedhere
\end{align}
\end{proof}
\begin{example} 
  Let $G$ be a finite group, then it is a hypergroup with $\bar g=g^{-1}$. In
  this case the Haar element is $e_G=\frac1{|G|}\sum_{g\in G}g$ and corresponds
  to the average with respect to the Haar measure.
\end{example}
\begin{defi}
A \textbf{fusion ring (basis)} is a set $F=\{[0],\ldots,[n]\}$
with an evolution $[i]\mapsto [\bar i]$ and 
a structure of an associative unital $\ast$-algebra structure on $\CC F$:
\begin{align}
  [i][j]&=\sum_k  N_{ij}^k[k]\,,&
  [i]^\ast &=[\bar i]\,,
\end{align}
with $[0]$ the unit, such that
  \begin{enumerate}
    \item $N_{ij}^k\in \ZZ_{\geq 0}$, and
    \item $N^0_{ij}=\delta_{j,\bar i}$.
  \end{enumerate}
\end{defi}
With this definition the ring $\ZZ F$ is a based ring \cite[3.1]{EtGeNiOs2015}
which is a fusion ring.
We drop the word basis if it is clear what we are talking about.
\begin{lem}
  Let $F$ be a fusion ring, then we have \textbf{Frobenius reciprocity}, \ie
  \begin{align}
    N_{ij}^k=N_{k\bar j}^i = N_{\bar i k}^j = N^{\bar i}_{j\bar k} 
      = N^{\bar j}_{\bar k i}\,.
  \end{align}
\end{lem}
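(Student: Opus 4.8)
The plan is to reduce everything to a single trace formula for the structure constants and then exploit the symmetries of the normalized trace on the $\ast$-algebra $\CC F$. Recall that $\CC F$ carries the trace $\tr([k])=\delta_{k,0}$, extended $\CC$-linearly. First I would check that this trace is genuinely tracial: since $\tr([i][j])=N_{ij}^0=\delta_{j,\bar i}$ by axiom~(2), and likewise $\tr([j][i])=N_{ji}^0=\delta_{i,\bar j}$, the two agree because $j=\bar i$ is equivalent to $i=\bar j$; by bilinearity $\tr(ab)=\tr(ba)$ for all $a,b\in\CC F$, whence $\tr(abc)=\tr(bca)=\tr(cab)$ for three factors. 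The same computation gives $\tr([m][\bar k])=N_{m\bar k}^0=\delta_{mk}$, and expanding $[i][j]=\sum_m N_{ij}^m[m]$ therefore yields the key identity
\[
  N_{ij}^k=\tr([i][j][\bar k]).
\]

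With this formula in hand, the five claimed coefficients are just the traces of the five words obtained from $([i],[j],[\bar k])$. The three identities $N_{ij}^k=N^{\bar i}_{j\bar k}=N^{\bar j}_{\bar k i}$ lie in a single cyclic orbit: the words $(i,j,\bar k)$, $(j,\bar k,i)$, $(\bar k,i,j)$ are cyclic rotations of one another, so these equalities follow immediately from cyclicity of $\tr$ applied to the formula above. For the remaining two, $N_{k\bar j}^i$ and $N_{\bar i k}^j$, I would use the $\ast$-structure: taking adjoints reverses the order, $([i][j][\bar k])^\ast=[k][\bar j][\bar i]$, and since the structure constants are real one has $\tr([k][\bar j][\bar i])=\overline{\tr([i][j][\bar k])}=N_{ij}^k$, using $\overline{\tr(a)}=\tr(a^\ast)$ (which holds because the coefficient of $[0]$ in $a^\ast$ is the conjugate of that in $a$). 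This identifies $N_{k\bar j}^i=N_{ij}^k$, and a further cyclic rotation of $(k,\bar j,\bar i)$ to $(\bar i,k,\bar j)$ gives $N_{\bar i k}^j=N_{ij}^k$.

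The computation is essentially routine; the two points that deserve care are that tracality of $\tr$ is not automatic but follows precisely from axiom~(2), $N^0_{ij}=\delta_{j,\bar i}$, and that cyclicity alone connects only the three coefficients lying in one cyclic orbit. The remaining two coefficients sit in the \emph{opposite} cyclic orbit, and bridging the two orbits is exactly what the involution (order reversal) together with the reality $N_{ij}^k\in\ZZ_{\geq0}$ accomplishes. Thus the only mild obstacle is to remember to invoke the $\ast$-operation, and not merely cyclicity, in order to obtain the full set of Frobenius symmetries.
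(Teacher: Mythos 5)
Your proof is correct and is essentially the paper's own argument: the paper proves this lemma by the same computation as Lemma \ref{lem:Frob}, whose proof compares the coefficient of the unit in the triple product $(c_ic_j)c_{\bar k}$ and in its adjoint $(c_kc_{\bar j})c_{\bar i}$ — and extracting the unit coefficient is exactly your normalized trace, so associativity plus the $\ast$-operation there is your cyclicity-plus-involution argument. Your version merely makes explicit what the paper leaves implicit, namely the trace formula $N_{ij}^k=\tr([i][j][\bar k])$ and the verification that $\tr$ is tracial.
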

\begin{proof} 
  As in Lemma \ref{lem:Frob}, or by noting that we get a hypergroup (see below).
\end{proof}
\begin{defi}
  If $\cF$ is a unitary fusion category, then $F=\Irr(\cF)$ is a fusion ring
  with product 
  \begin{align}
      [\rho]\otimes [\sigma]&=\bigoplus_{\tau\in \Irr(\cF)} N_{\rho,\sigma}^\tau
      [\tau], &N_{\rho,\sigma}^\tau&=\dim\Hom(\rho\otimes \sigma,\tau)
  \end{align}
  and $[\sigma]^\ast=[\bar\sigma]$.
  In this case we say $\cF$ is a \textbf{categorification} of $F$.
  If a fusion ring $F$ has a categorification, we say it is
  \textbf{categorifiable}.
\end{defi}
Usually, unitarity is not assumed, but in this paper we deal with operator
algebras which naturally give unitary fusion categories.
It is widely open problem which fusion rings are categorifiable. 
The following is a classical result.
\begin{example} 
  If $G$ is a finite group, then the categorifications are in one-to-one
  correspondence with elements in $H^3(G,\U(1))$ (see \eg \cite{FuRuSc2004}).
  We note that in operator algebraic terms, the categorifcation  is given by
  $\langle\alpha_g:g\in G\rangle\subset \End(M)$, where $\alpha \colon G\to
  \Out(M)$ is a \textbf{$G$-kernel} \cite{Co1977,Jo1980} and $M$ a
  (hyperfinite) type \threeone   (or originally II${}_1$) factor.
\end{example}
The following is an application of Perron--Frobenius theory.
\begin{lem}[{\cite[10.\ Theorem]{Su1992}}] 
  \label{lem:FP}
  There is a unique positive character $K\ni k\mapsto d_k$, such that $d_0=1$. 
  It holds $d_{\bar i}=d_i$ and $d_i\geq 1$.
\end{lem}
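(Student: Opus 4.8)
The plan is to realize the character as a common Perron--Frobenius eigenvector of the multiplication operators, and then to read off the three stated properties from uniqueness together with the unit and conjugation axioms. I would carry out the argument for the fusion ring $F=\Irr(\cF)$ with structure constants $N_{ij}^k$, where $d_i\geq 1$ is the nontrivial Frobenius--Perron statement; for a hypergroup $K$ the same Perron--Frobenius argument applies with the $C_{ij}^k$ in place of the $N_{ij}^k$. Writing $d_i$ for the value of a character on $[i]$, the defining condition is $d_id_j=\sum_k N_{ij}^kd_k$ for all $i,j$. Let $L_i$ and $R_j$ be the matrices of left and right multiplication on $\CC F$, so that $(L_i)_{kj}=(R_j)_{ki}=N_{ij}^k\geq 0$, and set $L=\sum_i L_i$. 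Associativity gives $L_iR_j=R_jL_i$, hence $R_jL=LR_j$ for every $j$.

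\emph{Existence.} First I would check that $L$ is primitive: from $N_{i0}^k=\delta_{ik}$ and $N_{ij}^0=\delta_{j\bar i}$ one gets $(L)_{k0}=1$ and $(L)_{0j}=1$ for all $j,k$, so every entry of $L^2$ is positive. By Perron--Frobenius, $L$ has a strictly positive eigenvector $\mathbf f$, unique up to scale, with one-dimensional eigenspace for the spectral radius $\Lambda$. Since each $R_j$ is entrywise nonnegative and commutes with $L$, the vector $R_j\mathbf f$ is again a nonnegative eigenvector of $L$ for $\Lambda$, hence $R_j\mathbf f=c_j\mathbf f$ with $c_j\geq 0$. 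Applying the relation $R_{j'}R_j=\sum_m N_{jj'}^mR_m$ to $\mathbf f$ yields $c_jc_{j'}=\sum_m N_{jj'}^mc_m$, while $R_0=\id$ gives $c_0=1$; thus $d_i:=c_i$ is a character with $d_0=1$. Moreover $d_id_{\bar i}=\sum_k N_{i\bar i}^kd_k\geq N_{i\bar i}^0d_0=1$, so each $d_i>0$ and the character is strictly positive.

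\emph{Uniqueness and the remaining properties.} Summing the character identity over $i$ gives $\mathbf d^{\mathrm T}L=(\sum_i d_i)\,\mathbf d^{\mathrm T}$, so any strictly positive character is a strictly positive left eigenvector of the primitive matrix $L$; by Perron--Frobenius such an eigenvector is unique up to scale, and $d_0=1$ pins down the scale, giving uniqueness. Since conjugation is an anti-automorphism, $N_{ij}^k=N_{\bar j\bar i}^{\bar k}$, the assignment $i\mapsto d_{\bar i}$ is again a positive character with value $1$ at $0$ (one checks $\sum_k N_{ij}^k d_{\bar k}=\sum_m N_{\bar j\bar i}^m d_m=d_{\bar i}d_{\bar j}$); uniqueness then forces $d_{\bar i}=d_i$. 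Finally, combining this with the inequality above gives $d_i^2=d_id_{\bar i}\geq 1$, hence $d_i\geq 1$.

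The main obstacle is the existence step, specifically the passage from an eigenvector of the single matrix $L=\sum_iL_i$ to a simultaneous eigenvector of all the right multiplications $R_j$: this is precisely where the commutation $R_jL=LR_j$ and the one-dimensionality of the Perron--Frobenius eigenspace are indispensable. Once the character is produced, uniqueness, the symmetry $d_{\bar i}=d_i$, and the bound $d_i\geq 1$ all follow formally from the unit axiom $N_{i\bar i}^0=1$ and the $\ast$-structure.
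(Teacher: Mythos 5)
Your proof is correct, and it is essentially the argument the paper points to: the paper gives no proof of this lemma at all, citing it to Sunder [Su1992, 10.\ Theorem] and introducing it as ``an application of Perron--Frobenius theory,'' which is exactly what you carry out. Your existence step (primitivity of $L=\sum_i L_i$ from the unit and duality axioms, then descending the one-dimensional Perron eigenspace to a simultaneous eigenvector of the commuting right multiplications $R_j$), your uniqueness step (a positive character is a positive left eigenvector of the primitive matrix $L$, pinned down by $d_0=1$), and the formal deductions of $d_{\bar i}=d_i$ and $d_i\geq 1$ are all sound.

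One caveat concerns your closing remark that ``the same argument applies with the $C_{ij}^k$ in place of the $N_{ij}^k$.'' In this paper's normalization of a hypergroup one has $\sum_k C_{ij}^k=1$ and $C_{i\bar i}^0=w_i^{-1}$, so your final inequality only yields $d_i^2\geq C_{i\bar i}^0 d_0=w_i^{-1}$, not $d_i\geq 1$. This is harmless here: the row-sum normalization makes the constant function $1$ a character, so by your own uniqueness argument the unique positive character on $K$ is $d\equiv 1$ and the bound holds trivially. But it is worth noting that the inequality $d_i\geq 1$ is genuinely a statement about the fusion-ring normalization (where $N_{i\bar i}^0=1$), which is how the paper actually uses the lemma — as the Frobenius--Perron dimension on $\Irr(\cF)$ feeding into Proposition \ref{prop:HypergroupFromFusionRing} — rather than about hypergroups in the paper's probabilistic normalization.
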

\begin{prop} 
  \label{prop:HypergroupFromFusionRing}
  A fusion ring basis $F$ is naturally a hypergroup $K_F$ by choosing
  $c_i=d_i^{-1}[i] \in \CC F$, and $K_F=\{c_0,\ldots,c_n\}$ with:
  \begin{align}
    C_{ij}^k&= \frac{d_k}{d_id_j} N_{ij}^k\,, & w_i=d_i^2\,.
  \end{align}
  A hypergroup $K$ comes from a fusion ring if and only if:
  \begin{align}
    \sqrt{\frac{w_iw_j}{w_k}} C_{ij}^k&\in \ZZ_{\geq0} 
  &\text{for all } i,j,k\in K\,.
  \end{align}
\end{prop}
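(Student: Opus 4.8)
The plan is to treat the two assertions separately, since both reduce to direct verifications once the normalizations are fixed. For the first assertion I would set $c_i=d_i^{-1}[i]$, where $d$ is the positive character of $F$ from Lemma \ref{lem:FP}, and expand the product: since $[i][j]=\sum_k N_{ij}^k[k]$ and $[k]=d_k c_k$, one gets $c_ic_j=\sum_k \frac{d_k}{d_id_j}N_{ij}^k c_k$, identifying $C_{ij}^k=\frac{d_k}{d_id_j}N_{ij}^k$. Then I would check the three axioms of Definition \ref{defi:Hypergroup}: non-negativity is clear from $N_{ij}^k\geq 0$ and positivity of $d$; the normalization $\sum_k C_{ij}^k=1$ is equivalent to $\sum_k N_{ij}^k d_k=d_id_j$, which is exactly the statement that $d$ is a character; and from $N^0_{ij}=\delta_{j,\bar i}$ together with $d_0=1$ one computes $C^0_{ij}=\frac{1}{d_id_j}\delta_{j,\bar i}$, which is symmetric in $i,j$ and strictly positive precisely when $j=\bar i$. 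Since $c_i^\ast=d_i^{-1}[\bar i]=c_{\bar i}$ (using $d_{\bar i}=d_i$) and $c_0=[0]$ is the unit, $K_F$ is a hypergroup, and $C^0_{i\bar i}=d_i^{-2}$ gives $w_i=(C^0_{i\bar i})^{-1}=d_i^2$, as claimed.

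For the characterization, the ``only if'' direction is immediate: if $K=K_F$, then substituting $w_i=d_i^2$ and $C_{ij}^k=\frac{d_k}{d_id_j}N_{ij}^k$ yields $\sqrt{\frac{w_iw_j}{w_k}}\,C_{ij}^k=\frac{d_id_j}{d_k}\cdot\frac{d_k}{d_id_j}N_{ij}^k=N_{ij}^k\in\ZZ_{\geq 0}$. This computation also shows that the integrality quantity literally recovers the fusion coefficients, which dictates the shape of the converse construction.

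For the ``if'' direction I would define $d_i:=\sqrt{w_i}$, $[i]:=d_ic_i$, and $N_{ij}^k:=\sqrt{\frac{w_iw_j}{w_k}}\,C_{ij}^k$, so that $N_{ij}^k\in\ZZ_{\geq 0}$ by hypothesis. Rescaling $c_ic_j=\sum_k C_{ij}^k c_k$ back to the $[i]$ gives $[i][j]=\sum_k N_{ij}^k[k]$, whence $\CC F=\CC K$ as $\ast$-algebras (the two bases differ by positive scalars, so associativity, unitality and the involution transfer verbatim). It then remains to check the fusion-ring axioms: $[i]^\ast=[\bar i]$ and $[0]$ being the unit are immediate, while $N^0_{ij}=\delta_{j,\bar i}$ follows from the hypergroup axiom $C^0_{ij}>0\iff j=\bar i$ together with $C^0_{i\bar i}=w_i^{-1}$ and $w_{\bar i}=w_i$ (hence $d_{\bar i}=d_i$), giving $N^0_{i\bar i}=d_id_{\bar i}w_i^{-1}=1$.

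The only genuinely nontrivial point---and the step I expect to require the most care---is verifying that passing from the reconstructed $F$ back to a hypergroup via the first assertion returns the original $K$, rather than merely some hypergroup. This amounts to showing that $d_i=\sqrt{w_i}$ is the unique positive character of $F$ normalized by $d_0=1$, so that the recipe $c_i=d_i^{-1}[i]$ reproduces the given basis. I would check the character property by hand: using $N_{ij}^k=\sqrt{w_iw_j/w_k}\,C_{ij}^k$ and $\sum_k C_{ij}^k=1$ one gets $\sum_k N_{ij}^k\sqrt{w_k}=\sqrt{w_iw_j}\sum_k C_{ij}^k=\sqrt{w_i}\sqrt{w_j}$, so $i\mapsto\sqrt{w_i}$ is a positive character taking value $1$ at $c_0$ (since $C^0_{00}=1$ forces $w_0=1$). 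Uniqueness of the Perron--Frobenius character from Lemma \ref{lem:FP} then identifies it with the $d$ used in the first assertion, closing the loop and showing $K_F=K$.
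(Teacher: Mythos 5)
Your proof is correct. Note that the paper states Proposition \ref{prop:HypergroupFromFusionRing} without any proof at all --- it is treated as a routine consequence of Lemma \ref{lem:FP} --- so there is no argument of the paper's to compare against; your write-up simply supplies the verification the paper leaves implicit.

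Two remarks on the content. First, all your computations check out: the character property of $d$ gives axiom (2), the fusion-ring axiom $N^0_{ij}=\delta_{j,\bar i}$ together with $d_0=1$ and $d_{\bar i}=d_i$ gives axiom (3) and $w_i=d_i^2$, and in the converse direction the identities $w_0=1$ (from $c_0c_0=c_0$, $\bar 0=0$) and $w_{\bar i}=w_i$ are exactly what is needed to get $N^0_{ij}=\delta_{j,\bar i}$. Second, you correctly identified the one step that is not purely mechanical: for the ``if'' direction to close the loop (i.e.\ to conclude that the hypergroup attached to the reconstructed fusion ring is the original $K$, not merely some hypergroup), one must know that $i\mapsto\sqrt{w_i}$ is \emph{the} positive normalized character of the reconstructed ring. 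Your computation $\sum_k N_{ij}^k\sqrt{w_k}=\sqrt{w_iw_j}\sum_k C_{ij}^k=\sqrt{w_i}\sqrt{w_j}$, combined with the uniqueness in Lemma \ref{lem:FP}, settles this; the paper glosses over this point entirely, so making it explicit is a genuine improvement rather than a detour. (Strictly speaking, the statement ``comes from a fusion ring'' only requires producing \emph{some} $F$ with $K\cong K_F$, for which the weaker conclusion already suffices, but your stronger conclusion $K_F=K$ on the nose is what one actually wants and costs nothing extra.)
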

We define the analogous as for the hypergroup:
\begin{align}
  D(F)&=\sum_{k\in F} d_k^2\,, & e_F=\frac1{D(F)} \sum_{k\in F}d_k[k]\,.
\end{align}
Here $D(F)$ is called the \textbf{global dimension} of $F$ which coincides with
the weight $D(K_F)$ of the associated hypergroup $K_F$. 
It follows that $e_F$ is the Haar element of the hypergroup.
If $\cF$ is a fusion category we denote the associated hypergroup by $K_\cF$.

\subsection{Subhypergroups and Quotients}
\begin{defi} 
  A subhypergroup  $L \subset K$, written $L\leq K$ is a subset $L\subset K$, 
  such that $\CC L$ is a unital $\ast$-subalgebra of $\CC K$,
  \ie $c_0\in L$, $LL\subset \CC L$ (then $\CC LL=\CC L$) and $L^\ast =L$.
\end{defi}
For $x,y\in \Conv(K)$, we write $x\prec y$ if there is a $0<\lambda\leq 1$ and a
$z\in\Conv(K)$, such that $y=\lambda x +(1-\lambda)z$. 
We write $\supp (x) =\{c_k\in K: c_k \prec  x\}$.
\begin{defi} Let $L,M$ be subhypergroups of $K$. 
  We define the ``set of $(L,M)$-double cosets'' to be
  $L\backslash K/M=\{e_Lc_ke_M: c_k\in K\}$, where $e_{L,M}$ is the Haar element
  associated with the corresponding subhypergroup.
  We write $K\CS L=L\backslash K /L$. 
\end{defi}
The following is well-known see \eg \cite{BlHe1995} for the case of compact
hypergroups. 
\begin{prop} 
  \label{prop:DCSisHypergroup}
  The double cosets $K\CS L$ form a hypergroup.
\end{prop}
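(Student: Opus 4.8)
The plan is to realise the double coset hypergroup as the \emph{corner algebra} $e_L\,\CC K\,e_L$ and to exhibit its natural basis. First I would record three elementary facts. Since $e_L=e_L^\ast=e_L^2$ and $c_ie_L=e_Lc_i=e_L$ for $c_i\in L$, the corner $V:=e_L\,\CC K\,e_L$ is a unital $\ast$-subalgebra of $\CC K$ with unit $e_L$, and it coincides with the space of $L$-biinvariant elements $\{x:c_ix=xc_i=x\text{ for all }c_i\in L\}$. Second, the augmentation $\epsilon\colon\CC K\to\CC$, $\epsilon(c_k)=1$, is a $\ast$-character (because $\sum_kC_{ij}^k=1$), and $\epsilon(e_Lc_ke_L)=1$ for every $k$, so each element $e_Lc_ke_L$ is already $\epsilon$-normalised. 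Third, $\Conv(K)$ is closed under multiplication, so each $e_Lc_ke_L$ lies in $V\cap\Conv(K)$. The goal is then to show that the distinct elements $e_Lc_ke_L$ form a basis of $V$ satisfying Definition \ref{defi:Hypergroup}, with unit $g_0=e_L$. Throughout I write $[c_m](x)$ for the coefficient of $c_m$ in the expansion $x=\sum_k[c_k](x)\,c_k$.

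Next I would organise $K$ into double cosets. Using that for $x,y\in\Conv(K)$ one has $x\prec y$ iff $\supp x\subseteq\supp y$, define $c_k\approx c_{k'}$ iff $c_{k'}\in\supp(e_Lc_ke_L)$. Reflexivity comes from the strictly positive $c_0$-contribution $\tfrac1{D(L)^2}c_k$ inside $e_Lc_ke_L$. For symmetry I would use the identity $[c_m](x)=w_m\tr(c_{\bar m}x)$ together with the fact that $\tr$ is a faithful trace (so $\tr(xy)=\tr(yx)$, using $C^0_{ij}=C^0_{ji}$) and $e_L^\ast=e_L$; this yields $[c_{k'}](e_Lc_ke_L)=\tfrac{w_{k'}}{w_k}[c_k](e_Lc_{k'}e_L)$, so the two coefficients vanish simultaneously. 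Transitivity follows because $c_{k'}\in\supp(e_Lc_ke_L)$ forces $\supp(e_Lc_{k'}e_L)\subseteq\supp(e_Lc_ke_L)$. Hence $\approx$ partitions $K$ into classes $\mathcal D$ (the double cosets), and every $c_k\in\mathcal D$ has $\supp(e_Lc_ke_L)=\mathcal D$.

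The main obstacle is to prove that $e_Lc_ke_L$ depends only on the class $\mathcal D$ of $c_k$, equivalently that the biinvariant elements supported on a fixed $\mathcal D$ form a one-dimensional space. I would argue as follows. On a fixed $\mathcal D$ form the matrix $T_{pq}=[c_p](e_Lc_qe_L)$ with $p,q\in\mathcal D$. By the previous paragraph $T$ has strictly positive entries, and by $\epsilon$-normalisation each column sums to $1$; applying the idempotent map $x\mapsto e_Lxe_L$ to $e_Lc_qe_L$ gives $T^2=T$, so each column is a fixed vector of $T$. A strictly positive column-stochastic matrix has Perron eigenvalue $1$ with one-dimensional eigenspace, hence all columns coincide, and being probability vectors they are equal. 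Thus $e_Lc_qe_L=:g_{\mathcal D}$ is constant on $\mathcal D$, and for any biinvariant $x=\sum_{p\in\mathcal D}x_pc_p$ one gets $x=e_Lxe_L=(\sum_px_p)\,g_{\mathcal D}$, so $\dim V_{\mathcal D}=1$. Consequently the distinct $g_{\mathcal D}$ are supported on disjoint classes, hence linearly independent and spanning, giving a basis of $V=\CC(K\CS L)$ indexed by the double cosets.

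Finally I would verify the axioms of Definition \ref{defi:Hypergroup}. Writing $g_\alpha g_\beta=\sum_\gamma\tilde C_{\alpha\beta}^\gamma g_\gamma$, the product lies in $V\cap\Conv(K)$; restricting to the class $\mathcal D_\gamma$ and dividing by the strictly positive coefficients of $g_\gamma$ (legitimate since $\dim V_{\mathcal D_\gamma}=1$) gives $\tilde C_{\alpha\beta}^\gamma\ge 0$, while applying $\epsilon$ yields $\sum_\gamma\tilde C_{\alpha\beta}^\gamma=1$. The unit is $g_0=e_L$ (the class $\mathcal D_0=L$), and the involution is $g_\alpha\mapsto g_\alpha^\ast=e_Lc_{\bar k_\alpha}e_L=g_{\bar\alpha}$. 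For the remaining axiom I would use $\tr(g_\gamma)=\tfrac1{D(L)}\delta_{\gamma 0}$ to get $\tilde C^0_{\alpha\beta}=D(L)\,\tr(g_\alpha g_\beta)=D(L)\,\langle g_{\bar\alpha},g_\beta\rangle$, where $\langle x,y\rangle:=\tr(x^\ast y)=\sum_pw_p^{-1}\overline{x_p}y_p$. Since distinct $g_\gamma$ have disjoint supports they are orthogonal with strictly positive norm for this form, so $\tilde C^0_{\alpha\beta}>0$ exactly when $\beta=\bar\alpha$, while $\tilde C^0_{\alpha\beta}=\tilde C^0_{\beta\alpha}$ follows from $\tr(g_\alpha g_\beta)=\tr(g_\beta g_\alpha)$. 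This completes the verification that $K\CS L$ is a hypergroup.
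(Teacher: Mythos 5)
Your proof is correct, and it takes a genuinely different route from the paper's. The paper's argument is short: $e_L\CC Ke_L$ is a unital $\ast$-algebra with identity $e_L$; non-negativity of the structure constants comes from $\Conv(K)$ being closed under multiplication; normalization comes from applying the trivial representation $c_k\mapsto 1$; and the one non-trivial axiom — that $e_L\prec(e_Lc_ke_L)(e_Lc_le_L)$ forces the two cosets to be conjugate — is delegated to \cite[1.5.14 Proposition]{BlHe1995}. You instead prove everything from scratch: you partition $K$ by the support relation $c_{k'}\in\supp(e_Lc_ke_L)$ (symmetry via traciality of $\tr$ combined with $e_L^\ast=e_L$, transitivity via $\supp(e_Lc_{k'}e_L)\subseteq\supp(e_Lc_ke_L)$); you show $e_Lc_ke_L$ is constant on each class by observing that $T_{pq}=[c_p](e_Lc_qe_L)$ is a strictly positive, column-stochastic, idempotent matrix, so Perron--Frobenius forces all its columns to coincide; and you obtain the last axiom from $\tilde C^0_{\alpha\beta}=D(L)\,\tr(g_\alpha g_\beta)$ together with disjointness of supports. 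Your route buys two things: it is self-contained (no appeal to the hypergroup literature), and it explicitly establishes that distinct elements $e_Lc_ke_L$ have disjoint supports and are therefore linearly independent — a fact the paper's proof needs in order for the constants $\tilde C^{[m]}_{[k],[l]}$ to be well defined but leaves implicit. The paper's route buys brevity: the citation disposes in one line of the converse direction, for which you need the stochastic-matrix lemma plus the one-dimensionality of each biinvariant subspace $V_{\mathcal D}$. (Your trace-form computation also yields $\tilde C^0_{\alpha\beta}=\tilde C^0_{\beta\alpha}$ directly, which the paper does not address; as remarked after Definition \ref{defi:Hypergroup}, that identity is automatic anyway.)
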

\begin{proof}
  Since $e_L$ is a projection $\CC [K\CS L]= e_L \CC K e_L$ is a $\ast$-algebra
  with identity $e_L$.
  
  For $(e_Lc_ke_L), (e_L c_le_L) \in K\CS L$, we have $(e_L c_k e_L) (e_L c_l
  e_L)=e_L(c_ke_Lc_l)e_L = \sum_{[m]} \tilde C_{[k],[l]}^{[m]} (e_L c_m e_L)$
  with $\tilde C_{[k],[l]}^{[m]}\geq 0$, where  $[m]$ is the equivalence class
  $\{k : e_Lc_me_L = e_Lc_ke_L\}$.
  Since $e_Lc_ke_L\in \Conv(K)$ by applying the trivial representation
  $c_k\mapsto 1$, we get that $\sum_{[m]} \tilde{C}^{[m]}_{[k],[l]}=1$.  

  Since $e_L\prec e_L c_kc_{\bar k} \prec e_Lc_k e_Lc_{\bar k} e_L$, we have
  that $(e_Lc_ke_L)(e_Lc_ke_L)^\ast$ contains the identity. 
  Conversely, let us assume that $e_L \prec e_Lc_ke_L c_l
  e_L=(e_Lc_ke_L)(e_Lc_le_L)$, then by \cite[1.5.14 Proposition]{BlHe1995} it
  follows that $e_Lc_ke_L=e_Lc_le_L$.
\end{proof}
A map $\phi\colon K\to L$ is a morphism if it extends to a $\ast$-homomorphism
$\CC K \to \CC L$.
We define $\ker (\phi) = \{ c_k \in K : \phi(c_k)= c_0\}$ and
$\Im(\phi)=\phi(K)$, which are subhypergroups of $K$ and $L$, respectively.
With this notion we have a short exact sequence
\begin{align}
  \{c_0\}\longrightarrow
  L\xrightarrow{\quad\iota\quad} K \xrightarrow{~e_L\slot e_L~} K\CS L
  \longrightarrow \{\tilde c_0\equiv e_L\}\,.
\end{align}
\begin{defi} 
  $K=\bigcup_{g\in G} K_g$ is a \textbf{grading} of a hypergroup $K$ by a 
  finite group $G$, if $c_i \in K_g$ and $c_j \in K_h$ implies 
  $c_ic_j\in \Conv(K_{gh})$.
  The grading is \textbf{faithful} if $K_g$ is non-empty for all $g\in G_K$.
\end{defi}
If $K$ is graded by $G$, then $\CC K =\bigoplus_{g\in G} \CC K_g$ is a graded 
algebra.
The \textbf{adjoint subhypergroup} $K_\aad$ of $K$ is defined by 
$K_\aad=\{c_l\in K : c_l\prec c_kc_{\bar k} \text{ for some } c_k\in K\}$.  
It follows that $G_K=K\CS K_\aad$ is a group, which we call
the \textbf{universal grading group} of $K$ (\cf 
\cite[Corollary 3.6.6]{EtGeNiOs2015})
and that $K_g=\{c_k \in K :
  e_{K_\aad}c_ke_{K_\aad}=g\}$ is a faithful grading with $K_e=K_\aad$.
This grading is universal in the following sense: If $K=\bigcup_{g\in G}
\tilde K_g$, is another faithful grading then there is a surjective group
homomorphism $a\colon G_K \to G$ with $\tilde K_g = \bigcup_{h\in a^{-1}(g)}
K_h$. The proofs are the same as for based rings, see \eg 
\cite[Section 3.6]{EtGeNiOs2015}.
Note, if $K$ is actually a group, then $G_\aad=\{c_0\}$ and $G_K=K$.
Further, $K$ has a non-trivial faithful gradings if and only if $K_\aad\neq K$.
\subsection{Conformal Nets}
\label{sec:CN}
We denote by $\Mob$ the group of Möbius transformations of the circle $\Sc$,
which can be identified with $\PSU(1,1)$.
By a conformal net $\A$ we mean a local M\"obius covariant net on the circle,
\ie a map $\cI\ni I \mapsto \A(I)\subset \cB(\Hil)$ from the set $\cI$ of
proper intervals $I\subset \Sc \subset \CC$ on the circle to von Neumann
algebras on a fixed separable Hilbert space $\Hil=\Hil_\A$, such that the
following properties hold:
  \begin{enumerate}[{\bf A.}]
      \item \textbf{Isotony.} $I_1\subset I_2$ implies
        $\A(I_1)\subset \A(I_2)$.
      \item \textbf{Locality.} $I_1  \cap I_2 = \emptyset$ implies 
            $[\A(I_1),\A(I_2)]=\{0\}$.
      \item \textbf{Möbius covariance.} There is a strongly continuous unitary 
        representation $U$ of $\Mob$ on $\Hil$, such that 
            $U(g)\A(I)U(g)^\ast = \A(gI)$.
      \item \textbf{Positivity of energy.} $U$ is a positive energy 
            representation, i.e. the generator $L_0$ (conformal Hamiltonian) 
            of the rotation subgroup $U(z\mapsto \e^{\ima \theta}z)=\e^{\ima
            \theta L_0}$ has positive spectrum.
      \item \textbf{Vacuum.} There is a (up to phase) unique rotation 
            invariant unit vector $\Omega \in \Hil$, which is 
            cyclic for the von Neumann algebra $\A:=\bigvee_{I\in\cI} \A(I)$.
  \end{enumerate}
By the Reeh--Schlieder property \cite{FrJr1996} the vector $\Omega$ is cyclic 
and separating for every $\A(I)$. 
The \textbf{Bisognano--Wichmann property} holds \cite{GaFr1993,BrGuLo1993}.
It states that for every $I\in\cI$, there is a one-parameter subgroup
$\{\delta_I(t)\}_{t\in\RR} \subset \Mob$ which fixes the endpoints of $I$, such
that 
\begin{align}
  \Delta_{(\A(I),\Omega)}&=U(\delta_I(-2\pi t))\,,& J&=U(r_I)\,,
\end{align}
are the Tomita--Takesaki modular objects associated with $(\A(I),\Omega)$. 
Here $U(r_I)$ is an anti-unitary representing the reflection along the interval
$I$ which extends $U$ to an (anti-) unitary representation of $\Mob\rtimes
\ZZ_2$.
This implies \textbf{Haag-duality}, \ie $\A(I)'=\A(I')$ for all $I\in \cI$,
where $I'=\Sc\setminus \overline I$.
The uniqueness of the vacuum implies that $\A(I)$ is either $\CC$ or a type
\threeone factor in Connes classification \cite{Co1973}.

A local Möbius covariant net on $\A$ on $\Sc$ is called \textbf{completely 
rational} if it 
\begin{enumerate}[{\bf A.}]
  \setcounter{enumi}{5}
  \item fulfills the 
    \textbf{split property}, \ie 
    for $I_0,I\in \cI$ with $\overline{I_0}\subset I$ the inclusion 
    $\A(I_0) \subset \A(I)$ is a split inclusion, namely there exists an 
    intermediate type I factor $M$, such that $\A(I_0) \subset M \subset \A(I)$.
  \item is 
 \textbf{strongly additive}, \ie for $I_1,I_2 \in \cI$ two adjacent intervals
obtained by removing a single point from an interval $I\in\cI$ 
the equality $\A(I_1) \vee \A(I_2) =\A(I)$ holds.
  \item for $I_1,I_3 \in \cI$ two intervals with disjoint closure and 
    $I_2,I_4\in\cI$  the two components of $(I_1\cup I_3)'$, the 
    \textbf{$\mu$-index} of $\A$
    \begin{equation*}
      \mu(\A):= [(\A(I_2) \vee \A(I_4))': \A(I_1)\vee \A(I_3) ]
    \end{equation*}
    (which does not depend on the intervals $I_i$) is finite.
\end{enumerate}
The split property implies that each $\A(I)$ is isomorphic to the unique
(by \cite{Ha1987}) hyperfinite type \threeone factor. 

A \textbf{representation} $\pi$ of $\A$ is a family of representations
$\pi=\{\pi_I\colon\A(I)\to \B(\Hil_\pi)\}_{I\in\cI}$ on a separable Hilbert
space $\Hil_\pi$.
The family is asked to be \textbf{compatible}, i.e.\  $\pi_J\restriction \A(I)
=\pi_I$ for $I\subset J$.
Every representation $\pi$ turns out (for any choice of an interval $I_0\in\cI$)
to be equivalent to a representation $\rho$ on $\Hil$ which is ``localized in
$I_0$'', \ie $\rho_J=\id_{\A(J)}$ for $J\cap I_0=\emptyset$. 
Then Haag duality implies that $\rho_{I}$ is an endomorphism of $\A(I)$ for
every $I \in \cI$ with $I\supset I_0$. 
The \textbf{statistical dimension} of $\rho\in\Rep^I(\A)$ is given by
$d\rho=[\A(I):\rho(\A(I))]^{\frac12}$.
Thus we can realize the category of in $I$ localized representations of $\A$ 
with finite statistical dimension inside the C$^\ast$-tensor category of 
endomorphisms $\End_0(A)$ of a type III factor $A=\A(I)$ and the embedding turns
out to be full and replete. 
We denote this category by $\Rep^I(\A)$.
In particular, this gives the representations of $\A$ the structure of a tensor
category \cite{DoHaRo1971}. 
It has a natural \textbf{braiding}, which is completely fixed by asking that if
$\rho$ is localized in $I_1$ and $\sigma$ in $I_2$ where $I_1$ is left of $I_2$
inside $I$ then $\varepsilon(\rho,\sigma)=1$ \cite{FrReSc1989}.
Let $\A$ be a completely rational conformal net, then by \cite{KaLoMg2001}
every representation is reducible and every irreducible representation has
finite statistical dimension.
Again by \cite{KaLoMg2001} $\Rep^I(\A)$ is a UMTC and the $\mu$-index $\mu_\A$
coincides with the global dimension $\Dim(\Rep^I(\A))$.
We note that if $\A$ is strongly additive than every representation with 
finite statistical dimension is covariant \cite{GuLo1992}.

Given a net $\cB$ and an assignment $\cI\ni I\mapsto \A(I)\subset \cB(I)$ 
for all $I\subset \cI$, which satisfies isotony and  such that
$U(g)\A(I)U(g)^\ast =\A(gI)$ for all $I\in\cI$ and $g\in \Mob$ is called 
a \textbf{subnet} $\A$ of $\cB$. Let $e$ be the projection onto
$\overline{\A(I)\Omega}$ which by the Reeh-Schlieder property is independent of
$I$, then $e\A(I)$ is a conformal net on $e\Hil$, which by abuse of notation
we also denote by $\A$. We write $\A\subset \cB$.
More general, given two independent conformal nets $\A$ and $\cB$ 
we write $\A\subset \cB$ or $\cB\supset \A$ if there is a
representation $\pi=\{\pi_I\colon\A(I)\to\cB(I)\subset\B(\Hil_\cB)\}$ of $\A$
on $\Hil_\cB$ and an isometry $V\colon \Hil_\A\to \Hil_\cB$ with
$V\Omega_\A=\Omega_\cB$ and $VU_\A(g)=U_\cB(g)V$ and
$Va=\pi_I(a)V$ for $I\in\cI$, $a\in\A(I)$. Define $p$ the projection 
onto $\Hil_{\A_0}=\overline{\pi_I(\A(I))\Omega}$. Then $pV$ is a unitary 
equivalence of the nets $\A$ on $\Hil_\A$ and  $\A_0$ defined by 
$\A_0(I)=\pi_I(\A(I))p$ on $\Hil_{\A_0}$. Here $\A_0$ is a subnet of 
$\cB$ in the sense above.

\section{Hypergroup Actions on Conformal Nets}
\label{sec:HypergroupActionNet}
We define actions of a hypergroup on a von Neumann algebras.
Having the applications of actions on a conformal net in mind,
we just concentrate on a very special case.
\subsection{Hypergroup actions on non-commutative probability spaces}
Given $(M,\Omega)$ a von Neumann algebra $M\subset \B(\Hil)$ with
a cyclic and separating vector $\Omega\in\Hil$ let us denote by 
$\Stoch_\Omega(M)$
the set of all $\Omega$-preserving stochastic maps $M\to M$,
\ie normal unital completely positive maps $\phi\colon M \to M$, such
that $(\Omega,\phi(\slot)\Omega)=(\Omega, \slot \Omega)$, see Appendix \ref{app:CP}.
We say a map $\phi\in \Stoch_{\Omega}(M)$ is \textbf{extremal} or \textbf{pure}
if it cannot be written as a non-trivial convex combination of stochastic maps, \ie
$\phi=\lambda\phi_1 +(1-\lambda)\phi_2$ with $\phi_1,\phi_2\in \Stoch_\Omega(M)$
and $\lambda\in (0,1)$ implies $\phi_1=\phi_2=\phi$.
\begin{defi} 
  \label{defi:HypergroupActionVNA}
  Let $K$ be a hypergroup with structure constants $(C_{ij}^k)_{i,j,k\in K}$.
  A \textbf{(normal) $\Omega$-preserving action} of $K$ on $(M,\Omega)$ is a map
  $\phi\colon K \to \Stoch_\Omega(M)$, $k\mapsto \phi_k$, s.t. for all $i,j\in
  K$:
  \begin{enumerate}
    \item $\phi_i\circ\phi_j=\sum_{k\in K} C_{ij}^k\phi_k$ and $\phi_0=\id_M$.
    \item $\phi_{\bar i}$ is an $\Omega$-adjoint $\phi_i^\sharp$ of $\phi_{\bar i}$.
    \item $\phi_i$ is extremal in $\Stoch_\Omega(M)$ 
  \end{enumerate}
  We say it is \textbf{faithful} if $\{\phi_k\}_{0\leq k\leq n}$ are affine 
  independent, \ie $\{\phi_k-\id_M\}_{1\leq k\leq 0}$ are linearly independent. 
\end{defi}
\begin{prop} 
  Let $\phi\colon K\to \Stoch_\Omega(M)$ be a (normal) $\Omega$-preserving 
  action of $K$ on $(M,\Omega)$. Then the following are equivalent.
  \begin{enumerate}
    \item The action $\phi$ is faithful and 
    \item $\phi_k=\phi_0$ implies $k=0$.
  \end{enumerate}
\end{prop}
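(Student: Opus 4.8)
The plan is to encode the entire action in a single $\ast$-algebra homomorphism and reduce both conditions to a statement about its kernel. First I would extend $\phi$ linearly to $\tilde\phi\colon\CC K\to\mathcal L(M)$, $c_k\mapsto\phi_k$. Axiom (1) of Definition \ref{defi:HypergroupActionVNA} says precisely that $\tilde\phi$ is a unital algebra homomorphism, and axiom (2) gives $\tilde\phi(c_k^\ast)=\phi_{\bar k}=\phi_k^\sharp=\tilde\phi(c_k)^\sharp$, so $\tilde\phi$ is a $\ast$-homomorphism for the $\Omega$-adjoint involution. Since each $\phi_k$ is $\Omega$-preserving with an adjoint, it extends to a contraction $\hat\phi_k$ on $\overline{M\Omega}$ fixing $\Omega$, and $\sum_k a_k\phi_k=0$ iff $\sum_k a_k\hat\phi_k=0$ (as $\Omega$ is separating); thus $\tilde\phi$ may equally be viewed as the $\ast$-representation $c_k\mapsto\hat\phi_k$ of the finite-dimensional $C^\ast$-algebra $\CC K$.

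Next I would reduce faithfulness to injectivity of $\tilde\phi$. Because $\CC K$ is semisimple, $\ker\tilde\phi$ is a sum of matrix blocks. Unitality of the $\phi_k$ gives $\tilde\phi(e_K)=\tfrac1{D(K)}\sum_k w_k\phi_k$ with $\tilde\phi(e_K)(1_M)=1_M\neq0$, so the block of the trivial character $\epsilon\colon c_k\mapsto1$ (whose central projection is the Haar element $e_K$) is never killed. Hence $\ker\tilde\phi\subseteq\ker\epsilon$, and any linear relation among the $\phi_k$ automatically has coefficients summing to zero; so the affine independence required for faithfulness coincides with plain linear independence, i.e.\ with injectivity of $\tilde\phi$. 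With this reformulation, (1)$\Rightarrow$(2) is immediate: if $\tilde\phi$ is injective and $\phi_k=\phi_0$ then $c_k=c_0$, so $k=0$.

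The substance is (2)$\Rightarrow$(1), which I would prove in contrapositive form. If $\phi$ is not faithful then $J:=\ker\tilde\phi$ is a nonzero proper $\ast$-ideal. Taking a central projection $z\in J$ and using that the trace $\tr(c_k)=\delta_{k0}$ is faithful (its Gram form is $\tr(c_i^\ast c_j)=\delta_{ij}w_i^{-1}$), one gets $z_0=\tr(z)>0$ and $\sum_k z_k=\epsilon(z)=0$, whence $\tilde\phi(z)=0$ yields $\id_M=\sum_{k\ge1}(-z_k/z_0)\phi_k$, an affine combination ($\sum_{k\ge1}(-z_k/z_0)=1$) of the remaining maps. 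If one can promote this to a genuine \emph{convex} combination, then extremality of $\id_M$ (it is an extreme point of $\Stoch_\Omega(M)$) forces every $\phi_k$ occurring with positive weight to equal $\id_M$, producing some $j\neq0$ with $\phi_j=\phi_0$ and contradicting (2).

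The main obstacle is exactly this passage from affine to convex: the coefficients $-z_k/z_0$ need not be nonnegative, and at the purely algebraic level the implication is genuinely false — for $K=K_{\Rep(S_3)}$ one can quotient $\CC K$ by the block of a nontrivial character without any $c_j$ becoming equal to $c_0$. Thus the proof must use that the $\phi_k$ are \emph{extremal} stochastic maps (axiom (3)), not merely contractions realizing the hypergroup relations. The mechanism is that such a forbidden quotient forces some $\hat\phi_k$ to be a self-adjoint idempotent, i.e.\ $\phi_k$ an $\Omega$-preserving conditional expectation onto a proper subalgebra, and such maps fail to be extremal unless they equal $\id_M$ (already the pinching $\tfrac12(\id_M+\Ad u)$ illustrates this). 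I would therefore finish the hard direction by combining the Perron--Frobenius character theory of the hypergroup (Lemma \ref{lem:FP}) with the structure of extremal state-preserving completely positive maps from Appendix \ref{app:CP}, using extremality to exclude precisely the kernels $J$ containing no element $c_j-c_0$. This positivity input is the heart of the argument; the surrounding reduction is formal.
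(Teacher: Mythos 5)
Your reduction of faithfulness to injectivity of the linear extension of $\phi$ to $\CC K$ (coefficients of any linear relation among unital maps sum to zero), the direction (1)$\Rightarrow$(2), and your diagnosis that pure algebra cannot suffice are all correct. Indeed your $\Rep(S_3)$-type obstruction is even realized by stochastic maps: taking $\phi_0=\id_M$, $\phi_\epsilon=\Ad u$ for a nontrivial involutive $\varphi$-preserving automorphism, and $\phi_\sigma=\tfrac12(\id_M+\Ad u)$ satisfies all the relations of $K_{\Rep(S_3)}$, is non-faithful, and has no $\phi_k=\phi_0$ with $k\neq0$ --- but $\phi_\sigma$ is visibly not extremal, which is exactly why axiom (3) is indispensable. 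However, the hard direction (2)$\Rightarrow$(1) is never actually proved in your proposal. After the affine identity $\id_M=\sum_{k\geq1}(-z_k/z_0)\phi_k$ you explicitly defer the essential step, and the ``mechanism'' you offer in its place is unsubstantiated on both counts: nothing you write shows that a kernel containing no $c_j-c_0$ forces one of the actual maps $\phi_k$ (as opposed to an element of the quotient algebra) to be a self-adjoint idempotent, and the claim that an $\Omega$-preserving conditional expectation onto a proper subalgebra can never be extremal is itself a nontrivial assertion, of essentially the same depth as the proposition. So the positivity input you correctly call ``the heart of the argument'' is missing.

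The paper supplies it with a short device you never use, and, crucially, it applies extremality not to $\id_M$ but to $\phi_{\bar k}$. Starting from a convex relation $\phi_0=\sum_{i\geq1}\lambda_i\phi_i$ with some $\lambda_k>0$, left-compose with $\phi_{\bar k}$: the action axiom gives $\phi_{\bar k}=\phi_{\bar k}\circ\phi_0=\sum_{i,m}\lambda_i C^m_{\bar k i}\phi_m$, again a convex combination of the $\phi_m$, and the hypergroup axiom $C^0_{ij}>0\Leftrightarrow j=\bar i$ makes the coefficient of $\phi_0$ in it equal to $\lambda_k C^0_{\bar k k}=\lambda_k/w_k>0$ (the paper writes $\lambda_k w_k$, an immaterial slip). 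Purity of $\phi_{\bar k}$, axiom (3) of Definition \ref{defi:HypergroupActionVNA}, then forces every map occurring with positive weight, in particular $\phi_0$, to equal $\phi_{\bar k}$; hence $\phi_{\bar k}=\phi_0$ with $\bar k\neq 0$ and (2) fails. The hypergroup multiplication is precisely what converts an affine occurrence of $\phi_0$ into a strictly positive weight inside a convex decomposition of a pure map; no character theory or idempotent analysis is needed. To be fair, your affine-versus-convex worry does also touch the paper's write-up: its ``i.e.'' identifying non-faithfulness with a relation of the form $\phi_0=\sum_{i\geq1}\lambda_i\phi_i$ deserves justification, since an affine dependence a priori yields only an equality of two disjointly supported convex combinations. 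But any repair of that step still runs through the composition-with-$\phi_{\bar k}$-plus-purity mechanism, which is exactly where your outline stops.
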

\begin{proof} 
  If $\phi_k=\phi_0$ for some $k\neq 0$, then the action is not faithful.
  Conversely, let us assume that the action is not faithful. 
  \ie there is a convex combination $\sum \lambda_{i=1}^n \phi_i=\phi_0$
  with $\lambda_k>0$ for some $k$.
  Then $\phi_{\bar k}= \phi_{\bar k} \left(  \sum \lambda_{i=1}^n
  \phi_i\right)= \lambda_k w_k \phi_0+\cdots$ and because $\phi_{\bar k}$
  is pure it follows that $\phi_{\bar k}=\phi_0$.
\end{proof}
We can extend $\phi$ to an affine map $\phi\colon\Conv(K) \to \Stoch_\Omega(M)$
and a linear map $\phi\colon \CC K\to \Maps(M)$ by $\phi(\sum_k\lambda_k c_k) =
\lambda_k \phi_k$.
From the Haar element $e_K\in \Conv(K)$ we get a $\Omega$-preserving stochastic
map which is an idempotent
\begin{align}
  E&= \phi(e_K)\equiv \frac1{D(K)}\sum_{k\in K} w_k \phi_k\,.
\end{align}
For $k\in K$ we define operators $V_k\in \B(\Hil)$ by
\begin{align}
  V_k\colon a\Omega &\mapsto \phi_k(a)\Omega\,, & (a\in M)
\end{align}
which are contractions by the Kadison--Schwarz inequality (\refeq{eq:KSI}). 
This gives a representation of $K$ which can linearly be extended to a
$\ast$-representation $V\colon \CC K \to \B( \Hil)$, namely
\begin{align}
  (a\Omega,V_{\bar k}b\Omega) 
  &
  =  (a\Omega,\phi_{\bar k}(b)\Omega)
  =  (a\Omega,\phi_k^\sharp(b)\Omega)
  =  (\phi_k(a)\Omega,b\Omega)
  =  (V_k a \Omega,b\Omega)
  =  (a \Omega,V_k^\ast b\Omega)
  \,.
\end{align}
\begin{prop}
  \label{prop:FixedPoint}
  The following subsets of $M$ are equal:
  \begin{enumerate}
    \item $\{m\in M : \phi_k(m)=m \text{ for all } k\in K\}$,
    \item $E(M)$,
    \item $M\cap \{e\}'$,
    \item $M \cap V(\CC K)'$.
  \end{enumerate}
  They give a unital von Neumman subalgebra  $N\subset M$ with normal
  conditional expectation $E\colon M \to N\subset M$.
\end{prop}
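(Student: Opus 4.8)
The plan is to identify all four sets with $N:=E(M)$, the only substantial point being to upgrade the $\Omega$-preserving completely positive idempotent $E=\phi(e_K)$ to a genuine conditional expectation; once that is done, the remaining equalities fall out by evaluating commutators at the cyclic vector $\Omega$.

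First I would record the elementary algebraic features of $E$. Since $\phi$ extends to an algebra homomorphism $\CC K\to\Maps(M)$ (axiom (1)) and $e_K$ is the Haar idempotent with $c_ke_K=e_Kc_k=e_K$, we get $E^2=\phi(e_K^2)=\phi(e_K)=E$ and $\phi_k\circ E=E\circ\phi_k=\phi(e_K)=E$ for every $k$. As each $\phi_k$ is $\Omega$-preserving and $\sum_k w_k=D(K)$, the average $E=\tfrac1{D(K)}\sum_k w_k\phi_k$ is again normal, unital, completely positive and $\Omega$-preserving. This already gives the equality of (1) and (2): if $\phi_k(m)=m$ for all $k$ then $E(m)=m$, so $m\in E(M)$; conversely if $m=E(m')$ then $\phi_k(m)=\phi_k E(m')=E(m')=m$.

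The main step is to show that $N=E(M)=\{m:E(m)=m\}$ is a von Neumann subalgebra and $E$ a conditional expectation, using only that $E$ is a normal, $\omega$-preserving, unital completely positive idempotent, where $\omega=(\Omega,\slot\,\Omega)$ is faithful because $\Omega$ is separating. For $m\in N$ the Kadison--Schwarz inequality gives $E(m^\ast m)\ge E(m)^\ast E(m)=m^\ast m$; applying $\omega$ and using $\omega\circ E=\omega$ turns this into $\omega\bigl(E(m^\ast m)-m^\ast m\bigr)=0$, whence $E(m^\ast m)=m^\ast m=E(m)^\ast E(m)$ by faithfulness. Thus every $m\in N$ lies in the multiplicative domain of $E$ (applying the same to $m^\ast\in N$ handles $mm^\ast$), so Choi's multiplicative-domain theorem yields the bimodule identities $E(ma)=mE(a)$ and $E(am)=E(a)m$ for all $a\in M$. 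In particular $E(m_1m_2)=m_1m_2$ for $m_1,m_2\in N$, so $N$ is a $\ast$-subalgebra; it is weakly closed as $N=\ker(\id-E)$ with $E$ normal, hence a von Neumann subalgebra, and $E\colon M\to N$ is a normal conditional expectation. I expect this to be the only real obstacle: the content is precisely that an $\Omega$-preserving unital completely positive idempotent cannot be genuinely non-multiplicative, and the faithful-state squeeze on Kadison--Schwarz is what forces the multiplicative domain.

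Finally I would close the loop with (3) and (4) by testing commutators on $\Omega$. Writing $e=V(e_K)$, a projection with $e\Omega=\Omega$ and $e(a\Omega)=E(a)\Omega$, and using $V_k\Omega=\phi_k(1)\Omega=\Omega$: for $m\in M\cap\{e\}'$, applying $me=em$ to $\Omega$ gives $m\Omega=E(m)\Omega$, so $m=E(m)\in N$, proving (3)$\subseteq$(1). Conversely, for $m\in N$ we have $m^\ast m,mm^\ast\in N$, so $\phi_k(m^\ast m)=m^\ast m=\phi_k(m)^\ast\phi_k(m)$ places $m$ in the multiplicative domain of each $\phi_k$ and yields $\phi_k(ma)=m\phi_k(a)$; reading this on $M\Omega$ gives $V_km=mV_k$ for all $k$, i.e. $m\in M\cap V(\CC K)'$, proving (1)$\subseteq$(4). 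Since $e\in V(\CC K)$ we have (4)$\subseteq$(3) trivially, and the chain (1)$\subseteq$(4)$\subseteq$(3)$\subseteq$(1) closes. Together with (1)$=$(2) this shows all four sets coincide with $N$, which is a von Neumann subalgebra carrying the normal conditional expectation $E$.
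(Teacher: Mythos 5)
Your proof is correct and follows essentially the same route as the paper: the paper delegates the upgrade of $E$ to a conditional expectation to its Appendix Proposition on stochastic idempotents, whose proof is exactly your Kadison--Schwarz/faithful-state/Choi argument, and it closes the same chain $(1)\subseteq(4)\subseteq(3)\subseteq(1)$ using Choi's multiplicative-domain theorem and evaluation at the cyclic separating vector $\Omega$. The only cosmetic difference is that the paper handles the equality with $M\cap\{e\}'$ via the Jones-projection identity $exe=E(x)e$ and injectivity of $m\mapsto me$ (citing Jones--Sunder), where you evaluate the commutation relation directly on $\Omega$; both rest on the same separating-vector argument.
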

\begin{proof}
  $N=E(M)$ is a von Neumann algebra with conditional expectation $E\colon M \to
  N\subset M$ by Proposition \ref{prop:vNAandCE}.

  The inclusion
  $\{m\in M : \phi_k(m)=m \text{ for all } k\in K\}\subset E(M)$ is trivial.
  Let $m\in E(M)$. Because $E$ is an idempotent $m=E(m)$ and therefore
  $\phi_k(m)=\phi_k\circ E(m)=E(m)=m$.
  Since $\Omega$ is separating  $M\ni m\mapsto me$ is injective and $exe=E(x)e$
  which implies that: $m\in M$ commutes with $e$ if and only if $E(m)=m$ (\cf
  \cite{JoSu1997}).

  We have $M\cap V(\CC K)' \subset M\cap \{e\}'$, since $e=V(e_K)$. 
  It remains to show $N:= \{m\in M : \phi_k(m)=m \text{ for all } k\in
  K\}\subset M \cap V(\CC K)' $, \ie that $N\subset V(\CC K)'$.
  Let $n\in N$.
  Since $E$ is a conditional expectation onto $N$ we have $E(n^\ast n)=n^\ast n$.
  Then 
  \begin{align}
    \phi_k(n^\ast n)=\phi_k(E(n^\ast n))=E(n^\ast n)=n^\ast n=\phi_k(n)^\ast
    \phi_k(n)\,.
  \end{align}
  For all $m\in M$ we have $\phi_k(nm)=\phi_k(n)\phi_k(m)=n\phi_k(m)$ using
  Theorem \ref{thm:Choi} and therefore
  \begin{align}
    V_kn m\Omega=\phi_k(nm)\Omega=n\phi_k(m)\Omega=n V_k m\Omega\,.
  \end{align}
  Since $\Omega$ is cyclic for $M$ this gives $V_kn=nV_k$ and since $k$ was
  arbitrary we have $n\in V(\CC K)'$.
\end{proof}
We denote the set of Proposition \ref{prop:FixedPoint} by $M^K$ if it is clear 
which action $K\to \Stoch_\Omega(M)$ is meant and call $M^K$ the $K$-fixed
point algebra.
In particular, it follows that $e=V(e_K)$ is the Jones projection implementing
the conditional expectation $E$, \ie $E(\slot)e=eE(\slot)e$.
In a similar fashion, $V_k$ is ``implementing'' $\phi_k$:
\begin{prop}
  \label{prop:Bimdoularity}
  We have $\phi_k(n_1mn_2)=n_1\phi_k(m)n_2$ for all $n_1,n_2\in M^K$ and $m\in
  M$ and $\phi_k(m)e=V_km e$ for all $m\in M$.
\end{prop}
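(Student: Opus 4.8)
The plan is to deduce both identities from the fact that the fixed-point algebra $M^K$ sits inside the \emph{multiplicative domain} of each $\phi_k$, so that Choi's theorem (Theorem \ref{thm:Choi}) turns $\phi_k$ into an $M^K$-bimodule map. First I would observe that for $n\in M^K$ we have $\phi_k(n)=n$ by the characterization of $M^K$ in Proposition \ref{prop:FixedPoint}(1). Since $M^K$ is a $\ast$-subalgebra, both $n^\ast n$ and $nn^\ast$ again lie in $M^K$, hence $\phi_k(n^\ast n)=n^\ast n=\phi_k(n)^\ast\phi_k(n)$ and $\phi_k(nn^\ast)=nn^\ast=\phi_k(n)\phi_k(n)^\ast$. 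This is exactly the condition for $n$ to belong to the multiplicative domain of the unital completely positive map $\phi_k$.

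Granting this, the first identity is immediate. By the bimodule property furnished by Theorem \ref{thm:Choi}, for $n_1,n_2$ in the multiplicative domain and any $m\in M$ one has $\phi_k(n_1 m n_2)=\phi_k(n_1)\,\phi_k(m)\,\phi_k(n_2)$; substituting $\phi_k(n_i)=n_i$ yields $\phi_k(n_1 m n_2)=n_1\,\phi_k(m)\,n_2$. Concretely one peels off $n_1$ from the left and $n_2$ from the right using $\phi_k(n_1 x)=n_1\phi_k(x)$ and $\phi_k(x n_2)=\phi_k(x)n_2$, the one-sided special case already used at the end of the proof of Proposition \ref{prop:FixedPoint}.

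For the second identity I would test the two operators $\phi_k(m)e$ and $V_k m e$ against the dense set $M\Omega$. Recall that $e=V(e_K)$ acts by $e\,a\Omega=E(a)\Omega$ for $a\in M$, so that the range of $e$ lies in $\overline{M^K\Omega}$ and, crucially, $E(a)\in M^K$. Then for $a\in M$,
\[
  V_k m e\,a\Omega = V_k\bigl(m E(a)\bigr)\Omega = \phi_k\bigl(m E(a)\bigr)\Omega = \phi_k(m)\,E(a)\,\Omega = \phi_k(m)\,e\,a\Omega,
\]
where the third equality is the first identity applied with $n_2=E(a)\in M^K$ and $n_1=1$. Since $\Omega$ is cyclic for $M$, the vectors $a\Omega$ are dense in $\Hil$, whence $V_k m e=\phi_k(m)e$ as bounded operators.

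The only genuinely substantive step is the verification that $M^K$ lies in the multiplicative domain of $\phi_k$; everything afterwards is the mechanical application of Choi's bimodule identity together with a density argument against the cyclic vector $\Omega$. I therefore expect no real obstacle, provided Theorem \ref{thm:Choi} is available in the bimodule form $\phi(axb)=\phi(a)\phi(x)\phi(b)$ for $a,b$ in the multiplicative domain.
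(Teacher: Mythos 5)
Your proof is correct and follows essentially the same route as the paper's: the first identity comes from noting that $M^K$ lies in the multiplicative domain of each $\phi_k$ (via $\phi_k(n^\ast n)=n^\ast n=\phi_k(n)^\ast\phi_k(n)$ and Theorem \ref{thm:Choi}), and the second is exactly the paper's chain $\phi_k(m)e\,a\Omega=\phi_k(m)E(a)\Omega=\phi_k(mE(a))\Omega=V_k mE(a)\Omega=V_k me\,a\Omega$ combined with cyclicity of $\Omega$ for $M$.
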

\begin{proof}
  Using Theorem \ref{thm:Choi} as before we get the first statement.
  For all $m_1,m_2\in M$   we have:
  \begin{align}
    \phi_k(m_1)e m_2 \Omega= \phi_k(m_1) E(m_2)\Omega=
    \phi_k(m_1E(m_2))\Omega=V_k m_1E(m_2)\Omega=V_k m_1em_2\Omega
  \end{align}
  and the statement follows because $\Omega$ is cyclic for $M$.
 \end{proof}
\begin{rmk} 
  We can recover $\phi_k$ from $V_k$ as follows. 
  Since $\{m'\Omega: m'\in M'\}$ is dense in $\Hil$, 
  we have that $\phi_k(m)$ with $m\in M$ is 
  the closure of the linear map 
  \begin{align}
    m'\Omega &\mapsto m'V_k m\Omega \,,& m'\in M'\,.
  \end{align}
\end{rmk}
In our application, we have that the centralizer $M^\varphi=\CC\cdot 1$ 
for the state $\varphi=(\Omega,\slot\Omega)$ is trivial. This 
implies that $M^K$ is a factor by the following simple lemma.
\begin{lem} 
  \label{lem:Factor}
  Let $M$ be a von Neumann algebra, $\varphi$ a state on $M$
  and $E\colon M \to M$ a state preserving
  conditional expectation onto $N$, then $N'\cap N\subset M^\varphi=\{m\in M:
    \varphi(nm)=\varphi(nm) \text{ for all } n \in N\}$.
\end{lem}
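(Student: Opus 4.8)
The plan is to unwind the definitions directly: take an element $m$ in the center $N'\cap N$ of $N$ and verify the centralizer condition $\varphi(xm)=\varphi(mx)$ for every $x\in M$. The only two facts I would invoke are that the conditional expectation $E$ is automatically an $N$-$N$-bimodule map, so that $E(axb)=aE(x)b$ for all $a,b\in N$ and $x\in M$ (in particular $E(xm)=E(x)m$ and $E(mx)=mE(x)$ when $m\in N$), together with the hypothesis that $E$ is state-preserving, $\varphi\circ E=\varphi$.

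First I would use state-preservation to replace $\varphi(xm)$ by $\varphi(E(xm))$, then pull the element $m\in N$ out on the right via the bimodule property to obtain $\varphi(E(x)m)$. Since $E(x)\in N$ and $m$ is \emph{central} in $N$, these two commute, giving $\varphi(mE(x))$; pulling $m$ back inside on the left by the bimodule property turns this into $\varphi(E(mx))$, and a final application of $\varphi\circ E=\varphi$ yields $\varphi(mx)$. Concatenating,
\begin{align}
  \varphi(xm)=\varphi(E(xm))=\varphi(E(x)m)=\varphi(mE(x))=\varphi(E(mx))=\varphi(mx)\,,
\end{align}
so $m\in M^\varphi$, which is exactly the asserted inclusion $N'\cap N\subset M^\varphi$.

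There is no genuine obstacle here; the computation is routine, and the only points needing care are that one really needs $m\in N$ (not merely $m\in N'$) to apply the bimodule identity, and centrality of $m$ in $N$ to commute it past $E(x)$ — both supplied precisely by the hypothesis $m\in N'\cap N$. In the intended application the centralizer $M^\varphi=\CC\cdot 1$ is trivial, so the lemma immediately forces $N'\cap N=\CC\cdot 1$, i.e.\ the fixed-point algebra $N=M^K$ of Proposition \ref{prop:FixedPoint} is a factor.
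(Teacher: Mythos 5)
Your proof is correct and is essentially identical to the paper's: both take a central element of $N$, use the $N$-$N$-bimodule property of $E$ together with centrality to get $E(xm)=E(x)m=mE(x)=E(mx)$, and conclude via $\varphi\circ E=\varphi$. The only cosmetic difference is the order in which state-preservation is invoked; the substance is the same.
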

\begin{proof}
  Let $n\in N'\cap N$. Then for every $m\in M$ we have $E(nm)=nE(m)=E(m)n=E(mn)$
  and thus $\varphi(mn)=\varphi\circ E(mn)= \varphi\circ E(nm)=\varphi(nm)$ thus
  $n\in M^\varphi$.
\end{proof}

We note that we get that the index is finite with $[M:N]\leq D$. 
Namely, we have the Pimsner--Popa bound: 
\begin{align}
  E(m_+)&=\frac{1}{D}\sum_k d_k\phi_k(m_+)\geq \frac1D \phi_0(m_+) =\frac1D m_+\,, & m_+ \in M_+\,.
\end{align}
If we have that $N$ is purely infinite factor, then $[M:N]= D$. 
Namely, in this case we know that the minimal index 
$[M:N]\equiv \inf \{c\in\RR : c\cdot E -\id_M \text{ is positive}\}$ coincides 
with $\inf \{c\in\RR : c\cdot E -\id_M \text{ is completely positive}\}$ by 
\cite%
{BaDeHa1988,Wa1990}.
If we asssume $[M:N]<D$, then there is an $\epsilon>0$,
such that $E- (\frac1{D}+\epsilon)\id_M= \sum_{k\neq 0}
\frac{w_k}{D}\phi_k-\epsilon\id_M$ is completely positive.
This means $(1-\epsilon-D^{-1})^{-1}\left( \sum_{k\neq 0}
\frac{w_k}{D}\phi_k-\epsilon\id_M \right)\in \Conv(\{\phi_0,\ldots,\phi_k\})$,
which is a contradiction, since the $\{\phi_0,\ldots,\phi_k\}$ are affine
independent.
\subsection{Hypergroup Action on Nets}
\begin{defi}
  \label{defi:HypergroupActionNet}
  Let $K$ be a hypergroup. A \textbf{proper hypergroup action} of $K$ on a
  conformal net $\cB$ is a family $\{\phi^I:K\to
  \Stoch_\Omega(\cB(I))\}_{I\in\cI}$ of faithful 
  $\Omega$-preserving normal hypergroup
  actions, which is \textbf{compatible}, \ie $\phi^J_k\restriction
  \cB(I)=\phi^I_k$ for all $k\in K$ and all $I,J\in\cI$ with $I\subset J$. 
\end{defi}
We observe that a proper hypergroup action is by definition \textbf{vacuum
preserving}, \ie $\varphi\circ \phi^I_k=\varphi$ for all $k\in K$ and all
$I\in\cI$. Here $\varphi(\slot)= (\Omega,\slot\Omega)$ is the vacuum state.

Let $K$ be a proper hypergroup action on $\cB$.
As before, we define $V_k\in \B(\Hil)$ by
\begin{align}
  V_k\colon a\Omega &\mapsto \phi^I_k(a)\Omega\,, & a\in \cB(I)
\end{align}
and note that this is independent of $I\in\cI$.
This gives a representation of $K$ which extends to a $\ast$-representation of
$\CC K$ on $\Hil$. Further we get conditional expectations $E_I=\phi^I(e_K)$. 
The conditional expectations are implemented by the Jones projection $e=V({e_K})$.
We get that the local algebras have trivial centralizers
$\cB(I)^\varphi=\CC\cdot 1$ for every $I\in \cI$ by using covariance and
positivity of energy and \cite[Proof of Theorem 3]{Lo1979}. 
Therefore by Lemma \ref{lem:Factor} it follows that the subalgebra
$\cB(I)^K\subset \cB(I)$ is a subfactor. 
It turns out that $\{\cB(I)^K\subset \cB(I)\}_{I\in\cI}$ is indeed a conformal
subnet.
\begin{thm}
  \label{thm:GeneralizedOrbifoldGivesSubnet}
  Let $\cB$ be a conformal net and let $K$ be a hypergroup acting properly on
  $\cB$.
  \begin{enumerate}
    \item Then $I\mapsto \cB(I)^K$ is a  subnet of $\cB$. 
      In particular, $\cB^K$ defined by $\cB^K(I):=e\cB(I)^K$ is a conformal net
      on $e\H$.
    \item The index is finite and given by $[\cB(I):\cB(I)^K]=D(K)$.
    \item If $\cB$ is split and strongly additive, then $\cB^K$ is split and
      strongly additive. 
    \item If $\cB$ is completely rational, then $\cB^K$ is completely rational.
  \end{enumerate}
\end{thm}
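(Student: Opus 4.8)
The plan is to establish the four assertions in turn, exploiting throughout the interval-independent description $\cB(I)^K=\cB(I)\cap\{e\}'$ of Proposition~\ref{prop:FixedPoint}, where $e=V(e_K)=\frac1{D(K)}\sum_k w_k V_k$ is the single Jones projection implementing every conditional expectation $E_I=\phi^I(e_K)$. For~(1), isotony is immediate from compatibility: if $m\in\cB(I_1)^K$ and $I_1\subset I_2$, then $\phi^{I_2}_k(m)=\phi^{I_1}_k(m)=m$, so $m\in\cB(I_2)^K$. The real content is M\"obius covariance, which is \emph{not} built into the action and must be derived; I would obtain it from the single projection $e$. Since $E_I$ is an $\Omega$-preserving normal conditional expectation of $\cB(I)$ onto $\cB(I)^K$, Takesaki's theorem shows that $\cB(I)^K$ is globally invariant under the modular group of $(\cB(I),\Omega)$ and that the implementing Jones projection---namely our interval-independent $e$, whose range is $\overline{\cB(I)^K\Omega}$---commutes with $\Delta_{(\cB(I),\Omega)}^{it}$. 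By the Bisognano--Wichmann property (Section~\ref{sec:CN}) we have $\Delta_{(\cB(I),\Omega)}^{it}=U(\delta_I(-2\pi t))$, so $e$ commutes with each one-parameter group $U(\delta_I(t))$; as the boosts attached to two distinct intervals generate $\PSU(1,1)=\Mob$, strong continuity yields $[e,U(g)]=0$ for all $g\in\Mob$. Consequently
\[
  U(g)\,\cB(I)^K\,U(g)^\ast=\cB(gI)\cap\{U(g)\,e\,U(g)^\ast\}'=\cB(gI)\cap\{e\}'=\cB(gI)^K\,,
\]
which is the desired covariance.

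Together with isotony this shows $I\mapsto\cB(I)^K$ is a subnet, and since the projection onto $\overline{\cB(I)^K\Omega}$ is the interval-independent $e$, the general subnet-to-net passage recalled at the end of Section~\ref{sec:CN} produces the conformal net $\cB^K(I)=e\,\cB(I)^K$ on $e\Hil$, completing~(1). For~(2) I would invoke the Pimsner--Popa estimate established above, giving $[\cB(I):\cB(I)^K]\le D(K)$, and supply the matching lower bound: $\cB(I)^K$ is already known to be a subfactor, and as a finite-index subfactor of the type~\threeone factor $\cB(I)$ it is purely infinite, so its minimal index coincides with the optimal constant in the complete-positivity bound for $c\,E_I-\id$, which equals $D(K)$ by the affine independence of the $\phi^I_k$. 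Hence $[\cB(I):\cB(I)^K]=D(K)$.

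Assertions~(3) and~(4) I would treat as inheritance properties of the finite-index subnet $\cB^K\subset\cB$. The split property descends to $\cB^K$: an intermediate type~I factor for $\cB(I_0)\subset\cB(I)$ yields, after compression by $e$, an intermediate type~I factor for $\cB^K(I_0)\subset\cB^K(I)$. Strong additivity is the more delicate half, because for adjacent intervals one only has the trivial inclusion $\cB(I_1)^K\vee\cB(I_2)^K\subseteq\cB(I)^K$ while the reverse cannot be extracted directly from $E_I$, the maps $\phi^I_k$ being non-multiplicative; here I would appeal to the inheritance of strong additivity under finite-index subnets. Finally, (4) is immediate: a finite-index subnet of a completely rational net is completely rational by \cite{Lo2003}, and we have shown $\cB^K\subset\cB$ has finite index $D(K)$.

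I expect the main obstacle to be the covariance step in~(1): one must manufacture a genuine symmetry of the net out of data that is only assumed to be compatible, not covariant. The resolution hinges on recognizing that all the relevant information is concentrated in the single, interval-independent Jones projection $e$, whose commutation with the geometric modular flows (Takesaki's theorem together with Bisognano--Wichmann) upgrades to commutation with all of $\Mob$. The strong-additivity half of~(3) is the remaining technical point.
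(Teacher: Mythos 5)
Your proposal follows the paper's own route almost step for step: part (1) is exactly the paper's Lemma \ref{lem:CENet} (Takesaki's theorem applied to the vacuum-preserving expectations, commutation of the interval-independent Jones projection $e=V(e_K)$ with the modular groups, Bisognano--Wichmann, and the fact that the dilation subgroups generate $\Mob$); part (2) is the paper's argument verbatim (Pimsner--Popa bound $\le D(K)$, then equality because the fixed-point algebra is a purely infinite factor, the minimal index coincides with the complete-positivity constant by \cite{BaDeHa1988,Wa1990}, and affine independence of the $\phi^I_k$ --- i.e.\ faithfulness, built into the definition of a proper action --- rules out $[\cB(I):\cB(I)^K]<D(K)$); and strong additivity in (3) together with all of (4) are, as in the paper, quoted from \cite[Proposition 34, Theorem 24]{Lo2003}.

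The one place you deviate is the split property in (3), and that step fails as written. If $\cB(I_0)\subset M\subset\cB(I)$ with $M$ a type I factor, the compression $eMe$ is not an algebra at all: $e$ lies neither in $M$ nor in $M'$ (it commutes only with the $K$-fixed elements of $\cB(I_0)$), so $(em_1e)(em_2e)\neq em_1m_2e$ in general. The natural substitute, the algebra $e(M\cap\{e\}')$, is indeed intermediate between $\cB^K(I_0)$ and $\cB^K(I)$, but the relative commutant $M\cap\{e\}'$ of a projection in general position inside a type I factor need not be a factor, let alone type I, so no intermediate type I factor is produced this way; the split property simply does not pass to subalgebras by naive cut-down. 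This is precisely why the paper invokes \cite[Proposition 34]{Lo2003} for \emph{both} halves of (3). Since you already cite that inheritance result for strong additivity, and since you have finite index from (2), the repair is immediate: appeal to the same result for the split property instead of the compression argument.
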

\begin{proof} 
  Define $\A(I)=\cB(I)^K$, then $\{E_I\colon \cB(I)\to \A(I)\subset \cB(I)\}$ is
  a compatible family of vacuum preserving conditional expectations and thus the
  statement follows from the following Lemma \ref{lem:CENet}.

  First we observe that $[\cB(I):\cB(I)^K]\leq D(K)$. But then $(\cB(I)^K)'\cap
  \cB(I)=\CC$ by \cite[Lemma 14]{Lo2003} and $\cB(I)^K$ is a factor and
  therefore $[\cB(I):\cB(I)^K]=D(K)$.
  The last two statements follow from \cite[Proposition 34, Theorem 24]{Lo2003}.
\end{proof}
\begin{lem}
  \label{lem:CENet}
  Let $E_I\colon \cB(I)\to\A(I)\subset \cB(I)$ be a family of normal compatible
  and vacuum preserving conditional expectations with image $\A(I)=E_I(\cB(I))$.
  Then $\A(I)\subset \cB(I)$ is a conformal subnet and in particular the
  projection $e$ onto $\overline{\A(I)\Omega}$ commutes with the Möbius action 
  $U(g)$.
\end{lem}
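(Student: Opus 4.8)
The plan is to verify the two defining properties of a subnet---isotony and Möbius covariance---and to extract the commutation of $e$ with $U$ along the way, reducing everything to one statement: that the projection onto $\overline{\A(I)\Omega}$ is independent of $I$ and $\Mob$-invariant. First I would record the standard facts about a vacuum-preserving conditional expectation. For $x\in\cB(I)$ and $a\in\A(I)$ one computes $(a\Omega,x\Omega)=\varphi(a^\ast x)=\varphi(E_I(a^\ast x))=\varphi(a^\ast E_I(x))=(a\Omega,E_I(x)\Omega)$, using the module property and $\varphi\circ E_I=\varphi$; hence $E_I(x)\Omega=e_I\,x\Omega$, where $e_I$ is the projection onto $\K_I:=\overline{\A(I)\Omega}$. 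Since $\Omega$ is separating for $\cB(I)$, the map $m\mapsto m e_I$ is injective, and combining $e_I x e_I=E_I(x)e_I$ with $a\K_I\subset\K_I$ and $a^\ast\K_I\subset\K_I$ for $a\in\A(I)$ yields the characterisation $\A(I)=\cB(I)\cap\{e_I\}'$.

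Isotony is then immediate: for $I_1\subset I_2$ compatibility gives $E_{I_2}\restriction\cB(I_1)=E_{I_1}$, so any $a\in\A(I_1)$ satisfies $E_{I_2}(a)=E_{I_1}(a)=a$, whence $a\in\A(I_2)$. Next I would show that $e_I$ does not depend on $I$. For $I\subset J$ and $x\in\cB(I)$ compatibility yields $e_J x\Omega=E_J(x)\Omega=E_I(x)\Omega=e_I x\Omega$; since $\cB(I)\Omega$ is dense in $\H$ by Reeh--Schlieder for $\cB$, this forces $e_J=e_I$, and chaining over a cover of $\Sc$ by intervals gives a single projection $e:=e_I$ valid for all $I\in\cI$.

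The crux---and the step I expect to be the main obstacle---is to prove $U(g)\,e\,U(g)^\ast=e$ for all $g\in\Mob$. Here I would invoke the Bisognano--Wichmann property, which identifies the modular unitaries of $(\cB(I),\Omega)$ with $\Delta_{(\cB(I),\Omega)}^{it}=U(\delta_I(-2\pi t))$. Because $E_I$ is a vacuum-preserving conditional expectation onto $\A(I)$, Takesaki's theorem gives $\sigma^\varphi_t(\A(I))=\A(I)$, so $\Delta_{(\cB(I),\Omega)}^{it}\K=\K$; a unitary preserving $\K$ also preserves $\K^\perp$, hence $U(\delta_I(t))$ commutes with $e=e_I$ for every interval $I$ and every $t\in\RR$. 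Since the one-parameter subgroups $\{\delta_I(t)\}_{I\in\cI,\,t\in\RR}$ generate $\Mob$, it follows that $U(g)$ commutes with $e$ for all $g\in\Mob$. This is the only place where the conformal (rather than purely von Neumann algebraic) structure enters, and verifying that the geometric modular flows generate the whole group is the delicate point.

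Finally, covariance drops out of the characterisation $\A(I)=\cB(I)\cap\{e\}'$: since $U(g)\cB(I)U(g)^\ast=\cB(gI)$ and $U(g)\,e\,U(g)^\ast=e$, we obtain $U(g)\A(I)U(g)^\ast=\cB(gI)\cap\{e\}'=\A(gI)$. Together with isotony this shows $\{\A(I)\}_{I\in\cI}$ is a subnet in the sense of the preliminaries, and the asserted commutation $[U(g),e]=0$ is exactly the statement established in the previous paragraph, completing the proof.
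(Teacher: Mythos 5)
Your proof is correct and follows essentially the same route as the paper: vacuum preservation plus Takesaki's theorem shows the modular groups $\Delta_{(\cB(I),\Omega)}^{\ima t}$ commute with $e$, the Bisognano--Wichmann property converts this into commutation with the dilations $U(\delta_I(t))$, these generate $\Mob$, and covariance then drops out of $\A(I)=\cB(I)\cap\{e\}'$. The only difference is that you spell out details the paper leaves implicit (the identity $E_I(x)\Omega=e_Ix\Omega$, the $I$-independence of $e_I$ via compatibility and Reeh--Schlieder, and isotony), which is a faithful filling-in rather than a different argument.
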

\begin{proof} 
  The Jones projection $e=e_I$ onto $\overline{\A(I)\Omega}$ does not depend
  $I\in\cI$.
  Then $\A(I)=\cB(I)\cap \{e\}'$. 
  Since $E_I$ preserves $(\Omega,\slot \Omega)$ by Takesaki's theorem
  \cite{Ta1972}, \cite[Theorem IX.4.2.]{Ta3} we have that $\Ad
  \Delta_{(\cB(I),\Omega)}^{\ima t}$ leaves $\A(I)$ globally invariant, which is
  equivalent to $[\Delta_{(\cB(I),\Omega)}^{\ima t},e]=0$.
  Thus by the Bisognano--Wichmann property it follows that $e$ commutes with the
  dilation $U(\delta_I(t))$ of every interval $I\in \cI$.
  But these generate the Möbius group, thus $[e,U(\slot)]=0$ and
  $U(g)\A(I)U(g)^\ast = U(g)\cB(I)U(g)^\ast \cap \{e\}'=\cB(gI)\cap \{e\}'
  =\A(gI)$. Isotony of $\A$ follows from the compatibility,
  thus $\A\subset \cB$ is a conformal subnet. 
\end{proof}

\section{Hypergroup Actions from Subfactors and Conformal Subnets}
\label{sec:Reconstruct}
Let $\A\subset \cB$ be a finite index subnet and let $A=\A(I)$ and $B=\cB(I)$
for some fixed $I\in\cI$. 
The inclusion is automatically irreducible \cite[Lemma 14]{Lo2003}, \ie
$\A(I)'\cap \cB(I)=\CC$ for all $I\in\cI$.
By the Bisognano--Wichmann property and Takesaki's theorem there is a unique
(\cite[p. 230]{Lo1989}) normal conditional expectation $E\colon B\to A$ which is
implemented by the projection $e_A$ onto $\overline{A\Omega}$. 
By the Reeh--Schlieder property \cite{FrJr1996} $\overline{\A(I)\Omega}
=\overline{\A\Omega}$ and in particular the Jones projection $e_\A:=e_A$ does
not depend on $I$.
In particular, there is a unique family of conditional expectations
$\{E_I\colon\cB(I) \to \A(I)\subset\cB(I): I \in \cI\}$ and the family is
compatible, namely $E_J\restriction \cB(I) =E_I$ for $I,J\in \cI$ with $I\subset
J$.

\subsection{Local hypergroup action}
The construction in this section works for a general irreducible (finite depth
or just extremal) finite index subfactor $A\subset B$ of type III, with the
restriction that the canonical endomorphism $\gamma$ has no multiplicities.

We denote $\iota\colon A \to B$ the inclusion map, $\bar\iota \colon B\to A$
a conjugate and $v\in\Hom(\id,\iota\bar\iota)\subset B$ and
$w\in\Hom(\id,\bar\iota\iota)\subset A$ isometries fulfilling the conjugate
equations $\bar\iota(v^\ast)w=\iota(w^\ast)v=[B:A]^{-\frac12}\cdot 1$.

Then $\Gamma=(\gamma=\iota\bar\iota,v,\iota(w))$ is a Q-system and the
conditional expectation $E\colon B \to \iota(A)\subset B$ is given by the
Q-system by $E(\slot)=\iota(w)^\ast \gamma(\slot) \iota(w)$ \cite{Lo1994}.

From now on we will assume that $\gamma$ has no multicities.
By the following proposition, this is true if the dual canonical Q-system
$\Theta=(\theta=\iota\bar\iota,w,x=\iota(v))$ is commutative.
\begin{prop}
  \label{prop:NoMultiplicities}
  If $A\subset B$ is an irreducible finite index subfactor with Q-system 
  $\Theta=(\theta,w,x)$ and the rigid C${}^\ast$-tensor category 
  $\langle\theta \rangle$ generated by $\theta$ has a braiding 
  $\varepsilon(\slot,\slot)$, such that $\Theta$ is commutative,
  \ie $\varepsilon(\theta,\theta)x=x$.
  Then $\Hom(\gamma,\gamma)$ is commutative, \ie $\gamma$ has no multiplicities.
\end{prop}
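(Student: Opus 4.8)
The plan is to show that the commutativity of the dual canonical Q-system $\Theta=(\theta,w,x)$ forces $\gamma=\iota\bar\iota$ to decompose multiplicity-free. The key object to understand is $\Hom(\gamma,\gamma)=\gamma(B)'\cap B$, and the main tool will be the relationship between the two Q-systems: the canonical Q-system $\Gamma$ (built from $\gamma=\iota\bar\iota$ on $B$) and the dual canonical Q-system $\Theta$ (built from $\theta=\bar\iota\iota$ on $A$). Recall that $\langle\gamma\rangle\subset\End(B)$ and $\langle\theta\rangle\subset\End(A)$ are related by $\alpha$-induction/the canonical duality, and crucially the fusion multiplicities match up: the irreducible subobjects $\beta\prec\gamma$ correspond bijectively to the irreducible subobjects $\tau\prec\theta$, with the same multiplicities, since $[\gamma]=[\iota\bar\iota]$ and $[\theta]=[\bar\iota\iota]$ have the same decomposition data via Frobenius reciprocity $\langle\gamma,\beta\rangle=\langle\bar\iota,\bar\iota\bar\beta\rangle=\langle\theta,\bar\iota\beta\iota\rangle$-type relations. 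So it suffices to show $\theta$ has no multiplicities, i.e. $\dim\Hom(\tau,\theta)\le 1$ for each irreducible $\tau$.

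**The core argument via commutativity.**

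The heart of the matter is a standard but clean fact about commutative Q-systems (algebra objects) in a braided category. The plan is to exploit that $\Theta$ being a \emph{commutative} algebra in the braided category $\langle\theta\rangle$ means the multiplication $x\colon\theta\to\theta^2$ satisfies $\varepsilon(\theta,\theta)x=x$. The standard consequence is that the endomorphism algebra $\Hom(\theta,\theta)$ (equivalently, the space of ``characters'' of the algebra, or the relative commutant) is commutative. Concretely, I would argue as follows. An element $t\in\Hom(\gamma,\gamma)$ corresponds under the Q-system duality to an element $s\in\Hom(\theta,\theta)$, i.e. to a morphism commuting with the algebra structure; the algebra of such morphisms for a \emph{commutative} Frobenius algebra in a braided category is itself commutative. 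The reason is that for a commutative algebra, the left and right multiplication maps coincide (using the braiding to slide one tensor factor past the other), and hence the algebra of bimodule endomorphisms collapses to a commutative algebra—precisely the statement that the category of local $\Theta$-modules has a commutative endomorphism ring at the unit, or equivalently that $\Hom_\Theta(\Theta,\Theta)$ is commutative.

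**Carrying it out diagrammatically.**

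Concretely I would write $\Hom(\gamma,\gamma)$ in terms of the graphical calculus set up in the excerpt. An intertwiner $T\in\Hom(\gamma,\gamma)=\gamma(B)'\cap B$ can be presented as a morphism in $\langle\theta\rangle$ sandwiched between the unit $w$ and comultiplication data; the product of two such intertwiners $T_1,T_2$ is computed by stacking and fusing through the multiplication $x$. The commutativity hypothesis $\varepsilon(\theta,\theta)x=x$ lets me insert a braiding $\varepsilon(\theta,\theta)$ that I can then slide around a closed diagram using naturality of the braiding together with the associativity and unit laws of the Q-system; the result is that $T_1 T_2$ and $T_2 T_1$ are computed by diagrams that differ only by such an insertion, hence are equal. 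I expect the \textbf{main obstacle} to be the careful bookkeeping in the diagrammatic manipulation: one must verify that the braiding slides past the multiplication vertex in exactly the way commutativity permits, and that the relevant diagram \emph{is} closed (i.e. that no external leg obstructs the slide). Once $T_1T_2=T_2T_1$ is established for all $T_1,T_2\in\Hom(\gamma,\gamma)$, commutativity of $\Hom(\gamma,\gamma)$ follows, which is the assertion that $\gamma$ has no multiplicities.
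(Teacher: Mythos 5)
Your closing paragraph is, in substance, the paper's own proof: the paper takes $a,b\in\Hom(\gamma,\gamma)$, writes the product $ab$ explicitly through the Q-system data, uses $\varepsilon(\theta,\theta)x=x$ to insert a braiding, and slides it through by naturality of the braiding to conclude $ab=ba$. If your proposal consisted only of that paragraph, it would be the same approach as the paper and essentially correct.

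However, the first two paragraphs contain a genuine error, and the program as you framed it would fail. The reduction ``it suffices to show $\theta$ has no multiplicities'' is false: Frobenius reciprocity gives only $\dim\Hom(\gamma,\gamma)=\dim\Hom(\theta,\theta)$, not a multiplicity-preserving bijection between the irreducible subobjects of $\gamma$ and of $\theta$. The standard counterexample is exactly the situation this proposition is designed to generalize: for a fixed point $A=B^G\subset B$ under an outer action of a non-abelian finite group $G$, the dual Q-system $\Theta$ is commutative and $\theta\cong\bigoplus_{\pi\in\hat G}\dim(\pi)\,\rho_\pi$ \emph{has} multiplicities, while $\gamma\cong\bigoplus_{g\in G}\beta_g$ does not (the dimensions still agree, since $|G|=\sum_\pi(\dim\pi)^2$). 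The same example refutes the key lemma of your second paragraph: $\Hom(\theta,\theta)\cong\bigoplus_\pi M_{\dim\pi}(\CC)$ is noncommutative even though $\Theta$ is a commutative algebra. What commutativity of $\Theta$ actually forces to be commutative is not the composition product on $\Hom(\theta,\theta)$ but the \emph{convolution} product on it, i.e.\ the product transported from $\Hom(\gamma,\gamma)$ under the Fourier transform; this asymmetry between $\gamma$ and $\theta$ is precisely the content of the proposition, and your argument conflates the two products. So discard the reduction to $\theta$ and the claim about $\Hom(\theta,\theta)$, and carry out the diagrammatic computation directly on $\Hom(\gamma,\gamma)$, as in your last paragraph and in the paper.
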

\begin{proof}
  The algebra $Q=\Hom(\gamma,\gamma)=\gamma(A)'\cap A$ is commutative if and
  only if $\gamma$ has no multiplicities.
  But we can use the commutativity of the Q-system $\Theta$ which implies that
  the convolution product on $\hat Q =\Hom(\theta,\theta)$ is commutative.
  Informally, using the Fourier transformation $\cF\colon Q\to \hat Q$ we have
  $\cF(ab)=\cF(a)\ast\cF(b)=\cF(b)\ast\cF(a)=\cF(ba)$ for $a,b\in Q$.
  In the endomorphism notation using the Q-system property and the naturality of
  the braiding this looks like:
  \begin{align}
    ab&=[M:N]^3 \gamma(v^\ast)\iota(x^\ast
    w^\ast)\gamma(a\iota(w))\gamma^2(b\iota(w))\bar\iota(x)
    \\&=[M:N]^3 \gamma(v^\ast)\iota((\varepsilon(\theta,\theta)x)^\ast
    w^\ast)\gamma(a\iota(w))\gamma^2(b\iota(w))
      \bar\iota(\varepsilon(\theta,\theta)x)\\
    &=[M:N]^3 \gamma(v^\ast)\iota(x^\ast
    w^\ast)\gamma(b\iota(w))\gamma^2(a\iota(w))\bar\iota(x) =ba\,.
  \end{align}
  We invite the reader to draw the diagram.
\end{proof}
\begin{cor}
  \label{cor:NetNoMultiplicities}
  Let $\cB$ be a conformal net and $\A$ a finite index subnet, and consider the
  subfactor $A=\A(I)\subset B=\cB(I)$.
  Then the canonical $\gamma \in \End(B)$ has no multiplicities.
\end{cor}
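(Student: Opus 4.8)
The plan is to verify the hypotheses of Proposition \ref{prop:NoMultiplicities} for the subfactor $A = \A(I) \subset B = \cB(I)$ and then invoke it directly. Concretely, I must produce a braiding on the rigid C${}^\ast$-tensor category $\langle\theta\rangle$ generated by the dual canonical endomorphism $\theta = \bar\iota\iota$ with respect to which the dual canonical Q-system $\Theta = (\theta, w, x = \iota(v))$ is commutative, \ie $\varepsilon(\theta,\theta)x = x$.

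First I would realize $\theta$ inside the braided DHR category. Since $\A \subset \cB$ is a finite index subnet, restricting the vacuum representation of $\cB$ to $\A$ yields a finite direct sum of irreducible representations of $\A$ of finite statistical dimension; localizing in $I$ exhibits $\theta$ as an object of $\Rep^I(\A)$. Thus $\langle\theta\rangle$ lies inside $\Rep^I(\A)$ and inherits the DHR braiding $\varepsilon$, so that $\varepsilon(\theta,\theta)$ is defined and the commutativity condition even makes sense.

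The essential input is then that $\Theta$ is commutative for this braiding. This is the content of the Longo--Rehren correspondence for nets of subfactors \cite{LoRe1995}: finite index \emph{local} extensions $\cB$ of a local net $\A$ correspond to \emph{commutative} Q-systems in $\Rep(\A)$, the identity $\varepsilon(\theta,\theta)x = x$ being precisely the algebraic encoding of the locality of $\cB$. Since $\cB$ is a local net by assumption, $\Theta$ is commutative.

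With both hypotheses in hand, Proposition \ref{prop:NoMultiplicities} immediately yields that $\Hom(\gamma,\gamma) = \gamma(B)' \cap B$ is commutative, \ie $\gamma$ has no multiplicities. I expect the only genuine content to lie in the two structural facts quoted above---the DHR-localizability of $\theta$ and the locality $\Leftrightarrow$ commutativity dictionary---rather than in any fresh computation, since the braided argument establishing the conclusion has already been carried out inside Proposition \ref{prop:NoMultiplicities}.
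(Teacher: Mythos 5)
Your proposal is correct and follows essentially the same route as the paper: the paper's proof likewise cites the Longo--Rehren result (\cite[4.4 Corollary]{LoRe1995}) that locality of the extension $\cB$ makes the dual canonical Q-system $\Theta$ commutative in the DHR braiding of $\Rep^I(\A)$, and then invokes Proposition \ref{prop:NoMultiplicities}. Your additional remarks on the DHR-localizability of $\theta$ merely make explicit what the paper leaves implicit.
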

\begin{proof}
  The dual canonical Q-system $(\theta,w,x)$ is commutative by \cite[4.4.\
  Corollary]{LoRe1995} and the statement follows from Proposition
  \ref{prop:NoMultiplicities}.
\end{proof}
Let us choose  
\begin{align}
  \gamma&=\sum_{i=0}^n \Ad( v_i)\circ\beta_i
\end{align}
a decomposition of $\gamma$ into irreducible sectors, where
$v_i\in\Hom(\beta_i,\gamma)$ with $i=0,\ldots,n$ are representation the
generators of the Cuntz algebra $\mathcal O_{n+1}$. We may and do choose
$\beta_0=\id_B$ and $v_0=v$. 
We remember that we assume that $\gamma$ has no multiplicities, thus
$[\beta_i]\neq [\beta_j]$ if $i\neq j$ and $\Hom(\gamma,\gamma)=\gamma(B)'\cap
B$ is a commutative algebra. 
\begin{lem}
  \label{lem:isometry} 
  Let $(\gamma=\iota\bar\iota,v,\iota(w))$ be an irreducible Q-system.
  The maps
  \begin{align}
    \Hom(\beta_i,\gamma)
    \ni v_i \longmapsto \sqrt{\frac{d_\gamma}{d_i}} 
    \begin{cases}
      \displaystyle v_i^\ast\iota(w) &\in \Hom(\iota,\beta_i\iota)\\
      \displaystyle \bar\iota(v_i)^\ast w &\in \Hom(\bar\iota,\bar\iota\beta_i)
    \end{cases}
  \end{align}
  are anti-isomorphism of the respective Hilbert spaces.
\end{lem}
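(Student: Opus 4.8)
The plan is to show that each of the two maps is conjugate-linear (this is the sense of \emph{anti}-isomorphism), isometric, and surjective. First I would check that the maps are well defined. Since $w\in\Hom(\id_A,\theta)$ one has $\iota(w)\in\Hom(\iota,\gamma\iota)$, and since $v_i\in\Hom(\beta_i,\gamma)$ one has $v_i^*\in\Hom(\gamma,\beta_i)$; composing gives $v_i^*\iota(w)\in\Hom(\iota,\beta_i\iota)$. Dually, $\bar\iota(v_i)\in\Hom(\bar\iota\beta_i,\theta\bar\iota)$ and $w\in\Hom(\bar\iota,\theta\bar\iota)$, so $\bar\iota(v_i)^*w\in\Hom(\bar\iota,\bar\iota\beta_i)$. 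Both maps are visibly conjugate-linear in $v_i$. Because the sources $\iota$, $\bar\iota$, and $\beta_i$ are all irreducible (the inclusion is irreducible, so $\iota(A)'\cap B=\bar\iota(B)'\cap A=\CC$), the inner products on all four spaces are given by $\langle s,t\rangle 1=s^*t$. Polarizing, anti-unitarity reduces to the single norm identity
\[
  \|v_i^*\iota(w)\|^2=\|\bar\iota(v_i)^*w\|^2=\frac{d_i}{d_\gamma},
\]
since $\|v_i\|^2=1$ and the prefactor is $\sqrt{d_\gamma/d_i}$.

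For the second map this is clean. Writing $p_i=v_iv_i^*$ for the minimal projection onto $\beta_i$, one has $\|\bar\iota(v_i)^*w\|^2\,1=w^*\bar\iota(p_i)w=\phi_\iota(p_i)$, where $\phi_\iota(\slot)=w^*\bar\iota(\slot)w$ is the standard left inverse of $\iota$; the value is a scalar because it lies in $\Hom(\bar\iota,\bar\iota)=\CC$. The standard left inverse $\Phi_\gamma$ of $\gamma=\iota\bar\iota$ is the composite $\Phi_\gamma(\slot)=v^*\iota(\phi_\iota(\slot))v$, so on the scalar $\phi_\iota(p_i)$ this gives $\Phi_\gamma(p_i)=\phi_\iota(p_i)\,v^*v=\phi_\iota(p_i)$. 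The standard fact $\Phi_\gamma(p_i)=d_i/d_\gamma$ for a minimal projection onto an irreducible subobject of dimension $d_i$ then finishes this case.

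The first map requires the mirror-image computation $\|v_i^*\iota(w)\|^2\,1=\iota(w)^*p_i\iota(w)$, and this is the step I expect to be the main obstacle: unlike the second case it does not reduce directly to $\Phi_\gamma(p_i)$, because $p_i$ is not of the form $\gamma(\slot)$ and no copy of $v$ appears in the contraction. I would evaluate it graphically, drawing $v_i$, $\iota(w)$ and their adjoints and closing the diagram into a $\beta_i$-loop attached to the two $w$-caps; the conjugate equations $\iota(w^*)v=\bar\iota(v^*)w=d_\gamma^{-1/2}1$ together with sphericality (standardness of the chosen solutions $v,w$) evaluate the closed diagram to $d_i/d_\gamma$, exactly cancelling the normalization. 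A useful consistency check is $\sum_i\iota(w)^*p_i\iota(w)=\iota(w^*w)=1=\sum_i d_i/d_\gamma$ (using $\sum_i p_i=1$ and $d_\gamma=\sum_i d_i$); the genuine content is the individual value, which is precisely where the dimension bookkeeping must be done carefully.

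Finally I would deduce bijectivity. Each normalized map is isometric, hence injective, and by Frobenius reciprocity together with the self-duality $\bar\gamma=\gamma$,
\[
  \dim\Hom(\beta_i,\gamma)=\langle\beta_i,\iota\bar\iota\rangle
  =\langle\iota,\beta_i\iota\rangle=\langle\bar\iota,\bar\iota\beta_i\rangle,
\]
so in each case the domain and codomain have equal finite dimension and an isometric injection is automatically onto. Hence both normalized maps are anti-unitaries, which is the assertion of the lemma.
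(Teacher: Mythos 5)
Your proof is correct, and it runs on the same tools as the paper's: standard left inverses, sphericality, the trace property, and irreducibility of $\iota$, $\bar\iota$, $\beta_i$ to make all pairings $s^\ast t$ scalar. The notable difference is only where the labor goes: the paper's entire printed proof is precisely the evaluation you flag as your ``main obstacle''---it observes that $v_i^\ast\iota(w)$ is a multiple of an isometry and computes $\Phi_\iota\bigl(\iota(w)^\ast v_iv_i^\ast\iota(w)\bigr)=\Phi_\gamma(v_iv_i^\ast)=\tfrac{d_i}{d_\gamma}\Phi_{\beta_i}(v_i^\ast v_i)=\tfrac{d_i}{d_\gamma}\cdot 1$ ``using sphericality and the trace property,'' i.e.\ exactly the diagrammatic fact you defer---while it leaves your cleanly handled second map, the anti-linearity remark, and the Frobenius-reciprocity dimension count for surjectivity implicit. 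Moreover, your worry that the first map is genuinely harder is unfounded: it yields to the very trick you used for the second map. Sandwich the scalar $\iota(w)^\ast v_iv_i^\ast\iota(w)\in\Hom(\iota,\iota)=\CC\cdot 1$ between $v^\ast$ and $v$ to rewrite it as $(\iota(w)v)^\ast\, v_iv_i^\ast\,(\iota(w)v)$; composing your formulas $\Phi_\gamma=\Phi_{\bar\iota}\circ\Phi_\iota$ shows that $\iota(w)v$ is the isometry implementing the standard left inverse of $\gamma$, namely $\Phi_\gamma(\slot)=(\iota(w)v)^\ast\gamma(\slot)\,(\iota(w)v)$, and sphericality of the standard solution (this is where standardness of $(v,w)$ enters, cf.\ the identity $\gamma(v^\ast)\iota(w)=d_\gamma^{-1/2}\cdot 1$ used in the proof of Lemma~\ref{lem:MinimalStinespringCE}) identifies the left closure $(\iota(w)v)^\ast\,v_iv_i^\ast\,(\iota(w)v)$ with the right closure $\Phi_\gamma(v_iv_i^\ast)$, which you already computed to be $d_i/d_\gamma$. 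So there is no genuine gap---the one step you leave as a sketch is exactly the step the paper itself compresses into one line, and it closes with your own method; your write-up is otherwise more complete than the paper's, since it makes well-definedness, anti-linearity, and bijectivity explicit.
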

\begin{proof} 
  We have that $v_i^\ast\iota(w) = v_i \cdot(1_\iota\otimes w)\in
  \Hom(\iota,\beta_i\iota)$ is a multiple of an isometry. 
  Then we get 
  $\Phi_\iota( \iota(w)v_i^\ast v_i\iota(w) )=\Phi_\gamma(v_i v_i^\ast)
    =\frac{d_i}{d_\gamma}\Phi_{\beta_i}(v_i^\ast v_i)
    =\frac{d_i}{d_\gamma}\cdot 1$ using sphericality and the trace property.
\end{proof}
Let us define normal completely positive maps $\phi_i$ by
\begin{align}
  \frac {d_i}{d_\gamma}\, \phi_i(\slot) &:=\iota(w^\ast) v_i\beta_i(\slot)v_i^\ast \iota(w)
  = \iota(w^\ast) \gamma(\slot) v_iv_i^\ast \iota(w)\,.
\end{align}
With this normalization, the maps $\phi_i\colon B \to B$ are also a unital maps. 
Namely, they are of the form $\phi_i=x_i^\ast\beta_i(\slot)x_i$ with $x_i$
isometries by Lemma \ref{lem:isometry}. It also follows directly that
$\phi_i\circ\iota=\iota$.
Namely,  
\begin{align*}
  \frac {d_i}{d_\gamma} \phi_i(\iota(a))&= \iota(w^\ast) v_i\beta_i(\iota(a))v_i^\ast \iota(w)
    = \iota(w^\ast)\iota(\theta(a))v_iv_i^\ast \iota(w)
    = \iota(a)\iota(w^\ast)v_iv_i^\ast \iota(w)
    =\frac {d_i}{d_\gamma} \,\iota(a)   
\end{align*}
for all $a\in A$ using Lemma \ref{lem:isometry}. 
We have 
\begin{align}
  \phi_i\circ\phi_j(\slot) 
  &=\frac{d_\gamma^2}{d_id_j}\,\iota(w^\ast) v_i\beta_i(\iota(w^\ast)
    v_j\beta_j(\slot)v_i^\ast \iota(w))v_i^\ast \iota(w)
  \\&= \frac{d_\gamma^2}{d_id_j}\,\iota(w^\ast) \gamma(\slot) 
  \underbrace{\iota(w^\ast)(v_i\otimes v_j) 
    (v_i^\ast\times v_j)\iota(w)}_{\in\Hom(\gamma,\gamma)} \iota(w)
  \\&=\frac{d_\gamma}{d_k}\sum_k C_{ij}^k \cdot \iota(w^\ast) \gamma(\slot) 
  \underbrace{v_kv_k^\ast}_{\in\Hom(\gamma,\gamma)}\iota(w)\\
  &=\sum_kC_{ij}^k\cdot \phi_k(\slot)
\end{align}
where in the last step we used that $\Hom(\gamma,\gamma)\cong \CC^{\#K}$ with
basis $v_iv_i^\ast$ and $\sum_k v_kv_k^\ast=1$. 
The coefficients $C^k_{ij}$ are given by:
\begin{align}
  \label{eq:Coefficients}
  C^k_{ij} v_kv_k^\ast 
  &=\frac{d_\gamma d_k}{d_id_j} v_kv_k^\ast \left( \iota(w^\ast)(v_i\otimes
  v_j)(v_i\otimes v_j)^\ast \iota(w) \right)v_kv_k^\ast%
\end{align}
\begin{align}
  C^k_{ij}
  &=\frac{d_\gamma d_k}{d_id_j}\left\|(v_{i}^\ast\otimes v_{j}^\ast)\cdot x
    \cdot v_{k}\right\|^2_{\Hom(\beta_k,\beta_i\beta_j)}\ 
  =\frac{d_\gamma d_k}{d_id_j}\left\|v_{i}^\ast\beta_i( v_{j}^\ast) x 
    v_{k}\right\|^2_{\Hom(\beta_k,\beta_i\beta_j)}
  \,, 
\end{align}
with the norm $\|a \|^2_{\Hom(\beta_k,\beta_i\beta_j)}
  =(a,a)_{\Hom(\beta_k,\beta_i\beta_j)}$.

Because $[\gamma]=[\bar\gamma]$ we can define the involution $i\mapsto \bar i$,
such that $[\beta_{\bar i}]=[\bar\beta_i]$.

\begin{prop}
  Let $A\subset B$ be an irreducible finite index subfactor, such
  that the canonical endomorphism $\gamma$ has no multiplicities.

  Then the coefficients $C_{ij}^k$
  with $\phi_i\circ \phi_j=\sum_{k}C_{ij}^k \phi_k$ as above together with the
  involution $[i]\to[\bar i]$ (defined by $[\bar\beta_i]=[\beta_{\bar i}]$)
  defines a hypergroup $K=\{c_0,\ldots,c_n\}$. 
  
  There is a Haar element $e_K\in \CC K$ defined by
  \begin{align}
    e_K&=\frac1{d_\gamma}\sum_i w_ic_i\in \mathrm{Conv}(K)
  \end{align} 
  with $w_i=d_i$ the weight of $c_i$ and $D(K)=d_\gamma\equiv [B:A]$. 
\end{prop}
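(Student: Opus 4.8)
The plan is to check the axioms of Definition \ref{defi:Hypergroup} directly from the composition rule $\phi_i\circ\phi_j=\sum_k C_{ij}^k\phi_k$ and the explicit formula for the $C_{ij}^k$. Nonnegativity (axiom (1)) is immediate, since $C_{ij}^k$ is displayed as a positive multiple of a squared intertwiner norm. Normalization (axiom (2)) follows by evaluating the composition rule at $1\in B$ and using that each $\phi_k$ is unital: the left-hand side is $\phi_i(\phi_j(1))=1$ and the right-hand side is $\sum_k C_{ij}^k$. Finally $\phi_0=\id_B$ gives $\phi_0\circ\phi_j=\phi_j\circ\phi_0=\phi_j$, so $c_0$ is a two-sided unit, and $[\beta_{\bar 0}]=[\bar\beta_0]=[\id_B]=[\beta_0]$ gives $\bar 0=0$, hence $c_0^\ast=c_0$.

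To promote the $C_{ij}^k$ to the structure constants of a genuine associative $\ast$-algebra I would first show that $\{\phi_i\}_{i=0}^n$ is linearly independent. The point is that $q\mapsto \iota(w^\ast)\gamma(\slot)q\iota(w)$ is (a form of) the Fourier transform of the Q-system, a linear isomorphism from $Q=\Hom(\gamma,\gamma)$ onto the space of $\iota(A)$-bimodule maps $B\to B$; both spaces have dimension $n+1$ because $\gamma$ has no multiplicities. The minimal projections $e_i=v_iv_i^\ast$ form a basis of the commutative algebra $Q$, and $\phi_i$ is the nonzero scalar $d_\gamma/d_i$ times the image of $e_i$, so the $\phi_i$ are linearly independent. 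Consequently $\CC K\to\Maps(B)$, $c_i\mapsto\phi_i$, is an injective algebra homomorphism, and associativity of the product on $\CC K$ is inherited from associativity of composition. For the involution I would prove the adjoint identity $\phi_i^\sharp=\phi_{\bar i}$, where $\sharp$ is the adjoint for the vacuum inner product $(a\Omega,b\Omega)$; taking adjoints in the composition rule and invoking linear independence then yields $C_{ij}^{\bar k}=C_{\bar j\bar i}^{k}$, which is exactly the statement that $c_k^\ast:=c_{\bar k}$ is an anti-multiplicative involution. I expect this adjoint identity — equivalently $V_i^\ast=V_{\bar i}$ for the implementing contractions $V_i\colon a\Omega\mapsto\phi_i(a)\Omega$ — to be the main obstacle; it should follow from pairing $\phi_i(\slot)=x_i^\ast\beta_i(\slot)x_i$ (Lemma \ref{lem:isometry}) against $\phi_{\bar i}$ through the standard solutions of the conjugate equations relating $\beta_i$ and $\beta_{\bar i}=\bar\beta_i$.

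For axiom (3) I would argue as follows. Since $\beta_i,\beta_j$ are irreducible and $\bar\beta_i=\beta_{\bar i}$, the space $\Hom(\id_B,\beta_i\beta_j)$ is nonzero iff $j=\bar i$, in which case it is one-dimensional; as $C_{ij}^0$ is a squared norm in this space, $C_{ij}^0=0$ unless $j=\bar i$. To evaluate $C_{i\bar i}^0$ I would avoid the intertwiner computation and instead use $\sum_i e_i=1$ to identify the conditional expectation $E\colon B\to\iota(A)$ as $E=\tfrac1{d_\gamma}\sum_i d_i\,\phi_i$. Because each $\phi_i$ fixes $\iota(A)$ pointwise, $\phi_i\circ E=E$; transporting this through the homomorphism $c_i\mapsto\phi_i$ and using linear independence gives $c_i h=h$ for the element $h=\tfrac1{d_\gamma}\sum_i d_i c_i\in\CC K$. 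Comparing the $c_0$-coefficients of $c_i h$ and $h$, and using that $C_{ij}^0$ vanishes unless $j=\bar i$ together with $d_{\bar i}=d_i$, yields $\tfrac1{d_\gamma}d_i\,C_{i\bar i}^0=\tfrac1{d_\gamma}$, that is $C_{i\bar i}^0=1/d_i>0$. In particular $C_{i\bar i}^0=C_{\bar i i}^0$, completing axiom (3), and the weight of $c_i$ is $w_i=(C_{i\bar i}^0)^{-1}=d_i$.

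It then remains only to assemble the final statement. Since $\gamma=\bigoplus_i\beta_i$ has no multiplicities, additivity of the dimension gives $D(K)=\sum_i w_i=\sum_i d_i=d_\gamma=[B:A]$. Substituting $w_i=d_i$ and $D(K)=d_\gamma$ into the Haar-element formula $e_K=\tfrac1{D(K)}\sum_k w_k c_k$ established above produces $e_K=\tfrac1{d_\gamma}\sum_i d_i c_i$, which is precisely the idempotent $h$ found above (consistently, $E=\phi(e_K)$ is the conditional expectation). The only genuinely analytic inputs are the linear independence of the $\phi_i$ and the adjoint identity $\phi_i^\sharp=\phi_{\bar i}$, the latter being the crux of the proof.
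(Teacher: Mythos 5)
Your proof is correct, but it pivots on different lemmas than the paper's own argument. The paper verifies axioms (1) and (2) exactly as you do, then gets axiom (3) by computing $C^0_{i\bar i}=1/d_i$ \emph{directly} from standardness of the pair $(v,w)$, and gets the $\ast$-property by a direct computation: the Frobenius rotation $v_i\mapsto v_i^\bullet$ identifies $v_{\bar i}$ with $v_i^\bullet$ up to a phase, whence $\phi_{\bar j}\phi_{\bar i}=\sum_k C_{ij}^k\phi_{\bar k}$. You instead put linear independence of $\{\phi_0,\ldots,\phi_n\}$ at the center and use it twice: to extract $C^0_{i\bar i}=1/d_i$ from $E=\tfrac1{d_\gamma}\sum_i d_i\phi_i$ and $\phi_i\circ E=E$ (a genuinely different and rather clean derivation, in which standardness enters only through the already-established unitality of the $\phi_i$), and to convert operator identities into identities of structure constants. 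Making the independence explicit is a virtue: the paper's one-line claim that $\CC K$ is an associative unital algebra tacitly needs it, and the paper records only affine independence later (Lemma \ref{lem:Simplex}, via the separate trick $b_k=\phi_{\bar k}(vv^\ast)$, and Proposition \ref{prop:Simplex}). Two caveats. First, your declared crux, $\phi_i^\sharp=\phi_{\bar i}$, is deferred to a pairing through standard solutions of the conjugate equations; that \emph{is} the paper's rotation argument, so at the technical core the proofs coincide --- and since the proposition's hypotheses contain no vector $\Omega$, you should either manufacture one (take the standard form of $\varphi_A\circ E$ for a faithful normal state $\varphi_A$ on $A$; the $\phi_i$ do not depend on the representation) or prove the state-free identity $E(a\phi_k(b))=E(\phi_{\bar k}(a)b)$, as the paper does in a subsequent lemma. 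Second, your assertion that $q\mapsto\iota(w^\ast)\gamma(\slot)q\iota(w)$ is an isomorphism onto the space of $A$-bimodular maps is stated circularly (``both spaces have dimension $n+1$''); all you need is injectivity, which follows from minimality of the Stinespring representation $E=\iota(w^\ast)\gamma(\slot)\iota(w)$ (Lemma \ref{lem:MinimalStinespringCE}): if $\iota(w^\ast)\gamma(b_1^\ast b_2)q\iota(w)=0$ for all $b_1,b_2\in B$, then since $q$ commutes with $\gamma(B)$ one gets $q=0$ on the dense subspace $\gamma(B)\iota(w)\Hil$.
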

\begin{proof} Since $\phi_0=\id_B$ and the composition is associative it follows 
  that $\CC K$ is a unital associative algebra. Property (1) is clear from the definition and 
  (2) follows from the unital property by applying both sides to $1$. 
  To see (3) we note that $i\neq \bar j$ we have that $\Hom(\beta_0,\beta_i\beta_j)=\{0\}$
  and it follows from the standardness of $(v,w)$ that $C^0_{i\bar i}=1/d_i$.
  Finally, we have to check the $\ast$-property. 
  We may assume $\bar\beta_i=\beta_{\bar i}$. Using the rotation $v_i\mapsto v_i^\bullet$
  \begin{equation}
    \begin{aligned}
      \Hom(\beta_i,\gamma)&\longrightarrow \Hom(\beta_{\bar i},\gamma)\\
	    \tikzmath{
    		\fill[\colN,rounded corners] (-1,-1) rectangle (1,1.5);
        \fill[\colM] (-.25,1.5)--(-.25,.5) arc (180:360:.25)--(.25,1.5);
        \draw (-.25,1.5)--(-.25,.5) arc (180:360:.25)--(.25,1.5);
        \draw [thick] (0,.25)--(0,-1); 
	    }&\longmapsto
    	\tikzmath{
        \begin{scope}[yscale=-1] \fill[\colN,rounded corners] (-1.75,-1) rectangle (1.5,1.5);
          \fill[\colM] (.5,-1)--(.5,.5) arc (0:180:.25) arc (360:180:.25) arc
            (180:0:.75)--(1,-1);
          \draw (.5,-1)--(.5,.5) arc (0:180:.25) arc (360:180:.25) arc
            (180:0:.75)--(1,-1);
          \draw [thick] (-.25,.25)--(-.25,0) arc (360:180:.5)--(-1.25,1.5);
        \end{scope}
	    }
    \end{aligned}
    \label{eq:rotation}
  \end{equation}
  we get that  $v_{\bar i}$ coincides with $v_{i}^\bullet$ up to a phase. 
  Then it is straight forward to show
  \begin{align}
    \phi_{\bar j}\phi_{\bar i} =\sum_k C_{ij}^k \phi_{\bar k}
  \end{align}
  thus the desired property.
\end{proof}
The map $c_i\mapsto \phi_i$ defines (by linear extension) the map $\phi \colon
\CC K\to \Maps(B)$, which restricts to a map $\phi\colon \Conv(K)\to \UCP(B)$.
Then the conditional expectation is by definition given by
$E(\slot)=\phi(e_K)(\slot)$. 

Let us assume we have a unit vector $\Omega$, which is cyclic and separating for
$B$, such that $(\Omega,E(b)\Omega)=(\Omega,b\Omega)$, \ie the conditional
expectation $E$ preserves the vector state $\varphi=(\Omega,\slot\Omega)$.

\begin{lem}
  \label{lem:MinimalStinespring} Let $x_i=\sqrt{d_\gamma/d_i}v_i^\ast\iota(w)$. 
  Then $\phi_i=x_i^\ast \beta_i(\slot) x_i$  is a minimal Stinespring
  representation, \ie  $\overline{\beta_i(B) x_i\Hil}=\Hil$ 
\end{lem}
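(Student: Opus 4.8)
The plan is to exploit the fact that the dilating isometry $x_i$ already lies in $B$; this forces the cyclic subspace it generates to be invariant under the commutant $B'$, and minimality then reduces to the irreducibility of the sector $[\beta_i]$. Recall that for a normal UCP map written as $\phi_i=x_i^\ast\beta_i(\slot)x_i$ with $x_i$ an isometry and $\beta_i$ a representation on $\Hil$, the Stinespring representation is minimal precisely when $\overline{\beta_i(B)\,x_i\Hil}=\Hil$, so it is this identity that I must establish.

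First I would record the relevant memberships. Since $v_i\in\Hom(\beta_i,\gamma)\subset B$ and $\iota(w)=w\in A\subset B$, the element $x_i=\sqrt{d_\gamma/d_i}\,v_i^\ast\iota(w)$ lies in $B$; it is an isometry $x_i^\ast x_i=1$ (as already observed after Lemma \ref{lem:isometry}, using that $\phi_i$ is unital and $\iota(w^\ast)v_iv_i^\ast\iota(w)=\tfrac{d_i}{d_\gamma}1$), so $\phi_i=x_i^\ast\beta_i(\slot)x_i$ is a genuine Stinespring representation with representation space $\Hil$. Let $P$ be the orthogonal projection onto the closed subspace $\mathcal{K}_i:=\overline{\beta_i(B)\,x_i\Hil}$; the goal is $P=1$.

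Next I would show $P\in\beta_i(B)'\cap B$. By construction $\mathcal{K}_i$ is invariant under $\beta_i(B)$, giving $P\in\beta_i(B)'$. On the other hand $\beta_i(B)x_i\subseteq B$, so every $\beta_i(b)x_i$ commutes with $B'$: for $c'\in B'$ and $\zeta\in\Hil$ we have $c'\,\beta_i(b)x_i\zeta=\beta_i(b)x_i\,c'\zeta\in\beta_i(B)x_i\Hil$, whence $\mathcal{K}_i$ is $B'$-invariant and $P\in B''=B$. Since $[\beta_i]$ is an irreducible sector, $\beta_i(B)'\cap B=\CC\cdot 1$, so $P\in\{0,1\}$; because $\beta_i(1)=1$ gives $x_i\Hil\subseteq\mathcal{K}_i$, i.e.\ $0\neq x_ix_i^\ast\leq P$, we conclude $P=1$, which is the asserted minimality.

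The one genuinely load-bearing point, and the step I would double-check, is the conjunction $x_i\in B$ and $\beta_i(B)\subseteq B$: this is exactly what converts ``the cyclic subspace generated by the range of an operator'' into a $B'$-invariant subspace and places $P$ inside $B$, so that irreducibility of $\beta_i$ can be applied. Everything else is the standard factoriality/irreducibility reduction and requires no computation.
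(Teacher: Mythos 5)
Your proof is correct and follows essentially the same route as the paper's: the paper likewise observes that $\overline{\beta_i(B)x_i\Hil}$ is invariant under both $\beta_i(B)$ and $B'$, so the corresponding projection lies in $\beta_i(B)'\cap B=\CC\cdot 1$ and must be the identity. You have merely made explicit the details the paper leaves implicit, namely that $x_i\in B$ forces $B'$-invariance and that the nonzero range of $x_i$ rules out the zero projection.
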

\begin{proof}
  We have to show that the set $\cV=\overline{\beta_i(B)x_i\Hil_B}$ equals
  $\Hil_B$.
  But the space $\cV$ is invariant under $\beta_i(B)$ and $B'$ (even
  $\langle\beta(B),x_i\rangle$). Thus the projection onto $\cV$ is in
  $\beta_i(B)'\cap B =\CC$ which proves the statement.
\end{proof}
We remember that a normal unital complete positive map $\phi$ with
$\varphi\circ\phi=\varphi$ is called a \textbf{$\Omega$-preserving stochastic
map} and $\Omega$-adjoint is $\Omega$-preserving stochastic map $\phi^\sharp$
fulfilling $\varphi(\phi^\sharp(a)b)=\varphi(a\phi(b))$ for all $a,b\in B$.

It turns out that already the representation of the conditional expectation
using the Q-system is a minimal Stinespring representation.
\begin{lem}
  \label{lem:MinimalStinespringCE}
  $E(\slot)= \iota(w)^\ast \gamma(\slot)\iota(w)\colon B\to A\subset B$ is the 
  minimal Stinespring representation.
\end{lem}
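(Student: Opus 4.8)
The plan is to read off from $E(\slot)=\iota(w)^\ast\gamma(\slot)\iota(w)$ that $E$ is already presented in Stinespring form, with representation $\gamma$ and isometry $\iota(w)$ (this is an isometry since $w$ is and $\iota$ is a unital $\ast$-homomorphism). By the same characterization of minimality used in Lemma \ref{lem:MinimalStinespring}, it then suffices to show that the subspace $\cV:=\overline{\gamma(B)\iota(w)\Hil}$ equals all of $\Hil$. I would introduce the orthogonal projection $P$ onto $\cV$ and aim to prove $P=1$.

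First I would locate $P$. The subspace $\cV$ is manifestly invariant under $\gamma(B)$, so $P\in\gamma(B)'$. It is also invariant under $B'$: for $b'\in B'$ and $b\in B$ one has $\gamma(b)\iota(w)\in B$, hence $b'\gamma(b)\iota(w)\xi=\gamma(b)\iota(w)(b'\xi)\in\cV$, and since $B'$ is self-adjoint this makes $\cV$ a reducing subspace, so $P\in B''=B$. Therefore $P\in\gamma(B)'\cap B=\Hom(\gamma,\gamma)$.

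The main obstacle, and the only genuine difference from Lemma \ref{lem:MinimalStinespring}, is that there the representation $\beta_i$ is irreducible, so $\beta_i(B)'\cap B=\CC$ forces the projection to be trivial at once; here $\gamma$ is reducible and $\Hom(\gamma,\gamma)$ is a nontrivial (commutative) algebra, so knowing $P\in\Hom(\gamma,\gamma)$ does not by itself give $P=1$. To close this gap I would use that, since $\gamma$ has no multiplicities, $\Hom(\gamma,\gamma)$ is abelian with minimal projections $\{v_iv_i^\ast\}$ summing to $1$, so that $P=\sum_{i\in S}v_iv_i^\ast$ for some index set $S$. Taking $b=1$ shows $\iota(w)\Hil\subseteq\cV$, whence $P\iota(w)=\iota(w)$. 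If some $j\notin S$, then $v_jv_j^\ast\iota(w)=v_jv_j^\ast P\iota(w)=0$, contradicting $v_jv_j^\ast\iota(w)=v_j\bigl(v_j^\ast\iota(w)\bigr)\neq0$, which holds because $v_j^\ast\iota(w)\neq0$ by Lemma \ref{lem:isometry} and $v_j$ is an isometry. Hence $S$ exhausts all indices, $P=1$, and $\cV=\Hil$, so the representation is minimal.
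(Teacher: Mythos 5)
Your proof is correct, but it takes a genuinely different route from the paper's. The paper's argument is a one-line consequence of standardness: since $(v,w)$ is a standard solution of the conjugate equations for $\gamma=\iota\bar\iota$, one has $\gamma(v^\ast)\iota(w)=\iota\bigl(\bar\iota(v^\ast)w\bigr)=d_\gamma^{-1/2}\cdot 1$, hence $\gamma(B)\iota(w)\Hil\supseteq\gamma(v^\ast)\iota(w)\Hil=\Hil$, and minimality follows at once (one does not even need to take a closure). You instead locate the support projection $P$ of $\cV=\overline{\gamma(B)\iota(w)\Hil}$ inside $\gamma(B)'\cap B=\Hom(\gamma,\gamma)$ by the same commutant device as in Lemma~\ref{lem:MinimalStinespring}, then use the section's standing assumption that $\gamma$ has no multiplicities to write $P=\sum_{i\in S}v_iv_i^\ast$, and finally rule out $j\notin S$ via $P\iota(w)=\iota(w)$ together with the injectivity statement of Lemma~\ref{lem:isometry}. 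Every step of this checks out, and it has the expository merit of showing \emph{why} minimality holds: each irreducible summand $\beta_i\prec\gamma$ pairs nontrivially with $\iota(w)$, i.e.\ $v_i^\ast\iota(w)\neq 0$. What the paper's route buys is economy and generality: it uses only the conjugate equation, so it needs neither commutativity of $\Hom(\gamma,\gamma)$ nor any choice of the isometries $v_i$, and it applies verbatim to an arbitrary irreducible Q-system; your argument, as written, is tied to the no-multiplicity hypothesis, though it could be adapted to the general case by decomposing $1-P$ into minimal projections of $\Hom(\gamma,\gamma)$, each of the form $tt^\ast$ with $t\in\Hom(\beta_i,\gamma)$ an isometry, and applying Lemma~\ref{lem:isometry} to $t$.
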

\begin{proof} 
  We have 
    $\gamma(v^\ast)\iota(w)=\iota(\bar\iota(v^\ast)w)
      =d_\gamma^{-\frac12}\cdot 1$ 
  because $(v,w)$ is a standard solution of the conjugate equation for $\gamma$.
  But then $\gamma(B)\iota(w)\Hil\supset\gamma(v^\ast)y\Hil=\Hil$, thus the 
  representation is minimal.
\end{proof}
\begin{lem} 
  We have $E(a\phi_k(b))=E(\phi_{\bar k}(a)b)$ and $E\circ\phi_k=E$.
  In particular, $\phi_k \in \Stoch_\Omega(B)$ and $\phi_k^\sharp=\phi_{\bar
  k}$.
\end{lem}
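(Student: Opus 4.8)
The plan is to establish the single identity $E(a\phi_k(b))=E(\phi_{\bar k}(a)b)$ for all $a,b\in B$, from which every remaining assertion follows formally. Indeed, setting $a=1$ and using that $\phi_{\bar k}$ is unital gives $E(\phi_k(b))=E(\phi_{\bar k}(1)b)=E(b)$, that is $E\circ\phi_k=E$. (Alternatively, $E\circ\phi_k=E$ already follows from the hypergroup structure: since $\phi\colon\CC K\to\Maps(B)$ is an algebra homomorphism for composition and $e_K$ is the Haar element with $e_Kc_k=e_K$, one has $E\circ\phi_k=\phi(e_K)\circ\phi(c_k)=\phi(e_Kc_k)=\phi(e_K)=E$.) Since $\phi_k=x_k^\ast\beta_k(\slot)x_k$ with $x_k$ an isometry by Lemma \ref{lem:isometry} and \ref{lem:MinimalStinespring}, the map $\phi_k$ is normal, unital and completely positive, and it preserves $\varphi$ because $\varphi\circ\phi_k=\varphi\circ E\circ\phi_k=\varphi\circ E=\varphi$, using $E\circ\phi_k=E$ and that $E$ is $\varphi$-preserving; hence $\phi_k\in\Stoch_\Omega(B)$. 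Finally, applying $\varphi=\varphi\circ E$ to the identity yields $\varphi(a\phi_k(b))=\varphi(E(a\phi_k(b)))=\varphi(E(\phi_{\bar k}(a)b))=\varphi(\phi_{\bar k}(a)b)$, which is precisely the defining relation of the $\Omega$-adjoint; as the adjoint is unique ($\Omega$ being cyclic and separating) we conclude $\phi_k^\sharp=\phi_{\bar k}$.

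It remains to prove $E(a\phi_k(b))=E(\phi_{\bar k}(a)b)$, which is the crux. I would argue graphically. Using $E(x)=\iota(w)^\ast\gamma(x)\iota(w)$ from Lemma \ref{lem:MinimalStinespringCE} and $\frac{d_k}{d_\gamma}\phi_k(b)=\iota(w^\ast)\gamma(b)v_kv_k^\ast\iota(w)$, the element $E(a\phi_k(b))$ is a single planar diagram built from the $\iota,\bar\iota$-strands, the Q-system unit $w$, the input boxes $a$ and $b$, and the trivalent vertex $v_k$ together with its adjoint, the $\beta_k$-strand appearing as a closed loop carrying the projection $v_kv_k^\ast$. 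The conjugate equations for the Q-system $\Gamma=(\gamma,v,\iota(w))$ (the zig–zag identities for the pair $(\iota,\bar\iota)$) allow one to isotope this diagram, dragging the $\beta_k$-loop and the box $b$ around the box $a$; taking the partial trace over $\beta_k$ on the opposite side turns the vertex $v_k$ into its rotation $v_k^\bullet$ of \eqref{eq:rotation}, which by definition represents $v_{\bar k}$. The resulting diagram is exactly that of $E(\phi_{\bar k}(a)b)$.

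The main obstacle is the bookkeeping of this rotation, and two points deserve care. First, the manipulation is carried out at the level of the $A$-valued expectation $E$, which is built from the standard (spherical) solution $(v,w)$ of the conjugate equations for $\gamma$; standardness is exactly the rotation-invariance that legitimizes the move. This is essential, because the scalar state $\varphi$ is not tracial on the type III factor $B$, so no such identity could be read off from cyclicity of $\varphi$ alone; it is $E$, and not $\varphi$, that is compatible with the rotation. Second, although $v_{\bar k}$ equals $v_k^\bullet$ only up to a phase, the diagram only ever involves the minimal projection $v_kv_k^\ast$, which by the multiplicity-freeness of $\gamma$ (Proposition \ref{prop:NoMultiplicities}) is the unique minimal idempotent of $\Hom(\gamma,\gamma)$ onto the $\beta_k$-component and hence independent of the chosen phase of $v_k$. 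The rotation sends this projection to the corresponding one onto $\beta_{\bar k}$, so all phases cancel and the identity holds on the nose.
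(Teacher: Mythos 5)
Your proposal is correct and follows essentially the same route as the paper: the crux identity $E(a\phi_k(b))=E(\phi_{\bar k}(a)b)$ is obtained from the trace (sphericality) property of the conditional expectation built from the standard solution together with the rotation (\refeq{eq:rotation}) turning $v_k$ into $v_{\bar k}$ up to a phase that cancels, and your graphical isotopy is precisely the diagrammatic form of the paper's algebraic computation with the intertwiner $R_k\in\Hom(\id,\beta_{\bar k}\beta_k)$. The only minor difference is that the paper verifies $E\circ\phi_k=E$ by a separate direct Q-system computation (via $w^\ast\bar\iota(v_kv_k^\ast)w=\tfrac{d_k}{d_\gamma}$), whereas you deduce it formally from the main identity by setting $a=1$ (or from the Haar-element relation $e_Kc_k=e_K$), which is equally valid.
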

\begin{proof}
  Using the trace property and rotation as in (\refeq{eq:rotation}) we have:
  \begin{align}
    E(a\phi_k(b))
    &=E(ax_k^\ast \beta_k(b)x_k)\\
    &=E(R_k^\ast \beta_{\bar k}(x_k^\ast a x_k \beta_k(b)) R_k)\\
    &=E(R_k^\ast \beta_{\bar k}(x_k^\ast a x_k ) R_k b )\\
    &=E(x_{\bar k}^\ast \beta_{\bar k}(a)x_{\bar k} b)\\
    &=E(\phi_{\bar k}(a) b)\,,
  \end{align}
  where $d_k^{-\frac12} R_k$ is the up to phase unique isometry in
  $\Hom(\id,\beta_{\bar k}\beta_k)$.
  Using Lemma \ref{lem:isometry} we get
  \begin{align*}
    E\circ\phi_k(\slot) 
    &= \frac{d_\gamma}{d_k}\iota(w)^\ast\gamma(\iota(w)^\ast \gamma(\slot)
    v_kv_k^\ast \iota(w)) \iota(w)
    = \frac{d_\gamma}{d_k}\iota(w)^\ast\gamma(\slot) 
    \underbrace{\iota(w^\ast \bar\iota( v_kv_k^\ast)w)}_{=\frac{d_k}{d_\gamma}} 
    \iota(w)
    = E(\slot)
    \,.\qedhere
  \end{align*}
\end{proof}
\begin{lem}%
  \label{lem:Simplex}
  The stochastic maps $\{\phi_k\}_{k=0}^n$ are affine independent, \ie 
  $\{\phi_k-\phi_0\}$ are linearly independent.
\end{lem}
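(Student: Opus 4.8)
The plan is to prove the stronger statement that $\{\phi_k\}_{k=0}^n$ are \emph{linearly} independent in the space of linear maps $B\to B$; since linear independence of $\phi_0,\dots,\phi_n$ forces $\{\phi_k-\phi_0\}_{k=1}^n$ to be linearly independent, this yields the asserted affine independence. First I would record the defining formula in the form $\phi_i(b)=\tfrac{d_\gamma}{d_i}\,\iota(w^\ast)\gamma(b)\,p_i\,\iota(w)$ with $p_i:=v_iv_i^\ast$. Because $\gamma$ has no multiplicities, the $p_i$ are mutually orthogonal nonzero projections in the commutative algebra $\Hom(\gamma,\gamma)=\gamma(B)'\cap B$ with $\sum_i p_i=1$; in particular they are linearly independent, and each satisfies $p_i\gamma(b)=\gamma(b)p_i$ since $v_i\in\Hom(\beta_i,\gamma)$.

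Suppose now $\sum_i\lambda_i\phi_i=0$. Summing the formula over $i$ gives $\iota(w^\ast)\gamma(b)\,P\,\iota(w)=0$ for all $b\in B$, where $P:=\sum_i \tfrac{d_\gamma\lambda_i}{d_i}\,p_i\in\Hom(\gamma,\gamma)$. The heart of the argument is to deduce $P=0$ from this, i.e.\ to prove injectivity of the ``Fourier-type'' map $\Hom(\gamma,\gamma)\ni P\mapsto\big(b\mapsto\iota(w^\ast)\gamma(b)P\iota(w)\big)$. Here I would invoke the minimality of the Stinespring representation of $E$ from Lemma~\ref{lem:MinimalStinespringCE}, namely $\overline{\gamma(B)\iota(w)\H}=\H$. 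Reading the vanishing as an operator identity on $\H$, for each fixed $\eta\in\H$ the vector $P\iota(w)\eta$ is orthogonal to $\gamma(b^\ast)\iota(w)\H$ for every $b$; by the density just quoted this gives $P\iota(w)=0$. Using $P\in\gamma(B)'$ one then gets $P\gamma(b)\iota(w)=\gamma(b)P\iota(w)=0$ for all $b$, so $P$ annihilates the dense subspace $\gamma(B)\iota(w)\H$, whence $P=0$.

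Finally, $P=\sum_i \tfrac{d_\gamma\lambda_i}{d_i}\,p_i=0$ together with the linear independence of the $p_i$ forces $\lambda_i=0$ for all $i$, completing the argument. I expect the only genuine obstacle to be the injectivity step; the reduction and the final extraction of the $\lambda_i$ are routine bookkeeping. The point that makes injectivity work is precisely that the representation of the conditional expectation through the Q-system is \emph{minimal}, so that the single ``corner'' contraction by $\iota(w)$ still detects all of $P$ once one varies $b$ through $\gamma(B)$ and uses $P\in\gamma(B)'$. An alternative, more categorical route would identify this map with a Frobenius-reciprocity isomorphism $\Hom(\gamma,\gamma)\cong\Hom(\theta,\theta)$ via Lemma~\ref{lem:isometry}, but the minimality argument is self-contained and shorter.
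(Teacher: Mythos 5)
Your proof is correct, but it runs along a genuinely different track than the paper's. The paper's own argument is a two-line test-element computation: it sets $b_k=\phi_{\bar k}(vv^\ast)$ for $1\leq k\leq n$ and checks that $\phi_0(b_k)v=0$ while $\phi_\ell(b_k)v=\delta_{k,\ell}c_kv$ with $c_k>0$, so the functionals $\Phi\mapsto\Phi(b_k)v$ separate the maps and affine independence follows at once. You instead prove the stronger \emph{linear} independence by reducing it to the injectivity of the assignment $\Hom(\gamma,\gamma)\ni P\mapsto\bigl(b\mapsto\iota(w^\ast)\gamma(b)P\iota(w)\bigr)$, which you obtain from the minimality statement $\overline{\gamma(B)\iota(w)\Hil}=\Hil$ of Lemma~\ref{lem:MinimalStinespringCE} (indeed, since $\gamma(v^\ast)\iota(w)=d_\gamma^{-1/2}\cdot 1$, one even has equality, so your density argument is safe), together with $P\in\gamma(B)'$ and the fact that the $p_i=v_iv_i^\ast$ are mutually orthogonal nonzero projections. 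All steps check out: the orthogonality computation $\langle\gamma(b^\ast)\iota(w)\xi,P\iota(w)\eta\rangle=\langle\xi,\iota(w^\ast)\gamma(b)P\iota(w)\eta\rangle=0$ is valid, and $P=0$ does force all coefficients to vanish. What your route buys is structural: it exhibits the $(n+1)$-dimensional space $\Hom(\gamma,\gamma)$ as embedding injectively into the $A$--$A$ bimodular normal maps on $B$, which is precisely the counterpart of the dimension count the paper performs separately in the proof of Proposition~\ref{prop:Simplex}; so your argument in effect merges Lemma~\ref{lem:Simplex} with half of that proposition. What the paper's route buys is brevity and explicitness: it produces concrete elements on which the maps visibly differ, with no appeal to Stinespring minimality or density arguments, at the cost of proving only the affine statement (though its test elements would in fact yield linear independence as well).
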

\begin{proof}
  Define $b_k=\phi_{\bar k}(v v^\ast) \in B$ for $1\leq k\leq n$. 
  We have that $\phi_0(b_k)v\equiv m_kv=0$. 
  Then $\phi_\ell(b_k)v=\delta_{k,\ell} c_k v$ for some positive numbers $c_k$.
  This shows that $\{\phi_k-\phi_0\}_{1\leq k\leq n}$ are linearly independent.
\end{proof}
\begin{rmk}
  Since $E(vv^\ast)=[B:A]^{-1}\cdot 1$ it follows that with
  $\xi=[B:A]^{\frac12} \cdot vv^\ast\Omega$ the map $\CC K\ni x\mapsto
  (\xi,V(x)\xi)$ gives the normalized trace on $\CC K$.
  Namely, $(\xi,V_k \xi)=[B:A]\cdot(\Omega,vv^\ast
  \phi_k(vv^\ast)\Omega)=\delta_{k,0}$.
\end{rmk}
\begin{prop}
  \label{prop:Simplex}
  The convex space $\Conv(\{\phi_0,\ldots,\phi_n\})$ is an $n$-simplex in
  $\Stoch_\Omega(B)$.
  It coincides with the space of all $A$--$A$ bimodular maps in
  $\Stoch_\Omega(B)$.
\end{prop}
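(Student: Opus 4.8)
The plan is to prove both assertions by locating the $\phi_i$ inside the finite-dimensional space of all normal $A$--$A$ bimodular maps $B\to B$ and showing that this space is exactly $(n{+}1)$-dimensional.

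First, the simplex statement is essentially bookkeeping. By Lemma~\ref{lem:Simplex} the maps $\phi_0,\ldots,\phi_n$ are affinely independent, and each lies in $\Stoch_\Omega(B)$ by the preceding lemmas, so $\Conv(\{\phi_0,\ldots,\phi_n\})$ is a genuine $n$-simplex in $\Stoch_\Omega(B)$. I note in passing that the $\phi_i$ are in fact linearly independent: if $\sum_i c_i\phi_i=0$, evaluating at $1$ gives $\sum_i c_i=0$ since every $\phi_i$ is unital, whence $\sum_{i\ge 1}c_i(\phi_i-\phi_0)=0$ and affine independence forces all $c_i=0$.

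For the inclusion $\Conv(\{\phi_i\})\subseteq\{A\text{--}A\text{ bimodular}\}\cap\Stoch_\Omega(B)$ I would verify bimodularity of each $\phi_i$ directly from $\frac{d_i}{d_\gamma}\phi_i(\slot)=\iota(w^\ast)\gamma(\slot)v_iv_i^\ast\iota(w)$. Using $w\in\Hom(\id,\theta)$ in the form $w^\ast\theta(a)=aw^\ast$ and $\theta(a)w=wa$, together with $v_iv_i^\ast\in\Hom(\gamma,\gamma)=\gamma(B)'\cap B$ (here the no-multiplicity hypothesis makes $\Hom(\gamma,\gamma)$ abelian with the minimal projections $v_iv_i^\ast$ as a basis), one checks $\phi_i(\iota(a)\,m\,\iota(b))=\iota(a)\,\phi_i(m)\,\iota(b)$ for $a,b\in A$. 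Since both bimodularity and membership in $\Stoch_\Omega(B)$ pass to convex combinations, the inclusion follows.

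The reverse inclusion is the real content. Given a completely positive normal $A$--$A$ bimodular $\psi\colon B\to B$, the operator $\hat\psi\colon b\Omega\mapsto\psi(b)\Omega$ is bounded by Kadison--Schwarz, and left/right bimodularity give $\hat\psi\in A'$ and $\hat\psi\in(J_BAJ_B)'$ respectively (right multiplication by $a\in A$ is implemented in $J_BAJ_B$). Hence $\hat\psi\in A'\cap(J_BAJ_B)'=A'\cap B_1$, the relative commutant of the Jones basic construction $B_1=J_BA'J_B$. By standard subfactor theory $A'\cap B_1\cong\End({}_AB_A)$ has dimension $\langle\theta,\theta\rangle=\langle\gamma,\gamma\rangle=n+1$, the last equality using that $\gamma$ has no multiplicities. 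As $\psi\mapsto\hat\psi$ is linear and injective and the $n+1$ linearly independent $\hat\phi_i$ already span the $(n{+}1)$-dimensional space $A'\cap B_1$, we obtain $\hat\psi=\sum_i\lambda_i\hat\phi_i$ and so $\psi=\sum_i\lambda_i\phi_i$ for unique scalars $\lambda_i$. Finally I extract a convex combination: unitality gives $\sum_i\lambda_i=\psi(1)=1$, and positivity is read off from the elements $b_k=\phi_{\bar k}(vv^\ast)\ge 0$ of the proof of Lemma~\ref{lem:Simplex}, for which $\phi_\ell(b_k)v=\delta_{k\ell}c_kv$ with $c_k>0$; indeed $0\le v^\ast\psi(b_k)v=\sum_\ell\lambda_\ell\,v^\ast\phi_\ell(b_k)v=\lambda_kc_k$ forces $\lambda_k\ge 0$. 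Thus $\psi\in\Conv(\{\phi_0,\ldots,\phi_n\})$.

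The step I expect to be the main obstacle is the dimension count: identifying the completely positive normal $A$--$A$ bimodular maps with (the positive part of) $A'\cap B_1$ and evaluating $\dim(A'\cap B_1)=\langle\gamma,\gamma\rangle=n+1$. Everything else is either recalled from the earlier lemmas or a routine intertwiner manipulation.
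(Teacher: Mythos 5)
Your proof is correct, and its first half (affine independence via Lemma \ref{lem:Simplex}, hence an $n$-simplex) matches the paper; note only that the paper also records extremality of the $\phi_k$ there, via Lemma \ref{lem:MinimalStinespring} and Proposition \ref{prop:PureCP}, which the simplex claim itself does not need. For the second half you take a genuinely different route. The paper stays inside the intertwiner calculus: writing $b=\sum\iota(b_{\rho,i})\psi_{\rho,i}$ with $\psi_{\rho,i}\in\Hom(\iota,\iota\rho)$, bimodularity forces $\phi(\psi_{\rho,i})\in\Hom(\iota,\iota\rho)$, so a bimodular map is determined by $\dim\Hom(\theta,\theta)=\dim\Hom(\gamma,\gamma)=n+1$ coefficients. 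You instead pass to the Hilbert space, sending $\psi\mapsto\hat\psi\in A'\cap(J_BAJ_B)'=A'\cap B_1$ and counting $\dim(A'\cap B_1)=\langle\theta,\theta\rangle=n+1$; this is the same number computed in the spatial picture, and it buys you injectivity of $\psi\mapsto\hat\psi$ (hence spanning by the $\hat\phi_i$) for free. The price is that your one slick-looking step needs justification in the type III setting: $J_Ba^\ast J_B$ implements right multiplication by $\sigma^{\varphi}_{-i/2}(a)$, not by $a$, so identifying the commutant of the right $A$-action on $\overline{B\Omega}$ with $(J_BAJ_B)'$ requires that $A$ be globally invariant under the modular group of $(B,\Omega)$ -- which does hold here, by Takesaki's theorem, precisely because $E$ is $\Omega$-preserving -- and you should say so. Finally, you make explicit a point the paper's ``it is enough to show the space of bimodular maps has dimension $n+1$'' leaves implicit: that the affine coefficients $\lambda_k$ of a bimodular stochastic map are nonnegative (affine combinations of stochastic maps need not be convex a priori). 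Your test-element argument does this, with the small caveat that Lemma \ref{lem:Simplex} defines $b_k$ only for $1\le k\le n$; to get $\lambda_0\ge 0$ you should add $b_0=vv^\ast$, for which the same computation (after relabeling by the involution $k\mapsto\bar k$) gives $\phi_\ell(b_0)v=\delta_{\ell 0}v$. With these two sentences added, your proof is complete, and on the positivity point it is actually more careful than the paper's.
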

\begin{proof}
  Using Lemma \ref{lem:MinimalStinespring} and Proposition \ref{prop:PureCP} it 
  follows that $\phi_k$ are extreme points in $\Stoch_\Omega(B)$. 
  By Lemma \ref{lem:Simplex} they are affine independent.
  
  For the second statement it is enough to show that the space of $A$--$A$
  bimodular maps has dimension $n+1$. 
  We have the unique Fourier decomposition $b=\sum
  \iota(b_{\rho,i})\psi_{\rho,i}$ as before. 
  Let $\phi\colon B\to B$ be a $A$--$A$ bimodular map. 
  From $\phi(\psi_{\rho,i})\iota(a)=\phi(\psi_{\rho,i}\iota(a))
    =\phi(\iota\rho(a)\psi_{\rho,i})
    = \iota\rho( a) \phi(\psi_{\rho,i})$ for all $a\in A$ follows
  $\phi(\psi_{\rho,i})\in\Hom(\iota,\iota\rho)$ and $\phi$ is determined 
  by $\dim \Hom(\theta,\theta)=\dim \Hom(\gamma,\gamma)=n+1$ coefficients.
\end{proof}
\begin{thm} 
  \label{thm:CanonicalHypergroupFromSubfactor}
  Let $(A\subset B,\Omega)$ be an irreducible finite index type III 
  subfactor $A\subset B$, such that the dual canonical endomorphism 
  has no multiplicities. Let $\Omega$ be cyclic and separating for $B$,
  such that $(\Omega, E(\slot)\Omega) =(\Omega,\slot\Omega)$ for 
  the unique conditional exptectation $E\colon B\to A\subset B$.
  Then there is a canonical
  hypergroup $K$ and a $\Omega$-preserving normal faithful action on $B$, 
  such that $A=B^K$.

  Let $\tilde K$ be another $\Omega$-preserving faithful normal action, 
  such that $B^{\tilde K} = A$, then there is an isomorphism 
  $\tau\colon K\to \tilde K$, such that $\tilde \phi_{\tau(k)}=\phi_k$.
\end{thm}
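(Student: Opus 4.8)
The plan is to separate existence from uniqueness; existence is essentially bookkeeping, while uniqueness carries the real content. For existence I would collect the properties already proved for the family $\{\phi_k\}_{k=0}^n$ attached to the chosen decomposition $\gamma=\sum_i\Ad(v_i)\circ\beta_i$: namely $\phi_0=\id_B$, the fusion rule $\phi_i\circ\phi_j=\sum_k C_{ij}^k\phi_k$, the adjoint relation $\phi_{\bar k}=\phi_k^\sharp$, extremality of each $\phi_k$ in $\Stoch_\Omega(B)$ (Proposition \ref{prop:Simplex} together with Lemma \ref{lem:MinimalStinespring}), and affine independence (Lemma \ref{lem:Simplex}). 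Together these say exactly that $k\mapsto\phi_k$ is a faithful $\Omega$-preserving normal action of the hypergroup $K$ in the sense of Definition \ref{defi:HypergroupActionVNA}. That $A=B^K$ is then immediate from Proposition \ref{prop:FixedPoint}, since the fixed-point algebra is $E(B)$ for $E=\phi(e_K)$, and this $E$ is the conditional expectation $B\to A$.

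For uniqueness, suppose $\tilde K$ is another faithful $\Omega$-preserving action with $B^{\tilde K}=A$, with maps $\{\tilde\phi_j\}$. First I would note that each $\tilde\phi_j$ is $A$-$A$ bimodular: Proposition \ref{prop:Bimdoularity} applied to the $\tilde K$-action gives $\tilde\phi_j(n_1 m n_2)=n_1\tilde\phi_j(m)n_2$ for $n_1,n_2\in B^{\tilde K}=A$. By Proposition \ref{prop:Simplex} the $A$-$A$ bimodular maps in $\Stoch_\Omega(B)$ form precisely the simplex $\Conv(\{\phi_0,\ldots,\phi_n\})$, so each $\tilde\phi_j$ lies in it. Being extremal in the ambient $\Stoch_\Omega(B)$ by axiom, $\tilde\phi_j$ is in particular an extreme point of this smaller simplex, whose vertices are exactly the $\phi_i$. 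Hence $\tilde\phi_j=\phi_{\tau'(j)}$ for a map $\tau'\colon\tilde K\to K$, injective because the $\tilde\phi_j$ are distinct by faithfulness.

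Showing $\tau'$ is onto is the heart of the argument and the main obstacle. I would compare conditional expectations: by Proposition \ref{prop:FixedPoint}, $\tilde E:=\tilde\phi(e_{\tilde K})$ is a conditional expectation onto $B^{\tilde K}=A$, and since $A\subset B$ is irreducible such an expectation is unique, so $\tilde E=E$. Expanding in the $\phi$-family, $E=\frac{1}{d_\gamma}\sum_i d_i\,\phi_i$ whereas $\tilde E=\frac{1}{D(\tilde K)}\sum_j\tilde w_j\,\phi_{\tau'(j)}$. Evaluating at $1$ shows the $\{\phi_i\}_{i=0}^n$ are linearly, not merely affinely, independent, so matching coefficients is legitimate; since each coefficient $d_i/d_\gamma$ of $\phi_i$ in $E$ is strictly positive, every index must occur in $\tau'(\tilde K)$. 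Thus $\tau'$ is bijective and I set $\tau=(\tau')^{-1}$, so that $\tilde\phi_{\tau(k)}=\phi_k$.

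Finally I would verify $\tau$ is a hypergroup isomorphism: from $\phi_i\circ\phi_j=\tilde\phi_{\tau(i)}\circ\tilde\phi_{\tau(j)}$ and linear independence one reads off $\tilde C_{\tau(i)\tau(j)}^{\tau(k)}=C_{ij}^k$, and the involution is preserved because $\tilde\phi_{\tau(\bar k)}=\phi_{\bar k}=\phi_k^\sharp=\tilde\phi_{\tau(k)}^\sharp=\tilde\phi_{\overline{\tau(k)}}$. The two steps I expect to require the most care are the appeal to uniqueness of the conditional expectation for an irreducible inclusion, and the inference that extremality in $\Stoch_\Omega(B)$ forces $\tilde\phi_j$ to be an actual vertex of the bimodular simplex rather than merely an element of it; both, however, should follow cleanly from the results already established.
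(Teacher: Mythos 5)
Your proof is correct and follows essentially the same route as the paper's: uniqueness of the conditional expectation, $A$--$A$ bimodularity via Proposition \ref{prop:Bimdoularity}, extremality and affine independence, and the identification of the bimodular stochastic maps with the simplex $\Conv(\{\phi_0,\ldots,\phi_n\})$ from Proposition \ref{prop:Simplex}. You in fact supply details the paper leaves implicit (injectivity and surjectivity of $\tau$ via positivity of the coefficients $d_i/d_\gamma$, and the verification that $\tau$ preserves structure constants and involution), while the paper's only additional content is the observation that the construction is independent of the choice of Q-system, which your uniqueness argument subsumes.
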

\begin{proof}
  The unique conditional expectation $E\colon B\to \iota(A)\subset B$ can be 
  written as $E=\Ad \iota(w)^\ast \gamma$ and every pure UCP map corresponds to
  a minimal projection $p\in \Hom(\gamma,\gamma)\cong \CC^k$ and is independent 
  of the choice of decomposition of $\gamma=\sum\Ad {v_{k}}\circ\beta_k$.
  If we choose a different Q-system $(\tilde \gamma,\tilde v,\tilde y)$ 
  associated with $(A\subset B,\Omega)$, then there is \cite{Lo1994} a unitary
  $u\in\Hom(\gamma,\tilde \gamma)$, such that $u\tilde v=v$ and 
  $\tilde y u=(u\otimes u)y$. Finally, (\refeq{eq:Coefficients}) does not depend
  on the choice of the Q-system, since $\Ad u$ maps minimal projection to 
  minimal projections.
  
  The conditional expectation is unique which implies 
  $D_{ K}^{-1}\sum_{k} d_{k}\phi_k
    =D_{\tilde K}^{-1}\sum_{\tilde k} d_{\tilde k}\tilde\phi_{\tilde k}
  $.
  By Proposition \ref{prop:Bimdoularity} $\tilde\phi_{\tilde k}$ are 
  $A$--$A$ bimodular and by assumption they are extremal and affine independent.
  Therefore, by Proposition \ref{prop:Simplex} there is a 
  $\tau\colon K \to \tilde K$, such that $\tilde \phi_{\tau(k)}=\phi_k$.
\end{proof}

\begin{example}
  \label{ex:LR}
  Let $\cF\equiv \bim N\cF N\subset \End(N)$ with $N$ a type III factor (\eg
  the hyperfinite type \threeone factor) and
  $\Irr(\cF)=\{[\rho_0],\ldots,[\rho_n]\}$ be a UFC, and $B:=N\otimes N^\op$.
  Let $\gamma\in \End(B)$ be the canonical endomorphism
  associated with the Longo--Rehren Q-system $(\gamma,v,z)$,
  see Section \ref{sec:LR}.
  Let $A=E(B)\subset B$ be the Longo--Rehren inclusion with 
  conditional expextation $E(\slot)=z^\ast \gamma(\slot)z$.
  Then $\beta_i=\rho_i\otimes \rho^\op_i$, $w_i=(d{\rho_i)}^2$ and direct
  calculation shows:
  \begin{align}
    C_{ij}^k& = \sqrt{\frac{w_k}{w_iw_j}} N_{ij}^k=
     \frac{d{\rho_k}}{d{\rho_i}d{\rho_j}}\dim\Hom ( \rho_k,\rho_i\rho_j)\,.
  \end{align}
  In particular, we have an action of $K_\cF$ on $B$, such that $A=B^{K_\cF}$.
  From a cyclic and separating vector $\xi$ for $B$,  we get a vector 
  $\Omega$, such that $(\Omega,b \Omega)=(\xi,E(b)\xi)$ for all $b\in B$ 
  and we get therefore a $\Omega$-preserving faithful action of $K_\cF$ on $B$.
\end{example}

\subsection{Graphical Representation of Stochastic Maps}
Using the graphical calculus as in \cite{BiKaLoRe2014} we draw the conditional
expectation as:
\newcommand{\ssize}{.75}
\begin{align}
  E_K&=
\frac{1}{\sqrt{D(K)}}
	\tikzmath[\ssize]{
		\fill[\colN,rounded corners] (-1,-1.25) rectangle (1.5,1.25);
		\clip[rounded corners] (-1,-1.25) rectangle (1.5,1.25);
    \fill[\colM] (-1,-1.25) rectangle (-.5,1.25);
    \draw (-.5,-1.25)--(-.5,1.25);
    \fill[\colM] (1,-.25)--(1,-.5) arc (360:180:.5)-- (0,.5) arc (180:0:.5)--
      (1,.25);
    \draw (1,-.25)--(1,-.5) arc (360:180:.5)-- (0,.5) arc (180:0:.5)--(1,.25);
    \fill[white] (1.5,-.25)--(.75,-.25)--(.75,.25)--(1.5,.25);
    \draw (1.5,-.25)--(.75,-.25)--(.75,.25)--(1.5,.25);
	}
  =
  \frac{1}{\sqrt{D(K)}}
  \sum_k
	\tikzmath[\ssize]{
		\fill[\colN,rounded corners ] (-1,-1.25) rectangle (1.5,1.25);
		\clip[rounded corners] (-1,-1.25) rectangle (1.5,1.25);
    \fill[\colM] (-1,-1.25)--(-.5,-1.25)--(-.5,-.5) arc (180:0:.25)  
      arc (180:360:.5)--(1,.25)--
      (1,.25)--(1,.5) arc (0:180:.5) arc (360:180:.25)--(-.5,1.25)--(-1,1.25);
    ;
    \draw (-.5,-1.25)--(-.5,-.5) arc (180:0:.25)  arc (180:360:.5)--(1,.25)
      (1,.25)--(1,.5) arc (0:180:.5) arc (360:180:.25)--(-.5,1.25);
    ;
    \fill[white] (1.5,-.25)--(.75,-.25)--(.75,.25)--(1.5,.25);
    \draw (1.5,-.25)--(.75,-.25)--(.75,.25)--(1.5,.25);
    \draw [ultra thick] (-.25,-.25)--node [left] {$\scriptstyle k$} (-.25,.25);
	}
  =
  \frac{1}{D(K)}
  \sum_k
  d_k \frac{\sqrt{D(K)}}{d_k}
	\tikzmath[\ssize]{
		\fill[\colN,rounded corners ] (-.5,-1.25) rectangle (1.5,1.25);
		\clip[rounded corners] (-.5,-1.25) rectangle (1.5,1.25);
    \fill[\colM] (-.5,-1.25)--(0,-1.25) arc (180:90:.25)--(0.5,-1)  arc 
      (270:360:.5)--(1,.25)
      --(1,.25)--(1,.5) arc (0:90:.5)--(0.25,1) arc (270:180:.25)--(-.5,1.25);
    \draw (0,-1.25) arc (180:90:.25)--(0.5,-1)  arc (270:360:.5)--(1,.25)
      (1,.25)--(1,.5) arc (0:90:.5)--(0.25,1) arc (270:180:.25);
    \fill[white] (1.5,-.25)--(.75,-.25)--(.75,.25)--(1.5,.25);
    \draw (1.5,-.25)--(.75,-.25)--(.75,.25)--(1.5,.25);
    \draw [ultra thick] (.375,1)-- node [left] {$\scriptstyle k$} (.375,-1);
	}
  \,,
  \\
  \intertext{%
    where we use the convention that the vertices are normalized to be
    isometries in $\Hom(\beta_k,\iota\bar\iota)$. The stochastic maps $\phi_k$
    and $\phi_i\circ\phi_j$ are therefore represented as:
  }
  \phi_k&= \frac{\sqrt{D(K)}}{d_k}
	\tikzmath[\ssize]{
		\fill[\colN,rounded corners ] (-.5,-1.25) rectangle (1.5,1.25);
		\clip[rounded corners] (-.5,-1.25) rectangle (1.5,1.25);
   --(1,.25);
    \fill[\colM] (-.5,-1.25)--(0,-1.25) arc (180:90:.25)--(0.5,-1)  arc 
      (270:360:.5)--(1,.25)
      --(1,.25)--(1,.5) arc (0:90:.5)--(0.25,1) arc (270:180:.25)--(-.5,1.25);
    \draw (0,-1.25) arc (180:90:.25)--(0.5,-1)  arc (270:360:.5)--(1,.25)
    (1,.25)--(1,.5) arc (0:90:.5)--(0.25,1) arc (270:180:.25);
    \fill[white] (1.5,-.25)--(.75,-.25)--(.75,.25)--(1.5,.25);
    \draw (1.5,-.25)--(.75,-.25)--(.75,.25)--(1.5,.25);
    \draw [ultra thick] (.375,1)--node [left] {$\scriptstyle k$}(.375,-1);
	}\,,
  \qquad
  \phi_i\circ\phi_j
 = \frac{D(K)}{d_id_j}
	\tikzmath[\ssize]{
		\fill[\colN,rounded corners ] (-1,-1.25) rectangle (1.5,1.25);
		\clip[rounded corners] (-1,-1.25) rectangle (1.5,1.25);
    \fill[\colM] (-1,-1.25)--(-.5,-1.25) arc (180:90:.25)--(0.5,-1)  arc 
      (270:360:.5)--(1,.25)
      --(1,.25)--(1,.5) arc (0:90:.5)--(-.25,1) arc (270:180:.25)--(-1,1.25);
    \draw (-.5,-1.25) arc (180:90:.25)--(0.5,-1)  arc (270:360:.5)--(1,.25)
      (1,.25)--(1,.5) arc (0:90:.5)--(-.25,1) arc (270:180:.25);
    \fill[white] (1.5,-.25)--(.75,-.25)--(.75,.25)--(1.5,.25);
    \draw (1.5,-.25)--(.75,-.25)--(.75,.25)--(1.5,.25);
    \draw [ultra thick] (-.125,1)--node [left] {$\scriptstyle i$}(-.125,-1);
    \draw [ultra thick] (.375,1)--node [left] {$\scriptstyle j$}(.375,-1);
	}
  =\sum {C_{ij}^k}\frac{\sqrt{D(K)}}{d_k}
  \tikzmath[\ssize]{
		\fill[\colN,rounded corners ] (-.5,-1.25) rectangle (1.5,1.25);
		\clip[rounded corners] (-.5,-1.25) rectangle (1.5,1.25);
    \fill[\colM] (-.5,-1.25)--(0,-1.25) arc (180:90:.25)--(0.5,-1)  arc 
      (270:360:.5)--(1,.25)
      --(1,.25)--(1,.5) arc (0:90:.5)--(0.25,1) arc (270:180:.25)--(-.5,1.25);
    \draw (0,-1.25) arc (180:90:.25)--(0.5,-1)  arc (270:360:.5)--(1,.25)
      (1,.25)--(1,.5) arc (0:90:.5)--(0.25,1) arc (270:180:.25);
    \fill[white] (1.5,-.25)--(.75,-.25)--(.75,.25)--(1.5,.25);
    \draw (1.5,-.25)--(.75,-.25)--(.75,.25)--(1.5,.25);
    \draw [ultra thick] (.375,1)--node [left] {$\scriptstyle k$}(.375,-1);
	}
  \,.
\end{align}
Here we used  that the coefficients (\refeq{eq:Coefficients}) are given by:
\begin{align} 
  C_{ij}^k
  \cdot
\tikzmath[\ssize]{
		\fill[\colM,rounded corners] (-1,-1.5) rectangle (1,1.5);
    \fill[\colN] (-.25,-1.5) arc (180:0:.25);
    \draw (-.25,-1.5) arc (180:0:.25);
    \fill[\colN] (-.25,1.5) arc (180:360:.25);
    \draw (-.25,1.5) arc (180:360:.25);
    \draw [ultra thick] (0,-1.25)--node [left] {$\scriptstyle k$}(0,1.25);
	}
  &= \frac{d_k\sqrt{d\gamma}}{d_id_j}
  \cdot
  \tikzmath[\ssize]{
		\fill[\colM,rounded corners] (-1,-1.5) rectangle (1,1.5);
    \fill[\colN] (-.25,-1.5) arc (180:0:.25);
    \draw (-.25,-1.5) arc (180:0:.25);
    \fill[\colN] (-.25,1.5) arc (180:360:.25);
    \draw (-.25,1.5) arc (180:360:.25);
    \fill[\colN] (.375,.75) arc (90:-90:.25) -- (-.375,.25) arc (270:90:.25) --
      (.375,.75); 
    \draw (.375,.75) arc (90:-90:.25)--(-.375,.25) arc (270:90:.25)--(.375,.75); 
    \fill[\colN] (.375,-.25) arc (90:-90:.25) -- (-.375,-.75) arc (270:90:.25)--
      (.375,-.25); 
    \draw (.375,-.25) arc (90:-90:.25)--(-.375,-.75) arc (270:90:.25)--
      (.375,-.25); 
    \draw [ultra thick] (0,1.25)--node [left] {$\scriptstyle k$}(0,.75);
    \draw [ultra thick] (0,-1.25)--node [left] {$\scriptstyle k$}(0,-.75);
    \draw [ultra thick] (-.25,.25)--node [left] {$\scriptstyle i$}(-.25,-.25);
    \draw [ultra thick] (.25,.25)--node [left] {$\scriptstyle j$}(.25,-.25);
	}
  \,.
\end{align}

\subsection{Galois Theory}
Let $A\subset B$ be an irreducible finite index type III subfactor with
canonical endomorphism $\gamma=\iota\bar\iota\in \End(B)$ and Q-system
$(\gamma,v,\iota(w))$.
There is a one-to-one correspondence between projections $P\in
\Hom(\gamma,\gamma)$ with
\begin{align}
  \label{eq:subQsystem}
  P v &=v \,, & \iota(w)^\ast (P \otimes P) \iota(w) = \lambda\cdot P
\end{align}
for some $\lambda>0$ \cite{IzLoPo1998,BiKaLoRe2014-2} and intermediate factors
$M$ with $A\subset M \subset B$. 
In this case $\lambda=[M:A]^{-1}$.

Let us assume that $\iota(A)\subset B$ has an intermediate factor $M$, \ie
$\iota(A)\subset\iota_{M}(M)\subset B$.
Then we have the subfactors $\iota_M(M)\subset B$ and $\iota_A(A)\subset M$
with $\iota_A=\iota^{-1}_M\circ \iota$, thus $\iota=\iota_M\circ\iota_A$.

\begin{prop}
  \label{prop:Galois}
  Let $K$ be the canonical hypergroup associated with $(A\subset B,\Omega)$. 
  (Then $A=B^K$.)
  There is a one-to-one correspondence between 
  \begin{itemize}
    \item subhypergroups $L\subset K$ and 
    \item intermediate subfactors $A\subset M\subset B$ 
  \end{itemize}
  given by $M=B^L$ and $L=\{k\in K:\phi_k(m)=m \text{ for all } m\in M\}$.

  The conditional expectation is given by $E_L=\phi(e_L)$, where 
  $e_L=\frac1{D(L)}\sum_{k\in L} w_kc_k$ is the Haar element of $L$.
\end{prop}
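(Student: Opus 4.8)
The plan is to reduce the statement to the bijection \eqref{eq:subQsystem} between intermediate factors $A\subset M\subset B$ and projections $P\in\Hom(\gamma,\gamma)$ satisfying $Pv=v$ and $\iota(w)^\ast(P\otimes P)\iota(w)=\lambda P$, and then to match these projections with subhypergroups of $K$. Since $\gamma$ has no multiplicities, $\Hom(\gamma,\gamma)$ is the commutative algebra with minimal projections $p_i=v_iv_i^\ast$, so every projection has the form $P=P_S:=\sum_{i\in S}p_i$ for a subset $S\subseteq\{0,\dots,n\}$, and $Pv=v$ is equivalent to $0\in S$. Thus $S\mapsto P_S$ is already a bijection between subsets containing $0$ and projections with $Pv=v$, and it remains to pin down exactly which $S$ satisfy the second condition in \eqref{eq:subQsystem}.

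First I would carry out the structure-constant computation. Using $p_i\otimes p_j=(v_i\otimes v_j)(v_i\otimes v_j)^\ast$ and \eqref{eq:Coefficients}, the element $\iota(w)^\ast(p_i\otimes p_j)\iota(w)$ lies in the commutative algebra $\Hom(\gamma,\gamma)$ with $p_k$-coefficient $\tfrac{d_id_j}{d_\gamma d_k}C_{ij}^k$, so that
\[
\iota(w)^\ast(P_S\otimes P_S)\iota(w)=\sum_k\frac{1}{d_\gamma d_k}\Bigl(\sum_{i,j\in S}d_id_j\,C_{ij}^k\Bigr)p_k\,.
\]
Comparing with $\lambda P_S=\lambda\sum_{k\in S}p_k$ splits the second condition of \eqref{eq:subQsystem} into (a) for $k\notin S$, $\sum_{i,j\in S}d_id_jC_{ij}^k=0$, and (b) for $k\in S$, the same sum equals $\lambda\,d_\gamma d_k$. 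Since all $C_{ij}^k\ge 0$, condition (a) says precisely that $S$ is closed under the hypergroup product. Assuming this and $S=\bar S$ (so that $S$ is a subhypergroup $L$), I would use Frobenius reciprocity (Lemma \ref{lem:Frob}, with $w_i=d_i$) to rewrite $\sum_{i,j\in L}d_id_jC_{ij}^k=d_k\sum_{i\in L}d_i\sum_{j}C_{\bar i k}^j=d_k\,D(L)$, using $\sum_j C_{\bar i k}^j=1$ and that this sum is already supported in $L$; hence (b) holds with the \emph{uniform} constant $\lambda=D(L)/d_\gamma=D(L)/D(K)$. Together with $0\in S$ (from $Pv=v$) and $S=\bar S$ (forced by $M$ being a $\ast$-algebra, equivalently by invariance of $P$ under $p_i\mapsto p_{\bar i}$), this shows $P_S$ satisfies \eqref{eq:subQsystem} if and only if $S$ is a subhypergroup $L$.

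Then I would identify the intermediate factor attached to $P_L$ with $B^L$ and read off the conditional expectation. Here I would compute directly $\phi(e_L)=\tfrac{1}{D(L)}\sum_{l\in L}d_l\phi_l=\tfrac{1}{\lambda}\iota(w)^\ast\gamma(\slot)P_L\iota(w)$, which is exactly the conditional expectation $B\to M$ attached to the sub-Q-system projection $P_L$; since by Proposition \ref{prop:FixedPoint}, applied to the subhypergroup action $\{\phi_l\}_{l\in L}$, the range of $\phi(e_L)$ is $B^L$, this gives $M=B^L$ together with $E_L=\phi(e_L)$. Finally, to obtain the description $L=\{k\in K:\phi_k(m)=m\text{ for all }m\in M\}$, I would note that if $\phi_k$ fixes $M=B^L$ pointwise then $\phi_k\circ E_L=E_L$, i.e. $\phi(c_ke_L)=\phi(e_L)$; since $\phi$ is injective (the vector $\xi$ of the Remark reproduces the faithful trace of $\CC K$, so the representation $V$, hence $\phi$, is faithful) this forces $c_ke_L=e_L$, and computing $\tr(c_ke_L)$, which equals $\tfrac{1}{D(L)}$ when $\bar k\in L$ and $0$ otherwise, against $\tr(e_L)=\tfrac1{D(L)}$ yields $k\in L$; the reverse inclusion is the definition of $B^L$.

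The main obstacle is the middle step: verifying that the single algebraic condition $\iota(w)^\ast(P\otimes P)\iota(w)\in\CC P$ is equivalent to $S$ being a subhypergroup. The nontrivial point is not closure under multiplication (immediate from positivity of the $C_{ij}^k$) but that the proportionality constant $\lambda$ comes out \emph{independent of} $k\in L$; this is exactly what Frobenius reciprocity delivers, and it is the computational heart of the argument. A secondary point to handle with care is the conjugate-closedness $S=\bar S$ of a product-closed subset containing the unit, and the faithfulness of $\phi$ used in the final identification of $L$.
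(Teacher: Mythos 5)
Your proposal is correct and follows essentially the same route as the paper: both reduce the statement to the correspondence \eqref{eq:subQsystem} between intermediate factors and projections in $\Hom(\gamma,\gamma)$, use commutativity of $\Hom(\gamma,\gamma)$ to write $P=\sum_{i\in S}v_iv_i^\ast$, identify the expectation attached to $P_L$ with $\phi(e_L)$, and characterize $L$ via faithfulness of the representation of $\CC K$. The only differences are that you make explicit the structure-constant/Frobenius-reciprocity computation showing $P_L$ satisfies \eqref{eq:subQsystem} with the uniform constant $\lambda=D(L)/D(K)$ (which the paper asserts without calculation, and which is indeed the computational heart), and in the final step you replace the paper's appeal to the double-coset hypergroup (Proposition \ref{prop:DCSisHypergroup}) by an equivalent direct trace computation of $\tr(c_ke_L)$.
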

\begin{proof}
  Let $L\leq K$, then $L$ corresponds to a unique projection $P\in
  \Hom(\gamma,\gamma)\cong \CC^{|K|}$ and since $L$ is a subhypergroup, \ie
  $c_0\in L$ and $LL\subset \CC L$, we have (\ref{eq:subQsystem}). 
  Conversely, given an intermediate subfactor $M$, it corresponds to a
  projection $P\in \Hom(\gamma,\gamma)$ corresponding to a subsector
  $\gamma_P\prec \gamma$ and therefore to a subset $L\subset K$. 
  We have $L^\ast =L$ since $\gamma_P$ has the structure of a Q-system and is
  therefore self-dual. 
  Finally, (\ref{eq:subQsystem}) gives $LL\subset \CC L$.

  Let $e_L=\frac1{D(L)} \sum_{c_l\in L}w_l c_l$ and $E_L=\phi(e_L)$ then
  $\phi_l\circ E_L=E_L$ for $c_l\in L$ and therefore $L\subset \{k\in
    K:\phi_k(m)=m \text{ for all } m\in M\}$. 
  Conversely, to see  $L\supset \{k\in K:\phi_k(m)=m \text{ for all } m\in M\}$
  we note that for $k\in K\setminus L$, we have $E_L \circ \phi_k \circ E_L
  \neq E_L$.
  Namely, since the representation of $\CC K$ is faithful it is enough to show
  that $e_Lc_ke_L\neq e_L$, which follows from Proposition
  \ref{prop:DCSisHypergroup}. 
  It independently follows from the proof of the following Proposition
  \ref{prop:QuotientGalois}.
  
  By construction the conditional expectation $E_L\equiv \phi(e_L)$ coincides
  with the conditional expectation $E_M$ onto $M$.
\end{proof}
Let $K$ be a hypergroup and $F\subset K$ a subset, we denote by $\langle F
\rangle$ the subhypergroup generated by $F$, \ie the smallest
subhypergroup containing $F$. It follows easily that 
$\langle F \rangle =\{c_k \in K :
  c_k \prec \text{ finite word in }F\cup F^\ast \}$
\begin{cor}
  \label{cor:SinglyGenerated}
  If $K=\langle c_k \rangle$, 
  then $M^K$ equals $M^{\phi_k} =\{m\in M :\phi_k(m)=m\}$.
\end{cor}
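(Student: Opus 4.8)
The inclusion $M^K\subseteq M^{\phi_k}$ is immediate from the description of $M^K$ in Proposition~\ref{prop:FixedPoint}, since $c_k\in K$. The plan is to prove the reverse inclusion $M^{\phi_k}\subseteq M^K$ by transporting everything to the vacuum Hilbert space $\Hil$ and exploiting that each operator $V_j$ (defined by $V_j a\Omega=\phi_j(a)\Omega$) is a contraction and that $j\mapsto V_j$ is a $\ast$-representation of $\CC K$, so that $V_{\bar j}=V_j^\ast$. Since $\Omega$ is separating for $M$, for every $j\in K$ and every $m\in M$ one has $\phi_j(m)=m$ if and only if $V_j(m\Omega)=m\Omega$; thus membership in the various fixed-point sets is encoded by fixed vectors of the contractions $V_j$.

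The first step is to show that $m\in M^{\phi_k}$ already forces $\phi_{\bar k}(m)=m$. Writing $\xi=m\Omega$, the hypothesis reads $V_k\xi=\xi$, and for any contraction $T$ the identity $\|T^\ast\xi-\xi\|^2=\|T^\ast\xi\|^2-2\,\Re(\xi,T\xi)+\|\xi\|^2=\|T^\ast\xi\|^2-\|\xi\|^2\le 0$ forces $T^\ast\xi=\xi$. Applying this with $T=V_k$ and using $V_k^\ast=V_{\bar k}$ gives $V_{\bar k}\xi=\xi$, i.e. $\phi_{\bar k}(m)=m$. Consequently $m$ is fixed by every finite composition of the maps $\phi_k$ and $\phi_{\bar k}$.

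The second step passes from compositions to individual elements of $K$. Recall that $\phi$ extends to an algebra homomorphism $\CC K\to(\Maps(M),\circ)$, so for any word $w$ in $\{c_k,c_{\bar k}\}$ the map $\phi(w)$ is the corresponding composition of $\phi_k$ and $\phi_{\bar k}$; by the previous step $\phi(w)(m)=m$. On the other hand $w$, being a product of hypergroup elements, is a convex combination $w=\sum_\ell d_\ell c_\ell$ with $d_\ell\ge 0$ and $\sum_\ell d_\ell=1$, whence $\sum_\ell d_\ell\phi_\ell(m)=m$. Setting $\xi_\ell=\phi_\ell(m)\Omega=V_\ell\xi$ one has $\|\xi_\ell\|\le\|\xi\|$ and $\sum_\ell d_\ell\xi_\ell=\xi$; by strict convexity of the Hilbert-space norm every $\xi_\ell$ with $d_\ell>0$ must equal $\xi$, that is $\phi_\ell(m)=m$ for all $c_\ell\prec w$. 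Since $K=\langle c_k\rangle$ means precisely that every $c_\ell\in K$ lies in the support of some such word $w$, we conclude $\phi_\ell(m)=m$ for all $\ell$, i.e. $m\in M^K$.

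The contraction and convexity estimates are routine; the one point to handle with care is the final strict-convexity step, where the equalities in $\|\sum_\ell d_\ell\xi_\ell\|\le\sum_\ell d_\ell\|\xi_\ell\|\le\|\xi\|$ must be pushed to force $\xi_\ell=\xi$ on the support of $w$. This is the crux, and it is exactly where the separating property of $\Omega$ is essential, since it is what lets the norm equalities on vectors be transferred back to genuine fixed-point identities $\phi_\ell(m)=m$ for the stochastic maps.
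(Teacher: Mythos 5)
Your proof is correct, but it takes a genuinely different route from the paper's. The paper disposes of this corollary in one line via the Galois correspondence: by Proposition \ref{prop:Galois}, the intermediate algebra $M^{\phi_k}$ is of the form $M^L$ for a subhypergroup $L\subset K$; since $c_k\in L$ and $c_k$ generates $K$, one gets $L=K$, hence $M^{\phi_k}=M^K$. That argument is short but leans on the Q-system machinery behind Proposition \ref{prop:Galois} (projections in $\Hom(\gamma,\gamma)$ satisfying the intermediate-subfactor conditions), and implicitly on the fact that $M^{\phi_k}$ is an intermediate von Neumann subalgebra. Your argument instead stays entirely at the level of the Hilbert space: the contraction trick $V_k\xi=\xi\Rightarrow V_k^\ast\xi=\xi$ together with $V_k^\ast=V_{\bar k}$ gives $\phi_{\bar k}(m)=m$ (a step that is genuinely needed, since $\langle c_k\rangle$ is generated by words in $c_k$ \emph{and} $c_{\bar k}$, and it is not obvious that $c_{\bar k}$ lies in the support of powers of $c_k$ alone); then the homomorphism property of $\phi\colon\CC K\to\Maps(M)$ turns any word into a convex combination $\sum_\ell d_\ell\phi_\ell(m)=m$, and strict convexity of the Hilbert norm plus the separating property of $\Omega$ forces $\phi_\ell(m)=m$ on the support of the word; the definition of $\langle c_k\rangle$ then finishes the job. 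What your approach buys is generality and self-containedness: it works verbatim for any $\Omega$-preserving action in the sense of Definition \ref{defi:HypergroupActionVNA} on a von Neumann algebra with cyclic separating vector, with no appeal to subfactor or Q-system structure. What the paper's approach buys is economy within its framework and the structural insight that $M^{\phi_k}$ occupies a definite place in the lattice of intermediate subfactors.
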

\begin{proof} 
  We have that $M^K\subset M^{\phi_k}\subset M$, but by Proposition
  \ref{prop:Galois} there is a subhypergroup $L\subset K$, such that
  $M^{\phi_k}=M^L$ and since $c_k$ generates $K$ it follows from the proof that
  $c_k\in L$ and therefore $K=L$.
\end{proof}
\begin{lem} 
  \label{lem:NoMultiplicities}
  Let $\iota(A)\subset B$, such that $\gamma=\iota\bar\iota$ has no
  multiplicities, and $\iota_M\iota_A(A)\subset \iota_M(M)\subset B$ an
  intermediate subfactor, \ie $\iota_M\iota_A=\iota$.
  Then $\gamma_A=\iota_A\bar\iota_A$ associated with $\iota_A(A)\subset M$ has
  no multiplicities.
\end{lem}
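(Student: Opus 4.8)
The plan is to realize the canonical endomorphism $\gamma$ of $A\subset B$ as a ``conjugation'' of $\gamma_A$ by $\iota_M$, and then to use $\iota_M$ to embed the intertwiner algebra of $\gamma_A$ into that of $\gamma$. Since whether a canonical endomorphism has multiplicities depends only on its sector, we are free to choose the conjugates compatibly with the factorization $\iota=\iota_M\circ\iota_A$; taking $\bar\iota=\bar\iota_A\circ\bar\iota_M$ we obtain
\begin{align}
  \gamma=\iota\bar\iota=\iota_M\iota_A\bar\iota_A\bar\iota_M
    =\iota_M\circ\gamma_A\circ\bar\iota_M\in\End(B)\,,
\end{align}
where $\gamma_A=\iota_A\bar\iota_A\in\End(M)$ is the canonical endomorphism of $\iota_A(A)\subset M$.

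The key step is to check that $\iota_M$ carries $\Hom(\gamma_A,\gamma_A)$ into $\Hom(\gamma,\gamma)$. Let $t\in\Hom(\gamma_A,\gamma_A)=\gamma_A(M)'\cap M$. For every $b\in B$ we have $\gamma(b)=\iota_M\!\left(\gamma_A(\bar\iota_M(b))\right)$, and since $\bar\iota_M(b)\in M$ and $t$ commutes with $\gamma_A(M)$,
\begin{align}
  \iota_M(t)\,\gamma(b)
    =\iota_M\!\left(t\,\gamma_A(\bar\iota_M(b))\right)
    =\iota_M\!\left(\gamma_A(\bar\iota_M(b))\,t\right)
    =\gamma(b)\,\iota_M(t)\,.
\end{align}
Hence $\iota_M(t)\in\gamma(B)'\cap B=\Hom(\gamma,\gamma)$, so $\iota_M$ restricts to a map $\Hom(\gamma_A,\gamma_A)\to\Hom(\gamma,\gamma)$.

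Now $\iota_M\colon M\to B$ is the inclusion of the intermediate factor, hence an injective $\ast$-homomorphism; in particular its restriction to the subalgebra $\Hom(\gamma_A,\gamma_A)\subset M$ is an injective $\ast$-homomorphism onto a $\ast$-subalgebra of $\Hom(\gamma,\gamma)$. By hypothesis $\gamma$ has no multiplicities, so $\Hom(\gamma,\gamma)$ is commutative, and every $\ast$-subalgebra of it is commutative. Therefore $\Hom(\gamma_A,\gamma_A)\cong\iota_M(\Hom(\gamma_A,\gamma_A))$ is commutative, which by the equivalence ``no multiplicities $\Longleftrightarrow$ $\Hom(\gamma_A,\gamma_A)$ commutative'' recalled in the proof of Proposition \ref{prop:NoMultiplicities} means that $\gamma_A$ has no multiplicities.

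I do not expect a serious obstacle: the only points that genuinely require care are the compatible choice of conjugates yielding the factorization $\gamma=\iota_M\circ\gamma_A\circ\bar\iota_M$, and the observation that $\iota_M$ acts as an injective $\ast$-homomorphism (not merely a linear map) on the intertwiner algebras, so that commutativity is transported back along it. Both are routine, and once the factorization is in place the conclusion is essentially immediate.
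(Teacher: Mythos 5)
Your proof is correct, but it takes a different route from the paper's. The paper argues by contradiction at the level of individual sectors: assuming some irreducible $\beta$ satisfies $\langle\beta,\gamma_A\rangle>1$, it picks an irreducible $\beta_B$ with a non-zero $t\in\Hom(\beta_B,\iota_M\beta\bar\iota_M)$ and embeds $\Hom(\beta,\gamma_A)$ into $\Hom(\beta_B,\gamma)$ via $w\mapsto(1_{\iota_M}\otimes w\otimes 1_{\bar\iota_M})\cdot t$, so that $\langle\beta_B,\gamma\rangle\geq\langle\beta,\gamma_A\rangle>1$, contradicting the hypothesis on $\gamma$. You instead work with the whole intertwiner algebra at once: exploiting the factorization $\gamma=\iota_M\circ\gamma_A\circ\bar\iota_M$ (which is also implicit in the paper's argument), you observe that $\iota_M$ restricts to an injective $\ast$-homomorphism $\Hom(\gamma_A,\gamma_A)\to\Hom(\gamma,\gamma)$ and pull commutativity back along it, invoking the dictionary ``no multiplicities $\Leftrightarrow$ commutative self-intertwiner algebra'' that the paper itself uses elsewhere (Proposition \ref{prop:NoMultiplicities}). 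Your version is arguably cleaner: it avoids choosing irreducibles $\beta$, $\beta_B$ and sidesteps the injectivity check for the map $w\mapsto\iota_M(w)t$, which the paper leaves implicit (it follows since $(w_1-w_2)^\ast(w_1-w_2)\in\Hom(\beta,\beta)=\CC$ for $\beta$ irreducible). The paper's version, in exchange, gives slightly finer information — it localizes where a multiplicity of $\gamma_A$ would reappear inside $\gamma$, namely over an irreducible $\beta_B\prec\iota_M\beta\bar\iota_M$ — and it never needs to pass through the commutativity characterization. Both hinge on the same structural input, the compatible choice of conjugates giving $[\gamma]=[\iota_M\gamma_A\bar\iota_M]$, which you justify correctly by noting that the multiplicity structure of a canonical endomorphism depends only on its sector.
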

\begin{proof} 
  Suppose $\gamma_A$ multiplicities, \ie there is an irreducible $\beta$ with
  $\langle\beta,\gamma_A\rangle >1$.
  There is always an irreducible $\beta_B$ and a non-trivial $ t\in
  \Hom(\beta_B,\iota_M\beta\iota_M)$.
  But then with the embedding $\Hom(\beta,\gamma_A) \ni w\mapsto (\iota_M
  \otimes w\otimes \bar\iota_M)\cdot t \in \Hom(\beta_B,\gamma)$, we get
  $\langle \beta_B,\gamma \rangle \geq \langle \beta,\gamma_A)>1$ which is a
  contradiction.
\end{proof}
\begin{prop}
  \label{prop:QuotientGalois}
  Let $K$ be the canonical hypergroup associated with $(A\subset B,\Omega)$ and
  therefore $A\equiv B^K$ and $L\leq K$ a subhypergroup.

  Then there is a natural action of $K\CS L$ on $M\equiv B^L$, which coincides
  with canonical action of the hypergroup aossciated with 
  $(A\subset M\equiv B^L,\Omega)$. 
  In particular, $A=M^{K\CS L}=(B^L)^{K\CS L}$ and the weights fulfill 
  $D(K)=D(L)\cdot D(K\CS L)$.
\end{prop}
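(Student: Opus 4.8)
The plan is to write down the $K\CS L$-action on $M=B^L$ by hand, check that it is a faithful $\Omega$-preserving normal hypergroup action with fixed-point algebra $A$, and then quote the uniqueness half of Theorem~\ref{thm:CanonicalHypergroupFromSubfactor}. By Proposition~\ref{prop:Galois} the subhypergroup $L\leq K$ produces the intermediate factor $M=B^L$ together with the conditional expectation $E_L=\phi(e_L)\colon B\to M$, where $e_L\in\Conv(K)$ is the Haar element of $L$, and by Lemma~\ref{lem:NoMultiplicities} the canonical endomorphism $\gamma_A=\iota_A\bar\iota_A$ of $A\subset M$ has no multiplicities, so that $A\subset M$ already carries its own canonical hypergroup action by Theorem~\ref{thm:CanonicalHypergroupFromSubfactor}. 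For a double coset $[k]=e_Lc_ke_L\in K\CS L$ I would set $\psi_{[k]}:=\phi(e_Lc_ke_L)\restriction M=E_L\circ\phi_k\restriction M$, which depends only on the double coset, has range in $M=E_L(B)$, and defines the candidate action.

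The routine axioms of Definition~\ref{defi:HypergroupActionVNA} I expect to dispatch quickly. Since $\phi\colon\CC K\to\Maps(B)$ is an algebra homomorphism and $e_L^2=e_L$, one gets $\psi_{[i]}\circ\psi_{[j]}=\phi(e_Lc_ie_Lc_je_L)\restriction M=\sum_{[k]}\tilde C_{[i][j]}^{[k]}\psi_{[k]}$ with $\tilde C$ the structure constants of the double-coset hypergroup of Proposition~\ref{prop:DCSisHypergroup}, and $\psi_{[0]}=E_L\restriction M=\id_M$. The $\Omega$-adjoint relation $\psi_{[i]}^\sharp=\psi_{[\bar i]}$ follows from $\phi_i^\sharp=\phi_{\bar i}$ and $E_L^\sharp=E_L$; each $\psi_{[k]}$ is $\Omega$-preserving unital completely positive as a composition of such maps; and faithfulness follows because $\phi$ is injective on $\CC K$ (the $\phi_k$ are affine independent by Lemma~\ref{lem:Simplex}) and $\psi_{[k]}$ recovers $\phi(e_Lc_ke_L)$ via $\phi(e_Lc_ke_L)(b)=\psi_{[k]}(E_L b)$, so the $\psi_{[k]}$ are linearly independent. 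The identity $M^{K\CS L}=A$ is also direct: if $m\in M$ is fixed by every $\psi_{[k]}$, then $m=\tfrac1{D(K)}\sum_k w_k\,E_L\phi_k(m)=E_L\bigl(E_K(m)\bigr)=E_K(m)\in A$, since $E_K(m)\in A\subset M$ and $E_L\restriction A=\id$; conversely $A\subset M^{K\CS L}$ is clear.

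The main obstacle is the remaining axiom, extremality of $\psi_{[k]}$ in $\Stoch_\Omega(M)$, which is exactly what makes the purely convex-geometric shortcut (counting plus the simplex of Proposition~\ref{prop:Simplex}) fail, since a simplex of $A$--$A$ bimodular maps can contain many affine-independent points that are not its vertices. I would instead establish extremality by exhibiting a minimal Stinespring representation: composing the minimal representation $\phi_k=x_k^\ast\beta_k(\slot)x_k$ of Lemma~\ref{lem:MinimalStinespring} with the minimal representation of $E_L$ coming from Lemma~\ref{lem:MinimalStinespringCE} applied to $M\subset B$ (using $\gamma=\iota_M\gamma_A\bar\iota_M$), and compressing, one should obtain $\psi_{[k]}=y_{[k]}^\ast\delta_{[k]}(\slot)y_{[k]}\restriction M$ with $\delta_{[k]}$ an irreducible subsector of $\gamma_A$; minimality then follows from irreducibility of $\delta_{[k]}$ exactly as in Lemma~\ref{lem:MinimalStinespring}, and purity from Proposition~\ref{prop:PureCP}. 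The assignment $[k]\mapsto\delta_{[k]}$ is then the expected bijection $K\CS L\leftrightarrow\Irr(\gamma_A)$, sending $[k]$ to the unique irreducible $\delta\prec\gamma_A$ for which $\iota_M\delta\bar\iota_M$ contains exactly the $\beta_i$ with $c_i\in\supp(e_Lc_ke_L)$. With extremality in hand, $\{\psi_{[k]}\}$ is a faithful $\Omega$-preserving normal action of $K\CS L$ on $M$ with $M^{K\CS L}=A$, so the uniqueness in Theorem~\ref{thm:CanonicalHypergroupFromSubfactor} identifies it with the canonical action of $A\subset M$; in particular $K\CS L$ is the canonical hypergroup of $A\subset M$.

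Finally, the weight formula drops out of the index. We have $D(K)=[B:A]$, and since $L$ is the canonical hypergroup of $M\subset B$ (again by Proposition~\ref{prop:Galois} together with the uniqueness in Theorem~\ref{thm:CanonicalHypergroupFromSubfactor}) we get $D(L)=[B:M]$; moreover, as $K\CS L$ is the canonical hypergroup of $A\subset M$, its weight is $D(K\CS L)=[M:A]$. Multiplicativity of the minimal index, $[B:A]=[B:M]\,[M:A]$, then yields $D(K)=D(L)\cdot D(K\CS L)$ (a relation one may alternatively verify purely at the level of the hypergroup $K$ and its subhypergroup $L$).
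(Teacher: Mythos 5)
Your proposal is correct in substance and, despite the different packaging, turns on the same identification as the paper's proof: the double-coset maps $\psi_{[k]}=E_L\circ\phi_k\restriction M$ get recognized as the canonical maps of the subfactor $A\subset M$, i.e.\ as $x^\ast\delta(\slot)x$ with $\delta$ running over the irreducible subsectors of $\gamma_A=\iota_A\bar\iota_A$ (your $\delta_{[k]}$ is the paper's $\tilde\beta_{\ell(k)}$). The organizational difference is real, though: you verify the axioms of Definition \ref{defi:HypergroupActionVNA} for $\{\psi_{[k]}\}$ directly and then quote the uniqueness half of Theorem \ref{thm:CanonicalHypergroupFromSubfactor}, whereas the paper never needs uniqueness at all. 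It first builds the surjection $\ell\colon K\to\tilde L$ onto the canonical hypergroup $\tilde L$ of $A\subset M$ from the multiplicity-free decompositions $[\gamma]=\bigoplus_k[\beta_k]=\bigoplus_{l}[\iota_M\tilde\beta_l\bar\iota_M]$ (Lemma \ref{lem:NoMultiplicities}), and then proves the single identity $E_L\circ\phi_k\circ\iota_L=\iota_L\circ\psi_{\ell(k)}$ by a diagrammatic conjugate-equation computation; that one identity simultaneously yields well-definedness on double cosets, the bijection $K\CS L\to\tilde L$, extremality (the $\psi_{\ell(k)}$ are canonical maps, hence pure by Proposition \ref{prop:PureCP}), the equality $A=M^{K\CS L}$, and the weight formula. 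Your treatment of the routine points is sound: the composition law through the idempotent $e_L$, adjoints, the fixed-point computation, faithfulness via linear independence (note that to get injectivity of $\phi$ on $\CC K$ from the affine independence of Lemma \ref{lem:Simplex} you should also use unitality of the $\phi_k$), and $D(K)=D(L)\,D(K\CS L)$ via multiplicativity of the index; your observation that extremality cannot be obtained by counting inside the simplex of Proposition \ref{prop:Simplex} is also correct.

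The one place where your argument is not yet a proof is exactly the step you call the main obstacle. The claim that composing the minimal dilation $\phi_k=x_k^\ast\beta_k(\slot)x_k$ (Lemma \ref{lem:MinimalStinespring}) with the minimal dilation of $E_L$ (Lemma \ref{lem:MinimalStinespringCE}) and ``compressing'' produces $\psi_{[k]}=y_{[k]}^\ast\delta_{[k]}(\slot)y_{[k]}$ with $\delta_{[k]}$ a \emph{single irreducible} subsector of $\gamma_A$, and with the resulting dilation minimal, is precisely the nontrivial content: a priori the composition dilates through $\gamma_M\beta_k$, and upon restriction to $M$ the relevant endomorphism only sits inside $\bar\iota_M\beta_k\iota_M\prec\bar\iota_M\iota_M\gamma_A\bar\iota_M\iota_M$, which contains $\tilde\beta_{\ell(k)}$ together with other subsectors; one must show that the specific isometry $y_{[k]}$ cuts down to exactly $\tilde\beta_{\ell(k)}$. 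Doing so uses the fusion relation $\beta_k\prec\iota_M\tilde\beta_{\ell(k)}\bar\iota_M$ and a computation with standard solutions of the conjugate equations---which is word for word what the paper's two displayed diagrams accomplish. So the plan would go through, but ``one should obtain'' is standing in for the entire computational core of the paper's proof.
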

\begin{proof}
  Let $L \leq K$ be subhypergroup. Let us write $[k]=e_L c_k e_L\in K\CS L$.
  Since $\phi([k])= E_L\circ \phi_k \circ E_L$ the map 
  $\tilde\phi_{[k]}=\iota_L^{-1}\circ E_L\circ\phi_{k}\circ\iota_L$ is 
  well-defined, where $\iota_L$ is the canonical inclusion of $B^L$ into $B$. 
  It follows directly that this fulfills properties of an action of $K\CS L$ on
  $M$.  

  By Lemma \ref{lem:NoMultiplicities} $\tilde\gamma=\iota_A\bar\iota_A$ has no 
  multiplicities and we get a hypergroup $\tilde L$ with an action 
  $\psi\colon \tilde L \to \Stoch_\Omega(M)$, such that 
  $\iota_A(A)\equiv M^{\tilde L}\subset M$. 
  
  Let $[\tilde\gamma]=\bigoplus_{l\in\tilde L} [\tilde\beta_l]$, 
  then $[\gamma]=\bigoplus_{k\in K} [\beta_k]=   \bigoplus_{l\in \tilde L}
    [\iota_M\tilde \beta_l \bar \iota_M]$. 
  Since $\gamma$ and $\tilde \gamma$ have no multiplicities, we  get a
  surjective map $\ell\colon K\to \tilde L$ fixed by the requirement that  
  $\beta_k\prec \iota_M\tilde\beta_{\ell(k)}\bar\iota_M$ for all $k\in K$.
  The conditional expectation onto $A$ factorizes as follows:
  \begin{align}
    E_K&=\frac{1}{\sqrt{D(K)}}
    \tikzmath[\ssize]{
	  	\fill[\colN,rounded corners] (-2,-1.75) rectangle (2,1.75);
	  	\clip[rounded corners] (-2,-1.75) rectangle (2,1.75);
      \fill[\colM] (-1.5,-1.75) rectangle (-1,1.75);
      \fill[\colP] (-2,-1.75) rectangle (-1.5,1.75);
      \draw (-1,-1.75)--(-1,1.75);
      \draw (-1.5,-1.75)--(-1.5,1.75);
      \fill[\colM] (1.5,-.25)--(1.5,-.5) arc (360:180:1)-- (-0.5,.5) arc 
        (180:0:1)--(1.5,.25);
      \fill[\colP] (1,-.25)--(1,-.5) arc (360:180:.5)-- (0,.5) arc (180:0:.5)--
        (1,.25);
      \draw (1,-.25)--(1,-.5) arc (360:180:.5)-- (0,.5) arc (180:0:.5)--(1,.25);
      \draw (1.5,-.25)--(1.5,-.5) arc (360:180:1)-- (-0.5,.5) arc (180:0:1)--
        (1.5,.25);
      \fill[white] (.75,-.25) rectangle (2,.25);
      \draw (2,-.25)--(.75,-.25)--(.75,.25)--(2,.25);
  	}
    =
    \frac{1}{\sqrt{D(K)}}
    \sum_{\substack{k\in K \\ l=\ell(k)}}
	  \tikzmath[\ssize]{
  		\fill[\colN,rounded corners] (-1,-1.75) rectangle (2,1.75);
  		\clip[rounded corners] (-1,-1.75) rectangle (2,1.75);
      \fill[\colP] (-1,-1.75)--(-.5,-1.75) arc (180:90:.75)--(0.5,-1)  arc 
        (270:360:.5)--(1,.25)
        --(1,.25)--(1,.5) arc (0:90:.5)--(.25,1) arc (270:180:.75)--(-1,1.75);
      \fill[\colM] (-.5,-1.75) arc (180:90:.75)--(0.5,-1)  arc (270:360:.5)--
        (1,.25) --
        (1,.25)--(1,.5) arc (0:90:.5)--(.25,1) arc (270:180:.75)
        --(0,1.75) arc (180:270:.25)--(.5,1.5) arc (90:0:1)--(1.5,-.5) arc 
        (360:270:1)--(.25,-1.5) arc (90:180:.25); 
      \draw (-.5,-1.75)%
        arc (180:90:.75)--(0.5,-1)  arc (270:360:.5)--(1,.25)
        (1,.25)--(1,.5) arc (0:90:.5)--(.25,1) arc (270:180:.75); %
      \draw (0,-1.75) arc (180:90:.25)--(.5,-1.5)  arc (270:360:1)--(1.5,-.25)
        (1.5,.25)--(1.5,.5) arc (0:90:1)--(.25,1.5) arc (270:180:.25);
      \fill[white] (.75,-.25) rectangle (2,.25);
      \draw (2,-.25)--(.75,-.25)--(.75,.25)--(2,.25);
      \draw [ultra thick] (.375,1.5)--node [left] {$\scriptstyle k$} (.375,-1.5);
      \draw [ultra thick] (.375,1.5)-- node [left] {$\scriptstyle l$} (.375,1);
      \draw [ultra thick] (.375,-1.5)-- node [left] {$\scriptstyle l$} (.375,-1);
	  }
    \,.
  \end{align}
  This gives 
\begin{align}
  E_L\circ \phi_{k}\circ \iota_L=\frac1{d_k }\sqrt{\frac{D(L)}{D(K)}}\,
  \tikzmath[\ssize]{
		\fill[\colN,rounded corners] (-1,-1.75) rectangle (2,1.75);
		\clip[rounded corners] (-1.25,-1.75) rectangle (2,1.75);
    \fill[\colM] (-.75,1.75)     --(0,1.75) arc (180:270:.25)--(.5,1.5) arc 
      (90:0:1)--(1.5,-.5) arc (360:270:1)--(.25,-1.5) arc 
      (90:180:.25)--(-.75,-1.75);
    \fill[\colP] (-1.25,-1.75) rectangle (-.75,1.75);
    \fill[\colP] (1,.25)--(1,.5) arc (0:90:.5)--(.25,1) arc 
      (90:180:.5)--(-.25,-.5) arc (180:270:.5)
      --(0.5,-1)  arc (270:360:.5)--(1,.25);
    \draw     (1,.25)--(1,.5) arc (0:90:.5)--(.25,1) arc 
      (90:180:.5)--(-.25,-.5) arc (180:270:.5)
      --(0.5,-1)  arc (270:360:.5)--(1,.25);
    \draw (0,-1.75) arc (180:90:.25)--(.5,-1.5)  arc (270:360:1)--(1.5,-.25)
      (1.5,.25)--(1.5,.5) arc (0:90:1)--(.25,1.5) arc (270:180:.25);
    \fill[white] (1.25,-.25) rectangle (2,.25);
    \draw (2,-.25)--(1.25,-.25)--(1.25,.25)--(2,.25);
    \draw [ultra thick] (.375,1.5)--node [left] 
      {$\scriptstyle k$} (.375,-1.5);
    \draw [ultra thick] (.375,1.5)--node [left] 
      {$\scriptstyle \ell(k)$} (.375,1);
    \draw [ultra thick] (.375,-1.5)--node [left] 
      {$\scriptstyle \ell(k)$} (.375,-1);
    \draw (-.75,-1.75)--(-.75,1.75);
	}
  =\frac1{d_{\ell(k)}}\sqrt{\frac{D(L)}{D(K)}}\,
  \tikzmath[\ssize]{
		\fill[\colN,rounded corners] (-1,-1.75) rectangle (2,1.75);
		\clip[rounded corners] (-1,-1.75) rectangle (2,1.75);
    \fill[\colM] (-.5,1.75)     --(0,1.75) arc (180:270:.25)--(.5,1.5) arc 
      (90:0:1)--(1.5,-.5) arc (360:270:1)--(.25,-1.5) arc 
      (90:180:.25)--(-.5,-1.75);
    \fill[\colP] (-1,-1.75) rectangle (-.5,1.75);
    \draw (0,-1.75) arc (180:90:.25)--(.5,-1.5)  arc (270:360:1)--(1.5,-.25)
    (1.5,.25)--(1.5,.5) arc (0:90:1)--(.25,1.5) arc (270:180:.25);
    \fill[white] (1.25,-.25) rectangle (2,.25);
    \draw (2,-.25)--(1.25,-.25)--(1.25,.25)--(2,.25);
    \draw [ultra thick] (.375,1.5)-- node [above right] 
      {$\scriptstyle \ell(k)$}  (.375,-1.5);
    \draw (-.5,-1.75)--(-.5,1.75);
	}
  =
  \iota_L\circ \psi_{\ell(k)}
  \,,
\end{align}
which shows that $\ell$ factors through to be a bijection $\ell\colon K\CS L
\to \tilde L$ and that $\tilde \phi_{[k]}=\psi_{\ell(k)}$.
\end{proof}
\begin{rmk} 
  The proof implies that $K\CS L$ is a hypergroup without using Proposition
  \ref{prop:DCSisHypergroup}.
  The proof also shows that we have an exact sequence:
  \begin{align}
    \{c_0\}\longrightarrow L\longrightarrow K \longrightarrow K\CS 
    L\longrightarrow \{\tilde c_0\equiv e_L\}\,.
  \end{align}
\end{rmk}

\subsection{Nilpotent Hypergroups and Intermediate Groups}
We can ask if the intermediate inclusions are coming from group fixed points. 
Let $K$ be a hypergroup. We remember that a $K$ is graded by $G_K=K\CS
K_\aad$, where $K_\aad=\{c_l\prec c_kc_{\bar k} \text{for } c_k\in K\}$ is the
adjoint hypergroup. 

By iteratively taking the adjoint hypergroup, we get the following finite
sequence of proper subhypergroups: 
\begin{align}
  K_0&=K\supsetneq K_1=K_\aad\supsetneq K_2=(K_1)_\aad \supsetneq \cdots
  \supsetneq K_N=(K_{N-1})_\aad\,, & K_{i+1}=(K_i)_\aad\,,\\
  \intertext{
    with $(K_N)_\aad=K_N$. We get a sequence $(G_i)_{1\leq i\leq N}$ 
    of finite groups given by
  } 
  G_i&=K_{i-1}\CS K_i \,,&1\leq i \leq N \,.
\end{align}
Then the weight of $K$ is given by $D(K)= |G_1| |G_2|\cdots |G_N| \cdot
D({K_N})$.  
The hypergroup $K$ is called \textbf{nilpotent} (\cf \cite[Definition
3.6.7]{EtGeNiOs2015}) if $K_N=\{c_0\}$ for some $N\geq
0$. 
\begin{cor}
  \label{cor:Nilpotence}
  If $K$ is the canonical hypergroup associated with $(A\subset B,\Omega)$, \ie
  $A=B^K$, then by applying 
    Proposition \ref{prop:QuotientGalois} 
  recursively, we get
  \begin{align}
    B^K&= (\cdots ((B^{K_N})^{G_N})^{G_{N-1}}\cdots)^{G_1}\,.
  \end{align}
  In particular, if $K$ is nilpotent, then $D(K)=|G_1| |G_2|\cdots |G_N|$ and
  $M_K$ is an iterated group fixed point algebra 
  \begin{align}
    M^K&= (\cdots (M^{G_N})^{G_{N-1}}\cdots)^{G_1}\,.
  \end{align}
\end{cor}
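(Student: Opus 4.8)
The plan is to read the displayed identity as an $N$-fold recursion, each step of which is a single invocation of Proposition \ref{prop:QuotientGalois}. Write $M_i:=B^{K_i}$, so that the decreasing chain $K=K_0\supsetneq K_1\supsetneq\cdots\supsetneq K_N$ produces, via the Galois correspondence of Proposition \ref{prop:Galois}, the increasing chain of intermediate factors $A=M_0\subset M_1\subset\cdots\subset M_N\subset B$. The point that must be checked before the recursion can even be started is that at every stage the inclusion $M_i\subset B$ is again an instance of the situation of Proposition \ref{prop:QuotientGalois}, with canonical hypergroup exactly $K_i$. Since $A\subset M_i$, the canonical endomorphism $\gamma_{M_i}=\iota_{M_i}\bar\iota_{M_i}$ of $M_i\subset B$ is a subsector of the canonical endomorphism $\gamma=\iota\bar\iota$ of $A\subset B$, hence---$\gamma$ being multiplicity free---itself multiplicity free. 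Consequently Theorem \ref{thm:CanonicalHypergroupFromSubfactor} applies to $(M_i\subset B,\Omega)$, and because the restricted family $\{\phi_k\}_{k\in K_i}$ is a faithful $\Omega$-preserving action of the subhypergroup $K_i$ with fixed-point algebra $M_i$, the uniqueness clause of that theorem identifies $K_i$ as the canonical hypergroup of $(M_i\subset B,\Omega)$.

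Granting this, the inductive step is immediate. Applying Proposition \ref{prop:QuotientGalois} to $(M_i\subset B,\Omega)$ with the subhypergroup $K_{i+1}=(K_i)_\aad\leq K_i$ produces an action of $G_{i+1}=K_i\CS K_{i+1}$ on $M_{i+1}=B^{K_{i+1}}$ together with
\begin{align}
  B^{K_i}=(B^{K_{i+1}})^{G_{i+1}}\,,\qquad D(K_i)=D(K_{i+1})\,D(G_{i+1})\,.
\end{align}
Starting from $B^K=B^{K_0}=(B^{K_1})^{G_1}$ and substituting the first identity for $i=1,\ldots,N-1$ in turn telescopes to $B^K=(\cdots((B^{K_N})^{G_N})^{G_{N-1}}\cdots)^{G_1}$, while iterating the second gives $D(K)=D(K_N)\prod_{i=1}^N D(G_i)$.

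It remains to note that each $G_i=K_{i-1}\CS(K_{i-1})_\aad$ is the universal grading group of $K_{i-1}$ and hence an honest group, so all its weights equal $1$ and $D(G_i)=|G_i|$; moreover the maps implementing such a weight-one element satisfy $\phi_g\circ\phi_{\bar g}=\phi_{\bar g}\circ\phi_g=\id$, so each $\phi_g$ is a unital completely positive bijection with completely positive inverse, thus a $\ast$-automorphism. Therefore every $(B^{K_{i+1}})^{G_{i+1}}$ is a genuine group fixed-point algebra. In the nilpotent case $K_N=\{c_0\}$, whence $D(K_N)=1$ and $B^{K_N}=B$, and the two displayed relations collapse to $D(K)=|G_1|\cdots|G_N|$ and $M^K=(\cdots(M^{G_N})^{G_{N-1}}\cdots)^{G_1}$ (with $M=B$). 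The only step that is not pure bookkeeping---and the one I would write out most carefully---is the first paragraph: verifying that taking an intermediate fixed point preserves the multiplicity-free property of the canonical endomorphism and that the canonical hypergroup of the new inclusion is precisely $K_i$, so that the hypotheses of Proposition \ref{prop:QuotientGalois} are genuinely met anew at each stage.
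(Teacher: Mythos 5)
Your proof is correct and follows exactly the route the paper intends: the corollary is given there with no separate argument beyond ``apply Proposition \ref{prop:QuotientGalois} recursively,'' and your telescoping of $B^{K_i}=(B^{K_{i+1}})^{G_{i+1}}$ together with $D(K_i)=D(K_{i+1})D(G_{i+1})$ is that recursion. The hypothesis-checking you rightly flag as the real content---that $\gamma_{M_i}\prec\gamma$ (since $\id_{M_i}\prec\iota_{A,i}\bar\iota_{A,i}$) inherits multiplicity-freeness, and that the uniqueness clause of Theorem \ref{thm:CanonicalHypergroupFromSubfactor} applied to the restricted faithful action $\{\phi_k\}_{k\in K_i}$ identifies $K_i$ as the canonical hypergroup of $(M_i\subset B,\Omega)$---is exactly what the paper leaves implicit, and you verify it correctly.
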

\begin{example} 
  Let $G$ be an abelian group and $\cF$ be a unitary fusion category of
  Tambara--Yamagami type (see \cite{TaYa1998,Iz2001II}), \ie
  $\Irr(\cF)=G\cup\{\rho\}$ with fusion rules:
  \begin{align}
    [\rho]^2&=\bigoplus_{g\in G} [g] \,,& 
    [g][\rho]&=[\rho][g]=\rho\,,&
    [g][h]&=[gh]\,,& 
    \text{for all } g,h\in G\,.
  \end{align}
  Then the universal grading group of $K_\cF$ is $G_1=G_{K_\cF}\cong \ZZ_2$ and
  $G_2\cong A$.
  Let $N\subset M$ be the Longo--Rehren inclusion associated with $\cF$.
  Then $N=(M^G)^{\ZZ_2}$, this means $N\subset M$ is a Bisch--Haagerup
  subfactor (see \cite{BiHa1996}) $P^{\ZZ_2}\subset P\rtimes \hat G$. Here
  $P=M^A$ and $\hat G$ is the Pontryagin dual of $G$.
\end{example}
Let $K$ be a hypergroup and let $K^\times=\{c_k :w_k=1\}\subset K$ 
be the \textbf{maximal subgroup} or unit ring of $K$, 
\ie the maximal subhypergroup which is a group.
\begin{cor}
  \label{cor:MaximalSubgroup}
  If $K$ is the canonical hypergroup associated with $(A\subset B,\Omega)$, \ie
  $A=B^K$, such that $G:=K^\times$ is non-trivial. Then
  \begin{align}
    A&=(B^G)^{K\CS G}\,.
  \end{align}
\end{cor}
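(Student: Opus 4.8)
The plan is to obtain this as the special case $L = G$ of Proposition \ref{prop:QuotientGalois}. The only point that needs checking is that $G = K^\times$ is a subhypergroup of $K$, and this is immediate from its defining description as the maximal subhypergroup which is a group: one has $c_0 \in G$, $GG \subset \CC G$, and $G^\ast = G$, so $G \leq K$ in the sense of the definition of a subhypergroup. Equivalently, $G$ is the set of $c_k$ with $w_k = 1$, i.e.\ the group of invertible sectors $\beta_k \prec \gamma$, which is visibly closed under composition and conjugation.

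With $L = G$ in hand, Proposition \ref{prop:QuotientGalois} carries all the weight. Applied to the canonical hypergroup $K$ of $(A \subset B, \Omega)$ and the subhypergroup $G$, it produces a natural action of the double-coset hypergroup $K \CS G$ on the intermediate net $M = B^G$, identifies this action with the canonical one attached to the inclusion $(A \subset B^G, \Omega)$, and concludes $A = M^{K\CS G} = (B^G)^{K\CS G}$, which is exactly the asserted identity. The hypothesis that $G$ be non-trivial serves only to guarantee $B^G \neq B$, so that the intermediate step is genuinely present; the argument is otherwise insensitive to it.

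Finally I would record why $B^G$ deserves to be called a group orbifold, connecting the statement to the classical $G$-fixed-point picture the paper generalizes. Each $c_k \in G$ has weight $w_k = 1$, so the corresponding sector $\beta_k \prec \gamma$ has dimension $d_k = w_k = 1$ and is invertible; consequently the isometry $x_k$ in $\phi_k = x_k^\ast \beta_k(\slot) x_k$ is in fact unitary, since $\Hom(\iota, \beta_k\iota)$ is one-dimensional and $\iota$ and $\beta_k\iota$ have equal dimension, whence each $\phi_k = \Ad x_k^\ast \circ \beta_k$ is a genuine $\ast$-automorphism. Thus the $G$-action is an honest action by vacuum-preserving automorphisms and $B^G$ is a classical $G$-orbifold, above which $A$ sits as the $K\CS G$-orbifold. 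I do not anticipate any real obstacle here, as the entire content is carried by Proposition \ref{prop:QuotientGalois}; the corollary is merely its specialization to the maximal group-like subhypergroup.
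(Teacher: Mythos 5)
Your proof is correct and takes essentially the same route as the paper: Corollary \ref{cor:MaximalSubgroup} is stated there without a separate proof precisely because it is the specialization of Proposition \ref{prop:QuotientGalois} to the subhypergroup $L=K^\times$, which is all your argument uses. Your extra observation that weight-one elements give invertible sectors $\beta_k$, hence genuine vacuum-preserving automorphisms $\phi_k=\Ad x_k^\ast\circ\beta_k$, is correct but not needed for the identity $A=(B^G)^{K\CS G}$.
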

\subsection{A Finite Index Subnet Gives a Proper Hypergroup Action}
As before, let $\A\subset \cB$ be a finite index inclusion of conformal nets.
In this subsection we want to show, that this gives rise to a proper action of
a hypergroup.
We consider $A=\A(I)\subset \cB(I)=B$ and by Corollary
\ref{cor:NetNoMultiplicities} the canonical endomorphism $\gamma=\iota\bar\iota$
has no multiplicities. 
From $(A\subset B, \Omega)$, we get canonically a hypergroup $K$ and stochastic
maps: $\phi_k^{I}\equiv \phi_k\colon B \to B$.

Indeed we get a compatible family indexed by intervals $I_1\in\cI$ of such
actions and hence giving a  converse of Theorem
\ref{thm:GeneralizedOrbifoldGivesSubnet}, see Theorem
\ref{thm:CanonicalHypergroupFromSubnet}.
\begin{prop} 
  \label{prop:CompatibleAction}
  $\phi_k$ extends to a compatible and vacuum preserving 
  family $\{\phi^{I_1}_k\colon \cB(I_1)\to\cB(I_1): I_1\in\cI\}$. 
\end{prop}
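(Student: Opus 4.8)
The plan is to first extend each $\phi_k$ to all intervals $I_1\supseteq I$ by the same Q-system formula, then to upgrade the M\"obius covariance of the conditional expectation $E$ to covariance of each \emph{individual} $\phi_k$ by a rigidity argument on the simplex of bimodular maps, and finally to propagate to arbitrary intervals using full covariance. Throughout, $V_k\in\B(\Hil)$ denotes the implementing contraction $V_k\colon a\Omega\mapsto\phi_k(a)\Omega$, which is independent of the interval.

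First I would treat intervals $I_1\supseteq I$ directly. Since the canonical endomorphism $\gamma=\iota\bar\iota$ of a finite index subnet may be chosen localized in $I$ (standard, \cf \cite{LoRe1995}), Haag duality gives $\gamma(\cB(I_1))\subseteq\cB(I_1)$; as the intertwiners $v_k$ and $\iota(w)$ lie in $\cB(I)\subseteq\cB(I_1)$, the formula $\tfrac{d_k}{d_\gamma}\phi_k^{I_1}(b)=\iota(w^\ast)\gamma(b)v_kv_k^\ast\iota(w)$ defines a normal UCP map $\cB(I_1)\to\cB(I_1)$ extending $\phi_k$, and the family $\{\phi_k^{I_1}\}_{I_1\supseteq I}$ is manifestly compatible. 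Unitality and the relation $E^{I_1}\circ\phi_k^{I_1}=E^{I_1}$ follow from the identical computations as over $\cB(I)$; combined with vacuum-preservation of the canonical expectation $E^{I_1}$ this yields $\varphi\circ\phi_k^{I_1}=\varphi$. By Lemma \ref{lem:MinimalStinespring} (which extends verbatim, using irreducibility of $\beta_k$) each $\phi_k^{I_1}$ has a minimal Stinespring dilation, hence is extremal by Proposition \ref{prop:PureCP}. By Proposition \ref{prop:Simplex} applied to $\A(I_1)\subset\cB(I_1)$ the $\A(I_1)$-bimodular maps in $\Stoch_\Omega(\cB(I_1))$ form an $n$-simplex with exactly $n+1$ vertices, with the \emph{same} $n$ since the sector decomposition of the localized $\gamma$ does not depend on $I_1$. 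As the $\phi_k^{I_1}$ are $n+1$ affinely independent such extreme points (affine independence descends from that of the $\phi_k$ on $\cB(I)$ via restriction), they are precisely the vertices.

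The key step, and the main obstacle, is to show that each $\phi_k$ is covariant, not merely the average $E=\tfrac1D\sum_k w_k\phi_k$. Fix $I_1\supseteq I$ and consider the dilation group $\delta_{I_1}(t)$, which fixes $I_1$ and hence restricts to a vacuum-preserving automorphism $\sigma_t=\Ad U(\delta_{I_1}(t))$ of $\cB(I_1)$ that also preserves the subnet algebra $\A(I_1)$ by covariance. Conjugation by $\sigma_t$ therefore maps the simplex of $\A(I_1)$-bimodular vacuum-preserving stochastic maps onto itself and permutes its finitely many vertices $\{\phi_k^{I_1}\}$. Since $t\mapsto\sigma_{-t}\circ\phi_k^{I_1}\circ\sigma_t$ is continuous and reduces to $\phi_k^{I_1}$ at $t=0$, the induced permutation is the identity for all $t$; thus $\phi_k^{I_1}$ commutes with $\sigma_t$. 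Translating this through $V_k$ and using the Bisognano--Wichmann identity $\Delta_{(\cB(I_1),\Omega)}^{\ima t}=U(\delta_{I_1}(-2\pi t))$ exactly as in Lemma \ref{lem:CENet} gives $[V_k,U(\delta_{I_1}(s))]=0$ for all $s$. As the dilation subgroups of the intervals $I_1\supseteq I$ generate $\Mob$ (their generators span $\mathfrak{sl}(2,\RR)$), I conclude $[V_k,U(g)]=0$ for every $g\in\Mob$.

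Finally I would define the family on an arbitrary interval $I_1$ by covariance: choosing $g\in\Mob$ with $gI_1\supseteq I$ and setting $\phi_k^{I_1}=\Ad U(g)^\ast\circ\phi_k^{gI_1}\circ\Ad U(g)$. Because $V_k$ commutes with $U(\Mob)$ and fixes $\Omega$ (indeed $V_k^\ast\Omega=V_{\bar k}\Omega=\Omega$), this map is the unique element of $\cB(I_1)$ determined by $b\Omega\mapsto V_k b\Omega$; it is therefore independent of the choice of $g$, lands in $\cB(I_1)$, and is stochastic with $\varphi\circ\phi_k^{I_1}=(V_{\bar k}\Omega,\slot\,\Omega)=\varphi$. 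Compatibility $\phi_k^{I_2}\restriction\cB(I_1)=\phi_k^{I_1}$ for $I_1\subseteq I_2$ follows since both sides send $b\Omega$ to $V_k b\Omega$ and $\Omega$ separates $\cB(I_2)$, which completes the construction of the compatible vacuum-preserving family.
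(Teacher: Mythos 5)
Your step for intervals $I_1\supseteq I$ contains a genuine error. You assert that $\gamma=\iota\bar\iota$ ``may be chosen localized in $I$'' and that Haag duality then gives $\gamma(\cB(I_1))\subseteq\cB(I_1)$. But the canonical endomorphism restricted to $\cB$ is \emph{not} a DHR endomorphism of the net $\cB$; only $\theta=\gamma\restriction\A$ is localized, and only as an endomorphism of the subnet $\A$. The irreducible constituents $\beta_k\prec\gamma$ are in general solitonic: for $\A=\cB^{\ZZ_2}$ one has $[\gamma\restriction\cB]=[\id]\oplus[\alpha]$ with $\alpha$ the global gauge automorphism, which acts nontrivially on $\cB(I')$, so no Haag duality argument for $\cB$ applies. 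What is true---and what the paper's proof actually invokes---is the extension theorem \cite[3.\ Theorem]{LoRe1995}: $\gamma$ extends to $\gamma^{I_1}\in\End(\cB(I_1))$ with $\gamma^{I_1}\restriction\cB(I)=\gamma$, acting trivially only on $\cB(I)'\cap\A(I_1)$, not on $\cB(I)'\cap\cB(I_1)$. You need this theorem, together with \cite[3.6 Lemma]{LoRe1995} (which guarantees that $w\in\Hom(\id_{\A(I_1)},\theta^{I_1})$ and $v_k\in\Hom(\beta_k^{I_1},\gamma^{I_1})$ still hold at the larger level), before your ``identical computations'' for unitality, $E^{I_1}\circ\phi_k^{I_1}=E^{I_1}$ and triviality on $\A(I_1)$ can carry over. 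The same flawed localization picture underlies your claim that the simplex at $I_1$ has ``the same $n$''; the correct justification is that intertwiners between DHR endomorphisms of $\A$ localized in $I$, computed in $\A(I_1)$, coincide with the global ones, whence $\Hom(\gamma^{I_1},\gamma^{I_1})=\Hom(\gamma,\gamma)$.

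With that repair, the rest of your argument is sound and follows a genuinely different route from the paper's. The paper never establishes covariance of the individual $\phi_k$: for $I_1\subset I$ it shows directly that $\phi_k(\cB(I_1))\subseteq\cB(I_1)$, by expanding elements of $\cB(I_1)$ as combinations of $\iota(a)\iota(u)\psi_{\rho,e}$ with $u\in\Hom(\rho,\rho^{I_1})$ unitary and $\rho^{I_1}$ localized in $I_1$, and using $\A$-bimodularity together with $\phi_k(\Hom(\iota,\iota\rho))\subseteq\Hom(\iota,\iota\rho)$; the upward extension is then by the same formula you use. Your rigidity argument---dilation conjugation preserves the simplex of Proposition \ref{prop:Simplex} at level $I_1$ (it fixes $\Omega$, preserves $\A(I_1)$ by covariance of the subnet, and preserves bimodularity and extremality), hence permutes its finitely many vertices, and weak continuity in $t$ plus connectedness of $\RR$ forces the trivial permutation---is correct, and the dilations of intervals containing $I$ do indeed generate $\Mob$. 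So your approach buys a strictly stronger conclusion, namely $[V_k,U(g)]=0$ for all $g\in\Mob$, i.e.\ M\"obius covariance of each stochastic map rather than only of their average $E$ (which is all that Lemma \ref{lem:CENet} provides), and it makes compatibility on arbitrary intervals automatic via the separating vector; the price is that you must set up the full simplex structure at every interval, whereas the paper's charged-intertwiner argument is more economical and purely algebraic.
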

\begin{proof}
  We remember that every element in $b\in B$ can be written as 
  \begin{align}
    b=\iota(a)v=\sum_{\rho,e} \iota(a_{\rho,e}) \psi_{\rho,e}
  \end{align}
  with $\{\psi_{\rho,e}\}_{e=1}^{\langle \iota,\iota\rho\rangle}$ 
  an orthonormal basis of $\Hom(\iota,\iota\rho)$.
  Then $\phi_k(\psi_{\rho,e}) \in \Hom(\iota,\iota\rho)$.
  Let us now take a $I_1\subset I$. For $\rho\in\Rep^I(\A)$ take a sinitary
  $u\in\Hom(\rho, \rho^{I_1})$ with $\rho^{I_1}$ localized in $I_1$. 
  Then $\iota(u)\psi_{\rho,e}\in \cB(I_1)$ (\cf \cite[below 4.6
  Corollary]{LoRe1995}).

  But then also $\phi_k(\iota(u)\psi_{\rho,e})
    =\iota(u)\phi_k(\psi_{\rho,e})\in
    \cB(I_1)$, thus by linearity and $A$-bimodularity, we conclude that
  $\phi_k^{I_1}:=\phi_k\restriction \cB(I_1)\colon \cB(I_1)\to\cB(I_1)$ is
  well-defined.

  To show that we can extend a $\phi^{I_1}_k\colon \cB(I_1)\to\cB(I_1)$ to
  $\phi^{I_2}_k\colon\cB(I_2)\to \cB(I_2)$ for all $I_1\subset I_2$ with
  $I_i\in \cI$ in a compatible way, we assume wlog $I_1=I$.
  By \cite[3.\ Thereom]{LoRe1995} $\gamma\in\End(B)$ can be extended to 
  $\gamma^{I_2}\in\End(\cB(I_2))$, such that 
  $\gamma^{I_2}\restriction B'\cap \A(I_2)=\id$ and $\gamma^{I_2}\restriction B
  =\gamma$.
  But then we can define with $\iota_{I_2}\colon \A(I_2)\to
  \cB(I_2)=\iota_{I_2}(\A(I_2))v$ the canonical 
  inclusion with $\iota_{I_2}\restriction A =\iota$ the compatible extension: 
  \begin{align}
    \phi^{I_2}_i(\slot) 
      &=\frac{d_\gamma}{d_i}\iota_{I_2}(w^\ast)\gamma^{I_2}(\slot) 
      v_iv_i^\ast \iota_{I_2}(w)\,,\\
    \phi^{I_2}_i\restriction B
      &=\frac{d_\gamma}{d_i}\iota(w^\ast)\gamma(\slot) v_iv_i^\ast \iota(w)\\
      &=\frac{d_\gamma}{d_i}\iota(w^\ast)v_i\beta_i(\slot)v_i^\ast \iota(w)\\
      &=\phi_i(\slot)\,,\\
    \phi^{I_2}\circ\iota_2(\slot)
      &=\frac{d_\gamma}{d_i}\iota_{I_2}(w^\ast)\gamma^{I_2}\circ
        \iota_{I_2}(\slot) v_iv_i^\ast \iota_{I_2}(w)\\
      &=\frac{d_\gamma}{d_i}\iota_2(\slot)\iota_{I_2}(w^\ast) v_iv_i^\ast 
        \iota_{I_2}(w)\\
      &=\frac{d_\gamma}{d_i}\iota_2(\slot)\iota(w^\ast) v_iv_i^\ast \iota(w)\\
      &=\iota_2(\slot)\,,
  \end{align}
  where in last step we have used \cite[3.6 Lemma]{LoRe1995}.
\end{proof}
It follows directly that $K$ and the representation $V\colon \CC K \to
\B(\Hil)$ do not depend on the choice of the interval.
\begin{thm}
  \label{thm:CanonicalHypergroupFromSubnet}
  Let $\cB$ be a conformal net on $\Hil$ and $\A\subset \cB$ a be finite index
  subnet.

  Then associated with $\A$ there is canonical hypergroup $K$ acting properly
  on $\cB$, such that $\A=\cB^K$.
  Further, there is $\ast$-representation of $\CC K$ on $\Hil$, such that 
  $\A(I)=\cB(I)\cap V(\CC K)'$ and $\Hil$ decomposes as $\CC K$ module as
  \begin{align}
    \Hil &=\bigoplus_{\pi} \Hil_{\pi}\,, & \Hil_\pi \cong \cK_\pi\otimes \cM 
  \end{align}
  with $\Hil_\mathrm{trivial}=\overline{\A(I)\Omega}$, where the sum is index
  by finite dimensional irreducible $\ast$-representations $\pi$ of $\CC K$ on
  $\cK_\pi$.
\end{thm}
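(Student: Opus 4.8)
The plan is to assemble the local results of this section into a global statement and then to read off the module decomposition from the representation theory of the finite-dimensional $C^\ast$-algebra $\CC K$. First I would fix an interval $I\in\cI$ and apply Theorem \ref{thm:CanonicalHypergroupFromSubfactor} to the subfactor $(A\subset B,\Omega)=(\A(I)\subset\cB(I),\Omega)$: this is legitimate since the inclusion is irreducible of finite index, its canonical endomorphism $\gamma=\iota\bar\iota$ has no multiplicities by Corollary \ref{cor:NetNoMultiplicities}, and $\Omega$ is cyclic and separating with the unique vacuum-preserving conditional expectation $E\colon\cB(I)\to\A(I)$. This produces the canonical hypergroup $K$ together with stochastic maps $\phi_k\in\Stoch_\Omega(\cB(I))$ obeying $\phi_i\circ\phi_j=\sum_k C_{ij}^k\phi_k$, $\phi_0=\id$, $\phi_k^\sharp=\phi_{\bar k}$, and $\A(I)=\cB(I)^K$.

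Next I would promote this to a proper action on the whole net. By Proposition \ref{prop:CompatibleAction} the maps $\phi_k$ extend to a compatible, vacuum-preserving family $\{\phi^{J}_k\colon\cB(J)\to\cB(J)\}_{J\in\cI}$, and by the observation following that proposition neither $K$ nor the $\ast$-representation $V\colon\CC K\to\B(\Hil)$, $V_k\colon a\Omega\mapsto\phi_k(a)\Omega$, depends on the chosen interval. Faithfulness (affine independence) is Lemma \ref{lem:Simplex} and extremality is contained in Proposition \ref{prop:Simplex}, so $\{\phi^J_k\}$ is a proper hypergroup action in the sense of Definition \ref{defi:HypergroupActionNet}. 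Applying Proposition \ref{prop:FixedPoint} inside each $\cB(I)$ gives the coincidence of the four descriptions of the fixed points; in particular $\A(I)=\cB(I)^K=\cB(I)\cap V(\CC K)'$, and therefore $\cB^K=\A$.

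For the Hilbert-space decomposition I would use that $\CC K$ is a finite-dimensional $C^\ast$-algebra, hence isomorphic to a multimatrix algebra $\bigoplus_\pi M_{\dim\cK_\pi}(\CC)$ indexed by its irreducible $\ast$-representations $\pi$ on $\cK_\pi$. Since $V$ is unital ($V_0=\id$, because $\phi_0=\id$), cutting by the minimal central projections $z_\pi\in\CC K$ yields the isotypic decomposition $\Hil=\bigoplus_\pi\Hil_\pi$ with $\Hil_\pi=V(z_\pi)\Hil\cong\cK_\pi\otimes\cM$, the multiplicity space being $\cM=\Hom_{\CC K}(\cK_\pi,\Hil)$. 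It then remains to single out the trivial component: the trivial representation corresponds to the minimal central projection $e_K$, since the Haar element is characterized by $c_k e_K=e_K$ for all $k$ and is thus the unique central projection on which every $c_k$ acts as the identity. Consequently $V(e_K)=e$ is the Jones projection onto $\overline{\A(I)\Omega}$ by Proposition \ref{prop:FixedPoint}, so $\Hil_{\mathrm{trivial}}=e\Hil=\overline{\A(I)\Omega}$, as asserted.

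The hard part is not this final bookkeeping but is already absorbed into the earlier results: that $\gamma$ has no multiplicities (Corollary \ref{cor:NetNoMultiplicities}), the affine independence and extremality of the $\phi_k$ (Lemma \ref{lem:Simplex}, Proposition \ref{prop:Simplex}), and the interval-independent compatible extension (Proposition \ref{prop:CompatibleAction}). Granting those, the only genuinely new observation needed here is that the $\CC K$-isotypic component attached to the counit is exactly the range of the Jones projection, together with the fact that $\overline{\A(I)\Omega}$ does not depend on $I$ by Reeh--Schlieder; everything else is the formal representation theory of the finite-dimensional $C^\ast$-algebra $\CC K$ applied to the unital $\ast$-representation $V$.
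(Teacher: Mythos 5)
Your proposal is correct, and its first half is exactly the paper's argument: Theorem \ref{thm:CanonicalHypergroupFromSubfactor} applied to $(\A(I)\subset\cB(I),\Omega)$, Proposition \ref{prop:CompatibleAction} for the compatible, interval-independent extension, and Proposition \ref{prop:FixedPoint} for $\A(I)=\cB(I)^K=\cB(I)\cap V(\CC K)'$. Where you genuinely diverge is the Hilbert-space decomposition. You derive it abstractly, by cutting $\Hil$ with the minimal central projections of the finite-dimensional C${}^\ast$-algebra $\CC K$; the paper instead proceeds concretely: every $b\in\cB(I)$ is written uniquely as $b=\sum_{\rho\prec\theta}\sum_i \iota(a_{\rho,i})\psi_{\rho,i}$ with $\{\psi_{\rho,i}\}$ an orthonormal basis of the charged intertwiner space $\Hom(\iota,\iota\rho)$, the irreducible representations are realized as $\pi_\rho(k)\psi_{\rho,j}=\phi_k(\psi_{\rho,j})$ on $\Hom(\iota,\iota\rho)$, and $\Hil_{\pi_\rho}=\bigoplus_i\overline{\iota(A)\psi_{\rho,i}\Omega}$. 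The concrete route buys two things your abstract route leaves implicit. First, it shows that \emph{every} irreducible representation of $\CC K$ actually occurs in $V$; in your version this requires knowing that $V$ is faithful, which does follow from Lemma \ref{lem:Simplex} (affine independence plus unitality gives linear independence of the $\phi_k$, and the separating property of $\Omega$ transfers this to the $V_k$), but you never say it, and without it some $V(z_\pi)\Hil$ could a priori vanish, contradicting the claim that each $\Hil_\pi\cong\cK_\pi\otimes\cM$ with a common nonzero $\cM$. Second, it identifies the index set of the decomposition with the sectors $\rho\prec\theta$ (consistently with Remark \ref{rmk:Dual}, $\CC K\cong\Hom(\theta,\theta)$) and exhibits the multiplicity spaces concretely as $\overline{\iota(A)\psi_{\rho,i}\Omega}$, which is what justifies writing a single $\cM$ for all $\pi$ — a point your formula $\cM=\Hom_{\CC K}(\cK_\pi,\Hil)$, which depends on $\pi$, glosses over. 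Your identification of the trivial component via the minimal central projection $e_K$ and $V(e_K)=e$ is correct and matches the remark following Proposition \ref{prop:FixedPoint}; in the paper's proof this is simply the $\rho=\id$ term of the sum.
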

\begin{proof}
  By Theorem \ref{thm:CanonicalHypergroupFromSubfactor}, we get an action on
  $B=\cB(I)$, which extends to the net by Proposition
  \ref{prop:CompatibleAction}.

  We have that every element in $B$ can be uniquely written as
  $b=\sum_{\rho\prec \theta}\sum_{i=1}^{\langle\iota,\iota\rho\rangle}
  \iota(a_{\rho,i})\psi_{\rho,i}$, where $\{\psi_{\rho,i}: i =1,\ldots, \langle
    \iota,\iota\rho\rangle\}$ is a orthonormal basis of the finite dimensional
  Hilbert space $\Hom(\iota,\iota\rho)$.
  Then there is a $\ast$-representation $\pi_\rho$ of $\CC K$ on the finite
  dimensional Hilbert space $\Hom(\iota,\iota\rho)$ given by bilinearly
  extending $\pi_\rho(k)\psi_{\rho,j}=\phi_k(\psi_{\rho,j})$.
  Then $\Hil_{\pi_{\rho}} =\bigoplus_{i=1}^{\langle
    \iota,\iota\rho\rangle}\overline{\iota(A)\psi_{\rho,i}\Omega}$
  and $\Hil=\bigoplus_{\rho\prec\theta}\Hil_{\pi_\rho}$ gives the decomposition.
  On the dense domain $B\Omega$ we have 
  \begin{align}
    V(k)b\Omega &\equiv 
      \sum_{\rho\prec\theta}\phi_k(\iota(a_{\rho,i})\psi_{\rho,i})\Omega
    =\sum_{\rho\prec\theta} \iota(a_{\rho,i}))\psi_{\rho,i}\Omega\,,
    &b=\sum_{\rho\prec \theta}\sum_{i=1}^{\langle\iota,\iota\rho\rangle} 
      \iota(a_{\rho,i})\psi_{\rho,i}\,.
  \end{align}
\end{proof}
\begin{rmk} 
  \label{rmk:Dual}
  We remark that $\CC K$ is isomorphic to the algebra
  $\Hom(\theta,\theta)=\theta(A)'\cap A$ and $\pi_\rho$ is the representation
  on the Hilbert space $\Hom(\rho,\theta)$.  Using the Fourier transformation,
  we get $\CC K \cong \Hom(\gamma,\gamma) =\gamma(B)'\cap B$ as vector spaces 
  (where the multiplication is a convolution product). In the special case, that
  $\theta$ has no multiplicities we get a hypergroup $\hat K$ for the
  inclusion $\gamma(B)\subset \iota(A)$, such that $\CC \hat K\cong
  \Hom(\gamma,\gamma)$. If $\hat K$ is a (necessarily abelian finite) group,
  then $\Hil$ is graded by $\hat K$ and we get the usual Fourier transformation
  (Pontryagain dualiy) for a fixed point $A=B^G\subset B$ of an action of a 
  finite abelian group $G=K$.
\end{rmk}
\subsection{Examples}
The easiest non-trivial hypergroup $K=\{c_0=1,c_1=c\}$ has two elements and is
generated by $c_1$. Since it is generated by $c_1$ multiplication with $c_1$
defines a \textbf{Markov chain} on $K$ see Figure \ref{fig:Markov}.  It is
given by:
\begin{align}
  c_1c_1 = \frac 1{d}\, c_0 + \frac{d-1}{d}\,c_1\,. 
\end{align}
It arises from an inclusion of a subfactor $(A\subset B,\Omega)$ (or 
a finite index inclusion of conformal nets choosing 
$A=\A(I)\subset B:=\cB(I)$) with $d=[B:A]-1$,
if and only if the canonical endomorphism $\gamma$ is given by 
$[\gamma]=[\beta_0=\id_B] \oplus [\beta_1]$ with
$\beta:=\beta_1$ irreducible and $d\beta=d$.  In other words, we assume that 
$\langle \gamma,\gamma\rangle = \langle \theta,\theta\rangle =2$. This means
the subfactor $A\subset B $ is $n$-supertransitive (as defined in
\cite{JoMoSn2014}) for some $n\geq 2$.
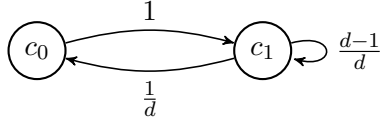
\begin{figure}[h]
  \begin{center}
    \begin{tikzpicture}[->, >=stealth', auto, semithick, node distance=3cm,
      bend angle=15]
      \tikzstyle{every state}=[fill=white,draw=black,thick,text=black,scale=1,
       minimum size=1mm]
      \node[state]    (A)                     {$c_0$};
      \node[state]    (B)[right of=A]   {$c_1$};
      \path
        (A) edge[bend left]   node{$1$}     (B)
        (B) edge[loop right]  node{$\frac{d-1}d$} (B)
        (B) edge[bend left]   node{$\frac1d$}   (A) ;
    \end{tikzpicture}
    \caption{Markov chain of hypergroup $K=\{c_0,c_1\}$ obtained by multiplying with $c_1$}
    \label{fig:Markov}
  \end{center}
\end{figure}
In the case of an action of $K=\{c_0=1,c_1=c\}$ on a completely rational net
$\cB$, we get by Corollary \ref{cor:SinglyGenerated} that 
\begin{align}
  \A(I)^K=\A(I)^{\phi^I} \,,
\end{align} 
where $\phi^I=\phi_1^I$ and $\phi^I_0=\id_{\cB(I)}$ are the stochastic maps.
In Table \ref{table:TwoElements} we list some known examples. 
Many of the examples
come from intermediate inclusion as in Corollary \ref{cor:MaximalSubgroup}.

\begin{table}[h]
  \begin{tabular}{lll}
    $d$ & $\cB^K\subset \cB$ & $K$\\
    \hline
    $(3+\sqrt5)/2%
    \approx2.62$ 
      & $\grF_{4,1}\times \grG_{2,1} \subset \grE_{8,1}$ 
      & $\frac12A_4$\\
    $2+\sqrt3\approx3.73$ 
      & $\SU(2)_{10}\subset \Spin(5)_1$ 
      & $E_6\CS A_3$\\
    $2+\sqrt3\approx3.73$ 
      & $\cB \subset \Spin(16)_1$ 
      & $(\ZZ_2+2)\CS \ZZ_2$ see Example \ref{ex:e6}\\
    $(5+\sqrt{21})/2\approx 4.79$ & $G_{2,3}\subset E_{6,1}$& 
    $(\ZZ_3+3)\CS \ZZ_3$\\
$%
(11+3\sqrt{13})/2\approx 10.91$ & $? \subset \grE_{6,1}\times \SU(3)_1$ 
  & $(\ZZ_3^2+9)\CS (\ZZ_3^2)$ see Problem \ref{prob:Haagerup}\\
      $5+2\sqrt 6 \approx 18.80$ & $G_{2,4}\subset \Spin(14)_1$ & $F\CS \ZZ_4$
      \cite[Fig.\ 27/28]{EvPu2015}\\
  \end{tabular}
  \caption{Examples of generalized orbifolds by a hypergroup $K=\{c_0,c_1\}$
  with $c_1c_1=d^{-1}\cdot c_0+(d-1)d^{-1}\cdot c_1$.}
\label{table:TwoElements}
\end{table}

One can easily check that the matrix
\begin{align}
  \label{eq:FourierMatrix}
  (\chi_i^m)=
  \begin{pmatrix}
    1 & 1 \\
    1 & -d^{-1}
  \end{pmatrix}
\end{align}
gives the characters of the hypergroup, 
\ie  $c_i\mapsto \chi^m_i$ gives a one-dimensional representation for
very $m=0,1$. It follows as in Remark \ref{rmk:Dual} that the dual hypergroup
$\hat K$ can be identified with $K$ and that the matrix
(\ref{eq:FourierMatrix}) defines a bicharacter.  This fits together with the
general theory for commutative finite hypergroups \cite{Wi1997}.

It follows that 
\begin{align}
  \phi^I_1(a+b)&=a-d^{-1} b\,, &a\in \A(I)\,, b\in \cB(I)\ominus \A(I)\,.
\end{align}
Namely, every element $b\in\cB(I)$ can be written as $b=a_0\psi_0+a_1\psi_1$
with $a\in \A(I)$, where $\psi_i\in\Hom(\iota,\iota\rho_i)$ ($\psi_0=1$) where
the dual canonical endomorphism equals $[\theta]=[\rho_0=\id_{\A(I)}]\oplus
[\rho_1]$.
We have $\phi^I_i(\psi_m)=\chi^m_j \psi_m$. For a general element we get:
\begin{align}
  \phi^I_i\left(b \right)&=\sum_ma_m\chi^m_j \psi_m \,,& b&=\sum_m a_m\psi_m\,,
  \quad
  a_m\in \A(I)\,.
\end{align}
We will study the general harmonic analysis in a future publication.

We can ask if $K$ comes from a fusion ring%
\footnote{
  The author thanks V.F.R\ Jones and D.\ Penneys for asking such a
  question.
}
as characterized in Proposition \ref{prop:HypergroupFromFusionRing}.
But in general it turns out to be no fusion ring.  Let us consider
$K=\{c_0,c_1\}$ which we rescales with $\tilde c_1=\lambda \cdot c_1$ with
$\lambda\in(0,\infty)$,
as follows:
\begin{align}
  \tilde c_1 \tilde c_1 =  \frac{\lambda^2}w \cdot c_0 + \frac{(w-1)\lambda}w
  \cdot \tilde c_1\,.
\end{align}
The canonical choice to look like a fusion ring is $\tilde c_1 \tilde c_1=c_0+
\mu\cdot \tilde c_1$, \ie $\lambda=\sqrt{w}$.
The following example shows that $\mu$ is not necessarily an integer.
\begin{example}
  \label{ex:e6}
  Consider the near group fusion ring of the even part of the $E_6$ subfactor
  $F$ (which is $\ZZ_2+2$ in the notation of the following Lemma
  \ref{lem:HypergroupNoFusionRing}) and let $K=K_F\CS \ZZ_2=\{c_0,c_1\}$ (\cf
  Corollary \ref{cor:MaximalSubgroup}), which is realized by conformal nets as
  follows. Let $\cB$ be the net realizing the double of the even part of the
  $E_6$ subfactor from \cite{Bi2015} which is a $\ZZ_2$-simple current
  extension of $\A_{\SU(2)_{10}}\otimes \A_{\Spin(11)_1}$. This nets embeds
  into $\A_{\grE_{8,1}}$:
  \begin{align}
    \cB &\subset \A_{\Spin(16)_1}\subset \A_{\grE_{8,1}}\,, &
    \cB&:=(\A_{\SU(2)_{10}}\otimes \A_{\Spin(11)_1})\rtimes \ZZ_2 \,.
  \end{align}
  Then the inclusions are generalized orbifolds as follows:
  \begin{align}
    \A_{\Spin(16)_1}&=\A_{E_8}^{\ZZ_2}\,, &
    \cB &= \A_{E_8}^{K_F}\,,&
    \cB &= \A_{\Spin(16)_1}^{K_F\CS \ZZ_2}
    \,.
  \end{align}
  In this case, we get $\tilde c_1 \tilde c_1=c_0+\sqrt{2}\cdot \tilde c_1$
  which is not integral.
  
  The lattice of intermediate nets of 
  $\A_{\SU(2)_{10}}\otimes \A_{\Spin(11)_1}\subset \A_{\grE_{8,1}}$
  is given as follows:
  $$
    \tikzmath[0.4]{
      \node (E8)  at (0,0) {$\A_{\grE_{8,1}}$};
      \node (D8) at (0,-4) {$\A_{\grE_{8,1}}^{\ZZ_2}=\A_{\Spin(16)_1}$};
      \node (Ising) at (-8,-8)
      {$\A_{\grE_{8,1}}^{K_{A_3}}=\A_{\Spin(5)_1\times\Spin(11)_1}$};
      \node (E62) at (8,-8) {$\A_{\grE_{8,1}}^{K_F}=\cB$};
      \node (E6) at (0,-12) {$\A_{\grE_{8,1}}^{K_{E_6}}=\A_{\SU(2)_{10}
        \times \Spin(11)_1}$};
      \draw (E8)-- node [right] {$\scriptstyle \ZZ_2$}(D8) (D8)--node 
        [above left] {$\scriptstyle \ZZ_2$}(Ising) (D8)--node [above right]
        {$\scriptstyle K$}(E62) (E62)-- node [below right] {$\scriptstyle
        \ZZ_2$}(E6) (Ising)-- node [below left] {$\scriptstyle K$}(E6);
  }\,,
  $$
  where $K=\{c_0,c_1\}$ is the hypergroup from above
  and $K_{A_3},K_{E_6}$ are the hypergroups corresponding to the respective 
  fusion rings.
\end{example}
More general, for quotients of near group fusion rings by the group we get the
following normalization.
\begin{lem} 
  \label{lem:HypergroupNoFusionRing}
  Consider the near group fusion ring $F$ of type $G+m$,
  \ie $F=G\cup \rho$ with $[\rho]^2=\sum_{g\in G} [g]+m[\rho]$.
  Then the hypergroup $K =K_F\CS G=\{c_0,c_1\}$ is given by
  \begin{align}
    c_1c_1&=\frac 1w \cdot c_0+ \frac{w-1}w \cdot c_1\,, 
    & w =\frac{m \sqrt{m^2+4 n}+m^2+2 n}{2 n}\,,\\
    \intertext{which can be rescaled to be:}
    \tilde c_1\tilde c_1&=c_0+\frac{m}{\sqrt n} \cdot \tilde c_1\,,\\
    \hat c_1\hat c_1&=n\cdot c_0+m \cdot \hat c_1\,.
    \label{eq:PsFusionRules}
  \end{align}
\end{lem}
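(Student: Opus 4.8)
The plan is to carry out a direct computation inside the hypergroup $K_F$ and then project onto the double-coset algebra $e_G\CC K_F e_G$. First I would record the basic data. Writing $n=|G|$, the simple objects of $F$ are the group elements, each of dimension $1$, together with $\rho$, whose dimension $d=d_\rho$ satisfies $d^2=n+md$ upon applying the positive dimension character (Lemma \ref{lem:FP}) to $[\rho]^2=\sum_{g\in G}[g]+m[\rho]$; hence $d=\tfrac12(m+\sqrt{m^2+4n})$. By Proposition \ref{prop:HypergroupFromFusionRing} the associated hypergroup $K_F$ has basis $c_g=[g]$ ($g\in G$) and $c_\rho=d^{-1}[\rho]$, structure constants $C_{ij}^k=\frac{d_k}{d_id_j}N_{ij}^k$, and weights $w_i=d_i^2$, so that $w_g=1$ and $w_\rho=d^2$.

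Next I would identify the double cosets. The group elements form a subhypergroup $G\subset K_F$ with Haar element $e_G=\frac1n\sum_{g\in G}c_g$. From $[g][\rho]=[\rho][g]=[\rho]$ one gets $c_gc_\rho=c_\rho c_g=c_\rho$, whence $e_Gc_\rho=c_\rho e_G=c_\rho$, while $e_Gc_ge_G=e_G$. Thus $K_F\CS G=\{c_0:=e_G,\ c_1:=e_Gc_\rho e_G=c_\rho\}$ has exactly two elements, and is a hypergroup by Proposition \ref{prop:DCSisHypergroup}.

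The core step is the single product $c_1c_1$. Since $e_Gc_\rho=c_\rho e_G=c_\rho$ we have $c_1c_1=e_G(c_\rho c_\rho)e_G$, and the fusion rule $[\rho]^2=\sum_g[g]+m[\rho]$ translates, via $C_{\rho\rho}^k=\frac{d_k}{d^2}N_{\rho\rho}^k$, into $c_\rho c_\rho=\sum_{g\in G}\frac1{d^2}c_g+\frac md c_\rho$. Projecting with $e_G(\slot)e_G$ and using $e_Gc_ge_G=c_0$, $e_Gc_\rho e_G=c_1$ yields $c_1c_1=\frac{n}{d^2}c_0+\frac md c_1$. Since $C^0_{11}=\frac{n}{d^2}>0$ the axioms force $\bar1=1$, so $c_1$ is self-adjoint and its weight is $w=(C^0_{11})^{-1}=d^2/n$; comparing coefficients gives $c_1c_1=\frac1w c_0+\frac{w-1}w c_1$, and the identity $\frac{w-1}w=\frac md$ is precisely $d^2-n=md$, the defining equation for $d$. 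Substituting $d=\tfrac12(m+\sqrt{m^2+4n})$ into $w=d^2/n$ and expanding the square produces $w=\frac{m\sqrt{m^2+4n}+m^2+2n}{2n}$. (As a check, $D(K_F)=D(G)\,D(K_F\CS G)$ forces $1+w=\frac{2n+md}{n}$, which agrees.)

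Finally the two rescalings are bookkeeping against the rescaling identity already displayed before the lemma: for $\tilde c_1=\lambda c_1$ one has $\tilde c_1\tilde c_1=\frac{\lambda^2}{w}c_0+\frac{\lambda(w-1)}{w}\tilde c_1$. Taking $\lambda=\sqrt w=d/\sqrt n$ normalizes the $c_0$-coefficient to $1$ and gives coefficient $\frac{w-1}{\sqrt w}=\frac{md/n}{d/\sqrt n}=\frac m{\sqrt n}$, establishing the second displayed line; taking $\lambda=d$ (i.e. $\hat c_1=[\rho]$) gives $\frac{\lambda^2}w=n$ and $\frac{\lambda(w-1)}w=d\cdot\frac md=m$, establishing \eqref{eq:PsFusionRules}. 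The only point demanding care is keeping the factors $\frac{d_k}{d_id_j}$ straight when passing between $F$ and $K_F$; once $w=d^2/n$ is in hand, everything is forced by $d^2-n=md$, so there is no substantive obstacle — the statement is a normalization computation.
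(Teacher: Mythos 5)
Your proposal is correct, and it is precisely the computation the paper leaves implicit: the lemma is stated without proof as a normalization exercise, and the ingredients you use (Proposition \ref{prop:HypergroupFromFusionRing} for the passage $F\mapsto K_F$, Proposition \ref{prop:DCSisHypergroup} together with $e_Gc_\rho=c_\rho e_G=c_\rho$ to identify $K_F\CS G=\{e_G,c_\rho\}$, and the rescaling identity $\tilde c_1\tilde c_1=\frac{\lambda^2}{w}c_0+\frac{(w-1)\lambda}{w}\tilde c_1$ displayed just before Example \ref{ex:e6}) are exactly the ones the text sets up for this purpose. All the arithmetic checks out, including $w=d^2/n$ with $d^2=n+md$ and the consistency check $D(K_F)=D(G)\,D(K_F\CS G)$.
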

We note that equation (\ref{eq:PsFusionRules}) is exactly the polynomial whose
positive solution is $d\rho$ and that although we do not get a fusion ring we
can still get a based ring over $\ZZ$.
It would be interesting if this has a deeper reason when there is
a realization by conformal nets. Particularly, $K=\{c_0,c_1\}$ is self-dual and
the dual hypergroup plays a certain rules for the fusion of charged fields
giving the extension.

Similarly, for the Haagerup--Izumi fusion categories, we have:
\begin{lem} 
  Consider the Haagerup--Izumi fusion ring $F=G\cup\{g\rho\}_{g\in G}$ with
  $|G|=n$ and fusion rules $[g][\rho]=[g^{-1}][\rho]$, $[\rho]^2=
  [e]+\sum_{g\in G} [g\rho]$.
  Then $K=K_F\CS G=\{c_0,c_1\}$ with:
  \begin{align}
    c_1c_1&=\frac 1w \cdot c_0+ \frac{w-1}w \cdot c_1\,, & w
    =\frac{2+n^2+n\sqrt{n^2+4}}2\,, 
    \intertext{which can be rescaled to be:}
      \tilde c_1\tilde c_1&=c_0+n \cdot \tilde c_1\,.
    \label{eq:FusionRules}
  \end{align}
\end{lem}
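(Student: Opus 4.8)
The plan is to compute the double coset hypergroup $K=K_F\CS G$ directly, in the same spirit as the preceding Lemma \ref{lem:HypergroupNoFusionRing}. Write $d:=d\rho$ for the Frobenius--Perron dimension of $\rho$; since $d([g])=1$ and $d([g\rho])=d$ for all $g\in G$, the subgroup $G\subset K_F$ has weight $D(G)=n$ and Haar element $e_G=\frac1n\sum_{g\in G}[g]$ (recall $c_g=[g]$ as $d_g=1$). First I would verify that $K$ really has only the two claimed elements. As $G$ is a group we have $[h]e_G=e_G[h]=e_G$, so the group part collapses to a single coset $e_G[g]e_G=e_G=:c_0$. Using $c_\rho=d^{-1}[\rho]$, $[g][\rho]=[g\rho]$, and the commutation relation between $\rho$ and $G$, i.e.\ $[\rho][k]=[k^{-1}][\rho]$, a short computation gives
\begin{align}
  c_1:=e_Gc_\rho e_G=\frac1{nd}\sum_{m\in G}[m\rho]=\frac1n\sum_{m\in G}c_{m\rho}\in\Conv(K_F),
\end{align}
independently of the representative $g\rho$. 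Since the $2n$ simple objects split into the $n$ group elements and the $n$ elements $g\rho$, this shows $K=\{c_0,c_1\}$.

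The core of the argument is one fusion computation. Put $S:=\sum_{m\in G}[m\rho]=nd\,c_1$ and $T:=\sum_{h\in G}[h]=n\,c_0$. Using $[\rho][m']=[(m')^{-1}][\rho]$ and $[\rho]^2=[e]+S$ one finds, for each pair $m,m'\in G$,
\begin{align}
  [m\rho][m'\rho]=[m(m')^{-1}]\,[\rho]^2=[m(m')^{-1}]\big([e]+S\big)=[m(m')^{-1}]+S,
\end{align}
where the last equality uses $[m(m')^{-1}]\,S=S$ by reindexing the sum. Summing over all $m,m'$ and noting that each value of $m(m')^{-1}$ is attained by exactly $n$ pairs gives $S^2=nT+n^2S$. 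Dividing by $(nd)^2$ and substituting $T=n\,c_0$, $S=nd\,c_1$ yields
\begin{align}
  c_1c_1=\frac1{d^2}\,c_0+\frac nd\,c_1.
\end{align}

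It then remains to read off the hypergroup data. Each $[g\rho]$ is self-conjugate (as $\bar\rho=\rho$ and $\overline{[g\rho]}=[\rho][g^{-1}]=[g\rho]$), so $c_1$ is self-adjoint and $w=w_1=(C_{11}^0)^{-1}=d^2$. Applying $d$ to $[\rho]^2=[e]+\sum_g[g\rho]$ gives the dimension equation $d^2=nd+1$, which simultaneously certifies $d^{-2}+n/d=1$ (as required for a hypergroup) and, with $d=\tfrac12(n+\sqrt{n^2+4})$, produces $w=nd+1=\tfrac12\big(2+n^2+n\sqrt{n^2+4}\big)$. Finally, rescaling $\tilde c_1=\sqrt w\,c_1=d\,c_1$ turns $c_1c_1=d^{-2}c_0+(n/d)c_1$ into $\tilde c_1\tilde c_1=c_0+n\,\tilde c_1$, as claimed. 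The only genuinely delicate point is the bookkeeping of the normalizations $d$ and $n$ and checking that $e_Gc_\rho e_G$ is independent of the chosen representative; the fusion manipulations themselves are routine once the twist relation $[\rho][k]=[k^{-1}][\rho]$ is in hand.
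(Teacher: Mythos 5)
Your proof is correct. Note that the paper itself states this lemma \emph{without} proof (it follows the near-group lemma with only the phrase ``Similarly, for the Haagerup--Izumi fusion categories''), so there is no paper argument to compare against; your computation is precisely the routine verification the author omits: collapsing $G$ to the single coset $c_0=e_G$, checking $c_1=e_Gc_\rho e_G=\frac{1}{nd}\sum_{m\in G}[m\rho]$ is independent of the representative, deriving $S^2=nT+n^2S$ from the twist relation, and dividing by $(nd)^2$ to get $c_1c_1=\frac{1}{d^2}c_0+\frac{n}{d}c_1$. All the bookkeeping checks out: $\frac{1}{d^2}+\frac{n}{d}=1$ by the dimension equation $d^2=nd+1$, hence $w=d^2=nd+1=\frac{2+n^2+n\sqrt{n^2+4}}{2}$, and the rescaling $\tilde c_1=d\,c_1$ yields $\tilde c_1\tilde c_1=c_0+n\,\tilde c_1$ as claimed.
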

We remark that the fusion rules (\refeq{eq:FusionRules}) have categorifications
(in terms of fusion categroies) only for $n=0,1$ \cite{Os2003-Rank2}, while the 
Izumi--Haagerup categories are shown to exist for many $n$, including 
$n=9$ \cite{EvGa2011}. It is important to remark that the ``rules''
(\refeq{eq:PsFusionRules},\refeq{eq:FusionRules}) do not have a direct 
interpretation of fusion rules.

We give the following problem analogue to the construction of the Haagerup VOA
proposed in \cite{EvGa2011}.
\begin{problem} 
  \label{prob:Haagerup}
  For $w=(11+3\sqrt{13})/2$, find a proper action of the hypergroup
  $K=\{c_0,c_1 =w^{-1/2}\tilde c_1\}$ with $\tilde c_1\tilde c_1=c_0+3\tilde
  c_1$ on the net $\A_{A_2\times E_6}$ associated with the even lattice of
  $A_2\times E_6$, such that $\Rep(\A_{A_2\times E_6}^K)$ is braided equivalent
  to the quantum double of the Haagerup subfactor.

 It is basically enough to construct a non-trivial self-adjoint extremal 
 stochastic map
 $\phi\in\Stoch_\Omega(A)$ on $A=\A_{A_2\times E_6}(I)$ with
 $\phi\circ\phi=w^{-1}\id_{A}+(1-w^{-1})\phi$. which is compatible with the net
 structure.
\end{problem}

\section{Commutative Q-systems in Unitary Modular Tensor Categories
and Inclusions of Completely Rational Conformal Nets}
\label{sec:CommutativeQSystems}

\subsection{Quantum Double Subfactors and Lagrangian Q-systems}
In this subsection we want to show that every Lagrangian Q-system comes from a
Longo--Rehren subfactor, see Section \ref{sec:LR}. We saw that if a UMTC $\cC$
is braided equivalent to $Z(\cF)$ for a UFC $\cF$ then it contains a Lagrangian
Q-system, namely $\Theta$ from the Longo--Rehren inclusion.
The converse is also true. 

Let $\Theta$ be a Lagrangian Q-system in a UMTC $\cC=\bim A \cC A \subset
\End(A)$.
Consider the category generate by $\alpha^+$-induction $\cD_+=\bim[+] B \cC B$,
which is equivalent to the category of modules $\cC_\Theta\cong \bim A \cC B$,
\cite{BiKaLoRe2014,BiKaLoRe2014-2}. 
Let $\rho\in\bim A \cC A$, then $\alpha^+_\rho$ has a natural half-braiding
$\cE_\rho$. Therefore it lifts to the center  $Z(\bim[+] B\cC B)$. 
We get an equivalence
\begin{align}
  \bim A\cC A \to Z(\bim[+] B\cC B) : \rho \mapsto (\alpha^+_\rho,\cE_\rho)
\end{align}
and it follows.
\begin{prop} Let $\Theta$ be a Lagrangian Q-system in a UMTC $\cC$. 
  Then $\cC$ is braided equivalent to $Z(\cD_+)$.
  In particular, a UMTC $\cC$ admits a Lagrangian Q-system if and only if $\cC$
  is braided equivalent to $Z(\cF)$ for some UFC $\cF$. 
\end{prop}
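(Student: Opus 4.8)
The plan is to show that the tensor functor $F\colon \cC=\bim A\cC A \to Z(\cD_+)$, $\rho\mapsto(\alpha^+_\rho,\cE_\rho)$ constructed just above is a braided equivalence. First I would check that $F$ is a braided tensor functor: the tensor structure is inherited from the multiplicativity $\alpha^+_\rho\alpha^+_\sigma\cong\alpha^+_{\rho\sigma}$ of $\alpha^+$-induction, and the assignment $\rho\mapsto\cE_\rho$ of half-braidings is compatible with composition, so $F$ indeed lands in $Z(\cD_+)$. The braiding on $Z(\cD_+)$ is by definition given by the half-braidings, so to see that $F$ intertwines it with the braiding $\varepsilon$ of $\cC$ I would invoke the standard identity relating $\cE_\rho$ evaluated on $\alpha^+_\sigma$ to $\alpha^+$ applied to $\varepsilon(\rho,\sigma)$ (the relative-braiding property of $\alpha^+$-induction), so that $F(\varepsilon(\rho,\sigma))$ is exactly the center braiding of $F(\rho)$ and $F(\sigma)$.

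Next I would prove that $F$ is fully faithful. Since $\alpha^+$-induction is a functor we always have an inclusion $\Hom_\cC(\rho,\sigma)\hookrightarrow\Hom_{\cD_+}(\alpha^+_\rho,\alpha^+_\sigma)$, and a morphism in $\Hom_{Z(\cD_+)}(F(\rho),F(\sigma))$ is such an $\alpha$-induced intertwiner that in addition commutes with all half-braidings. The key point is that this last constraint cuts the $\alpha$-induced Hom-space back down to exactly $\Hom_\cC(\rho,\sigma)$: a morphism commuting with every $\cE_\tau$ has trivial monodromy against all of $\cC$, and the nondegeneracy (modularity) of the braiding on $\cC$ then forces it to come from $\cC$ itself. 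I expect this to be the main obstacle, namely identifying the center condition with the trivial-monodromy (``$\alpha^+=\alpha^-$'') condition and invoking modularity; this is where the modular and Lagrangian hypotheses are genuinely used.

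Finally, essential surjectivity I would obtain from a dimension count rather than a direct construction. As $F$ is fully faithful and $\dim\alpha^+_\rho=\dim\rho$, it is injective on isomorphism classes of simples and its image is a full replete fusion subcategory of $Z(\cD_+)$ of global dimension $\Dim\cC$. On the other hand $\Dim Z(\cD_+)=(\Dim\cD_+)^2$, and using $\cD_+\cong\cC_\Theta$ with the dimension count $\Dim\cC_\Theta=\Dim\cC/\dim\theta$ together with the Lagrangian condition $(\dim\theta)^2=\Dim\cC$ gives $\Dim Z(\cD_+)=\Dim\cC$. Hence the image exhausts $Z(\cD_+)$ and $F$ is an equivalence, proving $\cC\cong Z(\cD_+)$ as braided categories. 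For the ``in particular'' statement this is the nontrivial implication; the converse — that any $\cC$ braided equivalent to $Z(\cF)$ admits a Lagrangian Q-system — is already furnished by the Longo–Rehren construction of Section \ref{sec:LR}, where $\Theta$ was exhibited as a Lagrangian Q-system in $\bim A\cG A\cong Z(\cF)$, so combining the two directions yields the asserted equivalence.
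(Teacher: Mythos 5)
Your proposal is correct, but it takes a genuinely different route from the paper. The paper's proof is a one-line reduction to a heavy imported theorem: it invokes the general decomposition $Z(\bim[+]B\cC B)\cong \bim A\cC A\boxtimes \rev{\bim[0]B\cC B}$ (the statement $Z(\cC_A)\cong \cC\boxtimes\rev{\cC_A^0}$ of Davydov--M\"uger--Nikshych--Ostrik, recalled in Appendix \ref{app:TensorCategories}), and the Lagrangian condition then kills the second factor, since $\Dim\bigl(\bim[0]B\cC B\bigr)=\Dim\cC/(d\theta)^2=1$. You instead prove directly that the specific functor $\rho\mapsto(\alpha^+_\rho,\cE_\rho)$ is a braided equivalence: braidedness of the functor, full faithfulness from nondegeneracy of $\cC$, and essential surjectivity from the count $\Dim Z(\cD_+)=(\Dim\cC_\Theta)^2=(\Dim\cC/d\theta)^2=\Dim\cC$, which is the same dimension arithmetic as the paper's, just deployed at the other end. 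What your route buys is a more self-contained argument tailored to the Lagrangian case, avoiding the full $\boxtimes$-decomposition theorem; what it costs is that your full-faithfulness step is only sketched, and it is exactly the nontrivial content that the paper's cited theorem encapsulates. Your sketch does identify the right mechanism: writing a center morphism as an element of $\Hom(\theta\rho,\sigma)$ via Frobenius reciprocity, the compatibility with all half-braidings forces each component $t_i\in\Hom(\tau_i\rho,\sigma)$, $\tau_i\prec\theta$, to have trivial monodromy with every object of $\cC$, and nondegeneracy then leaves only the $\tau_i\cong\id$ component, i.e.\ $t\in\Hom_\cC(\rho,\sigma)$; alternatively you could simply quote the standard fact that any braided tensor functor out of a nondegenerate braided fusion category is an embedding. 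Both treatments of the ``in particular'' converse coincide (the Longo--Rehren Q-system of Section \ref{sec:LR}), so modulo fleshing out that one step your proof is complete.
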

\begin{proof}
  We have 
  \begin{align}
    Z(\bim[+]B\cC B) 
    &\cong \bim A\cC A\boxtimes \rev{\bim[0]B\cC B}\cong \bim A\cC A
  \end{align} 
  thus $\cF=\bim[+]B\cC B$ does the job.
\end{proof}
\begin{lem} 
  Let $\Theta\in\bim A\cC A$ be a Lagrangian Q-system and $A\subset B$ be the 
  corresponding subfactor.
  Then 
  \begin{align}
    \bim A\cC A \to Z(\bim[+]B\cC B) :  \rho\mapsto (\alpha^+_\rho,\cE_\rho)
  \end{align}
  is a braided equivalence.
\end{lem}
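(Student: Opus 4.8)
The plan is to verify that $G\colon\rho\mapsto(\alpha^+_\rho,\cE_\rho)$ is a braided tensor functor $\bim A\cC A\to Z(\bim[+]B\cC B)$ and then to deduce that it is fully faithful and essentially surjective. First I would recall from the theory of $\alpha$-induction that, because $\cC=\bim A\cC A$ is braided, the positive braiding $\varepsilon^+$ equips each $\alpha^+_\rho$ with a natural half-braiding $\cE_\rho=\{\cE_\rho(\beta)\}_{\beta\in\bim[+]B\cC B}$, determined on generators by the braiding of $\alpha^+_\rho$ with the $\alpha^+_\sigma$. One checks that $(\alpha^+_\rho,\cE_\rho)$ is a genuine object of the center, that $G$ is monoidal (using $\alpha^+_\rho\alpha^+_\sigma=\alpha^+_{\rho\sigma}$ and compatibility of $\cE$ with the tensor product of half-braidings), and that $G$ carries $\varepsilon(\rho,\sigma)$ to the center braiding of $(\alpha^+_\rho,\cE_\rho)$ with $(\alpha^+_\sigma,\cE_\sigma)$, so that $G$ is braided. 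This is exactly the lift of $\alpha^+$-induction to the center that is invoked in the proof of the preceding Proposition.

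Next I would record faithfulness and dimension preservation. Composing $G$ with the forgetful functor $Z(\bim[+]B\cC B)\to\bim[+]B\cC B$ gives back $\alpha^+$, which is faithful; hence $G$ is faithful, and since $d\alpha^+_\rho=d\rho$ the functor $G$ preserves dimensions of objects and sends irreducibles to irreducibles.

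For the equivalence I would invoke the decomposition used in the proof of the preceding Proposition, namely the braided equivalence $Z(\bim[+]B\cC B)\cong\bim A\cC A\boxtimes\rev{\bim[0]B\cC B}$, in which the first tensor factor is embedded precisely by $G$; thus $G$ is fully faithful onto that factor. Here the Lagrangian hypothesis enters: since $(\dim\theta)^2=\Dim\cC$, and using $\bim[+]B\cC B\cong\cC_\Theta$ together with $\Dim\cC_\Theta=\Dim\cC/\dim\theta$, the ambichiral (dyslectic module) category satisfies $\Dim\bim[0]B\cC B=\Dim\cC/(\dim\theta)^2=1$, so $\bim[0]B\cC B\cong\Vect$ and the second factor is trivial. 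Therefore $G$ is fully faithful and essentially surjective, i.e.\ a braided equivalence. Equivalently, granting full faithfulness one obtains essential surjectivity directly from the dimension count $\Dim Z(\bim[+]B\cC B)=(\Dim\bim[+]B\cC B)^2=(\dim\theta)^2=\Dim\cC=\Dim\bim A\cC A$ and the fact that $G$ preserves dimensions of simple objects.

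The main obstacle is the first step: verifying in detail that the maps $\cE_\rho(\beta)$ assemble into a well-defined half-braiding satisfying the naturality and hexagon axioms of the center, and that $G$ intertwines the two braidings. These, however, are the standard structural properties of $\alpha^+$-induction, so the genuinely new ingredient is only the Lagrangian bookkeeping, which forces the ambichiral category to be trivial and thereby collapses the whole center onto the image of $G$.
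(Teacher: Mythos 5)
Your proposal is correct and takes essentially the same route as the paper, which in fact states this lemma without a separate proof: it relies on the immediately preceding observation that $\alpha^+$-induction lifts to $Z(\bim[+]B\cC B)$ via the natural half-braidings $\cE_\rho$, combined with the decomposition $Z(\bim[+]B\cC B)\cong \bim A\cC A\boxtimes\rev{\bim[0]B\cC B}$ and the Lagrangian bookkeeping $\Dim\left(\bim[0]B\cC B\right)=\Dim\left(\bim A\cC A\right)/(d\theta)^2=1$, which forces $\bim[0]B\cC B\cong\Vect$ --- exactly your argument. One small correction to your second step: ``sends irreducibles to irreducibles'' does not follow from faithfulness and $d\alpha^+_\rho=d\rho$ alone, since $\alpha^+_\rho$ is typically reducible in $\bim[+]B\cC B$ (one has $\langle\alpha^+_\rho,\alpha^+_\rho\rangle=\langle\theta\rho,\rho\rangle$, which can exceed $1$); irreducibility of $(\alpha^+_\rho,\cE_\rho)$ as an object of the center holds only once full faithfulness of the lift is established, which your decomposition argument does supply, so your dimension-count variant still closes.
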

\begin{lem}
  Let $\Theta\in\bim A\cC A\subset \End(A)$ be a commutative a commutative 
  Q-system with corresponding subfactor $A\subset B$.
  Then $\iota\rho\mapsto \alpha^\pm_\rho$ extends to an isomorphism of
  categories: 
  \begin{align}
    \bim B\cC A \to \bim[\pm]B\cC B  
  \end{align}
\end{lem}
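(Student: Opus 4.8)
The plan is to exhibit the asserted assignment as the inverse of a restriction functor and to reduce the whole statement to a single Hom-space identity. Write $\iota\colon A\to B$ for the inclusion and recall the fundamental intertwining relation of $\alpha$-induction, $\alpha^\pm_\rho\circ\iota=\iota\circ\rho$, together with the homomorphism property $\alpha^\pm_{\rho\sigma}=\alpha^\pm_\rho\alpha^\pm_\sigma$ and $\alpha^\pm_{\id_A}=\id_B$. These already show that restricting the right $B$-action of the $B$-$B$ sector $\alpha^\pm_\rho$ to $A$ along $\iota$ yields $\iota\rho$, so the restriction functor $R\colon \bim[\pm]B\cC B\to\bim B\cC A$ sends $\alpha^\pm_\rho\mapsto\iota\rho$. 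The asserted functor $\iota\rho\mapsto\alpha^\pm_\rho$ is exactly the candidate inverse of $R$, and I would prove the lemma by showing that $R$ is an isomorphism of categories.

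Next I would set up the morphism correspondence. For $\rho,\sigma\in\bim A\cC A$ any $T\in\Hom(\alpha^\pm_\rho,\alpha^\pm_\sigma)$ is an element of $B$ with $T\alpha^\pm_\rho(b)=\alpha^\pm_\sigma(b)T$ for all $b\in B$; whiskering with $\iota$ and using $\alpha^\pm_\rho\iota=\iota\rho$ shows $T\in\Hom(\iota\rho,\iota\sigma)$. This gives a canonical, composition-preserving inclusion $\Hom(\alpha^\pm_\rho,\alpha^\pm_\sigma)\subseteq\Hom(\iota\rho,\iota\sigma)$ which is literally the identity on operators in $B$. The crux is to prove this inclusion is an equality, \ie that $R$ is full; faithfulness is automatic since $R$ acts as the identity on these operator spaces.

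The hard part will be fullness: given $T\in\Hom(\iota\rho,\iota\sigma)$, so that $T\iota\rho(a)=\iota\sigma(a)T$ for all $a\in A$, I must upgrade this to $T\alpha^\pm_\rho(b)=\alpha^\pm_\sigma(b)T$ for all $b\in B$. Since $B$ is generated by $\iota(A)$ together with the charged intertwiners $\psi\in\Hom(\iota,\iota\tau)$ (for $\tau\prec\theta$) of the Q-system, and the relation already holds on $\iota(A)$ by hypothesis, it remains to verify it on each such $\psi$. From the definition $\alpha^\pm_\rho=\bar\iota^{-1}\circ\Ad(\varepsilon^\pm(\rho,\theta))\circ\rho\circ\bar\iota$ the operator $\alpha^\pm_\rho(\psi)$ is expressed through the braiding $\varepsilon^\pm(\rho,\theta)$ and the Q-system data, so the equation $T\alpha^\pm_\rho(\psi)=\alpha^\pm_\sigma(\psi)T$ becomes a diagrammatic identity in $\bim A\cC A$ that I would verify by naturality of the braiding in its first variable combined with the commutativity hypothesis $\varepsilon(\theta,\theta)x=x$ on $\Theta$. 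This is precisely the step where commutativity is indispensable: without it the braiding move on the $\theta$-strand coming from $\psi$ cannot be absorbed, and a $B$-$A$ intertwiner need not intertwine the extra generators.

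Finally I would assemble these into the isomorphism. The homomorphism property makes $\iota\rho\mapsto\alpha^\pm_\rho$ compatible with tensor products, and the Hom-space equality makes it fully faithful; since $\bim[\pm]B\cC B$ is by definition generated under direct sums and subobjects by the $\alpha^\pm_\rho$, it is essentially surjective. Because the functor is realized as the identity on the underlying intertwiner operators in $B$, it is inverse to $R$ and hence an isomorphism of categories. The extension of the object map to direct sums and subobjects is unambiguous because $\Theta$ commutative forces $\Hom(\gamma,\gamma)$ to be abelian by Proposition \ref{prop:NoMultiplicities}, so the relevant minimal projections are identified on both sides. The two signs $\pm$ are treated identically, using $\varepsilon^+$ or $\varepsilon^-$ throughout, consistent with the module-category description of $\bim[\pm]B\cC B$ recorded in \cite{BiKaLoRe2014-2}.
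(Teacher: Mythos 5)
Your proposal is correct, and in substance it follows the same route as the paper; the only real difference is that the paper disposes of the crucial step by citation, while you re-derive it. The paper's proof is two lines: the object assignment comes from $\alpha^\pm_\rho\iota=\iota\rho$, and the morphism-level statement is the equality $\Hom(\iota\rho,\iota\sigma)=\Hom(\alpha^\pm_\rho,\alpha^\pm_\sigma)$, quoted from \cite[Lemma 3.5]{BcEv1998} (see also \cite[p.~454]{BcEvKa1999}); everything else (faithfulness, direct sums, subobjects, essential surjectivity) is then automatic because these Hom spaces are literally the same subspaces of $B$. What you call the hard part---fullness---is exactly the content of that cited lemma, and your sketch names the right ingredients; it does complete, as follows. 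Since $B=\iota(A)v$, it suffices to check the intertwining relation on the single charged intertwiner $v\in\Hom(\iota,\iota\theta)$. Any $t\in\Hom(\iota\rho,\iota\sigma)$ has the form $t=\iota(r)v$ with $r\in\Hom(\theta\rho,\sigma)$, and naturality of the braiding in its second variable gives $\alpha^\pm_\rho(v)=\iota(\varepsilon^\pm(\rho,\theta)^*)v$ (no commutativity needed here). Using $v\iota(a)=\iota(\theta(a))v$ and $vv=\iota(x)v$, both sides of $t\,\alpha^\pm_\rho(v)=\alpha^\pm_\sigma(v)\,t$ become of the form $\iota(\slot)v$, and injectivity of $a\mapsto\iota(a)v$ turns the equation into $r\,\theta(\varepsilon^\pm(\rho,\theta)^*)\,x=\varepsilon^\pm(\sigma,\theta)^*\,\theta(r)\,x$. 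Naturality in the first variable, $\varepsilon^\pm(\sigma,\theta)\,r=\theta(r)\,\varepsilon^\pm(\theta\rho,\theta)$, combined with multiplicativity $\varepsilon^\pm(\theta\rho,\theta)=\varepsilon^\pm(\theta,\theta)\,\theta(\varepsilon^\pm(\rho,\theta))$, reduces this to $\theta(r)\,\varepsilon^\pm(\theta,\theta)\,x=\theta(r)\,x$, which is precisely the hypothesis $\varepsilon(\theta,\theta)x=x$ (and $\varepsilon^-(\theta,\theta)x=x$ follows from $\varepsilon^+(\theta,\theta)x=x$). So your plan is the proof of the paper's black box rather than a genuinely different argument, and correctly locates commutativity as the indispensable ingredient.

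One small correction: your appeal to Proposition \ref{prop:NoMultiplicities} to make the extension to direct sums and subobjects ``unambiguous'' is neither needed nor the relevant reason. Once the Hom spaces coincide on the nose as subspaces of $B$, the isometries and projections effecting direct sums and subobjects on one side are the very same operators on the other, which is all one needs.
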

\begin{proof} By construction $\alpha$-induction fulfills
  $\iota\rho=\alpha^\pm_\rho \iota$.
  By \cite[Lemma 3.5]{BcEv1998}, see \cite[p.\ 454]{BcEvKa1999} we have
  $\Hom(\iota\rho,\iota\sigma)=\Hom(\alpha^\pm_\rho,\alpha^\pm_\sigma)$ and the
  statement follows.
\end{proof}
The map $\alpha^+_\rho\mapsto \alpha^-_{\bar\rho}$ extends to an equivalence 
$$
  (\bim[+]B \cC B)^\op \to \bim[-] B\cC B 
  \,.%
$$
of unitary fusion categories.
\begin{prop}
  Let $\bim A\cC A$ be an UMTC and $A\subset B$  with corresponding 
  Q-system $(\theta,w,x)$ in $\bim A \cC A$.
  \begin{enumerate}
    \item If  $\Theta$ is commutative, then $\bim[+]M\cC M \cong( \bim[-]M\cC
      M)^\op$.
    \item If $\Theta$ is Lagrangian, then $\bim M\cC M\cong \bim[+]M\cC M
      \boxtimes\bim[-] M \cC M$.
  \end{enumerate}
\end{prop}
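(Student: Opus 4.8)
For part~(1), commutativity of $\Theta$ is exactly what makes $\alpha^\pm$-induction well behaved (the lemma just above, $\iota\rho\mapsto\alpha^\pm_\rho$), and it underlies the equivalence $(\bim[+]B\cC B)^{\op}\xrightarrow{\sim}\bim[-]B\cC B$, $\alpha^+_\rho\mapsto\alpha^-_{\bar\rho}$, recorded immediately before the proposition. I would simply apply $(-)^{\op}$ to this equivalence: a functor $(\bim[+]B\cC B)^{\op}\to\bim[-]B\cC B$ is an equivalence if and only if the induced functor $\bim[+]B\cC B\to(\bim[-]B\cC B)^{\op}$ is, which gives $\bim[+]B\cC B\cong(\bim[-]B\cC B)^{\op}$. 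The only thing to keep in mind is that $\alpha^+_\rho\mapsto\alpha^-_{\bar\rho}$ is genuinely a monoidal (contravariant) functor, combining the contravariant conjugation functor $\beta\mapsto\bar\beta$ with the identities $\overline{\alpha^+_\rho}\cong\alpha^+_{\bar\rho}$ and the interchange of $\varepsilon^+,\varepsilon^-$ under conjugation; this is already packaged in the cited equivalence.

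For part~(2) (writing $B=M$ for the overfactor, to match the Q-system $A\subset B$), the plan is to realise $\Theta$ as a Longo--Rehren datum. By Proposition~\ref{prop:LagrangianQSystemDualToLR} a Lagrangian $\Theta$ in $\cC$ is dual to a Longo--Rehren Q-system, so $A\subset B$ is conjugate to the Longo--Rehren inclusion built from the unitary fusion category $\cF:=\bim[+]B\cC B$, with $\cC\cong Z(\cF)$. In the Longo--Rehren model $B\cong N\otimes N^{\op}$, and by the construction of Section~\ref{sec:LR} the full system $\bim B\cC B$ is generated by the sectors $\lambda\otimes\mu^{\op}$, i.e.\ $\bim B\cC B\cong\cF\boxtimes\rev{\cF}$. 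The earlier Proposition of Section~\ref{sec:LR} identifies $\bim[+]B\cC B\cong\cF$ (via $\alpha^+_\rho\mapsto\beta\otimes\id$), while part~(1) gives $\bim[-]B\cC B\cong(\bim[+]B\cC B)^{\op}\cong\rev{\cF}$. Substituting these into $\bim B\cC B\cong\cF\boxtimes\rev{\cF}$ yields the claimed $\bim B\cC B\cong\bim[+]B\cC B\boxtimes\bim[-]B\cC B$.

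I would also record the intrinsic reason, which checks that the factorisation is forced by the Lagrangian condition rather than by the chosen model. The two tensor inclusions $\bim[\pm]B\cC B\hookrightarrow\bim B\cC B$ have commuting images: for $\rho,\sigma\in\cC$ the braiding produces a natural unitary $\alpha^+_\rho\alpha^-_\sigma\cong\alpha^-_\sigma\alpha^+_\rho$ (the relative braiding of $\alpha$-induction, \cite{BcEv1998,BcEvKa1999}), so the universal property of the Deligne product gives a tensor functor $F\colon\bim[+]B\cC B\boxtimes\bim[-]B\cC B\to\bim B\cC B$, $\beta^+\boxtimes\beta^-\mapsto\beta^+\beta^-$. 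Since $\cC$ is modular, $\bim[+]B\cC B\cup\bim[-]B\cC B$ generates $\bim B\cC B$, so $F$ is dominant. Now compare global dimensions: $\Dim\bim B\cC B=\Dim\cC$ by Morita invariance, while the standard $\alpha$-induction formula gives $\Dim\bim[+]B\cC B=\Dim\bim[-]B\cC B=\Dim\cC/d\theta$, so the Lagrangian condition $(d\theta)^2=\Dim\cC$ yields $\Dim(\bim[+]B\cC B\boxtimes\bim[-]B\cC B)=(\Dim\cC)^2/(d\theta)^2=\Dim\cC=\Dim\bim B\cC B$. A dominant tensor functor between fusion categories of equal global dimension is an equivalence \cite{EtGeNiOs2015}, so $F$ is the desired equivalence.

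The main obstacle is the triviality of the ambichiral category, i.e.\ $\Dim\bim[0]B\cC B=1$ under the Lagrangian hypothesis (equivalently, that no nonidentity irreducible lies in both chiral parts, so there is no ``mixing'' in the factorisation); this is what the relation $\Dim\bim[+]B\cC B\cdot\Dim\bim[-]B\cC B=\Dim\bim B\cC B\cdot\Dim\bim[0]B\cC B$ together with $(d\theta)^2=\Dim\cC$ delivers, and everything else is bookkeeping with the $\alpha$-induction dimension formulas. The Longo--Rehren route sidesteps constructing $F$ and verifying the relative braiding by hand, at the cost of invoking Proposition~\ref{prop:LagrangianQSystemDualToLR}; the dimension count shows the factorisation is an automatic consequence of the Lagrangian condition.
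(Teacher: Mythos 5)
For part (2), your ``intrinsic'' argument is in substance the paper's own proof: the paper's entire justification of (2) is the instruction to use the relative braiding of \cite{BcEvKa2001}, and that is exactly what you use to obtain coherent commutation unitaries $\alpha^+_\rho\alpha^-_\sigma\cong\alpha^-_\sigma\alpha^+_\rho$, hence a tensor functor $F\colon \bim[+]B\cC B\boxtimes\bim[-]B\cC B\to\bim B\cC B$. Dominance then follows from the fact, stated in the preliminaries because $\bim A\cC A$ is modular, that $\bim[+]B\cC B\cup\bim[-]B\cC B$ generates $\bim B\cC B$, and the Lagrangian condition removes the ambichiral obstruction via the standard dimension relations, which the paper itself quotes from \cite{BcEvKa2001,BcEv2000}; so using them is an external input, not a circularity. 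This usefully fleshes out the bookkeeping the paper leaves implicit. Your Longo--Rehren route, by contrast, should be demoted to a consistency check: Proposition \ref{prop:LagrangianQSystemDualToLR} appears \emph{after} the present proposition, and its identification of $A\subset B$ with a Longo--Rehren inclusion belongs to the same circle of ideas as the chiral factorization being proved here, so taking it as input inverts the paper's logical order and risks circularity.

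Part (1) is where your proposal has a genuine gap. The displayed equivalence $(\bim[+]B\cC B)^\op\to\bim[-]B\cC B$, $\alpha^+_\rho\mapsto\alpha^-_{\bar\rho}$, ``recorded immediately before the proposition'' is not proved there; up to applying $(-)^\op$ it \emph{is} the statement of part (1), so citing it and taking opposites proves nothing. Moreover, the mechanism you sketch for why this map is anti-monoidal is incorrect: $\alpha$-induction is a tensor functor and hence preserves conjugates, $\overline{\alpha^\pm_\rho}\cong\alpha^\pm_{\bar\rho}$, so conjugation preserves chirality and there is no ``interchange of $\varepsilon^+,\varepsilon^-$ under conjugation''---indeed this claimed interchange contradicts the other identity you list. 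What actually exchanges the chiralities is reversing the order of the product, which is the content of the paper's terse proof: realizing $\bim B\cC B$ as acting on the opposite side (as a bimodule category), the braiding entering the definition of $\alpha$-induction is replaced by the opposite braiding, so ``$\alpha^+$ for the reversed order'' is ``$\alpha^-$ for the original order''; combined with the hom-space identification $\Hom(\alpha^\pm_\rho,\alpha^\pm_\sigma)=\Hom(\iota\rho,\iota\sigma)$ from the lemma preceding the proposition, this is what yields $(\bim[+]B\cC B)^\op\cong\bim[-]B\cC B$.
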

\begin{proof}
The first property can be directly seen by seeing 
$\bim B\cC B$ as a bimodule category and realizing that the opposite order 
gives the opposite braiding. For the second statement we can 
use the relative braiding in \cite{BcEvKa2001}.
\end{proof}
\begin{prop} 
  \label{prop:LagrangianQSystemDualToLR}
  Let $\Theta$ be a Lagrangian Q-system in $\bim A \cC A$ with associated 
  subfactor $A\subset B$. 
  Then $\Theta$ is dual to the Longo--Rehren Q-system associated with 
  $\bim[+]B \cC B$, \ie $A\subset B$ is conjugate to the Longo--Rehren 
  inclusion associated with $\bim[+]B \cC B$.
\end{prop}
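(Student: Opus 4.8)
The plan is to reduce the statement to a matching of Lagrangian Q-systems inside a Drinfel'd center. Set $\cF:=\bim[+]B\cC B$, a unitary fusion category, and let $A_0\subset B_0$ be the Longo--Rehren inclusion associated with $\cF$ as in Section~\ref{sec:LR}; its dual canonical Q-system $\Theta_0$ is a Lagrangian Q-system in $\bim{A_0}\cG{A_0}$, and under the equivalence $\eta\colon Z(\cF)\to\bim{A_0}\cG{A_0}$ it is the canonical Lagrangian algebra of $Z(\cF)$. Since the canonical Q-system $\Gamma$ of $A\subset B$ is by definition dual to $\Theta$, and conjugacy of overfactors is the same as equivalence of Q-systems, proving the proposition amounts to exhibiting a conjugacy $A\subset B\cong A_0\subset B_0$, equivalently to showing that $\Theta$ is isomorphic to $\Theta_0$ after transport along the braided equivalence $\Phi\colon\bim A\cC A\xrightarrow{\ \sim\ }Z(\cF)$, $\rho\mapsto(\alpha^+_\rho,\cE_\rho)$, established just above.

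First I would match the underlying objects. The special case $\sigma=\rho$, $\rho=\id$ of the reciprocity $\Hom(\iota\rho,\iota\sigma)=\Hom(\alpha^+_\rho,\alpha^+_\sigma)$ yields $\langle\theta,\rho\rangle=\langle\bar\iota\iota,\rho\rangle=\langle\iota,\iota\rho\rangle=\langle\id_B,\alpha^+_\rho\rangle$ for every $\rho\in\bim A\cC A$. On the Longo--Rehren side the dual canonical object is the full center of the tensor unit, so $\langle\theta_0,\eta(\beta,\cE_\beta)\rangle=\langle\id_B,\beta\rangle_\cF$ for every $(\beta,\cE_\beta)\in Z(\cF)$; combined with $[\alpha^+_{\eta(\beta,\cE_\beta)}]=[\beta\otimes\id]$ this shows that $\Phi$ carries $[\theta]$ to $[\theta_0]$, so the two Lagrangian Q-systems share the same underlying object in $Z(\cF)$. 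Alternatively one can argue directly in $\cD=\bim B\cC B$: the factorization $\bim B\cC B\cong\bim[+]B\cC B\boxtimes\bim[-]B\cC B\cong\cF\boxtimes\cF^\op$ (from the two preceding propositions together with $\alpha^+_\rho\mapsto\alpha^-_{\bar\rho}$) combined with reciprocity for both signs identifies $[\gamma]=[\iota\bar\iota]$ with $\bigoplus_\beta[\beta\boxtimes\beta^\op]$, which is precisely the underlying object of the Longo--Rehren Q-system of $\cF$.

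The hard part will be upgrading this equality of underlying objects to an isomorphism of Q-systems, since a Lagrangian object need not support a unique commutative algebra structure in general. The route I would take is to observe that both $\Theta$ and $\Theta_0$ are realized as the full center of the tensor unit of the common fusion category $\cF$: for $\Theta_0$ this is the defining feature of the Longo--Rehren Q-system (Section~\ref{sec:LR}), while $\Theta$ yields $\cC_\Theta\cong\cD^+=\cF$, identifying the module category of $\Theta$ with the regular module category of $\cF$, so $\Theta$ is again the full center of its unit. Appealing to the uniqueness of the full center, i.e.\ the results of Izumi and M\"uger \cite{Iz2000,Mg2003II} already used in Section~\ref{sec:LR}, gives $\Theta\cong\Theta_0$ and hence $\Gamma\cong\Gamma_0$, so that $A\subset B$ is conjugate to the Longo--Rehren inclusion of $\cF$. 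I expect this rigidity of the full-center Lagrangian Q-system to be the only genuinely nontrivial step, the object-level identification being a direct consequence of $\alpha$-induction reciprocity and the Deligne factorization.
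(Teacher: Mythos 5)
Your reduction of the proposition to matching Lagrangian Q-systems inside $Z(\cF)$, and your object-level matching via Frobenius reciprocity and $[\alpha^+_{\eta(\beta,\hb_\beta)}]=[\beta\otimes\id]$, are both fine, and your route is genuinely different from the paper's. But the step you yourself flag as the crucial one has a real gap. You claim that the abstract equivalence $\cC_\Theta\cong\bim[+]B\cC B=\cF$ ``identifies the module category of $\Theta$ with the regular module category of $\cF$,'' hence that $\Theta$ is the full center of the unit. This inference is invalid as stated: the bijection between Lagrangian algebras in $Z(\cF)$ and indecomposable module categories (\cite[Proposition 4.8]{DaMgNiOs2013}) is a bijection with \emph{$\cF$-module} categories, i.e.\ categories equipped with their $\cF$-action, not with abstract fusion categories. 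Distinct Lagrangian algebras can have abstractly equivalent module categories: in $Z(\Vect_{\ZZ_n})$ both the ``electric'' Lagrangian algebra $I(1)=\bigoplus_{\chi}(e,\chi)$ and the ``magnetic'' one $\bigoplus_{g}(g,1)$ have categories of modules abstractly equivalent to $\Vect_{\ZZ_n}$, yet they are non-isomorphic. Your object-matching rules out that particular confusion, but since (as you correctly concede) a Lagrangian object need not carry a unique commutative algebra structure, the object matching cannot substitute for the missing argument. What you actually need is that the \emph{specific} braided equivalence $\Phi\colon\rho\mapsto(\alpha^+_\rho,\cE_\rho)$ carries the algebra $\Theta$ to $I(1)$ --- equivalently, that $\cC_\Theta\cong\cF$ as $\cF$-module categories when $\cF$ acts through $\Phi$ and the free-module functor --- and this compatibility of $\alpha$-induction with the full center is essentially the substance of the proposition, not a consequence of an equivalence of underlying categories. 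The appeal to ``uniqueness of the full center'' in \cite{Iz2000,Mg2003II} does not close this: those papers identify the dual category of the Longo--Rehren inclusion with $Z(\cF)$ and its dual Q-system with $I(1)$, but contain no statement that lets you recognize an abstractly given Lagrangian Q-system as a full center from its abstract category of modules.

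For comparison, the paper closes exactly this gap by a doubling trick: it passes to $A\otimes A^\op\subset B\otimes B^\op$, given by the Lagrangian Q-system $\Theta\otimes\Theta^\op$ in $\bim A\cC A\boxtimes\rev{\bim A\cC A}$, where Rehren's classification \cite{Re2000} (see \cite[Proposition 5.2]{BiKaLoRe2014}) forces every Lagrangian Q-system to come from the $\alpha$-induction construction; Kawahigashi's theorem \cite{Ka2002} then identifies the doubled inclusion as dual to the Longo--Rehren inclusion of $\bim B\cC B$, and the Galois correspondence for intermediate subfactors descends this to $A\subset B$, which comes out dual to the Longo--Rehren inclusion of $\bim[+]B\cC B$. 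If you want to keep your direct route, you must actually prove the module-structure compatibility above --- for instance by verifying that the equivalence $\cC_\Theta\to\bim[+]B\cC B$ intertwines the $\cF$-actions and then invoking \cite[Proposition 4.8]{DaMgNiOs2013}, or by a Kong--Runkel-type uniqueness theorem for full centers --- and that verification is comparable in substance to the inputs the paper imports from \cite{Re2000,Ka2002}.
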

\begin{proof}
  We consider $S =A\otimes A^\op\subset T= B\otimes B^\op$ given by the 
  Lagrangian Q-system $\Theta\otimes \Theta^\op$ in $\bim S\cC S \cong 
  \bim A\cC A\boxtimes \rev{\bim A\cC A}$.
  But a Lagrangian Q-system in
  $\bim S\cC S \cong \bim A\cC A\boxtimes \rev{\bim A\cC A}$
  comes from the $\alpha$-induction construction \cite{Re2000}, 
  see \cite[Proposition 5.2]{BiKaLoRe2014}. 
  By \cite{Ka2002} it follows that $S\subset T$ is dual to the Longo--Rehren 
  inclusion with respect to $\bim B\cC B$. 
  By Galois correspondence $A\subset B \cong 
  A\otimes B^\op \subset B\otimes B^\op$ is also 
  dual to a Longo--Rehren inclusion, namely the one with $\bim[+]B \cC B$.
\end{proof}
\begin{cor}
  \label{cor:IntermediateLR}
  Let $A\subset B$ with $\Theta=(\theta,x,w)$ a commutative Q-system in a UMTC 
  $\bim A\cC A$.
  Then $A\subset B$ is conjugated to $S\subset M$, where  $S\subset M\subset T$ 
  and $S\subset T$ is the Longo--Rehren subfactor with respect to 
  $\bim[+]B\cC B$.
\end{cor}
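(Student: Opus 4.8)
The plan is to realize $A\subset B$ as an intermediate factor of the Longo--Rehren inclusion $S\subset T$ attached to $\cF:=\bim[+]B\cC B$, by exhibiting $\Theta$ as a sub-Q-system of the Lagrangian Q-system of that inclusion and then invoking the Galois correspondence. First I would assemble the ambient categorical data. Since $\Theta$ is commutative, its category of local modules, namely the ambichiral category $\cC_0:=\bim[0]B\cC B$, is again a UMTC (local modules of a commutative Q-system in a modular category form a modular category, by Müger). The dual category of the Longo--Rehren subfactor $S\subset T$ for $\cF$ satisfies $\bim S\cC S\cong Z(\cF)$, and by the general center decomposition $Z(\bim[+]B\cC B)\cong\bim A\cC A\boxtimes\rev{\bim[0]B\cC B}$ we may identify $\bim S\cC S\cong\cC\boxtimes\rev{\cC_0}$. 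By Proposition \ref{prop:LagrangianQSystemDualToLR} the dual canonical Q-system $\Theta_{\mathrm{LR}}$ of $S\subset T$ is the canonical Lagrangian Q-system of $Z(\cF)$, which under this identification I write as $\Theta_{\mathrm{LR}}=I(\mathbf 1)$.

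Next I would transport $\Theta$ into $Z(\cF)$ and produce the sub-Q-system. The central functor $\cC\to Z(\cF)$, $\rho\mapsto(\alpha^+_\rho,\cE_\rho)$, realizes $\cC$ as $\cC\boxtimes\mathbf 1\subset\cC\boxtimes\rev{\cC_0}$, so $\Theta$ becomes a commutative Q-system $\Theta\boxtimes\mathbf 1$ in $\bim S\cC S$. Its local modules are $(\cC\boxtimes\rev{\cC_0})^0_{\Theta\boxtimes\mathbf 1}=\cC^0_\Theta\boxtimes\rev{\cC_0}=\cC_0\boxtimes\rev{\cC_0}$, which carries the canonical diagonal Lagrangian Q-system $L_0$. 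Composing the two commutative Q-systems (extending $\Theta\boxtimes\mathbf 1$ by $L_0$) yields a Lagrangian Q-system $\tilde L$ of $Z(\cF)$ containing $\Theta\boxtimes\mathbf 1$ as a sub-Q-system. I would cross-check the object-level containment through the center adjunction $\Hom_{Z(\cF)}(\Theta\boxtimes\mathbf 1,\ I(\mathbf 1))\cong\Hom_\cF(\alpha^+_\theta,\mathbf 1)\neq 0$, which is non-zero because $\theta\succ\id$ forces $\alpha^+_\theta\succ\id_B=\mathbf 1_\cF$.

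Finally I would identify $\tilde L$ with $\Theta_{\mathrm{LR}}$ and conclude by Galois. Under the equivalence $\bim S\cC S\cong\cC\boxtimes\rev{\cC_0}$ realized on $T=B\otimes B^{\op}$, the diagonal Lagrangian $L_0$ is precisely the $B$--$B^{\op}$ diagonal datum of the Longo--Rehren construction, so $\tilde L=\Theta_{\mathrm{LR}}=I(\mathbf 1)$. Then by the Galois correspondence between sub-Q-systems of $\Theta_{\mathrm{LR}}$ and intermediate factors $S\subset M\subset T$ (Proposition \ref{prop:Galois} together with \eqref{eq:subQsystem}), the sub-Q-system $\Theta\boxtimes\mathbf 1$ selects an intermediate factor $M$ whose inclusion $S\subset M$ has dual canonical Q-system $\Theta\boxtimes\mathbf 1\cong\Theta$; since an irreducible inclusion is determined up to conjugacy by its dual canonical Q-system, $S\subset M\cong A\subset B$, which is the assertion.

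I expect the genuine content, and the main obstacle, to be the identification $\tilde L=\Theta_{\mathrm{LR}}$: Lagrangian Q-systems of $Z(\cF)$ containing $\Theta\boxtimes\mathbf 1$ correspond to Lagrangian Q-systems of $\cC_0\boxtimes\rev{\cC_0}$, and one must verify that the diagonal choice is exactly the one giving the Longo--Rehren inclusion \emph{with respect to $\cF=\bim[+]B\cC B$} rather than with respect to a category Morita equivalent to it. Checking the compatibility of the Q-system multiplications under the composition — equivalently producing the projection $P$ of \eqref{eq:subQsystem} that cuts out $\Theta\boxtimes\mathbf 1$ inside $\Theta_{\mathrm{LR}}$ — is the one step requiring care; the remainder is bookkeeping with the center decomposition and the $\alpha^+$-induction dictionary.
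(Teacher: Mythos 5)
Your route is, up to translation between operator-algebraic and categorical language, the paper's own proof. The paper forms the tower $A\otimes B^\op\subset B\otimes B^\op\subset B_\LR$, where the last step is the Longo--Rehren extension of $B\otimes B^\op$ built from $\bim[0]B\cC B$; in Q-system language this is exactly your composition of $\Theta\boxtimes\mathbf 1$ with the diagonal Lagrangian $L_0$ of the local-module category $\cC_0\boxtimes\rev{\cC_0}$, and the dual canonical Q-system of the composite inclusion is your $\tilde L$, Lagrangian in $\bim A\cC A\boxtimes\rev{\bim[0]B\cC B}\cong Z(\bim[+]B\cC B)$. Your preliminary steps (modularity of $\cC_0$, the computation of local modules, the dimension count showing $\tilde L$ is Lagrangian, and the final Galois/uniqueness argument recovering $A\subset B$ from the sub-Q-system) are all correct; the final Galois step is even slightly redundant, since the composition already exhibits the intermediate factor concretely, which is how the paper phrases it.

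The step that fails as written is the one you flag: the identification $\tilde L\cong I(\mathbf 1)$. Your justification---that ``the diagonal Lagrangian $L_0$ is precisely the $B$--$B^\op$ diagonal datum of the Longo--Rehren construction''---conflates two different diagonals living in two different categories: $L_0$ is the diagonal over $\Irr(\cC_0)$, sitting inside the local modules of $\Theta\boxtimes\mathbf 1$, a subcategory of $\bim S\cC S\cong Z(\cF)$, whereas the Longo--Rehren diagonal datum is the canonical Q-system $\bigoplus_{\xi\in\Irr(\cF)}\xi\otimes\xi^\op$, an object of $\cF\boxtimes\cF^\op\subset\End(T)$. Matching them ``by inspection'' presupposes that the composite inclusion is the Longo--Rehren one, i.e.\ the argument is circular; and the genuine danger is exactly the one you name, that $\tilde L$ could a priori be the Lagrangian Q-system attached to a fusion category only Morita equivalent to $\cF$. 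This is precisely the point the paper settles by applying Proposition \ref{prop:LagrangianQSystemDualToLR} to the composite Lagrangian $\tilde L$: that proposition identifies the associated subfactor as a Longo--Rehren inclusion with respect to an $\alpha^+$-induced system, and its proof---via the $\alpha$-induction construction of \cite{Re2000}, Kawahigashi's identification of generalized Longo--Rehren subfactors \cite{Ka2002}, and a Galois argument---is where the category is pinned down as $\bim[+]B\cC B$ rather than merely up to Morita equivalence. (Equivalently one could invoke the composition statement $A_\cF=A_{\cC_0}\circ A_{\cF\CS\cC_0}$ for Lagrangian algebras from \cite{DaMgNiOs2013}, quoted in Appendix \ref{app:TensorCategories}, but then one must still identify $A_{\cF\CS\cC_0}$ with $\Theta\boxtimes\mathbf 1$, which is of the same difficulty.) So your proof becomes complete---and becomes the paper's proof---once the ``diagonal datum'' assertion is replaced by an appeal to Proposition \ref{prop:LagrangianQSystemDualToLR}; without such an input it is not closed.
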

\begin{proof}
  Consider the inclusion 
  $A\otimes B^\op \subset B\otimes B^\op \subset B_\LR$,
  where $B^\op \subset B_\LR$ is the Longo--Rehren extension.
  This gives a Lagrangian Q-system in $\bim A \cC A\boxtimes 
  \rev{\bim[0]B\cC B}$ and the statement follows from Proposition 
  \ref{prop:LagrangianQSystemDualToLR}.
\end{proof}
Therefore $A\subset B$ (from Corollary \ref{cor:IntermediateLR})
can be seen as a generalized group subgroup subfactor,
where we replaced (sub-) groups by unitary fusion (sub-) categories
and the group action by the Longo--Rehren subfactor.
\begin{rmk}
  Since the canonical endomorphisms of the Longo--Rehren inclusion has no
  multiplicities and Corollary \ref{cor:IntermediateLR} implies that the
  canonical endomorphism $\gamma$ of a subfactor $A\subset B$ coming from a
  commutative Q-system in a UMTC $\bim A\cC A$ has no multiplicties. 
  We note that the statement of Proposition \ref{prop:NoMultiplicities} is 
  stronger, since it only assumes $\theta$ to be a commutative Q-system in any
  braided rigid $C^\ast$-tensor category. 
\end{rmk}

\subsection{Inclusions of Completely Rational Nets and Categorical Restrictions}
We have the following simple lemma.
\begin{lem}[{\cf \cite[Corollary 3.18]{BcEv1999-2}}]
  Let $\bim A\cC A$ be a UMTC and $(\theta,w,x)$ a commutative Q-system in 
  $\bim A \cC A$ corresponding subfactor $A\subset B$ and dual Q-system 
  $\Gamma=(\gamma,v,y)$. 
  Then $[\gamma]\cap \bim[0]B\cC B=[\id]$, \ie if $\alpha\in\bim[0]B\cC B$ 
  irreducible with $\alpha \prec \gamma$, then $[\alpha]=[\id]$.
\end{lem}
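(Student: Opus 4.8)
We have a UMTC $\bim A\cC A$, a commutative Q-system $(\theta,w,x)$ giving $A\subset B$, and its dual Q-system $\Gamma=(\gamma,v,y)$ living on the $B$-side. The claim is that the only irreducible ambichiral subsector of $\gamma$ is the identity. Recall $\gamma=\iota\bar\iota\in\End(B)$, and the ambichiral category $\bim[0]B\cC B=\bim[+]B\cC B\cap\bim[-]B\cC B$ consists of objects that are simultaneously $\alpha^+$- and $\alpha^-$-induced.

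**The plan.** The natural approach is via Frobenius reciprocity combined with the key identity relating $\gamma$ to $\alpha$-induction. First I would recall that for $\beta\prec\gamma$ we have $\langle\gamma,\beta\rangle=\langle\iota\bar\iota,\beta\rangle=\langle\bar\iota,\bar\iota\beta\rangle$ by Frobenius reciprocity. The decisive tool is the homomorphism property of $\alpha$-induction: for $\rho\in\bim A\cC A$ one has $\langle\alpha^\pm_\rho,\beta\rangle$ controlled by intertwiner spaces with $\iota$. Concretely, I would use that $\alpha^\pm_\rho\iota=\iota\rho$, so that for an ambichiral $\beta=\alpha^+_\rho\cong\alpha^-_\rho$ one can compute $\langle\gamma,\beta\rangle=\langle\iota\bar\iota,\alpha^+_\rho\rangle$. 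The point is to show this forces $\rho$ (hence $\beta$) to be trivial whenever $\beta$ is irreducible and ambichiral and $\beta\prec\gamma$.

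**The key computation.** The central step invokes the characterization of the ambichiral category in terms of the braiding. An object lies in $\bim[0]B\cC B$ exactly when its $\alpha$-induced preimage $\rho$ is in the Müger center of the relevant category, or more precisely satisfies $\alpha^+_\rho\cong\alpha^-_\rho$, which by the analysis of $\alpha$-induction (Böckenhauer–Evans–Kawahigashi) happens iff $\rho$ has trivial monodromy with $\theta$. For $\beta\prec\gamma$ ambichiral, writing $\beta=\alpha^+_\rho$, I would compute $\langle\gamma,\alpha^+_\rho\rangle=\langle\theta,\rho\rangle_{\bim A\cC A}$ using that $\gamma=\iota\bar\iota$ and $\bar\iota\alpha^+_\rho=\rho\bar\iota$ on the relevant morphism spaces. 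Then commutativity of $\Theta$ (i.e. $\varepsilon(\theta,\theta)x=x$) forces the chiral locality condition, so that the only $\rho$ with $\rho\prec\theta$ and $\alpha^+_\rho\cong\alpha^-_\rho$ giving a nonzero contribution is $\rho=\id_A$, yielding $[\beta]=[\alpha^+_{\id}]=[\id_B]$.

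**The main obstacle.** The hard part will be pinning down the precise bookkeeping of the identity $\langle\gamma,\alpha^\pm_\rho\rangle=\langle\theta,\rho\rangle$ and confirming that the ambichiral condition $\alpha^+_\rho\cong\alpha^-_\rho$ together with $\rho\prec\theta$ genuinely collapses to $\rho=\id$. This is where the modularity of $\bim A\cC A$ and the Lagrangian/commutative structure of $\Theta$ must be used essentially: in a UMTC, the chiral and ambichiral induction data are constrained by nondegeneracy of the braiding, and the cited result (compare \cite[Corollary 3.18]{BcEv1999-2}) essentially packages exactly this collapse. I would therefore aim to reduce the statement directly to that corollary by identifying $\gamma$ with the canonical endomorphism of the $\alpha$-induced inclusion and citing the established structure theory, rather than reproving the braiding computation from scratch.
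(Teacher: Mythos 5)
There is a genuine gap in your key computation. You assume that an irreducible ambichiral sector $\beta\in\bim[0]B\cC B$ with $\beta\prec\gamma$ can be written as a \emph{full} induction $\beta=\alpha^+_\rho\cong\alpha^-_\rho$ for some $\rho\in\bim A\cC A$. That is false in general: the chiral categories $\bim[\pm]B\cC B$ are generated by (i.e.\ closed under taking subsectors of) the inductions $\alpha^\pm_\rho$, and an irreducible ambichiral object is typically a \emph{proper} subsector of the (usually reducible) $\alpha^\pm_\rho$, not an induction itself. The B\"ockenhauer--Evans--Kawahigashi criterion you invoke --- $\alpha^+_\rho\cong\alpha^-_\rho$ iff $\rho$ has trivial monodromy with $\theta$ --- is a statement about full inductions and says nothing about such subsectors, so your proposed collapse to $\rho=\id$ never gets started. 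Your bookkeeping identity $\langle\gamma,\alpha^+_\rho\rangle=\langle\theta,\rho\rangle$ is also unjustified: $\gamma=\iota\bar\iota$ need not lie in $\bim[+]B\cC B$ (which is the domain where $\alpha\sigma$-reciprocity applies), and where reciprocity does apply it produces $\langle\rho,\bar\iota\,\slot\,\iota\rangle$-type quantities, not $\langle\theta,\rho\rangle$. Finally, your fallback of ``reduce to \cite[Corollary 3.18]{BcEv1999-2}'' is not carried out, so as it stands the proposal is a sketch whose crucial step fails.

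For comparison, the paper's proof sidesteps full inductions entirely. By Frobenius reciprocity, $\alpha\prec\gamma$ iff $\id_A\prec\bar\iota\alpha\iota$, i.e.\ $b^{\id_A}_\alpha\geq 1$, where $b^\rho_\alpha=\langle\rho,\bar\iota\alpha\iota\rangle$ are the branching coefficients of ambichiral sectors. It then invokes the block form of the modular invariant from \cite{BcEvKa2000},
\begin{align}
  Z_{\rho\sigma}=\sum_{\alpha\in\Irr(\bim[0]B\cC B)} b^\rho_\alpha b^\sigma_\alpha\,,
  \qquad Z_{\rho\sigma}=\langle\alpha^+_\rho,\alpha^-_\sigma\rangle\,,
\end{align}
together with the normalization $Z_{00}=\langle\id_B,\id_B\rangle=1$ ($B$ is a factor). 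Since the Q-system is irreducible, $b^{\id_A}_{\id_B}=\langle\id_A,\theta\rangle=1$, so $\sum_\alpha (b^{\id_A}_\alpha)^2=1$ forces $b^{\id_A}_\alpha=\delta_{\alpha,\id_B}$, which is exactly the claim. If you want to salvage your approach, you should work with the branching coefficients $b^\rho_\alpha$ of ambichiral subsectors (not with full inductions) and find a constraint that kills $b^{\id_A}_\alpha$ for $\alpha\neq\id_B$; the normalization $Z_{00}=1$ is precisely such a constraint.
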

\begin{proof}
  By Frobenius reciprocity we have: $\alpha\prec\gamma\equiv\iota\bar\iota$ if
  and only if $\id_A\prec \bar\iota\alpha \iota$. 
  Let $\alpha\in\bim[0]B\cC B$, then $[\bar\iota\alpha \iota]=\bigoplus_\rho
  b^\rho_\alpha  [\rho]$ with $b^\rho_\alpha= \langle \rho,\bar\iota\alpha\iota
  \rangle$.
  But the blockform \cite[Equation 16 and Proposition 3.4]{BcEvKa2000} 
  \begin{align}
    Z_{\rho\sigma}
      &=\sum_{\alpha\in\Irr(\bim[0]B\cC B)} b^\rho_\alpha b^\sigma_\alpha\,, 
       &
    Z&=\sum_{\alpha\in\Irr(\bim[0]B\cC B)} \left|\sum_{\rho\in\Irr(\bim A
    \cC A)}b^\rho_\alpha\,\chi_\rho\right|^2
  \end{align}
  of the modular invariant
  $Z_{ij}=\langle\alpha^+_{\rho_i},\alpha^-_{\rho_j}\rangle$ and the
  normalization of the modular invariant $Z_{00}=\langle\id_B,\id_B\rangle=1$
  (because $B$ is a factor \cite{BiKaLoRe2014-2}) implies
  $b^{\id_A}_\alpha=\delta_{\alpha,\id_B}$.
\end{proof}
We have the following interpretation for a local finite index inclusion of
completely rational nets $\A\subset \cB$. 
Let $\gamma\colon \cB(I)\to\A(I)\subset \cB(I)$ be the canonical endomorphism. 
Then it is purely build out of solitonic sectors of $\cB$.
\begin{cor}
  \label{cor:JonesExtension}
  Let $\bim A\cC A$ be a UMTC and $\Theta$ be a
  commutative Q-system with
  associated subfactor $A\subset B$. The Jones extension $A\subset B\subset
  B_1=B\vee \{e_A\}$, then $B\subset B_1$ comes from a Q-system in $\bim[0]B
  \cC B$ if and only if $A=B$. 

  In particular, $A\subset B_1$ comes from a commutative Q-system in $\bim A\cC
  A$ if and only if $A=B$.
\end{cor}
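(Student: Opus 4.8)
The plan is to translate the condition on $B\subset B_1$ into a statement about the canonical endomorphism $\gamma=\iota\bar\iota\in\End(B)$ of $A\subset B$ and then invoke the Lemma just proved. The key input is the standard duality of the Jones basic construction: writing $\iota_1\colon B\hookrightarrow B_1$ for the inclusion and $\bar\iota_1$ for its conjugate, the dual canonical endomorphism $\theta_{B\subset B_1}=\bar\iota_1\iota_1\in\End(B)$ of $B\subset B_1$ is conjugate to the canonical endomorphism of $A\subset B$, i.e.\ $[\theta_{B\subset B_1}]=[\iota\bar\iota]=[\gamma]$. (Concretely $[B_1:B]=[B:A]$ and $\iota_1$ is conjugate to $\bar\iota$, so $\bar\iota_1\iota_1\cong\iota\bar\iota$.) Since a Q-system is determined up to equivalence by its underlying endomorphism, ``$B\subset B_1$ comes from a Q-system in $\bim[0]B\cC B$'' is equivalent to $[\gamma]\subset\bim[0]B\cC B$, i.e.\ to every irreducible subsector of $\gamma$ being ambichiral.

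I would then apply the preceding Lemma, which gives $[\gamma]\cap\bim[0]B\cC B=[\id]$: the only ambichiral irreducible subsector of $\gamma$ is $[\id_B]$. Hence $[\gamma]\subset\bim[0]B\cC B$ forces every constituent of $\gamma$ to be $[\id_B]$; since $A\subset B$ is irreducible we have $\langle\id_B,\gamma\rangle=\langle\id_A,\theta\rangle=1$, so in fact $[\gamma]=[\id_B]$ and $[B:A]=d_\gamma=1$, that is $A=B$. Conversely, if $A=B$ then $e_A=1$, so $B_1=B\vee\{1\}=B$ and $B\subset B_1$ is trivial, which comes from the trivial Q-system $[\id]\in\bim[0]B\cC B$. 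This settles the main equivalence.

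For the ``in particular'' statement I would compute the dual canonical endomorphism of the composite inclusion $\iota_1\iota\colon A\hookrightarrow B_1$, namely $\theta_{A\subset B_1}=\overline{\iota_1\iota}\,\iota_1\iota=\bar\iota\,(\bar\iota_1\iota_1)\,\iota=\bar\iota\gamma\iota=\theta^2$. A commutative Q-system is by convention connected, so if $A\subset B_1$ came from such a Q-system then $A\subset B_1$ would be irreducible; but $\dim(A'\cap B_1)=\langle\id_A,\theta_{A\subset B_1}\rangle=\langle\id_A,\theta^2\rangle=\langle\theta,\theta\rangle$, using $\bar\theta=\theta$. Irreducibility therefore gives $\langle\theta,\theta\rangle=1$, so $\theta$ is irreducible, and together with $\langle\id_A,\theta\rangle=1$ this yields $\theta=\id_A$, i.e.\ $A=B$. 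The converse is again trivial since $A=B$ makes the extension trivial.

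The only genuinely delicate point is the first step: correctly identifying the dual canonical endomorphism of the basic construction $B\subset B_1$ with the canonical endomorphism $\gamma$ of $A\subset B$, after which the Lemma does all the work. Everything else is a short dimension/irreducibility count. I note that the ``in particular'' part does not even require the ambichiral Lemma, only that a connected Q-system produces an irreducible subfactor whereas $A\subset B_1$ has relative commutant of dimension $\langle\theta,\theta\rangle\ge 2$ whenever $A\neq B$.
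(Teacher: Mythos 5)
Your proof is correct, and its first half is essentially the paper's argument: identify the dual canonical endomorphism of the Jones extension $B\subset B_1$ with $[\gamma]$, invoke the Lemma $[\gamma]\cap \bim[0]B\cC B=[\id]$, and use connectedness ($\langle \id_B,\gamma\rangle=\langle \id_A,\theta\rangle=1$) to conclude $[\gamma]=[\id]$, i.e.\ $A=B$.

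For the ``in particular'' statement, however, you take a genuinely different route. The paper reduces it to the first equivalence: it quotes the fact that a local (commutative) Q-system in $\bim A\cC A$ containing $\Theta$ induces, for the intermediate extension, a commutative Q-system in the ambichiral category $\bim[0]B\cC B$, so that the first part applies and forces $A=B$. You instead argue directly: $\theta_{A\subset B_1}=\theta^2$, hence $\dim(A'\cap B_1)=\langle \id_A,\theta^2\rangle=\langle\theta,\theta\rangle\geq 2$ whenever $A\neq B$ (since $\id\prec\theta$ and $\theta\neq\id$ force $\theta$ reducible), so $A\subset B_1$ is reducible and cannot come from a connected Q-system. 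Your argument is shorter and more elementary: it bypasses the ambichiral Lemma and the first part entirely, needing only the paper's standing convention that Q-systems are connected and basic sector arithmetic. What the paper's route buys is that it stays inside machinery it develops and reuses anyway (intermediate extensions of local extensions correspond to commutative Q-systems in $\bim[0]B\cC B$), and it makes the obstruction visibly the same in both halves of the statement, namely the triviality of $[\gamma]\cap\bim[0]B\cC B$.

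One small repair. Your justification ``a Q-system is determined up to equivalence by its underlying endomorphism'' is false in general: for instance $\bigoplus_{g\in G}\alpha_g$ can carry inequivalent Q-system structures corresponding to different $2$-cocycles. It is also not what you need. Since $B\subset B_1$ is irreducible (its relative commutant is $\Hom(\id_B,\gamma)\cong\CC$), its conditional expectation and hence its Q-system are unique up to equivalence, and the underlying endomorphism of any Q-system realizing $B\subset B_1$ lies in the sector $[\gamma]$; membership of that Q-system in $\bim[0]B\cC B$ is then a condition on the object alone, because $\bim[0]B\cC B$ is a full and replete tensor subcategory, so the morphisms $v,y$ automatically lie in it once $\gamma$ does. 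With that rephrasing, the equivalence you assert—``$B\subset B_1$ comes from a Q-system in $\bim[0]B\cC B$ iff every irreducible subsector of $\gamma$ is ambichiral''—holds, and the rest of your argument goes through unchanged.
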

\begin{proof} 
  For the second part we use that local Q-systems $\tilde \Gamma\prec \Gamma$
  come from commutative Q-systems in $\Gamma\in \bim[0] M\cC M$.    
\end{proof}
The following corollary shows that Jones basic construction in either of the two
directions applied to a non-trivial local inclusions of completely rational
nets does not give another net.
\begin{cor}
  Let $\A$ be a completely rational net, $A=\A(I)$ and $\bim A\cC A=\Rep^I(\A)$
  and $B=\cB(I)$ for $\cB\supset \A$ a finite index local extension.  Consider
  the basic construction $A_1=\bar\iota(A)\subset A\subset B\subset
  B_1=B\vee\{e_A\}$. Then the following are equivalent:
  \begin{enumerate}
    \item $B_1$ comes from a subnet of $\A$,
    \item $B_1$ gives a (non-local) extension of $\cB$, and
    \item $A=B$, \ie $\A=\cB$.
  \end{enumerate}
\end{cor}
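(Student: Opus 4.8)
The plan is to reduce all three conditions to the already-established Corollary \ref{cor:JonesExtension} by translating the two net-theoretic realizability statements into statements about Q-systems. First I would record the tower data. Since $\A\subset\cB$ is a local finite index inclusion, the dual canonical Q-system $\Theta=(\theta,w,x)$ with $\theta=\bar\iota\iota$ is commutative in $\bim A\cC A=\Rep^I(\A)$ by \cite[4.4.\ Corollary]{LoRe1995}, its dual Q-system is $\Gamma=(\gamma,v,y)$ with $\gamma=\iota\bar\iota$, and $\bim[0]B\cC B=\Rep^I(\cB)$ is the DHR category of the extension. The basic construction $B\subset B_1$ is governed by $\Gamma$, because the dual canonical endomorphism of $B\subset B_1$ is the canonical endomorphism $\gamma$ of $A\subset B$.

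Next I would make the two translations explicit. Condition (1), that $B_1$ comes from a subnet of $\A$, means precisely that $\A$ is realized as a subnet of a net $\cB_1$ with $\cB_1(I)=B_1$; at the level of the fixed interval $I$ this says that the inclusion $A\subset B_1$ arises from a commutative Q-system in $\Rep^I(\A)=\bim A\cC A$ (again using that the dual canonical Q-system of a local net inclusion is commutative and lies in the DHR category). Condition (2), that $B_1$ gives a (possibly non-local) extension of $\cB$, means that $B\subset B_1$ arises from a Q-system in $\Rep^I(\cB)=\bim[0]B\cC B$, where commutativity of that Q-system would correspond to locality of the extension; this is exactly why the qualifier ``non-local'' appears, since we allow the full DHR category and not merely its commutative Q-systems.

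With these translations in hand the equivalences are immediate: the second assertion of Corollary \ref{cor:JonesExtension} gives (1)$\Leftrightarrow$(3) and its first assertion gives (2)$\Leftrightarrow$(3). For the implications (3)$\Rightarrow$(1) and (3)$\Rightarrow$(2) one argues directly that $A=B$ forces $\iota=\id$, hence $\gamma=\theta=\id$ and $e_A=1$, so that $B_1=B\vee\{e_A\}=B=A$ and both (1) and (2) hold trivially. The only nontrivial content lies in the forward directions (1)$\Rightarrow$(3) and (2)$\Rightarrow$(3), which are precisely the two ``only if'' parts of Corollary \ref{cor:JonesExtension} and rest ultimately on the preceding Lemma that $[\gamma]\cap\bim[0]B\cC B=[\id]$: since $\gamma$ is purely solitonic, there can be no Q-system in $\bim[0]B\cC B$ refining it, and no commutative Q-system in $\bim A\cC A$ inducing $A\subset B_1$, unless $[\gamma]=[\id]$, i.e.\ $A=B$.

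The main obstacle here is conceptual rather than computational: it is to pin down the correct net-theoretic reading of the two realizability conditions so that they match the categorical hypotheses of Corollary \ref{cor:JonesExtension}. Once ``$B_1$ comes from a subnet of $\A$'' is identified with ``$A\subset B_1$ comes from a commutative Q-system in $\bim A\cC A$,'' and ``$B_1$ gives a (non-local) extension of $\cB$'' with ``$B\subset B_1$ comes from a Q-system in $\bim[0]B\cC B$,'' nothing further is needed beyond the cited corollary and the solitonic Lemma.
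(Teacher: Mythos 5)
Your handling of (2)$\Leftrightarrow$(3) and of the trivial implication (3)$\Rightarrow$(1),(2) agrees with the paper: condition (2) is translated into ``$B\subset B_1$ comes from a Q-system in $\Rep^I(\cB)=\bim[0]B\cC B$'' and the first assertion of Corollary \ref{cor:JonesExtension} gives $A=B$. The gap is in your translation of condition (1). You read ``$B_1$ comes from a subnet of $\A$'' as ``$\A$ is realized as a subnet of a (local) net $\cB_1$ with $\cB_1(I)=B_1$'', i.e.\ you make $\A$ the subnet of something larger. But a subnet \emph{of} $\A$ is a net contained in $\A$: the intended condition is that there is a subnet $\A_1\subset\A$ realizing the downward basic construction, $\A_1(I)=A_1$, so that $B_1$ arises from $\A_1$ by iterating the Jones construction upward (recall that $A_1\subset A$ is conjugate to $B\subset B_1$ by the shift in the Jones tower). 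This is why $A_1$ appears in the statement of the corollary at all, why the sentence introducing the corollary speaks of the basic construction ``in either of the two directions'' (downward for (1), upward for (2)), and why the paper's own proof passes through the intermediate conclusion $A_1=A$ --- a step which is invisible in your argument. As written, your (1)$\Rightarrow$(3) proves a different (ultimately equivalent, but not obviously so) statement, not the intended one.

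The repair is short and is what the paper's one-line proof encodes. Given a subnet $\A_1\subset\A$ with $\A_1(I)=A_1$, the net $\A_1$ is completely rational by \cite{Lo2003}, and $\A_1\subset\cB$ is again a finite-index local inclusion of nets, so by \cite[4.4.\ Corollary]{LoRe1995} the inclusion $A_1\subset B$ comes from a commutative Q-system in the UMTC $\Rep^I(\A_1)$. Since $A_1$ is the downward basic construction of $A\subset B$, the algebra $B$ is precisely the Jones extension of $A_1\subset A$; hence the second assertion of Corollary \ref{cor:JonesExtension}, applied with $A_1\subset A$ in place of $A\subset B$ and with $\Rep^I(\A_1)$ as the ambient UMTC, forces $A_1=A$. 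Then $[B:A]=[A:A_1]=1$, i.e.\ $A=B$. So your argument for (2) stands, but (1)$\Rightarrow$(3) requires this application of the corollary one step down the tower rather than a direct appeal to its second assertion for $A\subset B_1$.
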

\begin{proof}
  $(3)\Rightarrow (1),(2)$ are trivial. 
  Given (2) Corollary \ref{cor:JonesExtension} implies $A=B$. 
  Similarly, if (1) holds then $A_1=A$ and therefore $A=B$. 
\end{proof}
Let $\cB$ be a completely rational conformal net, \ie $\Rep(\cB)$ is UMTC.
The following proposition gives restriction on the possible representation
categories $\Rep(\A)$ of a finite index subnet $\A\subset \cB$.

We say a functor $K\colon \cC \to \cF$ from a braided UFC $\cC$ to a UFC is
\textbf{central}, if there is a braided functor $\tilde K \colon \cC \to
Z(\cF)$, such that the following diagram commutes:
$$
  \raisebox{-0.5\height}{\includegraphics{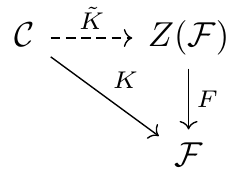}}
$$
\begin{prop} 
  \label{prop:MainNecessary}
  Let $\cB$ be a completely rational conformal net.
  If $\A\subset\cB$ is a finite index subnet, then 
  \begin{enumerate}
    \item There is a UFC $\cF$ with and a injective functor
      $\overline{\Rep(\cB)} \to\cF$, which is central.
    \item $\Rep(\A)$ is braided equivalent to
      $C_{Z(\cF)}\left(\overline{\Rep(\cB)}\right)$.
  \end{enumerate}
\end{prop}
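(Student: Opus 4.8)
The plan is to realise $\A(I)\subset\cB(I)$ via $\alpha$-induction and to compute the Drinfel'd center of the chiral category it produces. Write $A=\A(I)\subset B=\cB(I)$; since $\cB$ is completely rational, so is $\A$ by \cite{Lo2003}, and $\cN:=\Rep(\A)=\bim A\cC A$ and $\cD:=\Rep(\cB)$ are both UMTC by \cite{KaLoMg2001}. The inclusion is irreducible \cite[Lemma 14]{Lo2003}, and by Corollary \ref{cor:NetNoMultiplicities} its dual canonical Q-system $\Theta=(\theta,w,x)$ is a commutative Q-system in the modular category $\cN$. I would take $\cF:=\bim[+]B\cC B$, the UFC generated by $\alpha^+$-induction of $\Theta$, which contains the ambichiral category $\bim[0]B\cC B$ as a full fusion subcategory.

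For part (1) I would use that, for a commutative (hence local) Q-system, the ambichiral category is exactly the representation category of the local extension $\A\subset\cB$, so that $\bim[0]B\cC B$ is braided equivalent to $\rev{\Rep(\cB)}$ (the reversal being the standard sign convention, compatible with Proposition \ref{prop:LagrangianQSystemDualToLR} and \cite{BiKaLoRe2014-2}). The resulting functor $\rev\cD\cong\bim[0]B\cC B\hookrightarrow\cF$ is injective and full since it is a full fusion subcategory. It is central because every ambichiral object, being simultaneously $\alpha^+$- and $\alpha^-$-induced, carries the natural half-braiding $\cE$ of $\alpha^+$-induction against all of $\bim[+]B\cC B$; this half-braiding satisfies the hexagon axioms and hence provides the braided lift $\rev\cD\to Z(\cF)$ demanded by the definition of a central functor.

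For part (2) the tool is the braided functor $F\colon\cN\to Z(\cF)$, $\rho\mapsto(\alpha^+_\rho,\cE_\rho)$, already introduced in Section \ref{sec:LR}. I would check: (i) $F$ is fully faithful into the center, using $\Hom(\alpha^+_\rho,\alpha^+_\sigma)=\Hom(\iota\rho,\iota\sigma)$ \cite[Lemma 3.5]{BcEv1998} together with modularity of $\cN$ to see that the morphisms commuting with the half-braidings are precisely $\Hom_\cN(\rho,\sigma)$; (ii) the image lies in $C_{Z(\cF)}(\rev\cD)$, since for ambichiral $\beta$ the monodromy $\varepsilon(\alpha^+_\rho,\beta)\varepsilon(\beta,\alpha^+_\rho)$ is trivial. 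As $\rev\cD$ is a modular subcategory of the UMTC $Z(\cF)$, Müger's theorem \cite{Mg2003-MC} gives $Z(\cF)\cong\rev\cD\boxtimes C_{Z(\cF)}(\rev\cD)$; comparing global dimensions, with $\Dim\cN=\Dim\cD\cdot[\cB:\A]^2$ and $\Dim\cF=\Dim\cD\cdot[\cB:\A]$, yields $\Dim C_{Z(\cF)}(\rev\cD)=\Dim\cN$. Hence the fully faithful $F$ is a braided equivalence $\Rep(\A)=\cN\xrightarrow{\ \sim\ }C_{Z(\cF)}(\rev{\Rep(\cB)})$.

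The main obstacle is the content of part (2): showing that $\alpha^+$-induction lifts to a fully faithful braided functor into $Z(\cF)$ whose image is exactly the centralizer of $\rev{\Rep(\cB)}$, equivalently the decomposition $Z(\cF)\cong\Rep(\A)\boxtimes\rev{\Rep(\cB)}$ for the (generally non-Lagrangian) commutative Q-system $\Theta$. The cleanest way to secure this within the present framework is via Corollary \ref{cor:IntermediateLR}: it presents $A\subset B$ as the intermediate inclusion $S\subset M\subset T$ of the Longo--Rehren subfactor $S\subset T$ for $\cF$, whose dual category is $Z(\cF)$ (Section \ref{sec:LR}), thereby reducing the identification to the Lagrangian case treated just before Section \ref{sec:LR}, where $Z(\bim[+]B\cC B)\cong\bim A\cC A$. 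Once the two inclusions (image $\subseteq$ centralizer and full faithfulness) are in hand, Müger's decomposition and the dimension bookkeeping make the equality automatic.
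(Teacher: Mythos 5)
Your proposal is correct and follows essentially the same route as the paper: the paper's proof likewise takes $\cF=\bim[+]B\cC B$, invokes the decomposition $Z(\cF)\cong \bim A\cC A\boxtimes\rev{\bim[0]B\cC B}$ (which you re-derive via full faithfulness of $\rho\mapsto(\alpha^+_\rho,\cE_\rho)$, M\"uger's centralizer theorem, and dimension counting), and obtains centrality of $\rev{\bim[0]B\cC B}\to\cF$ from the relative-braiding argument of B\"ockenhauer--Evans--Kawahigashi. The only difference is compression: what you verify by hand (and, alternatively, reduce to the Lagrangian case via the intermediate Longo--Rehren picture), the paper disposes of by citation.
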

\begin{proof}
  Take $\cF:=\bim[+] B\cC B$, then $Z(\cF)\cong \bim A \cC A \boxtimes
  \rev{\bim[0]B \cC B}$ and (2) follows. That $\rev{\bim[0]B \cC B }\to
  \bim B\cC B$ is central follows from the arguments as in \cite[Corollary
  4.8]{BcEvKa2001}.
\end{proof}
We note that this is sufficient for the existence on the level of braided
subfactors in the following sense.
\begin{prop}
  Let $\cF$ be a UFC and $\cD$ be a UMTC. If there is an injective and central
  functor $K\colon \overline \cD\to \cF$, then there is a UFC $\bim A \cC A$
  and a commutative Q-system $\Theta$ with corresponding subfactor $A\subset
  B$, such that $\cF \cong \bim[+]B \cC B$ and $\cD$ braided equivalent to
  $\bim[0]B\cC B$.
\end{prop}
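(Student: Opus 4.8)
The plan is to realize the abstract categorical data by a subfactor, using the Longo--Rehren machinery of Section~\ref{sec:LR} together with the M\"uger decomposition.

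First I would unwind the hypothesis. By the definition of a central functor the injective $K\colon\overline{\cD}\to\cF$ factors through the forgetful functor as a \emph{braided} functor $\tilde K\colon\overline{\cD}\to Z(\cF)$, and since $K$ is full and injective so is $\tilde K$; hence $\overline{\cD}$ is realized as a full braided subcategory of the Drinfel'd center $Z(\cF)$, which is a UMTC. As $\cD$ is a UMTC, $\overline{\cD}$ is modular, so M\"uger's theorem \cite{Mg2003-MC} gives a braided equivalence
\begin{align*}
  Z(\cF)&\cong \overline{\cD}\boxtimes C_{Z(\cF)}(\overline{\cD})\,.
\end{align*}
I would then set $\bim A\cC A:=C_{Z(\cF)}(\overline{\cD})$, the M\"uger centralizer; being the centralizer of a modular subcategory inside a modular category it is again a UMTC, which I realize inside $\End_0(A)$ for a hyperfinite type \threeone factor $A$.

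Next I would produce the commutative Q-system. The Longo--Rehren construction of Section~\ref{sec:LR} applied to $\cF$ supplies the canonical Lagrangian Q-system $\Theta_\cF$ in $Z(\cF)$, namely the dual canonical Q-system of an inclusion whose $\alpha^+$-induction category is $\cF$. I would obtain $\Theta$ by pushing $\Theta_\cF$ onto the tensor factor $\bim A\cC A$: under the splitting $Z(\cF)\cong\overline{\cD}\boxtimes\bim A\cC A$ one applies the right adjoint of the inclusion $\bim A\cC A\hookrightarrow Z(\cF)$ (equivalently, one keeps the $\overline{\cD}$-trivial part of $\Theta_\cF$), producing a connected commutative Q-system $\Theta$ in $\bim A\cC A$ and hence a subfactor $A\subset B$. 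In the language already developed this says that $A\subset B$ is the intermediate subfactor of the Longo--Rehren inclusion for $\cF$ cut out by the modular tensor factor $\overline{\cD}$ via the Galois correspondence of Proposition~\ref{prop:Galois}, so that Corollary~\ref{cor:IntermediateLR} and Proposition~\ref{prop:LagrangianQSystemDualToLR} apply directly.

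Finally I would identify the chiral categories. Since $A\subset B$ sits inside the Longo--Rehren inclusion for $\cF$, its $\alpha^+$-category is $\bim[+]B\cC B\cong\cF$; the structural isomorphism $Z(\bim[+]B\cC B)\cong\bim A\cC A\boxtimes\rev{\bim[0]B\cC B}$ established earlier then reads $Z(\cF)\cong\bim A\cC A\boxtimes\rev{\bim[0]B\cC B}$, and comparing with the M\"uger decomposition identifies $\rev{\bim[0]B\cC B}$ with $C_{Z(\cF)}(\bim A\cC A)=\overline{\cD}$, whence $\bim[0]B\cC B\cong\cD$ (the global-dimension count $\Dim\bim[0]B\cC B=\Dim\cD$ serving as a consistency check). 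The main obstacle is the middle step: I must show that pushing $\Theta_\cF$ onto $\bim A\cC A$ (equivalently, that the modular tensor factor $\overline{\cD}$ of $Z(\cF)$) yields a genuine \emph{connected commutative} Q-system whose chiral categories under $\alpha$-induction are \emph{precisely} $\cF$ and $\cD$, and not merely categories of the right global dimension. Everything else is bookkeeping with the correspondences of Sections~\ref{sec:LR}--\ref{sec:CommutativeQSystems}.
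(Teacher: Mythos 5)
Your strategy is the paper's own, almost verbatim: the paper likewise sets $\bim A\cC A:=C_{Z(\cF)}(\overline\cD)$ via M\"uger's decomposition $Z(\cF)\cong C_{Z(\cF)}(\overline\cD)\boxtimes\overline\cD$, likewise takes $\Theta$ to be the part $\tilde\Theta\cap(\bim A\cC A\boxtimes\id)$ of the Longo--Rehren Lagrangian Q-system $\tilde\Theta\in Z(\cF)$, and likewise identifies the resulting subfactor $A\subset B$ with the intermediate subfactor attached to $\overline\cD\subset\cF$ as in Corollary \ref{cor:IntermediateLR}. So the route is not the problem.

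The problem is that the step you flag as ``the main obstacle'' is precisely the mathematical content of the proposition, and your proposal contains no idea for closing it; as written it is a genuine gap. The assertion that ``since $A\subset B$ sits inside the Longo--Rehren inclusion for $\cF$, its $\alpha^+$-category is $\cF$'' does not follow from anything established: $\alpha$-induction for $A\subset B$ is performed inside the smaller ambient UMTC $\bim A\cC A$, not inside $Z(\cF)$, so a priori one only controls global dimensions ($\Dim\bim[+]B\cC B=\Dim\cF$ and $\Dim\bim[0]B\cC B=\Dim\cD$), which, as you yourself note, is not enough. Moreover your identification $\bim[0]B\cC B\cong\cD$ is derived from this unproved equivalence, so the entire conclusion hangs on it. The ingredient the paper uses to close the gap is a Lagrangian-algebra comparison: run the construction of Proposition \ref{prop:MainNecessary} on the subfactor $A\subset B$ itself, obtaining $Z(\bim[+]B\cC B)\cong\bim A\cC A\boxtimes\rev{\bim[0]B\cC B}$ together with its canonical Lagrangian algebra, and check that this Lagrangian algebra is isomorphic to $A_\cF=\tilde\Theta$ in $Z(\cF)\cong\bim A\cC A\boxtimes\overline\cD$ (both arise from the same \'etale algebra $\Theta\boxtimes\id$ by composition with canonical Longo--Rehren algebras). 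Since a Lagrangian algebra in a UMTC determines the corresponding fusion category up to equivalence --- the fusion category is recovered as its category of modules, \cf Proposition \ref{prop:LagrangianQSystemDualToLR} and the bijection between Lagrangian algebras and module categories --- the isomorphism of Lagrangian algebras forces $\cF\cong\bim[+]B\cC B$, and only then does comparing the two decompositions of the center yield the braided equivalence $\cD\cong\bim[0]B\cC B$. Without this rigidity argument (or an equivalent one), your proof cannot be completed as written.
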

\begin{proof}
  We may assume that $\cF\subset \End(N)$ and by the Longo--Rehren inclusion,
  see Section \ref{sec:LR}, we get that $Z(\cF)\subset \End(A)$ for some $A$.
  Let $\cC :=C_{Z(\cF)}(\overline\cD)$. 
  Then $Z(\cF)$ is braided equivalent to 
  $\cC\boxtimes \overline\cD$ by \cite[Theorem 4.2]{Mg2003-MC}. 
  Then the Longo--Rehren constructions associated with $\cF$ 
  gives a Lagrangian Q-system
  $\tilde\Theta$ in $\cC\boxtimes \overline\cD$ and we claim
  that $\Theta\boxtimes \id :=\tilde\Theta\cap\cC \boxtimes \id$ 
  does the job. 
  As in Corollary \ref{cor:IntermediateLR} we get that 
  $\Theta\boxtimes \id$ correspond to the subcategory $\overline\cD \subset
  \cF$.
  If we see $\cC=\bim A\cC A$ as a category of endomorphisms and consider 
  $A\subset B$ corresponding to $\Theta$, we get that $\bim[0] B\cC B\cong \cD$
  and that $Z(\bim[+] B\cC B)\cong \bim A \cC A \boxtimes\rev{ \bim[0] B \cC B}$.

  Now we can do the same construction as in the proof of Proposition 
  \ref{prop:MainNecessary} for $\bim[+] B\cC B$ and see that we get isomorphic 
  Lagrangian algebras. Therefore we must have an isomorphism 
  between $\cF$ and $\bim[+]B\cC B$ which gives a (braided) equivalence between  
  $\cD$ and $\bim[0]B\cC B$.
\end{proof}
For $K\colon \overline \cD\to \cF$, $K\colon \cD\to \cG$ injective and central 
functors we define $\cF\boxtimes_\cD \cG$ to be the category of left modules
$\bim {\Theta_\LR}{(\cF\boxtimes \cG)}{}$, where $\Theta_\LR$ is the canonical 
Longo--Rehren Q-system in $\rev{\cD}\boxtimes\cD\subset \cF\boxtimes \cG$.
Up to conventions this is in accordance with the notation in 
\cite[Remark 3.9]{EtNiOs2010}. 
\begin{cor} 
  Let $\bim A\cC A$ be a UMTC and $\Theta$ be a commmutative Q-system with
  corresponding subfactor $A\subset B$, then 
  \begin{align}
    \bim B\cC B &\cong 
    \bim[+]B\cC B \bim{\hphantom{\!\!\!\!\!\!\bim[0]B\cC
    B}}{\boxtimes}{\!\!\!\!\!\!\bim[0]B\cC B} 
    \bim[-] B\cC B\,.
  \end{align}
\end{cor}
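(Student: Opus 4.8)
The plan is to generalize the Lagrangian case treated in the preceding Proposition, where $\bim[0]B\cC B$ is trivial and $\boxtimes_{\bim[0]B\cC B}$ reduces to an ordinary Deligne product, by keeping track of the now non-trivial ambichiral overlap. First I would check that the right-hand side is even defined in the sense of the definition above: the two inclusions $\bim[0]B\cC B\hookrightarrow\bim[+]B\cC B$ and $\bim[0]B\cC B\hookrightarrow\bim[-]B\cC B$ are central, with opposite braidings, because on the common subcategory the relative braiding of \cite{BcEvKa2001} restricts to a genuine braiding, and $\bim[0]B\cC B$ is braided equivalent to the modular category of local (dyslexic) $\Theta$-modules. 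Hence $\Theta_\LR=\bigoplus_{\alpha\in\Irr(\bim[0]B\cC B)}\bar\alpha\boxtimes\alpha$ is a connected \'etale algebra in $\rev{\bim[0]B\cC B}\boxtimes\bim[0]B\cC B\subset\bim[+]B\cC B\boxtimes\bim[-]B\cC B$ and $\bim[+]B\cC B\boxtimes_{\bim[0]B\cC B}\bim[-]B\cC B$ is a fusion category.

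Next I would construct the comparison functor. Since $\bim[\pm]B\cC B$ are full subcategories of $\bim B\cC B\subset\End(B)$, composition of endomorphisms gives a functor $G\colon\bim[+]B\cC B\boxtimes\bim[-]B\cC B\to\bim B\cC B$, $(\mu,\nu)\mapsto\mu\nu$, and the relative braiding $\varepsilon_r(\nu,\mu)\colon\nu\mu\to\mu\nu$ serves as its tensorator, exactly as in the Lagrangian proof. For $\alpha\in\bim[0]B\cC B$ the object $\mu\alpha\nu$ is the same whether $\alpha$ is absorbed into the left or the right factor, so $G$ is $\bim[0]B\cC B$-balanced; by the universal property of the relative tensor product, realized by the module category $\bim{\Theta_\LR}{(\bim[+]B\cC B\boxtimes\bim[-]B\cC B)}{}$ of the definition above, $G$ descends to a tensor functor
\begin{align*}
  \bar G\colon\bim[+]B\cC B\boxtimes_{\bim[0]B\cC B}\bim[-]B\cC B\longrightarrow\bim B\cC B\,.
\end{align*}

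Then I would show $\bar G$ is an equivalence by a dimension count. It is dominant: since $\bim A\cC A$ is modular, $\bim[+]B\cC B\cup\bim[-]B\cC B$ generates $\bim B\cC B$, so every simple object of $\bim B\cC B$ is a summand of some $\mu\nu=G(\mu,\nu)$. For the global dimensions, $\bim[-]B\cC B\cong(\bim[+]B\cC B)^{\op}$ gives $\Dim\bim[-]B\cC B=\Dim\bim[+]B\cC B$; the computation $Z(\bim[+]B\cC B)\cong\bim A\cC A\boxtimes\rev{\bim[0]B\cC B}$ from the proof of Proposition~\ref{prop:MainNecessary} together with $\Dim Z(\cF)=(\Dim\cF)^2$ gives $(\Dim\bim[+]B\cC B)^2=\Dim\bim A\cC A\cdot\Dim\bim[0]B\cC B$; and $\bim B\cC B$ is Morita equivalent to $\bim A\cC A$, so $\Dim\bim B\cC B=\Dim\bim A\cC A$. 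Combining these, and using $d(\Theta_\LR)=\Dim\bim[0]B\cC B$,
\begin{align*}
  \Dim\bigl(\bim[+]B\cC B\boxtimes_{\bim[0]B\cC B}\bim[-]B\cC B\bigr)
  &=\frac{\Dim\bim[+]B\cC B\cdot\Dim\bim[-]B\cC B}{\Dim\bim[0]B\cC B}
   =\frac{(\Dim\bim[+]B\cC B)^2}{\Dim\bim[0]B\cC B}\\
  &=\Dim\bim A\cC A=\Dim\bim B\cC B\,.
\end{align*}
A dominant tensor functor between fusion categories of equal global dimension is an equivalence, so $\bar G$ is the desired equivalence of fusion categories.

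The hard part will be the coherence underlying the comparison functor: verifying that the relative braiding $\varepsilon_r$ genuinely turns $G$ into a tensor functor and that the balancing it provides is precisely the one encoded by $\Theta_\LR$-modules, so that $G$ descends to $\bar G$ in the module-theoretic sense of the paper's definition of $\boxtimes_{\bim[0]B\cC B}$. Once this identification is secured, the remaining steps are routine, since the center computation and the Morita invariance of the global dimension feeding the dimension count are already available above.
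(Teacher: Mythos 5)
Your argument is essentially correct, but it is a genuinely different proof from the one in the paper. The paper constructs no comparison functor and needs no dominance or dimension count: it places the problem inside the two-step inclusion $A\otimes B^\op\subset B\otimes B^\op\subset B_\LR$, where $B\otimes B^\op\subset B_\LR$ is the Longo--Rehren inclusion with respect to $\bim B\cC B$, so that $\bim{B_\LR}\cC{B_\LR}\cong\bim[+]B\cC B\boxtimes\bim[-]B\cC B$ while the system of $B\otimes B^\op$--$B\otimes B^\op$ sectors is $\bim B\cC B\boxtimes\rev{\bim[0]B\cC B}$; passing to modules over the dual Longo--Rehren Q-system $\hat\Theta_\LR$ then identifies $\bim B\cC B$ with $\bim {\Theta_\LR}{(\bim[+]B\cC B\boxtimes \bim[-]B\cC B)}{}$, which \emph{is} the paper's definition of the right-hand side, so the equivalence comes for free from the subfactor machinery of \cite{Re2000,Ka2002} already used for Proposition \ref{prop:LagrangianQSystemDualToLR} and Corollary \ref{cor:IntermediateLR}. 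Your route---the multiplication functor $(\mu,\nu)\mapsto\mu\nu$ with the relative braiding of \cite{BcEvKa2001} as tensorator, descent through the balanced product, then surjectivity plus equality of global dimensions---buys an explicit description of the equivalence and would work verbatim in the abstract setting of Appendix \ref{app:TensorCategories}; what it costs is exactly what you flagged: you must prove, or cite, that the paper's module-theoretic definition of $\boxtimes_{\bim[0]B\cC B}$ has the universal property of the balanced tensor product (essentially \cite[Remark 3.9]{EtNiOs2010}, which the paper invokes only as a remark on notation), and that the relative braiding coherences make $G$ a balanced \emph{tensor} functor, not merely a balanced bifunctor. Two simplifications are available to you: the dimension identity you derive from $Z(\bim[+]B\cC B)\cong\bim A\cC A\boxtimes\rev{\bim[0]B\cC B}$ appears verbatim among the well-known relations displayed immediately before the corollary, so it can be quoted directly; and your final step is the standard fact that a surjective tensor functor between fusion categories of equal Frobenius--Perron dimension is an equivalence, see \cite[Section 6.3]{EtGeNiOs2015}.
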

\begin{proof}
  As before we consider the inclusion $A\otimes B^\op\subset B\otimes B^\op
  \subset B_\LR$, and we get with $\bim {B_\LR}\cC {B_\LR}
  \cong \bim[+] B \cC B \boxtimes \bim[-] B\cC B$, that
  the dual category $\bim {B\otimes B^op}\cC {B\otimes B^\op}$ is equivalent 
  to $\bim B\cC B\boxtimes \rev{\bim[0] B\cC B}$.
  One get that 
  $(\bim[0] B\cC B\boxtimes \rev{\bim[0] B\cC B})_{\hat \Theta_\LR} \cong 
  \bim[0] B\cC B$ and 
  $(\bim[+] B\cC B\boxtimes \rev{\bim[0] B\cC B})_{\hat \Theta_\LR}\cong 
  \bim B\cC B$, which is the same as 
  $\bim {\Theta_\LR}{(\bim[+]B\cC B\boxtimes \bim[-]B\cC B)}{}$. 
\end{proof}
This is formalization of the statement \cite[Theorem 11.1]{Oc2001}
which considers only the $\SU(2)_k$ case.

Let $\bim A\cC A$ be UMTC and $A\subset B$ coming from a commutative Q-system 
in $\bim A\cC A$, then we have the following well-known relations for the 
global dimensions \cite{BcEvKa2001,BcEv2000}:
\begin{align}
  \Dim \left(\bim B\cC B\right)&=\Dim\left(\bim A\cC A\right)\,,&
  \Dim \left(\bim[\pm]B\cC B\right)&=[B:A]^{-1}\cdot\Dim\left(\bim A\cC
    A\right)\,,\\
  \Dim \left(\bim[0]B\cC B\right) &=[B:A]^{-2}\cdot \Dim\left(\bim A\cC
    A\right)\,,&
  \Dim\left(\bim B\cC B\right) &= 
    \frac{\Dim\left(\bim[+]B\cC B\right)\cdot \Dim\left(\bim[-]B\cC
      B\right)}{\Dim\left(\bim[0]B\cC B\right)}\,,\\
  \frac{\Dim\left(\bim[\pm]B\cC B\right)}{\Dim\left(\bim[0]B\cC B\right)}&=[B:A]
  \,.
\end{align}
\begin{rmk}
  Even without knowing the existence of the net $\A_{\Hg}$ in Problem
  \ref{prob:Haagerup} we have on the level of braided subfactors
  an inclusion
  $A_{\Hg}\subset A_{E_6A_2} \subset A_{E_8}$.
  Now by tensoring we get an inclusion
  $A_{\Hg}\otimes A_{E_6A_2} \subset A_{E_6A_2}$
  and it is easy to check 
  $A_{\Hg}\otimes A_{E_6A_2}\subset A_{E_8}\otimes A_{E_8}$
  gives $\cF_\Hg\boxtimes \Vect_{\ZZ_3}$.
  But there is another extension coming from the Longo--Rehren extension of 
  $A_{E_6A_2}\otimes A_{E_6A_2}$ (since $\Rep(\A_{E6A_2})\cong
  \rev{\Rep(\A_{E_6A_2})}$).
  The associated UFC of the inclusion 
  $A=A_{\Hg}\subset B=A_{E_6A_2}$ has to be a $\ZZ^2_3 +9$
  near group category. Namely, the modular invariant of this inclusion
  \cite[(2.4)]{EvGa2011} is of the form 
  \begin{align}
    Z&= \sum_{i,j} Z_{ij}\chi_i\bar\chi_j=|\chi_0+\chi_1|^2 + 2|\chi_2|^2
    +2|\chi_3|^2 + 2|\chi_4|^2+ 2|\chi_5|^2
  \end{align}
  and $\bim[0] B \cC B \cong \Rep(\A_{E_6A_2})$ which has $\ZZ_3^2$-fusion
  rules. Since $\tr((Z_{ij}))=10$, using \cite[Corollary 6.13]{BcEvKa1999},
  we know that $\bim[+] B\cC B$ has objects $\ZZ_3^2\cup\{\rho\}$ and by 
  calculating the global dimension one can conclude that the only possible
  fusion rules are $\ZZ_3^2+9$.
  This observation is related to \cite[Example 12.13]{Iz2015}. 
\end{rmk}
This implies that $\cF_\Hg\boxtimes \Vect_{\ZZ_3}$ and the 
obtained $\ZZ^2_3 +9$ near group category
have the same Drinfel'd center. Therefore we have shown:
\begin{prop} 
  $\cF_\Hg\boxtimes \Vect_{\ZZ_3}$
  is Morita equivalent to the (unique by \cite[Table 4]{EvGa2014}) 
  $\ZZ^2_3 +9$ near group category.  
\end{prop}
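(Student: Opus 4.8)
The plan is to reduce the statement to the fact that the Drinfel'd center is a complete Morita invariant: by \cite{Mg2003}, two unitary fusion categories are Morita equivalent (equivalently, weakly monoidally equivalent) precisely when their Drinfel'd centers are braided equivalent. Thus it suffices to exhibit a braided equivalence between $Z(\cF_\Hg\boxtimes\Vect_{\ZZ_3})$ and the center of the $\ZZ_3^2+9$ near group category $\mathcal{N}$, and this is exactly what the preceding Remark supplies.

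First I would compute the left-hand center. Since the center of a Deligne product is the Deligne product of centers, $Z(\cF_\Hg\boxtimes\Vect_{\ZZ_3})\cong Z(\cF_\Hg)\boxtimes Z(\Vect_{\ZZ_3})$. Here $Z(\cF_\Hg)\cong\Rep(\A_\Hg)$ is the quantum double of the Haagerup category, while $Z(\Vect_{\ZZ_3})\cong\Rep(\A_{E_6A_2})$: the net $\A_{E_6A_2}=\A_{E_8}^{\ZZ_3}$ is the $\ZZ_3$-orbifold of the holomorphic net $\A_{E_8}$, so by the holomorphic case its representation category is the pointed modular category $Z(\Vect_{\ZZ_3})$. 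This category is self-reverse (the hyperbolic form on $\ZZ_3\times\widehat{\ZZ_3}$ is isomorphic to its inverse), so $\rev{\Rep(\A_{E_6A_2})}\cong Z(\Vect_{\ZZ_3})$.

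Next I would identify the right-hand center using the braided subfactor $A=\A_\Hg\subset B=\A_{E_6A_2}$. By the Remark, the block structure of the Evans--Gannon modular invariant together with $\tr(Z)=10$, \cite[Corollary 6.13]{BcEvKa1999} and the global-dimension count force $\bim[+]B\cC B$ to carry the unique (by \cite[Table 4]{EvGa2014}) $\ZZ_3^2+9$ near group fusion rules, so $\mathcal{N}\cong\bim[+]B\cC B$. Applying the computation in the proof of Proposition \ref{prop:MainNecessary}, $Z(\mathcal{N})\cong Z(\bim[+]B\cC B)\cong\bim A\cC A\boxtimes\rev{\bim[0]B\cC B}\cong\Rep(\A_\Hg)\boxtimes\rev{\Rep(\A_{E_6A_2})}$. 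Combining with the previous paragraph gives $Z(\mathcal{N})\cong Z(\cF_\Hg)\boxtimes Z(\Vect_{\ZZ_3})\cong Z(\cF_\Hg\boxtimes\Vect_{\ZZ_3})$, whence the Morita equivalence.

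The final deduction is purely formal; the real work sits in the Remark, namely pinning down $\bim[+]B\cC B$ as a near group category from the numerical invariants and invoking the classification in \cite{EvGa2014} for uniqueness. The one point demanding care is the identification $\bim[0]B\cC B\cong\Rep(\A_{E_6A_2})\cong Z(\Vect_{\ZZ_3})$ with the \emph{untwisted} double, i.e.\ checking that the associator of the $\ZZ_3$-orbifold is trivial so that $\Vect_{\ZZ_3}$ (rather than some $\Vect_{\ZZ_3}^\omega$) appears; this is what makes the two centers literally coincide.
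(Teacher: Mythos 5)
Your proof is correct and follows essentially the same route as the paper: the preceding Remark \emph{is} the paper's proof, establishing that both categories have Drinfel'd center braided equivalent to $Z(\cF_\Hg)\boxtimes Z(\Vect_{\ZZ_3})$ and then concluding Morita equivalence from equality of centers, exactly as you do. One repair worth making: since the existence of the net $\A_{\Hg}$ is precisely the open Problem \ref{prob:Haagerup}, the identification $Z(\cF_\Hg)\cong\Rep(\A_\Hg)$ should be phrased, as the paper is careful to do, on the level of the braided subfactor $A_{\Hg}$ (realized, e.g., via the Longo--Rehren inclusion of $\cF_\Hg$, whose ambient braided system is $Z(\cF_\Hg)$), after which every step of your argument goes through verbatim.
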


\subsection{The Structure of Generalized Orbifolds of Completely Rational Nets}
Let us assume that $\cB$ is a completely rational conformal net, so in particular 
$\Rep(\cB)$ is a UMTC. The restriction of the structure of finite index
inclusions of completely rational nets $\A\subset \cB$ gives us
a complete characterization of proper hypergroup actions on $\cB$.
A different and harder problem is how to construct these actions without 
explicitly knowing the subnet.
\begin{thm}
  \label{thm:HypergroupCompletelyRational}
  Let $\cB$ be a completely rational net and $K$ a hypergroup acting 
  properly on $\cB$.
  Then there is a UFC $\cF$ and a central embedding 
  $\rev{\Rep(\cB)}\to \cF$, such that 
  $K=K_\cF\CS K_\cB$ and $\Rep(\cB^K)=C_{Z(\cF)}(\rev{\Rep(\cB)})$,
  where $K_\cB$ is hypergroup associated with the Verlinde fusion ring 
  of $\Rep(\cB)$.
  
  In particular, if $\cB$ is holomorphic, then there exists a unitary 
  fusion category $\cF$ with $K=K_\cF$ and $\Rep(\cB^K)=Z(\cF)$. 
\end{thm}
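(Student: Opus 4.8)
The plan is to reduce everything to the local inclusion $\cB^K(I)\subset\cB(I)$ and then feed it through the machinery of Sections \ref{sec:HypergroupActionNet}, \ref{sec:Reconstruct} and \ref{sec:CommutativeQSystems}. First I would set $\A:=\cB^K$; by Theorem \ref{thm:GeneralizedOrbifoldGivesSubnet} this is a finite index completely rational subnet with $[\cB:\cB^K]=D(K)$, and by Theorem \ref{thm:CanonicalHypergroupFromSubnet} the given action is, up to equivalence, the canonical hypergroup action of the inclusion, so $K$ is recovered from the decomposition of the canonical endomorphism $\gamma=\iota\bar\iota\in\End(\cB(I))$. Fixing an interval $I$ and writing $A=\cB^K(I)\subset B=\cB(I)$, the category $\bim A\cC A=\Rep^I(\cB^K)$ is a UMTC, and $A\subset B$ comes from a commutative Q-system $\Theta$ in $\bim A\cC A$, its dual canonical Q-system being commutative exactly as in Corollary \ref{cor:NetNoMultiplicities}. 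I then set $\cF:=\bim[+]B\cC B$ and use the standard identification of the ambichiral category $\bim[0]B\cC B$ with $\Rep^I(\cB)$, so that $K_\cB=K_{\bim[0]B\cC B}$.

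The two categorical assertions are then immediate from Proposition \ref{prop:MainNecessary} applied to this inclusion: it provides the injective central functor $\rev{\Rep(\cB)}=\rev{\bim[0]B\cC B}\to\cF$ and the braided equivalence $\Rep(\cB^K)=\bim A\cC A\cong C_{Z(\cF)}(\rev{\Rep(\cB)})$, which is exactly the stated formula for the representation category.

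For the hypergroup identity I would invoke Corollary \ref{cor:IntermediateLR}: the inclusion $A\subset B$ is conjugate to an intermediate inclusion $S\subset M$ inside the Longo--Rehren inclusion $S\subset T$ for $\cF$. By Example \ref{ex:LR} the canonical hypergroup of $S\subset T$ is $K_\cF$ with $S=T^{K_\cF}$; by the Galois correspondence (Proposition \ref{prop:Galois}) the intermediate factor satisfies $M=T^L$ for a unique subhypergroup $L\le K_\cF$, and by Proposition \ref{prop:QuotientGalois} the canonical hypergroup of $S\subset M$ is $K_\cF\CS L$. Since $S\subset M$ is conjugate to $\cB^K(I)\subset\cB(I)$, whose canonical hypergroup is $K$, this yields $K\cong K_\cF\CS L$, and it remains only to identify $L$.

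The \emph{crux}, and the step I expect to be the main obstacle, is to show $L=K_\cB$. The point is that the intermediate factor $M$, being conjugate to $\cB(I)$, must correspond under the Galois dictionary precisely to the ambichiral fusion subcategory $\bim[0]B\cC B\subset\bim[+]B\cC B=\cF$, so that $L=K_{\bim[0]B\cC B}=K_\cB$; establishing this requires matching the net-theoretic description of $M$ with the categorical description of the subhypergroup, not merely comparing dimensions. As a consistency check one has $D(L)=[T:M]=D(K_\cF)/D(K)=\Dim\cF/D(K)=\Dim\bim[0]B\cC B=D(K_\cB)$, using $[B:A]=\Dim\bim[+]B\cC B/\Dim\bim[0]B\cC B$ from the $\alpha$-induction dimension formulas. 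Granting $L=K_\cB$ gives $K\cong K_\cF\CS K_\cB$. Finally, in the holomorphic case $\Rep(\cB)\cong\Vect$ is trivial, hence $\bim[0]B\cC B\cong\Vect$, $K_\cB$ is trivial, and the formulas collapse to $K\cong K_\cF$ and $\Rep(\cB^K)\cong C_{Z(\cF)}(\Vect)\cong Z(\cF)$, as claimed.
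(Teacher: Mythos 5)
Your overall route is the paper's route: the paper proves this theorem as the special case of Proposition \ref{prop:MoreKGradedExtensions} in which the auxiliary extension is trivial (so that $\bim M\cC M=\Rep^I(\cB)$), and that proposition is proved exactly by realizing $A\subset B$, after tensoring with $B^\op$, as the lower half of a chain of Longo--Rehren inclusions and then combining Example \ref{ex:LR} with Proposition \ref{prop:QuotientGalois}; your use of Proposition \ref{prop:MainNecessary} for the two categorical assertions and of Corollary \ref{cor:IntermediateLR}, Proposition \ref{prop:Galois}, and Proposition \ref{prop:QuotientGalois} for $K\cong K_\cF\CS L$ matches this. But the step you defer, $L=K_\cB$, is a genuine gap as written, and it is precisely the step carrying the main technical content; your dimension check $D(L)=D(K_\cB)$ cannot close it, since distinct subhypergroups of $K_\cF$ may have equal weight.

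The gap closes in two moves, both available in the paper. First, the subhypergroup $L$ attached to the intermediate factor $M$ by Proposition \ref{prop:Galois} is the canonical hypergroup of the \emph{upper} inclusion $M\subset T$: the restriction of the $K_\cF$-action to $L$ is a faithful, extremal, $\Omega$-preserving action with fixed point $M=T^L$, so the uniqueness clause of Theorem \ref{thm:CanonicalHypergroupFromSubfactor} identifies $L$ with the canonical hypergroup of $(M\subset T,\Omega)$. Second---and this is the substantive input you are missing---in the situation of Corollary \ref{cor:IntermediateLR} the pair $M\subset T$ is not an abstract intermediate inclusion to be matched against a subcategory through the Galois dictionary: by the full center construction it is concretely $B\otimes B^\op\subset B_\LR$, which by Kawahigashi's theorem \cite{Ka2002} is itself a Longo--Rehren inclusion with respect to $\bim[0]B\cC B\cong\Rep^I(\cB)$. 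Example \ref{ex:LR} then computes its canonical hypergroup to be $K_{\Rep(\cB)}=K_\cB$, giving $L=K_\cB$ and hence $K\cong K_\cF\CS K_\cB$. So rather than trying to prove in the abstract that $M$ corresponds to the ambichiral subcategory, you should track the concrete model of $M\subset T$ that Corollary \ref{cor:IntermediateLR} produces (equivalently, run the argument of Proposition \ref{prop:MoreKGradedExtensions} with trivial $\Theta$), where the identification of $L$ is immediate.
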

\begin{proof}
  This is the special case $M=A$ of  
  Proposition \ref{prop:MoreKGradedExtensions}
  below.
\end{proof}
We note that a similar structure of actions of double cosets of fusion rings
already appeared in \cite[Section 2.11 and Theorem 3.8]{Xu2014},
but there is no reference to stochastic maps.
Further, in the present paper, 
we are deriving it from an axiomatic notion of an action,
whose fixed points are always subnets. Indeed, imposing our axioms 
such an action is unique. We conjecture that the action in on the 
charged intertwiners $\{\psi_{\rho,i}\}$ in
\cite{Xu2014} essentially 
coincides with $\psi_{\rho,i}\mapsto \phi_k(\psi_{\rho,i})$.

The interpretation of Theorem \ref{thm:HypergroupCompletelyRational}
is that a hypergroup $K$ acting properly of on a holomorphic net $\cB$ gives a
categorification of $K$.
A hypergroup $K$ acting properly of on a completely rational net $\cB$ gives a 
an extension of $\rev{\Rep(\cB)}$ which is central and ``hypergraded'' by the 
hypergroup $K$. 
We note that the representation of $K$ as a double coset of categorifable fusion 
rings is in general far from unique.
\begin{prop} 
  \label{prop:MoreKGradedExtensions}
  Let $\cB$ completely rational and $K$ a hypergroup acting properly on $\cB$.
  Let $\Theta=(\theta,w,x)$ be an irreducible Q-system in $\Rep^I(\cB)$
  corresponding to a subfactor $\iota(B)\subset M$ with $B=\cB(I)$ and 
  $\bim M\cC M =\langle \beta\prec
  \iota\beta\bar\iota:\beta\in\Rep^I(\cB)\rangle$ 
  the dual category. 

  Then there is an extension of $\cF\supset \bim M \cC M$, such that
  $K=K_\cF\CS K_{\bim M\cC M}$. The construction only depends on the Morita
  equivalence class of $\Theta$.
\end{prop}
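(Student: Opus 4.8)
The plan is to descend to the single subfactor $A:=\cB^K(I)\subset B:=\cB(I)$ and to exhibit $K$ as a double coset produced by a Longo--Rehren inclusion through the Galois correspondence of Section \ref{sec:Reconstruct}. First I would recall that, by Theorem \ref{thm:CanonicalHypergroupFromSubnet}, $K$ is exactly the hypergroup of the canonical endomorphism $\gamma\in\End(B)$ of $A\subset B$, and that by Corollary \ref{cor:NetNoMultiplicities} the dual canonical Q-system of $A\subset B$ is a \emph{commutative} Q-system in the UMTC $\bim A\cC A=\Rep^I(\A)$. Thus $A\subset B$ is precisely a subfactor coming from a commutative Q-system in a UMTC, the situation governed by Corollary \ref{cor:IntermediateLR}.

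I would first settle the case that $\Theta$ is trivial, so that $M=B$ and $\bim M\cC M=\Rep^I(\cB)=\cD=\bim[0]B\cC B$; this already recovers Theorem \ref{thm:HypergroupCompletelyRational}. By Corollary \ref{cor:IntermediateLR}, $A\subset B$ is conjugate to an intermediate subfactor $S\subset M_1\subset T$ of the Longo--Rehren inclusion $S\subset T$ associated with the unitary fusion category $\bim[+]B\cC B$, with $M_1$ corresponding to the subcategory $\bim[0]B\cC B=\cD$. By Example \ref{ex:LR} the canonical hypergroup of $S\subset T$ is $K_{\bim[+]B\cC B}$, and under the Galois correspondence (Proposition \ref{prop:Galois}) the intermediate $M_1$ corresponds to the subhypergroup $K_{\cD}=K_\cB$. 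Applying the quotient formula of Proposition \ref{prop:QuotientGalois} to the tower $S\subset M_1\subset T$ then identifies the hypergroup of $S\subset M_1\cong A\subset B$ with $K_{\bim[+]B\cC B}\CS K_{\cD}$; since that hypergroup is $K$, this gives $K=K_{\bim[+]B\cC B}\CS K_{\cD}$.

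For a general irreducible Q-system $\Theta$ in $\Rep^I(\cB)=\cD$ I would transport this picture through the Morita equivalence defined by $\Theta$. Writing $\iota\colon B\to M$ for the associated inclusion, I set $\cF:=\langle \iota\rho\bar\iota:\rho\in\bim[+]B\cC B\rangle\subset\End(M)$, a unitary fusion category containing $\bim M\cC M=\langle \iota\rho\bar\iota:\rho\in\cD\rangle$. Because $\Theta$ lies in the subcategory $\cD$, this $\Theta$-induction realizes the Morita equivalences $\bim[+]B\cC B\sim\cF$ and $\cD\sim\bim M\cC M$ compatibly; in particular $Z(\cF)\cong Z(\bim[+]B\cC B)\cong \bim A\cC A\boxtimes\rev{\cD}$ and $\rev{\bim M\cC M}\to\cF$ is again central. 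Re-running the Longo--Rehren/Galois argument of the previous paragraph for the pair $(\cF,\bim M\cC M)$ produces an intermediate subfactor of the Longo--Rehren inclusion of $\cF$ whose hypergroup is $K_\cF\CS K_{\bim M\cC M}$, and the decisive point is that this intermediate subfactor is again conjugate to $A\subset B$, so that its hypergroup equals $K$.

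The hard part will be exactly this last claim: that passing from $(\bim[+]B\cC B,\cD)$ to $(\cF,\bim M\cC M)$ leaves the double-coset hypergroup unchanged, i.e.\ that $\Theta$-induction $\rho\mapsto\iota\rho\bar\iota$ descends to an isomorphism $K_{\bim[+]B\cC B}\CS K_{\cD}\xrightarrow{\sim}K_\cF\CS K_{\bim M\cC M}$. I would establish this by noting that fusion in $\cF$ reads $\iota\rho\bar\iota\cdot\iota\sigma\bar\iota=\iota\rho\theta\sigma\bar\iota$ with $\theta=\bar\iota\iota\in\cD$, so that the extra factors of $\theta$ are absorbed by averaging over $\bim M\cC M=\langle\iota\cD\bar\iota\rangle$; hence $\cD$-double cosets of simple objects of $\bim[+]B\cC B$ correspond bijectively and multiplicatively to $\bim M\cC M$-double cosets of simple objects of $\cF$ (equivalently, the double coset by this central subcategory is determined by the common center $\bim A\cC A\boxtimes\rev{\cD}$ together with the Lagrangian algebra, hence is a Morita invariant). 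Combining this with the trivial case yields $K=K_\cF\CS K_{\bim M\cC M}$. Finally, since $\cF$ and $\bim M\cC M$ were built only from the module category attached to $\Theta$, the whole construction depends only on the Morita equivalence class of $\Theta$, which is the remaining assertion.
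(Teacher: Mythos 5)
Your proposal is correct in substance, and its first half \emph{is} the paper's argument: the paper proves the proposition by exhibiting the single tower $A\otimes B^\op\subset B\otimes B^\op\subset B_2$ coming from the full center of $A\subset M$, and for $M=B$ this is exactly the tower in the proof of Corollary \ref{cor:IntermediateLR} that you invoke, so your ``trivial case'' (Corollary \ref{cor:IntermediateLR} plus Example \ref{ex:LR}, Proposition \ref{prop:Galois}, Proposition \ref{prop:QuotientGalois}) matches the paper. For general $\Theta$ you genuinely diverge. The paper never leaves that tower: by the full center construction and \cite{Ka2002}, \emph{both} floors are recognized as Longo--Rehren inclusions --- $B\otimes B^\op\subset B_2$ w.r.t.\ $\bim M\cC M$, and $A\otimes B^\op\subset B_2$ w.r.t.\ an extension $\cF\supset \bim M\cC M$ --- so one application of the Galois correspondence yields $K=K_\cF\CS K_{\bim M\cC M}$ outright; no comparison of two presentations of $K$ is needed, since both are read off the same subfactor $A\otimes B^\op\subset B\otimes B^\op$ and the canonical hypergroup is unique (Theorem \ref{thm:CanonicalHypergroupFromSubfactor}). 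You instead reduce to $M=B$ and then prove invariance of the double coset hypergroup under the transport $\rho\mapsto\iota\rho\bar\iota$. That route is viable, and your absorption idea is the correct mechanism; made precise: defining $\Psi_0$ on basis elements by $\Psi_0(c_\rho):=c_{\iota\rho\bar\iota}$, one verifies $\Psi_0(x)\Psi_0(y)=\Psi_0(x\,c_\theta\,y)$ with $c_\theta=d_\theta^{-1}[\theta]\in\Conv(K_\cD)$, that $c_\theta e_{K_\cD}=e_{K_\cD}c_\theta=e_{K_\cD}$ since $\theta\in\cD$, and that $\Psi_0(e_{K_\cD})=e_{K_{\bim M\cC M}}$ (using $\sum_\delta d_\delta\langle\beta,\iota\delta\bar\iota\rangle=d_\theta d_\beta$ and $\Dim(\bim M\cC M)=\Dim(\cD)$); compressing by the Haar elements then makes $\Psi_0$ a $\ast$-isomorphism of double coset algebras carrying basis elements to basis elements. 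What the paper's route buys is that this lemma is never needed --- indeed the paper \emph{derives} the Morita invariance of hypergradings from the proposition (Corollary \ref{cor:KGradedExtensionsMorita}) and remarks in Appendix \ref{app:TensorCategories} that a purely categorical proof is otherwise open; what your route buys, once the computation above is written out, is exactly such a direct combinatorial proof in the unitary setting, making the general case independent of the full center machinery and of \cite{Ka2002}.

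Two cautions. First, your parenthetical fallback --- that the double coset ``is determined by the common center together with the Lagrangian algebra, hence is a Morita invariant'' --- is circular: that is precisely the statement at stake, and in the paper it is a consequence of this proposition, not an input; so the fusion computation must carry the full weight. Second, do not conflate that invariance with your ``decisive point'' that the intermediate Longo--Rehren subfactor is conjugate to $A\subset B$: the conjugacy is what the paper's full center argument establishes, while your computation neither proves it nor needs it --- the abstract isomorphism $K\cong K_{\bim[+]B\cC B}\CS K_{\cD}\cong K_\cF\CS K_{\bim M\cC M}$ already suffices for the statement.
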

\begin{proof}
  Consider $A\otimes B^\op \subset B\otimes B^\op\subset B_2$ coming from the
  full center construction of $A\subset M$ as in \cite{BiKaLoRe2014}.
  Then \cite{Ka2002} implies that 
  $B\otimes B^\op\subset B_2$ is a Longo--Rehren inclusion w.r.t.\ 
  $\bim M\cC M$ and 
  $A\otimes B^\op\subset B_2$ is a Longo-Rehren inclusion
  with respect to an extension $\cF$ of $\bim M\cC M$.
  Then from Example \ref{ex:LR} and Proposition 
  \ref{prop:QuotientGalois} we get that 
  $A\otimes B^\op \subset B\otimes B^\op$
  and therefore $A\subset B$
  is a generalized fixed point with hypergroup $K=K_\cF\CS K_{\bim M\cC M}$.
\end{proof}
Let us call $\cF$ a \textbf{$K$-hypergraded extension} of $\cG$ if $K=K_\cF\CS
K_\cG$.
The proposition gives many $K$-graded extensions from inclusions of nets.
This is just a categorical result and we get:.
\begin{cor} 
  \label{cor:KGradedExtensionsMorita}
  If $\cD$ is a UMTC and $\cF$ a central extension of $\cD$.
  Let $K=K_\cF\CS K_\cD$. Then there is a $K$-hypergraded extension 
  for every UFC $\tilde\cD$ Morita equivalent to $\cD$.

  More generally, let $\cF$ be an extension of $\cG$ and $K=K_\cF\CS K_{\cG}$. 
  Then we get a $K$-graded extension $\tilde \cF$ of $\tilde \cG$ for every
  Lagrangian Q-system $\Theta$ in $Z(\cG)$, where $\tilde \cG=Z(\cG)_\Theta$.
\end{cor}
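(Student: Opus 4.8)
The plan is to deduce both assertions from Proposition \ref{prop:MoreKGradedExtensions}, whose proof is entirely subfactor-theoretic and never really uses the ambient net, only a finite-index inclusion whose canonical endomorphism has no multiplicities. First I would reduce the special statement to the general one. Since $\cD$ is modular we have $Z(\cD)\cong\cD\boxtimes\rev{\cD}$, and by the correspondence between Lagrangian Q-systems and module categories established in Proposition \ref{prop:LagrangianQSystemDualToLR} (together with the standard dictionary of \cite{DaMgNiOs2013}) every UFC $\tilde\cD$ Morita equivalent to $\cD$ arises as $\tilde\cD\cong Z(\cD)_\Theta$ for a Lagrangian Q-system $\Theta$ in $Z(\cD)$. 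Hence the first assertion is exactly the case $\cG=\cD$ of the second, and it suffices to prove the general statement.

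For the general statement I would first realize the extension by endomorphisms. Present $\cG\subset\cF$ as unitary fusion categories of endomorphisms of a hyperfinite type \threeone factor and, following Section \ref{sec:LR}, pass to the Longo--Rehren inclusion to present $Z(\cG)$ as a concrete UMTC $\cC\cong\bim A\cC A$ carrying its canonical Lagrangian Q-system $\Theta_0$, from which $\cG$ is recovered via $\alpha^+$-induction. The extension $\cF\supset\cG$ is then encoded by a commutative Q-system $Q$ in $Z(\cG)$, and running the full-center construction $A\otimes B^\op\subset B\otimes B^\op\subset B_2$ exactly as in the proof of Proposition \ref{prop:MoreKGradedExtensions} produces a subfactor whose canonical endomorphism has no multiplicities. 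By Example \ref{ex:LR} the Longo--Rehren inclusion has canonical hypergroup $K_\cG$, and by Proposition \ref{prop:QuotientGalois} the relevant intermediate inclusion carries the double-coset hypergroup, so Theorem \ref{thm:CanonicalHypergroupFromSubfactor} identifies the canonical hypergroup of this subfactor with $K=K_\cF\CS K_\cG$.

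Next, given a Lagrangian Q-system $\Theta$ in $Z(\cG)$ with $\tilde\cG=Z(\cG)_\Theta$, I would run the identical full-center/Longo--Rehren argument with $\Theta$ in place of $\Theta_0$: by \cite{Ka2002} the outer inclusions of $A\otimes B^\op\subset B\otimes B^\op\subset B_2$ are Longo--Rehren inclusions with respect to $\tilde\cG$ and to an extension $\tilde\cF\supset\tilde\cG$, and Proposition \ref{prop:QuotientGalois} again reads off the double-coset hypergroup. Since $Z(\tilde\cG)\cong Z(\cG)$ as braided categories and the algebra $Q$ is transported along this equivalence, the subfactor obtained is conjugate to the one built from $\Theta_0$; its canonical hypergroup is therefore unchanged, giving $K_{\tilde\cF}\CS K_{\tilde\cG}=K$. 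Thus $\tilde\cF$ is the desired $K$-hypergraded extension of $\tilde\cG$, which (via the reduction above) also settles the first assertion.

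The main obstacle is the invariance asserted in the last clause of Proposition \ref{prop:MoreKGradedExtensions}, namely that the double-coset hypergroup depends only on the Morita class, equivalently only on the pair $(Z(\cG),Q)$ and not on the chosen Lagrangian $\Theta$. Making this rigorous means tracking the canonical endomorphism $\gamma$ of the associated subfactor through the module-category equivalence induced by $\Theta$ and verifying that the structure constants $C_{ij}^k$ of the double coset are preserved; this is where the genuine work lies, since the fusion hypergroups $K_{\tilde\cG}$ and $K_{\tilde\cF}$ individually change with $\Theta$ while their double coset $K_{\tilde\cF}\CS K_{\tilde\cG}$ does not.
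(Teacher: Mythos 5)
Your reduction of the first assertion to the second is fine, but the core of your argument for the general statement rests on a claim that is false: an extension $\cF\supset\cG$ of fusion categories is \emph{not} encoded by a commutative Q-system in $Z(\cG)$. Connected commutative (\'etale) Q-systems in $Z(\cG)$ correspond to quotient-type data (dominant central functors out of $Z(\cG)$; the Lagrangian ones to $\cG$-module categories), not to fusion categories containing $\cG$. Already for $\cG=\Vect$ one has $Z(\cG)=\Vect$, which admits only the trivial connected commutative Q-system, while $\Vect$ has arbitrarily complicated extensions. The object that does encode the pair $(\cF,\cG)$ is the \'etale algebra $A_{\cF\CS\cG}\subset A_\cF$, which lives in $Z(\cF)$, not in $Z(\cG)$. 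Consequently the steps in which you feed ``$Q\in Z(\cG)$'' into the full-center construction of Proposition \ref{prop:MoreKGradedExtensions}, and later ``transport $Q$ along $Z(\tilde\cG)\cong Z(\cG)$,'' have no input to work with: the construction cannot be run upwards from the Longo--Rehren inclusion of the \emph{smaller} category $\cG$.

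The paper's proof goes in the opposite direction and thereby avoids both this problem and the invariance issue you flag at the end. It takes the Longo--Rehren inclusion $S\subset T$ with respect to the \emph{larger} category $\cF$, whose canonical hypergroup is $K_\cF$ (Example \ref{ex:LR}); the subcategory $\cG\subset\cF$ gives the subhypergroup $K_\cG\subset K_\cF$, hence by the Galois correspondence (Proposition \ref{prop:Galois}) an intermediate factor $S\subset P\subset T$ for which $P\subset T$ is the Longo--Rehren inclusion of $\cG$ and the canonical hypergroup of $S\subset P$ is $K_\cF\CS K_\cG$ (Proposition \ref{prop:QuotientGalois}). A Lagrangian Q-system $\Theta$ in $Z(\cG)$, realized in the dual category of $P\subset T$, then produces a \emph{different upper extension} $P\subset T_\Theta$, which by Proposition \ref{prop:LagrangianQSystemDualToLR} is the Longo--Rehren inclusion of $\tilde\cG=Z(\cG)_\Theta$, while $S\subset T_\Theta$ is the Longo--Rehren inclusion of some $\tilde\cF\supset\tilde\cG$. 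Since the lower inclusion $S\subset P$ is literally the same subfactor in both towers, uniqueness of its canonical hypergroup (Theorem \ref{thm:CanonicalHypergroupFromSubfactor} together with Proposition \ref{prop:QuotientGalois}) immediately yields $K_{\tilde\cF}\CS K_{\tilde\cG}=K_\cF\CS K_\cG$, with no tracking of structure constants through a Morita equivalence---exactly the step you correctly identify as ``where the genuine work lies'' and leave open.
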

\begin{proof} Consider the Longo--Rehren inclusion $S\subset T = M\otimes M^\op$ 
  w.r.t.\ $\cF\subset \End(M)$ with $\bim S \cC S\cong Z(\cF)$. 
  Let $S\subset M \subset T$ be the intermediate subfactor associated with
  $\cG$ and $\Theta$ be an Lagrangian Q-system in $Z(\cG)\subset \bim M\cC M$
  with associated subfactor $M\subset T_\Theta$.  
  Then $M\subset T_\Theta$ is a Longo--Rehren inclusion from
  $Z(\cG)_\Theta\cong \bim[+]\Theta {Z(\cG)} \Theta$ and $S\subset T_\Theta$ is
  Longo--Rehren inclusion coming from an extension of $\tilde \cF$ of
  $\tilde \cF$.
  Since the hypergroup $K$ can be recovered from $S \subset M$ 
  and $K=K_{\tilde\cF}\CS K_{\tilde \cG}$ holds by Proposition 
  \ref{prop:QuotientGalois} and Example \ref{ex:LR},
  we conclude that $\tilde \cF$ is a $K$-hypergraded extension of $\tilde \cG$.
\end{proof}

\section{Possible Generalization to Infinite Actions}
\label{sec:Infinite}
We expect that our analysis generalizes to infinite index, were we expect to
get semi-compact inclusions. 
Particularly interesting seem the following inclusions: Let $\A$ be a
diffeomorphism covariant net, then there is an irreducible subnet
$\Vir_c\subset \A$ and the net $\Vir_c$ is minimal by \cite{Ca1998}.
If $\A$ is completely rational and the central charge $c\geq 1$, 
then we necessarily have that $[\Vir_c(I):\A(I)]>\infty$. 
We expect that $\Vir_c\subset \A$ might come from a continuous hypergroup. 
For $c>1$ the net $\Vir_c$ is not strongly additive and the
inclusion is never quasi-regular (discrete) in the sense of \cite{IzLoPo1998}.  
For the case $c=1$, by \cite{Re1994,Ca2004,Xu2005}, we know that
$\Vir_{c=1}=\A_{\SU(2)_1}^{\SO(3)}$ is a fixed point by a compact group. 
We can consider $\tilde \A =\A_{E_7}\equiv \A_{\grE_{7,1}}$, the conformal
net associated with the even lattice $E_7$, or equivalently \cf \cite{Bi2012},
with the loop group of $\grE_{7}$ at level 1. The net $\tilde \A$ has the
property $\overline{\Rep(\A_{\SU(2)_1})}\cong \Rep(\tilde \A)$ and the
Longo--Rehren extension gives $\A_{\SU(2)_1}\otimes\tilde \A\subset
\A_{\grE_{8,1}}$. We can consider the inclusion
\begin{align*}
  \Vir_{c=1}\otimes \A_{\grE_{7,1}}&\subset \A_{\SU(2)_1}\subset
  \A_{\grE_{8,1}} \,.
\end{align*}
We have that $\Vir_{c=1}\otimes \A_{\grE_{7,1}}$ contains a
symmetric rigid C${}^\ast$-tensor category $\cC$ in the sense of
\cite{DoRo1989}, which is generated by $\Rep(\SO(3))$ and a $d=2$ object with
trivial twist and $\Rep(\SU(2))$ fusion rules \cf \cite[Lemma 4.1]{Xu2005}. 
One can conclude that $\cC$ is braided equivalent to $\Rep(\SU(2))$ and that
$\Vir_{c=1}\otimes \A_{\grE_{7,1}}= \A_{\grE_{8,1}}^{\SU(2)}\subset
\A_{\grE_{8,1}}$ is an extension by the dual of a compact group in the
sense Doplicher--Roberts reconstruction theorem \cite{DoRo1989-2,DoRo1990}, see
also \cite{Xu2005}. 
We get an action of the compact group $\SU(2)$ and expect that the inclusion
can be seen as a continuous Longo--Rehren inclusion associated with a
$\SU(2)$-kernel, \ie $\alpha\colon\SU(2)\to\Out(M)$ for a type III${}_1$
factor $M$. 
We note that this is indeed true for every finite subgroup $G\subset \SU(2)$,
where we get a that the module category associated with $\A_{E_8}^{G}\subset\A$
is a unitary fusion category equivalent to $\Vect_G^\omega$ for some
$[\omega]\in H^3(G,\TT)$. 
We also have for $G=\ZZ_2$ that $[\omega]$ is the non-trivial cohomology class,
since $\Rep(\A_{\SU(2)_1}\otimes\A_{E_7})$ is braided equivalent to 
$Z(\Rep(\A_{\SU(2)_1}))$ and $\Rep(\A_{\SU(2)_1})\cong \Vect_{\ZZ_2}^{\omega}$,
where $[\omega]=1$ is the non-trivial ``generator'' 
in $H^3(\ZZ_2,\TT)\cong \ZZ_2$.

Let $\cB$ be a diffeomorphism covariant conformal net, \ie there is  
is an irreducible subnet 
$\Vir\subset \cB$ \cite[Proposition 3.7]{Ca2004}, generated by 
the projective unitary representation of $\Diff(\Sc)$.
One could define the 
``quantum automorphism (semi)group'' $\QAut(\cB)$ of $\cB$ to be the convex space 
of elements $\phi$, with
\begin{itemize}
  \item $\phi=\{\phi^I\in \Stoch_\Omega(\cB)\}_{I\in\cI}$ is a compatible family, 
  \item $\phi^I$ has an $\Omega$ adjoint,
  \item $\phi^I$ is $\Vir(I)$-bimodular.
\end{itemize}
One gets that $\Aut(\cB)\subset \QAut(\cB)$  and 
if $\Vir \subset \A \subset \cB$, then the family of conditional expectations
$E = \{ E_I\colon \cB(I)\to\A(I)\subset \cB(I)\}_{I\in\cI}$ 
is contained in  $\QAut(\cB)$.
Every element in $\phi\in \QAut(\cB)$ gives an intermediate net 
$\cB^\phi(I):=\cB(I)^{\phi^I}$.
A proper finite hypergroup action corresponds to a finite simplex inside 
$\QAut(\cB)$.
It might be enough to consider $\QAut_{\mathrm{ext}}(\cB)\subset \QAut_0(\cB)$  
the set of all these maps which are extremal.
From \cite{Ca2004,Xu2005} follows that
$\QAut_{\mathrm{ext}}(\A_{\SU(2),1})=\Aut(\A_{\SU(2),1})\cong\SO(3)$.
For $\A\supset \Vir_c$ with central charge $c<1$, we have that $\Vir_c$
is completely rational and we get that $\QAut_{\mathrm{ext}}(\A)$ is finite
\cf \cite{KaLo2004}. But for a completely rational net with $c>1$ 
everything is open.
\begin{appendix}
\section{Completely Positive Maps}
\label{app:CP}
Let $A$ and $B$ be unital C${}^\ast$-algebras.  We typically consider $A=B$ and
both to be von Neumann algebras.  By a map $\phi\colon A \to B$ we always mean
a linear map.  A map $\phi\colon A \to B$ is called \textbf{positive} if
$\phi(a)\geq 0$ for all $a\geq 0$.  Let $n\in\NN$. We call $\phi$
\textbf{$n$-positive} if $\phi\otimes \id \colon A\otimes M_n(\CC)\to B\otimes
M_n(\CC)$ is positive and \textbf{completely positive} (CP) if it is
$n$-positive for every $n\in\NN$.  A positive map  $\phi\colon A\to B$  is
automatically hermitian, \ie $\phi(a^\ast)=\phi(a)^\ast$ for all $a\in A$.  We
call a map $\phi\colon A\to B$ \textbf{unital} if $\phi(1)=1$.

Let $A$ be a C${}^\ast$-algebra and $\phi\colon A \to \B(\Hil)$ be 
linear map. Then $\phi$ is completely positive if and only if 
there is a representation $\pi\colon A\to \B(\cK)$ and 
a bounded operator $V\in\B(\Hil,\cK)$, such that 
$\phi(\slot) = V^\ast \pi(\slot)V$.
In this situation $(\pi,V,\cK)$ is called a \textbf{Stinespring dilation}. 
It is called \textbf{minimal} if $\pi(A)V\Hil$ is dense in $\cK$. 
The minimal Stinespring dilation is unique up \,.to a unitary equivalence.
If $V$ is an isometry, then $V^\ast\pi(\slot)V$ is a unital completely positive
map. 

If $\phi\colon A \to B$ is a unital completely positive map between
C${}^\ast$-algebras, we have the Kadison--Schwarz inequality \cite{Ka1952}:
\begin{align}
  \label{eq:KSI}
  \phi(a^\ast a)&\geq \phi(a)^\ast \phi(a)\,.
\end{align}
\begin{thm}[\cite{Ch1974}] 
  \label{thm:Choi}
  If $\phi\colon A \to B$ is a unital 2-positive map between
  C${}^\ast$-algebras, then $\phi(a^\ast a) = \phi(a^\ast)\phi(a)$ if and only 
  if 
  \begin{align}
    \phi(x a)&=\phi(x)\phi(a)\,,& 
    \phi(a^\ast x)&=\phi(a^\ast)\phi(x)\,,
  \end{align}
  for all $x\in A$.
\end{thm}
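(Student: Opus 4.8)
The plan is to establish the substantive implication---that the Schwarz equality forces multiplicativity---through a Cauchy--Schwarz argument for an operator-valued positive semidefinite form, the reverse implication being immediate. Throughout I would use that a positive map is hermitian, so $\phi(b^\ast)=\phi(b)^\ast$ for all $b$; in particular the stated hypothesis $\phi(a^\ast a)=\phi(a^\ast)\phi(a)$ coincides with $\phi(a^\ast a)=\phi(a)^\ast\phi(a)$.

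First I would record that a unital $2$-positive map satisfies the Kadison--Schwarz inequality $\phi(b^\ast b)\ge\phi(b)^\ast\phi(b)$. This is exactly what $2$-positivity provides: applying the positive amplification $\phi\otimes\id$ on $M_2$ to the positive matrix $\left(\begin{smallmatrix}1\\ b^\ast\end{smallmatrix}\right)\left(\begin{smallmatrix}1 & b\end{smallmatrix}\right)=\left(\begin{smallmatrix}1 & b\\ b^\ast & b^\ast b\end{smallmatrix}\right)\ge 0$ yields $\left(\begin{smallmatrix}1 & \phi(b)\\ \phi(b)^\ast & \phi(b^\ast b)\end{smallmatrix}\right)\ge 0$ in $M_2(B)$, and the Schur complement of the invertible $(1,1)$-entry $\phi(1)=1$ is precisely $\phi(b^\ast b)-\phi(b)^\ast\phi(b)\ge 0$.

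Next I would introduce the $B$-valued form $[x,y]:=\phi(x^\ast y)-\phi(x)^\ast\phi(y)$. By the previous step it is positive semidefinite, $[x,x]\ge 0$, and a direct check gives $[y,x]=[x,y]^\ast$. The hypothesis reads $[a,a]=0$, and I want to deduce $[x,a]=0$ for all $x$. To avoid needing full complete positivity (which would let one argue directly through the Stinespring dilation recalled earlier in this appendix), I would reduce to scalars: for any state $\omega$ on $B$, the map $(x,y)\mapsto\omega([x,y])$ is an ordinary positive semidefinite hermitian sesquilinear form on $A$, so the scalar Cauchy--Schwarz inequality gives $|\omega([x,a])|^2\le\omega([x,x])\,\omega([a,a])=0$. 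Since the states of $B$ separate its points, this forces $[x,a]=0$, \ie $\phi(x^\ast a)=\phi(x^\ast)\phi(a)$ for every $x$.

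Renaming $x^\ast$ gives the first identity $\phi(ya)=\phi(y)\phi(a)$ for all $y$; taking adjoints of it and using hermiticity of $\phi$ turns it into $\phi(a^\ast y^\ast)=\phi(a^\ast)\phi(y^\ast)$, which is the second identity after renaming. For the converse, substituting $x=a^\ast$ into $\phi(xa)=\phi(x)\phi(a)$ recovers $\phi(a^\ast a)=\phi(a^\ast)\phi(a)$. The only delicate point is the passage from the operator inequality to a scalar Cauchy--Schwarz estimate; the reduction via states is what keeps this step clean while using only $2$-positivity rather than complete positivity.
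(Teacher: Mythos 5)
Your proof is correct. A preliminary remark: the paper does not prove this statement at all---it is imported with a citation to Choi \cite{Ch1974} (the ``multiplicative domain'' theorem), so there is no internal proof to compare against; your argument must be judged against the classical one. Judged that way, every step is sound: $2$-positivity applied to the positive matrix $\left(\begin{smallmatrix}1 & b\\ b^\ast & b^\ast b\end{smallmatrix}\right)$ together with the Schur complement of the unit $(1,1)$-entry gives the Kadison--Schwarz inequality \eqref{eq:KSI}; the defect form $[x,y]=\phi(x^\ast y)-\phi(x)^\ast\phi(y)$ is then a $B$-valued positive semidefinite hermitian sesquilinear form, and by hermiticity of positive maps the hypothesis is exactly $[a,a]=0$. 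Your reduction to scalars through states, followed by scalar Cauchy--Schwarz and the fact that states separate points of a C${}^\ast$-algebra, correctly yields $[x,a]=0$ for all $x$, and the renaming/adjoint bookkeeping producing the two displayed identities, as well as the trivial converse, are fine. The classical route (Choi's, and the one found in textbook treatments of multiplicative domains) instead expands $[x+\lambda a,\,x+\lambda a]\ge 0$ over $\lambda\in\CC$ and lets $\lambda$ range over large real and imaginary values to kill the cross terms; that argument is operator-valued throughout, whereas your state-reduction buys a cleaner invocation of Cauchy--Schwarz at the cost of an appeal to separation by states. The two are essentially equivalent in depth, and both use only $2$-positivity, as required.
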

\begin{defi} 
  A completely positive map $\phi\colon A\to B$ is said to be %
  \textbf{extremal} if every completely positive map with $\psi\colon A\to B$ 
  with $\phi-\psi$ completely positive, is a scalar multiple of $\phi$.
\end{defi}
Therefore a unital completely positive map is extremal if and only if it cannot
be written as a non-trivial convex combination of two unital completely 
positive maps.

\subsection{Stochastic Maps}
A pair $(M,\varphi)$ of a von Neumann algebra $M$ and a faithful normal state 
$\varphi$ is called a (non-commutative) \textbf{probability space}.
Let $(M_i,\varphi_i)$ with $i=1,2$ be two probalility spaces. 
A normal unital completely positive map $\phi\colon M_1\to M_2$ is called a
\textbf{stochastic map} from $(M_1,\varphi_1)$ to $(M_2,\varphi_2)$ provided
$\varphi_2\circ \phi=\varphi_1$.
It is called a \textbf{determistic} map if $\phi$ is a $\ast$-homomorphism.
Let $(\pi_i,\Hil_i,\Omega_i)$ be the GNS construction of $(M_i,\varphi_i)$.
By abuse of notation we denote by $\phi\colon \pi_{1}(M_1)\to \pi_{2}(M_2)$ the
map satisfying $\phi(\pi_1(m))=\pi_2(\phi(m))$ for all $m\in M_1$.

A $(\varphi_1,\varphi_2)$--adjoint of $\phi$ is a stochastic map 
$\phi^\sharp \colon (M_2,\varphi_1)\to (M_1,\varphi_1)$,
such that $\varphi_2(m_2\phi(m_1))=\varphi_1(\phi^\sharp(m_2)m_1)$ for 
all $m_i\in M_i$.
Let $\sigma^{\varphi_i}_t=\Ad \Delta_{(M_i,\Omega_i)}^{\ima t}$ the modular
flow and $J_i=J_{(M_i,\Omega_i)}$ the modular conjugation.
By the Kadison--Schwarz inequality (\refeq{eq:KSI}) we get a linear contraction
$U_\phi\colon \Hil_1\to \Hil_2$ defined as the closure of $m\Omega_1
\mapsto\phi(m)\Omega_2$ for $m\in M_1$.
Note that $U_{\phi^\sharp}=U^\ast_{\phi}$.
The following are equivalent \cite[Proposition 6.1]{AcCe1982}, see also
\cite{NiStZs2003}:
\begin{enumerate}
  \item $\phi$ admits $(\varphi_1,\varphi_2)$--adjoint $\phi^\sharp$.
  \item $\phi\circ \sigma_t^{\varphi_1} = \sigma_t^{\varphi_2}\circ\phi$
  \item $J_{\varphi_2}U_\phi=U_\phi J_{\varphi_1}$
\end{enumerate}
and in this case we also call $\phi$ a $(\varphi_1,\varphi_2)$-preserving Markov
map.

We are interested in the case if the non-commutative probability spaces are
equal and in standard form $(M,\Omega)$, where $M\subset \B(\Hil)$ and $\Omega
\in \Hil$ a cyclic and separating vector and faithful normal state
$\varphi=(\Omega,\slot \Omega)$. 
Let us denote the modular flow by $\sigma_t=\Ad \Delta^{\ima t}$ and the
modular conjugation by $J$.  Then a stochastic (endo-) map $\phi\colon
(M,\Omega)\to(M,\Omega)$ fulfills $(\Omega, \phi(m)\Omega)=(\Omega,m\Omega)$
and a \textbf{$\Omega$-preserving Markov (endo-) map} is such a stoachastic map
having an adjoint, and therefore fulfills $\phi\circ\sigma_t=\sigma_t\circ\phi$
and $U_\phi J=JU_\phi$.

If $\phi\colon N\to N$ is a stochastic map, we can consider the 
\textbf{fixed point} $N^\phi=\{n\in n:\phi(n)=n\}$. 
The following proposition is a well-known (\cf \eg \cite{ArGhGu2002})
consequence of Choi's Theorem \ref{thm:Choi}. 
\begin{prop}
  \label{prop:vNAandCE}
  Let $(N,\varphi)$ be a probability space, \ie a von Neumann algebra $N$ and
  faithful state $\varphi$. 
  Let $\phi\colon N\to N$ be a $\varphi$-preserving stochastic map, \ie a
  normal unital completely positive $\varphi$-preserving map, then
  $N^\phi=\{n\in N:\phi(n)=n\}$ is a von Neumann algebra.  If $E=\phi$ is an
  idempotent, \ie $E^2=E$, then $E(N)$ is a von Neumann algebra and $E$ is the
  conditional expectation onto its image.
\end{prop}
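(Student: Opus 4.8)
The plan is to reduce both assertions to the saturation of the Kadison--Schwarz inequality \eqref{eq:KSI} on fixed points, and then to promote that saturation to two-sided multiplicativity via Choi's Theorem \ref{thm:Choi}.

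I would first establish that the Kadison--Schwarz inequality is an equality at every fixed point. That $N^\phi$ is a weakly closed, unital, self-adjoint linear subspace is immediate: $\phi$ is normal and linear, unital, and hermitian (being positive), so $\phi(n)=n$ gives $\phi(n^\ast)=\phi(n)^\ast=n^\ast$. For $n\in N^\phi$, \eqref{eq:KSI} yields $\phi(n^\ast n)\geq \phi(n)^\ast\phi(n)=n^\ast n$, so $\phi(n^\ast n)-n^\ast n\geq 0$; applying $\varphi$ and using $\varphi\circ\phi=\varphi$ gives $\varphi(\phi(n^\ast n)-n^\ast n)=0$, and faithfulness of $\varphi$ forces $\phi(n^\ast n)=n^\ast n=\phi(n)^\ast\phi(n)$.

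I would then feed this equality into Theorem \ref{thm:Choi} with $a=n\in N^\phi$, obtaining $\phi(xn)=\phi(x)\phi(n)=\phi(x)\,n$ and $\phi(n^\ast x)=\phi(n^\ast)\phi(x)=n^\ast\phi(x)$ for all $x\in N$. Choosing $x=m\in N^\phi$ in the first identity gives $\phi(mn)=\phi(m)\,n=mn$, so $N^\phi$ is closed under products and is therefore a von Neumann subalgebra. For the idempotent case I would note $E(N)=N^E$ (the inclusion $N^E\subseteq E(N)$ is trivial, and $E^2=E$ gives the reverse), so $E(N)$ is a von Neumann algebra by the first part; writing $P:=E(N)=N^E$, the multiplicativity identities just derived, now with fixed points $a,b\in P$, give $E(axb)=a\,E(xb)=a\,E(x)\,b$ for all $x\in N$. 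This bimodule property, together with $E$ being a unital completely positive $\varphi$-preserving idempotent onto $P$, exhibits $E$ as the conditional expectation onto its image.

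The crux is the passage from the Kadison--Schwarz inequality to genuine multiplicativity, and this is precisely where the hypotheses that $\varphi$ is faithful and $\phi$-invariant do the essential work: they are what force the inequality to saturate on $N^\phi$, without which there would be no reason for the fixed-point set to be multiplicatively closed.
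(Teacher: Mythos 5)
Your proof is correct and follows essentially the same route as the paper's: saturate the Kadison--Schwarz inequality on fixed points using $\varphi$-invariance and faithfulness, then invoke Choi's Theorem \ref{thm:Choi} to get multiplicativity of $N^\phi$, and identify $E(N)=N^E$ in the idempotent case. If anything, you spell out slightly more than the paper does (the explicit $E(axb)=aE(x)b$ bimodularity), where the paper simply asserts the conditional expectation property; the substance is identical.
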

\begin{proof}
  By the Kadison--Schwarz inequality (\refeq{eq:KSI}) we have 
    $y:=\phi(x^\ast x)-\phi(x^\ast)\phi(x)\geq 0$ is positive
  for all $x\in N$. Let $x\in N^\phi$, \ie $x=\phi(x)$, then
  \begin{align}
    \varphi(y)=\varphi(\phi(x^\ast x)-\phi(x)^\ast \phi(x))&=\varphi(x^\ast
    x)-\varphi(x^\ast x)=0
  \end{align}
  and since $y$ is positive and $\varphi$ faithful  $y\equiv \phi(x^\ast
  x)-\phi(x)\phi(x^\ast)=0$.
  But then Theorem \ref{thm:Choi} implies that $N^\phi$ is an algebra and by
  normality it is a von Neumann algebra.

  If $E:=\phi$ is an idempotent, then $E(N)=N^E$ and $E$ is a conditional
  expectation onto its image. 
\end{proof}

\subsection{Connes--Stinespring Construction for Stochastic Maps}
Let $(M_i,\Omega_i)$ be probability spaces on $\Hil_i$ and
$\phi\colon M_1\to M_2$ a stochastic map 
with $(\Omega_2,\phi(\slot)\Omega_2)=(\Omega_1,\slot\Omega_1)$.

The following construction is the Connes correspondence associated with a 
UCP map \cite{Co1994}.
Let $\Hil_\phi$ be the separation and completion of $M_1\otimes_\mathrm{alg}
\Hil_2$ with inner product
\begin{align}
  (m\otimes \xi,n\otimes \eta)_\phi&=(\xi,\phi(m^\ast n)\eta)\,.
\end{align}
We get a $M_1$--$M_2$ correspondence:
\begin{align}
  m_1 .[m\otimes \xi].m_2&:= m_1m\otimes \xi.m_2\,, &
  \Omega_\phi&:=[1\otimes \Omega_2]\,,\\
  (\Omega_\phi,m_1\Omega_\phi)_\phi &= (\Omega_1,m_1\Omega_1)\,, & m_1&\in M_1\,,\\
  (\Omega_\phi,\Omega_\phi m_2)_\phi &= (\Omega_2,\Omega_2 m_2)\,, &m_2&\in M_2\,.
\end{align}
We get an isometry $V\colon \Hil_2\to \Hil_\phi$ 
defined by $V\Omega_2 m_2=[1\otimes \Omega m_2]\equiv \Omega_\phi.m_2$.
Then $\phi(m)=V^\ast \pi_\phi(m)V$, where 
\begin{align}
  \pi_\phi(m)\xi&=m.\xi\,.
\end{align}

Let $M_1\subset M_2$ be type III factors 
on a separable Hilbert space $\Hil$ and $\Omega\in \Hil$ 
cyclic and separating for $M_1$ and $M_2$.  
Then there is a $U_2\colon \Hil_\phi\to \Hil_2$ intertwining the right
actions of $M_2$.  Let $\rho=\Ad U_2\circ \pi_\phi\colon M_1\to M_2$ and
$v=U_2V\in M_2$, then
\begin{align}
  \phi(m)&=V^\ast \pi_\phi(m)V=v\rho(m)v\,.  
\end{align}
So we get a pair $(v,\rho)$, such that
$\varphi_2(v^\ast\rho(\slot)v)=\varphi_1(\slot)$.  For completely positive
maps, we write $\psi\leq \phi$ if $\phi-\psi$ is completely positive.
The following is well-known, see also \cite[Proposition 2.9]{IzLoPo1998}
\begin{lem} 
  Let $M$ be a type III factor on a separable Hilbert space and $\Omega$ cyclic
  and separating.  Let $\phi,\psi\in \Stoch_\Omega(M)$ with $\lambda\psi\leq
  \phi$ for some $\lambda\in(0,1]$, and let $\phi=v^\ast \rho (\slot) v$ and
  $\phi=w^\ast\sigma(\slot)w$ be the minimal Stinespring representation with
  $v,w\in M$ and $\rho,\sigma\in\End(M)$. Then there is a contraction
  $T\in\Hom(\rho,\sigma)$ with $T v = \sqrt\lambda \cdot w$.
\end{lem}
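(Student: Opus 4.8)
The plan is to construct $T$ by hand on the dense domain supplied by minimality of the Stinespring representation of $\phi$, and to extract both the contractivity of $T$ and its well-definedness from the single hypothesis $\lambda\psi\leq\phi$, i.e.\ that $\phi-\lambda\psi$ is completely positive. Writing the two minimal representations as $\phi=v^\ast\rho(\slot)v$ with $\overline{\rho(M)v\Hil}=\Hil$ and $\psi=w^\ast\sigma(\slot)w$, I would define $T$ on the subspace spanned by the vectors $\rho(m)v\xi\in\Hil$ by
\[
  T\bigl(\rho(m)v\xi\bigr):=\sqrt\lambda\,\sigma(m)w\xi\,,\qquad m\in M,\ \xi\in\Hil\,,
\]
extended by linearity to finite sums.

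First I would verify that this prescription is simultaneously well defined and contractive, which is the heart of the argument. For a finite sum $\eta=\sum_i\rho(m_i)v\xi_i$ one computes
\[
  \Bigl\|\sum_i\rho(m_i)v\xi_i\Bigr\|^2=\sum_{i,j}(\xi_i,\phi(m_i^\ast m_j)\xi_j)\,,\qquad
  \Bigl\|\sqrt\lambda\sum_i\sigma(m_i)w\xi_i\Bigr\|^2=\lambda\sum_{i,j}(\xi_i,\psi(m_i^\ast m_j)\xi_j)\,.
\]
The inequality between the two right-hand sides is exactly $\sum_{i,j}(\xi_i,(\phi-\lambda\psi)(m_i^\ast m_j)\xi_j)\geq 0$; since the matrix $[m_i^\ast m_j]_{ij}\in M_n(M)$ is positive and $\phi-\lambda\psi$ is $n$-positive for every $n$, this holds. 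In particular, if $\eta=0$ then the left-hand norm vanishes, forcing $\sqrt\lambda\sum_i\sigma(m_i)w\xi_i=0$, so $T$ is unambiguously defined; the same computation gives $\|T\|\leq 1$, and $T$ then extends to a contraction on all of $\Hil$ by density of $\rho(M)v\Hil$.

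Next I would check the two algebraic properties. Intertwining is formal: applying $T$ to $\rho(n)\rho(m)v\xi=\rho(nm)v\xi$ gives $\sqrt\lambda\,\sigma(nm)w\xi=\sigma(n)T\bigl(\rho(m)v\xi\bigr)$, so $T\rho(n)=\sigma(n)T$ on the dense domain and hence everywhere, i.e.\ $T\in\Hom(\rho,\sigma)$ as an operator in $\B(\Hil)$. To see that $T$ actually lies in $M$ I would test it against the commutant: for $m'\in M'$ one has $m'\rho(m)v\xi=\rho(m)v(m'\xi)$ because $\rho(m),v\in M$, whence $Tm'\rho(m)v\xi=\sqrt\lambda\,\sigma(m)w\,m'\xi=m'T\rho(m)v\xi$, so $T\in M''=M$. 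Finally, taking $m=1$ in the defining formula yields $Tv\xi=\sqrt\lambda\,w\xi$ for all $\xi$, that is $Tv=\sqrt\lambda\,w$, as claimed.

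The only genuinely delicate point is the well-definedness/contractivity estimate, and I would state cleanly that positivity of $[m_i^\ast m_j]$ together with complete positivity of $\phi-\lambda\psi$ produces the needed inequality on the associated sesquilinear forms; the remainder is a routine density-and-intertwining verification. An alternative route, avoiding the explicit domain computation, is to invoke Arveson's Radon--Nikodym theorem to obtain a positive contraction $S\in\rho(M)'$ with $\lambda\psi(\slot)=v^\ast S\rho(\slot)v$ and then compare the two minimal dilations; however, matching the exact normalization $Tv=\sqrt\lambda\,w$ is cleaner in the direct construction above, so I would present the hands-on version.
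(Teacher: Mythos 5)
Your proof is correct and takes essentially the same route as the paper: both define $T$ on the dense subspace spanned by $\rho(M)v\Hil$ by $\rho(m)v\xi\mapsto\sqrt{\lambda}\,\sigma(m)w\xi$ and obtain well-definedness and contractivity from the single sesquilinear-form inequality furnished by complete positivity of $\phi-\lambda\psi$, after which the intertwining relation and $Tv=\sqrt{\lambda}\,w$ are formal. Your extra verification that $T$ commutes with $M'$ and hence lies in $M''=M$ (so that $T$ is genuinely an element of $\Hom(\rho,\sigma)\subset M$) is a detail the paper leaves implicit, and is a worthwhile addition.
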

\begin{proof}
  Since $\phi-\lambda\psi$ is completely positive we have
  \begin{align}
    \lambda\left\|\sum_i\sigma(m_i) w n_i\Omega\right\|^2&=\sum_{ij} (n_i\Omega,\psi(m_i^\ast m_j)n_j\Omega)
    \leq \sum_{ij} (n_i\Omega,\phi(m_i^\ast m_j)n_j\Omega)
    =\left\|\sum_i\rho(m_i)vn_i\Omega\right\|^2
  \end{align}
  and get a contraction $T$ defined by $T\rho(m)vn\Omega=\sqrt{\lambda} n\sigma(m)wn\Omega$. For $m=1$
  it follows that $Tv=\sqrt \lambda w$ and therefore $T\rho(m)=\sigma(m)T$.
\end{proof}

The next can be seen as a Radon-Nikodym theorem.
It basically follows from \cite[5.4 Proposition]{Pa1973}.
\begin{prop} 
  \label{prop:PureCP}
  Let $M$ be a type III factor on a separable Hilbert space and $\Omega$ cyclic
  and separating.
  Let $\phi\in\Stoch_\Omega(M)$ with minimal Stinespring representation
  $\phi(\slot) =v^\ast\rho(\slot)v$ with $v\in M$ and $\rho\in\End(M)$.
  
  The linear map $ T \mapsto \phi_T(\slot)=v^\ast \sigma(\slot) Tv$ is an order
  preserving bijection between:
  \begin{itemize}
    \item $\{T\in \rho(M)'\cap M : 0\leq T\leq 1\}$ and
    \item the set of normal completely positive maps $\psi\colon M\to M$
      with $\psi(1)=\lambda \cdot 1$ and $\lambda\cdot(\Omega,m\Omega)=(\Omega,
      \psi(m)\Omega)$ for all $m\in M$. 
  \end{itemize}
  In particular, $\phi$ is extremal in $\Stoch_\Omega(M)$ if and only if $\rho$
  is irreducible.
\end{prop}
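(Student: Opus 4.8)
The plan is to reduce the statement to Paschke's Radon--Nikodym theorem \cite{Pa1973} for completely positive maps, and then to use the standard--form structure together with the comparison lemma proved just above to pin the Radon--Nikodym operator down inside $M$ rather than only in $\B(\Hil)$. Throughout let $\phi=v^\ast\rho(\slot)v$ be the given minimal Stinespring triple with $v\in M$ isometric and $\rho\in\End(M)$, and recall that minimality means $\overline{\rho(M)v\Hil}=\Hil$.

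First I would dispose of the routine half. For $T\in\rho(M)'\cap M$ with $0\le T\le 1$ one may write $\phi_T(\slot)=v^\ast\rho(\slot)Tv=(T^{1/2}v)^\ast\rho(\slot)(T^{1/2}v)$, using $T^{1/2}\in\rho(M)'$, so $\phi_T$ is normal and completely positive; moreover $\phi-\phi_T=\phi_{1-T}$ is again of this form with $1-T\ge 0$, hence $0\le\phi_T\le\phi$. Order preservation $T_1\le T_2\Leftrightarrow\phi_{T_1}\le\phi_{T_2}$ is immediate from $\phi_{T_2}-\phi_{T_1}=\phi_{T_2-T_1}$. Injectivity follows from minimality: if $v^\ast\rho(m)(T_1-T_2)v=0$ for all $m$ then $T_1-T_2$ annihilates $\overline{\rho(M)v\Hil}=\Hil$.

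Next comes the heart of the argument, the characterization of the image by the two normalization conditions. I would first reformulate them via the contraction $U_\psi\colon m\Omega\mapsto\psi(m)\Omega$: the condition $\psi(1)=\lambda\cdot 1$ says $U_\psi\Omega=\lambda\Omega$, while $\lambda(\Omega,m\Omega)=(\Omega,\psi(m)\Omega)$ says $U_\psi^\ast\Omega=\lambda\Omega$; together they assert exactly that $\psi'=\lambda^{-1}\psi\in\Stoch_\Omega(M)$. The claim is then that, among the dominated maps of the first step, these conditions single out precisely the operators lying in $M$. Given such a $\psi\le\phi$ with $\psi'=\lambda^{-1}\psi\in\Stoch_\Omega(M)$, take a minimal triple $\psi'=w^\ast\sigma(\slot)w$ with $w\in M$, $\sigma\in\End(M)$, and apply the comparison lemma above to $\lambda\psi'\le\phi$: it produces a contraction $T_0\in\Hom(\sigma,\rho)\subset M$ with $T_0 w=\sqrt\lambda\,v$, crucially an intertwiner inside $M$. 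Then $T:=T_0^\ast T_0\in\Hom(\rho,\rho)=\rho(M)'\cap M$ satisfies $0\le T\le 1$, and using $\sigma(m)T_0=T_0\rho(m)$ together with $w^\ast w=1$ one computes $\phi_T=v^\ast\rho(\slot)T_0^\ast T_0v=\lambda\,w^\ast\sigma(\slot)w=\psi$; in particular $\phi_T(1)=v^\ast T v=\lambda(T_0w)^\ast\rho(1)(T_0w)/\lambda\cdot w^\ast w=\lambda\cdot 1$, so both normalization conditions hold and $T\in M$. Conversely, the same identities read backwards show that every $T\in\rho(M)'\cap M$ with $0\le T\le 1$ yields $\phi_T(1)=v^\ast Tv\in\CC\cdot 1$ and the vacuum condition, giving a genuine (scaled) element of $\Stoch_\Omega(M)$. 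This establishes the bijection, and by construction $\eta\colon T\mapsto\phi_T$ and its inverse preserve order.

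The extremality statement is then formal. If $\phi=t\phi_1+(1-t)\phi_2$ with $\phi_i\in\Stoch_\Omega(M)$ and $t\in(0,1)$, then $\psi:=t\phi_1\le\phi$ satisfies the two conditions with $\lambda=t$, so $\psi=\phi_T$ for a unique $T\in\rho(M)'\cap M$ with $0\le T\le 1$; if $\rho$ is irreducible then $\rho(M)'\cap M=\CC\cdot 1$, forcing $T=t\cdot 1$ and hence $\phi_1=\phi$ (and likewise $\phi_2=\phi$), so $\phi$ is extremal. Conversely, a nontrivial self-adjoint $T$ with $0\le T\le 1$ in $\rho(M)'\cap M$ produces the nontrivial convex decomposition $\phi=\phi_T+\phi_{1-T}$ into normalized elements of $\Stoch_\Omega(M)$.

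The hard part will be the second paragraph: matching the purely algebraic Radon--Nikodym data of Paschke, which a priori only places the operator in $\rho(M)'\cap\B(\Hil)$, with the requirement that it lie in $M$, equivalently showing that $v^\ast(\rho(M)'\cap M)v\subseteq\CC\cdot 1$. The whole point is that passing from an arbitrary dominated completely positive map to a (scaled) vacuum--preserving stochastic map forces the Radon--Nikodym intertwiner into $M$; I expect the cleanest route is exactly the comparison lemma above, whose proof already encodes the standard--form identity $M=(JMJ)'$ through the cyclic separating vector $\Omega$, so that the normalization conditions translate into bimodularity of the corresponding operator on the Connes correspondence of $\phi$.
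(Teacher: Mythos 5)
Your routine half and, more importantly, your surjectivity argument are correct, and the latter is actually more explicit than what the paper itself records: the paper defers to Paschke's Radon--Nikodym theorem, which a priori places the Radon--Nikodym operator only in $\rho(M)'\cap\B(\Hil)$, whereas your route --- normalizing to $\psi'=\lambda^{-1}\psi\in\Stoch_\Omega(M)$, taking its Connes--Stinespring triple $w^\ast\sigma(\slot)w$ with $w\in M$, $\sigma\in\End(M)$, and invoking the comparison lemma --- lands the operator inside $M$, which is the crux. One slip there: the lemma produces a contraction $T_0\in\Hom(\rho,\sigma)$ with $T_0v=\sqrt{\lambda}\,w$, not $T_0\in\Hom(\sigma,\rho)$ with $T_0w=\sqrt{\lambda}\,v$ as you write; your computation tacitly uses the correct version (with your version, $T_0^\ast T_0$ would lie in $\sigma(M)'\cap M$ rather than in $\rho(M)'\cap M$).

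The genuine gap is the sentence claiming that ``the same identities read backwards'' show that every positive contraction $T\in\rho(M)'\cap M$ gives $\phi_T(1)=v^\ast Tv\in\CC\cdot1$ and $(\Omega,\phi_T(m)\Omega)=\lambda(\Omega,m\Omega)$. Nothing reads backwards: given $T$ alone one can write $\phi_T=(T^{1/2}v)^\ast\rho(\slot)(T^{1/2}v)$, but there is no stochastic map with minimal triple $(\sigma,w)$ standing behind $T$, and no algebraic manipulation makes $v^\ast Tv$ a scalar. This inclusion of the image into the target set is exactly the nontrivial half, and it is the step to which the paper devotes the substantive part of its proof: the functionals $m_1\otimes m_2\mapsto(\Omega,\phi_\bullet(m_1)m_2\Omega)$, $\bullet\in\{T,1-T,1\}$, are extended to states on $M\otimes_{\min}M'$ and a purity argument there (resting on $M\vee M'=\B(\Hil)$) forces $\omega_T=\omega_{1-T}=\omega$; the operator identity $\phi_T(1)=\lambda\cdot1$ then follows because $M'\Omega$ is dense and $\Omega$ separates $M$. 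Such global input is indispensable, since the claim is not a formal consequence of $T\in\rho(M)'\cap M$: if the centralizer $M^\omega$ is nontrivial, choose a self-adjoint unitary $u\in M^\omega\setminus\CC 1$ and set $\phi=\tfrac12(\id_M+\Ad u)\in\Stoch_\Omega(M)$, with Stinespring data $\rho(\slot)=s_1(\slot)s_1^\ast+s_2u(\slot)us_2^\ast$, $v=\tfrac1{\sqrt2}(s_1+s_2)$, for Cuntz isometries $s_1,s_2\in M$ (minimal because $\Ad u\neq\id_M$); then $p=s_1+s_2u$ obeys $\rho(m)p=pm$, so $T=\tfrac12\,pp^\ast$ is a projection in $\rho(M)'\cap M$, yet $\phi_T(1)=v^\ast Tv=\tfrac12(1+u)\notin\CC\cdot1$. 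So this step requires the paper's extension/purity argument (or hypotheses, valid in the paper's applications, that exclude such configurations), and without it both the bijection and the ``$\rho$ reducible $\Rightarrow$ $\phi$ not extremal'' half of your final paragraph remain unproved.
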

\begin{proof} 
  We first show injectivity. Assume that $\phi_T=0$, then
  \begin{align}
     (\rho(a)v\xi,T\rho(b)v\eta) &=(\xi,v^\ast\rho(a^\ast b)Tv\eta)=
     (\xi,\phi_T(a^\ast b)\eta)=0
  \end{align}
  and then $T=0$ because the Stinespring representation is minimal. 

  We claim that for $0< T\leq 1$ we have that $\phi_T$ is (up to scale)
  state-preserving. If $\phi_T(1)=(v\Omega,Tv\Omega)=\lambda>0$, then with
  $\omega(\slot)=(\Omega,\slot\Omega)$ 
  \begin{align}
    \omega(m)&=
    \omega\circ\phi(m)=\lambda\omega_T(m)+(1-\lambda)\omega_{1-T}(m)\,, 
    &\omega_S(m):=(v\Omega,S\rho(m)v\Omega)/(v\Omega,Sv\Omega)\,.
  \end{align}
  We can extend to states on the C${}^\ast$-algebra $M\otimes_\mathrm{min} M'$
  by $\tilde \omega_\bullet(m_1\otimes m_2) = (\Omega
  \phi_\bullet(m_1)m_2\Omega)/(\Omega,\phi_\bullet(1)\Omega)$ and get
  $\tilde\omega=\lambda\tilde\omega_T+(1-\lambda)\tilde\omega_{T-1}$.  But
  since $\tilde \omega=\tilde\omega_1$ is a pure state we get:
  $\omega_T=\omega_{1-T}=\omega$.
\end{proof}

\section{Tensor Categories}
\label{app:TensorCategories}
We give some results on braided tensor categories. 
We refer to \cite{EtGeNiOs2015} for a textbook. Most of the 
statements here are in \cite{DaMgNiOs2013,DaNiOs2013}. Some statements
are implicitly contained and we sketch a proof.

A fusion category $\cF$ over a field $\KK$ is a $\KK$-linear semisimple rigid
tensor category with finitely many isomorphism classes of 
simple objects and finite dimensional spaces
of morphisms, such that the unit object 1 is simple. Every fusion category
contains a trivial subcategory consisting of multiples of $1$ which we denote
$\Vect$. We denote the Grothendieck ring of $\cF$ by $K(\cF)$. It is a fusion
ring.

Let $\cC$ be a non-degenerate braided fusion category.
Non-degenerated means that the centralizer $\cC'=C_{\cC}(\cC)$
is trivial, \ie $\cC'\cong\Vect$.
Let $A$ be an \textbf{\'etale algebra} in $\cC$, \ie a commutative and
separable algebra. 
It is called \textbf{connected} if $\dim \Hom(1,A)=1$.
We denote by $\cC_A$ the category of right $A$-modules. 
If $A$ is a connected \'etale algebra, then $\cC_A$ is a fusion category. We
denote by $\FPdim X$, $X\in \cC$ the Perron-Frobenius dimension of the object
and by $\FPdim \cC :=\sum_{X\in \Irr(\cC)} (\FPdim X)^2$.
One has $\FPdim \cC_A=\FPdim \cC/\FPdim A$ \cite[Lemma 3.11]{DaMgNiOs2013}.
The Drinfel'd center $Z(\cC_A)$ is braided equivalent \cite[Corollary
3.30]{DaMgNiOs2013} to $\cC\boxtimes \rev{\cC_A^0}$, where $\cC_A^0$ is the
category of dyslexic modules, which is a non-degenerately braided
fusion category. One has $\FPdim \cC_A^0=\FPdim \cC/(\FPdim A)^2$
\cite[Corollary 3.32]{DaMgNiOs2013}. A connected \'etale algebra $A$ in $\cC$
is called \textbf{Lagrangian} if $\FPdim A = \sqrt{\FPdim\cC}$ and this implies
$\cC_A^0\cong \Vect$.

There is a Lagrangian algebra $A$ in $\cC$ if and only if $\cC$ is braided 
equivalent to the Drinfel'd center $Z(\cF)$ for a fusion category $\cF$. 
Namely, if $A$ is Lagrangian we have $\cC\cong Z(\cC_A)$.
Conversely define $A=I(1)$, 
where $I\colon \cF \to Z(\cF)$ is the adjoint of the forgetful functor
$F\colon Z(\cF) \to \cF$.
Under this identification $A=I(1)$. 
If $A$ is Lagrangian, then $\bim A\cC A\cong \cC_A\boxtimes \cC_A^\op$
\cite[Corollary 4.1]{DaMgNiOs2013}.

If $\cF$ is a fusion category, let us denote
the Lagrangian algebra $I(1)$ by $A_\cF=I(1)\in Z(\cF)$.
For $\cG \subset \cF$ we get an \'etale algebra 
$A_{\cF\CS \cG}\subset A_{\cF}$ with 
$Z(\cF)_{A_{\cF\CS\cG}}^0\cong Z(\cG)$ given by 
the order-reversing isomorphism of lattices \cite[Theorem 4.10]{DaMgNiOs2013}.
It also follows 
that $A_{\cF}$ is the ``composition'' of $A_\cG$ with $A_{\cF\CS\cG}$.
By this we mean, that $A_{\cF\CS \cG} \in \cC^0_{A_{\cG}}$ is the algebra
$\A_{\cF}\supset \A_{\cG}$ in $\cC$.
Further, every Lagrangian algebra $A$ is of the form 
$A_\cF$.

Let $\cF$ be a fusion category, then there is a bijection 
between the isomorphism classes of Lagrangian algebras in $Z(\cF)$ 
and equivalence classes of indecomposeable $\cF$-module categories 
\cite[Proposition 4.8]{DaMgNiOs2013}.

Let $\cC=Z(\cF)$ and $A=A_\cF$. 
We have that $\bim A \cC A\cong \cF\boxtimes \cF^\op$ \cite[Corollary
4.1]{DaMgNiOs2013} and there is a dual algebra (see \cite[Section
7.12]{EtGeNiOs2015}) $B\in \cD=\bim A \cC A$, such $\bim B\cC B\cong \cC$.
It follows from \cite[Propostion 7.13.8, Lemma 8.12.2]{EtGeNiOs2015} that 
$B=\bigoplus_{X\in \Irr(\cF)} X\boxtimes X^\ast$ is the canonical algebra,
see also \cite{Mg2003II}.
\begin{prop}
  Let $A$ be a connected \'etale algebra in a non-degenerately braided 
  fusion category $\cC$. 
  Then the inclusion $\rev{C_A^0} \to C_A$ is a central functor.

  Conversely, if $\cD$ is non-degenerately braided fusion category
  and $\cF$ a fusion category.
  If there is a central injective (fully faithful) functor 
  $\kappa\colon\rev{\cD}\to\cF$,
  then there is a non-degenerately braided fusion category $\cC$
  and a connected \'etale algebra $A\in\cC$, such that 
  $\cF \cong \cC_A$ and $\cD=\cC_A^0$.
  In this case, $\cC$ is given as $C_{Z(\cF)}(\kappa(\rev{\cD}))$.
\end{prop}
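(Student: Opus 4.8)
The plan is to prove the two directions separately, using as the main engine the braided equivalence $Z(\cC_A)\cong\cC\boxtimes\rev{\cC_A^0}$ of \cite[Corollary 3.30]{DaMgNiOs2013}, Müger's decomposition of a non-degenerate braided category along a non-degenerate subcategory \cite{Mg2003-MC}, and the correspondence between central functors and connected étale algebras via right adjoints \cite{DaMgNiOs2013}.

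For the first statement I would simply read off the central structure from \cite[Corollary 3.30]{DaMgNiOs2013}. That equivalence exhibits a braided embedding of the second factor $\rev{\cC_A^0}\hookrightarrow Z(\cC_A)$; composing with the forgetful functor $F\colon Z(\cC_A)\to\cC_A$ gives a functor $\rev{\cC_A^0}\to\cC_A$. It then remains to check that this composite is exactly the inclusion of local modules into all modules, which holds because the second factor in \cite[Corollary 3.30]{DaMgNiOs2013} is realized by sending a dyslexic module to itself equipped with its canonical half-braiding, so that forgetting the half-braiding recovers the underlying module. Hence the inclusion $\rev{\cC_A^0}\to\cC_A$ is central, with this embedding as its lift to the center.

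For the converse, I would first promote $\kappa$ to its lift: by definition of centrality there is a braided functor $\tilde\kappa\colon\rev\cD\to Z(\cF)$ with $F\circ\tilde\kappa=\kappa$, and since $F$ is faithful while $\kappa$ is fully faithful, $\tilde\kappa$ is a braided embedding. As $\rev\cD$ is non-degenerate, Müger's theorem \cite{Mg2003-MC} yields a braided equivalence $Z(\cF)\cong\rev\cD\boxtimes\cC$ with $\cC:=C_{Z(\cF)}(\tilde\kappa(\rev\cD))$ non-degenerate (this is the $\cC$ of the statement, with $\kappa(\rev\cD)$ read inside $Z(\cF)$ through $\tilde\kappa$). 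Writing $j\colon\cC\hookrightarrow Z(\cF)$ for the inclusion and $A_\cF=I(1)$ for the canonical Lagrangian algebra, so that $Z(\cF)_{A_\cF}\cong\cF$ and $F$ is the associated free-module functor, I would set
\[
  A:=j^{\mathrm{ra}}(A_\cF)=(F\restriction\cC)^{\mathrm{ra}}(1)\in\cC,
\]
the image of the unit under the right adjoint of the central functor $F\restriction\cC\colon\cC\to\cF$ (whose lift is $j$). Connectedness is immediate, since $\Hom_\cC(1,A)=\Hom_{Z(\cF)}(1,A_\cF)=\Hom_\cF(1,1)$ is one-dimensional, and $A$ is étale because $j^{\mathrm{ra}}$ is lax monoidal and $A_\cF$ is commutative separable, the commutativity being inherited from the braiding of $\cC\subset Z(\cF)$.

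The induced functor $R\colon\cC_A\to\cF$, coming from the algebra map $j(A)\to A_\cF$, is fully faithful by the general central-functor/module-category theory \cite{DaMgNiOs2013}, so its image is a fusion subcategory of $\cF$ of dimension $\FPdim\cC_A=\FPdim\cC/\FPdim A$. Using $\FPdim A_\cF=\FPdim\cF$ together with the factorization $Z(\cF)\cong\rev\cD\boxtimes\cC$ (so $\FPdim\cC=(\FPdim\cF)^2/\FPdim\cD$ by \cite[Lemma 3.11]{DaMgNiOs2013}) and the balancedness of the Lagrangian algebra $A_\cF$, which forces $\FPdim A=\FPdim A_\cF/\FPdim\cD=\FPdim\cF/\FPdim\cD$, I would conclude $\FPdim\cC_A=\FPdim\cF$, whence $R$ is an equivalence of fusion categories $\cC_A\cong\cF$. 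Finally, feeding this back into \cite[Corollary 3.30]{DaMgNiOs2013}: from $\cC_A\cong\cF$ as fusion categories I get $Z(\cC_A)\cong Z(\cF)$ as braided categories, and comparing $Z(\cC_A)\cong\cC\boxtimes\rev{\cC_A^0}$ with $Z(\cF)\cong\cC\boxtimes\rev\cD$ — both realizing the same subcategory $\cC$ — identifies $\rev{\cC_A^0}=C_{Z(\cC_A)}(\cC)$ with $\rev\cD=C_{Z(\cF)}(\cC)$, so $\cC_A^0\cong\cD$ as braided categories.

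I expect the main obstacle to be the verification that $R$ is an equivalence, i.e.\ the Frobenius--Perron dimension identity $\FPdim A=\FPdim\cF/\FPdim\cD$ (equivalently, the dominance of $F\restriction\cC$): this is where one genuinely uses that $\tilde\kappa(\rev\cD)$ and $\cC$ generate $Z(\cF)$, so that the images $\kappa(\rev\cD)$ and $F(\cC)$ generate $\cF$, and that the Lagrangian algebra distributes uniformly across the $\rev\cD$-grading. A secondary point needing care is the compatibility in the last step, namely that the braided equivalence $Z(\cC_A)\cong Z(\cF)$ induced by $\cC_A\cong\cF$ carries the canonical copy of $\cC$ in $Z(\cC_A)$ onto the given $\cC\subset Z(\cF)$; I would handle this by tracking the central structures through the construction of $A$ from the outset.
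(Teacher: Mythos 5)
Your first statement and the setup of your converse coincide with the paper's: the paper also takes $\cC=C_{Z(\cF)}(\rev{\cD})$, uses M\"uger's decomposition $Z(\cF)\cong\cC\boxtimes\rev{\cD}$, and defines the same algebra $A$ (the $\cC$-component of the Lagrangian algebra, $A\boxtimes 1=A_\cF\cap(\cC\boxtimes 1)$, which is exactly your $(F\restriction\cC)^{\mathrm{ra}}(1)$); your verification that $A$ is connected \'etale is the standard argument and is fine.

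The genuine gap is in your proof that $R\colon\cC_A\to\cF$ is an equivalence. Your dimension count hinges on $\FPdim A=\FPdim\cF/\FPdim\cD$, which you justify by saying the Lagrangian algebra ``distributes uniformly across the $\rev{\cD}$-grading''. This is not a fact you can invoke: for any central functor $G\colon\cC\to\cF$ with $A=G^{\mathrm{ra}}(1)$, one has $\cC_A\cong\langle G(\cC)\rangle$, the fusion subcategory generated by the image (corestrict the half-braidings and apply the dominant case), so $\FPdim A=\FPdim\cC/\FPdim\langle G(\cC)\rangle$; hence your dimension identity is \emph{equivalent} to dominance of $F\restriction\cC$, which is precisely what has to be proved, and the argument as written is circular. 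Moreover the ``uniform distribution'' heuristic is false as stated: for $\cF=\rev{\cD}$ and $\kappa=\id$, the Lagrangian algebra is $\bigoplus_{X\in\Irr(\cD)}X\boxtimes X^\ast$, whose $\rev{\cD}$-graded components have dimensions $(\FPdim X)^2$, far from uniform (only the unit component satisfies the identity, and that is the very thing to be shown). You flag this step as ``the main obstacle'' but never close it, and it is exactly where the mathematical content lies. The paper closes it by citing \cite[Theorem 3.12]{DaNiOs2013}: injectivity of the central functor $\rev{\cD}\to\cF$ forces the central functor from the centralizer $\cC=C_{Z(\cF)}(\rev{\cD})$ to $\cF$ to be surjective (dominant), after which $\cF\cong\cC_A$ follows from the theory of dominant central functors (\cite[Section 2.4]{DaNiOs2013}, \cite[Lemma 3.5]{DaMgNiOs2013}). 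Your remaining steps --- full faithfulness of $R$ onto $\langle F(\cC)\rangle$, and the identification $\cC_A^0\cong\cD$ by comparing the two decompositions of $Z(\cC_A)\cong Z(\cF)$, modulo the compatibility you mention --- are fine once dominance is in hand; note the paper sidesteps that compatibility issue by reading $\cC_A^0$ off directly from the inclusion of $A$ into $A_\cF$ via \cite[Example 3.11]{DaNiOs2013}.
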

\begin{proof}
  We have $Z(\cC_A)=\cC\boxtimes \rev{\cC_A^0}$ 
  and the first statement follows from \cite[Example 3.11]{DaNiOs2013}.
  
  Conversely, take $\cC=C_{Z(\cF)}(\rev{\cD})$. 
  Then $Z(\cF)\cong \cC\boxtimes \rev{\cD}$ and we get a Lagrangian 
  algebra $A_\cF$ in $Z(\cF)$. We get an \'etale $A\in\cC$ with $A\boxtimes 1= 
  A_\cF \cap (\cC \boxtimes 1)$ and again from \cite[Example 3.11]{DaNiOs2013}
  follows $\cC_A^0\cong \rev{\cD}$.
  Since $\rev{\cD}\to \cF$ is a injective (central) functor $\cC \to \cF$ is a
  surjective (central) functor \cite[Theorem 3.12]{DaNiOs2013}.  Let $\tilde
  I\colon \cF \to Z(\cF)$ be the induction functor. 
  Then $\tilde I(1)$ is a connected \'etale algebra in $\cC$ isomorphic to $A$
  and $\cF\cong \cC_A$ by \cite[Section 2.4]{DaNiOs2013}. 
\end{proof}
Let $\cD$ be non-degenerately braided, and assume we have two fusion categories
$\cF,\cG$ with braided central injective functors $\rev{\cD}\to \cF$ and
$\cD\to \cG$. 
let us define $\cF\boxtimes_\cD \cG$ by
\begin{align}
  \cF\boxtimes_\cD \cG&= \bim R{(\cF\boxtimes \cG)}{}
\end{align}
where $R \in \rev{\cD}\boxtimes \cD$ is the dual algebra to $A_{\rev{\cD}}
\in Z(\cD)$.

The following reflects \cite[Theorem 11]{Oc2001}, in the sense that 
$\bim A \cC A$ (called subgroup) is a fibered product
of $\cC_A$ (corresponding the chiral left part) with 
$\cC_A^\op$ ($\cong \bim A \cC{}$ corresponding to the chiral right part)
over $\cC^0_A$ (called the ambichiral part).

\begin{prop}
  Assume $A$ is a connected \'etale algebra in a non-degenerately braided fusion
  category $\cC$, then $\bim A \cC A\cong \cC_A \boxtimes_{\cC_A^0}\cC_A^\op$.
\end{prop}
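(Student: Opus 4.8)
The plan is to prove this as the field-independent incarnation of the earlier Corollary expressing $\bim B\cC B$ as $\bim[+]B\cC B\boxtimes_{\bim[0]B\cC B}\bim[-]B\cC B$: where that argument used a commutative Q-system and its Longo--Rehren extension on factors, I would run the parallel argument purely in terms of connected \'etale algebras and the results of \cite{DaMgNiOs2013,DaNiOs2013}. Write $\cD:=\cC_A^0$, which is non-degenerately braided, and recall from the preceding Proposition the two central functors $\rev\cD\to\cC_A$ and $\cD\to\cC_A^\op$ (the latter by applying the same Proposition to $\rev\cC$). Unwinding the definition of the relative product, $\cC_A\boxtimes_\cD\cC_A^\op$ is the category of left $R$-modules in $\cC_A\boxtimes\cC_A^\op$, where $R$ is the image under $\rev\cD\boxtimes\cD\to\cC_A\boxtimes\cC_A^\op$ of the canonical (regular) algebra of $\cD$, i.e. the dual of the canonical Lagrangian algebra $A_{\rev\cD}\in Z(\cD)=\cD\boxtimes\rev\cD$, so that $\FPdim R=\FPdim\cD$.

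First I would pin down the left-hand side $\bim A\cC A$ as a module category over a Lagrangian algebra in $Z(\cC)=\cC\boxtimes\rev\cC$. By the full-center/Morita dictionary \cite[Proposition 4.8]{DaMgNiOs2013}, the full center $Z(A)$, a Lagrangian algebra in $\cC\boxtimes\rev\cC$, corresponds to the $\cC$-module category $\cC_A$, and the associated dual fusion category is $(\cC\boxtimes\rev\cC)_{Z(A)}\cong\cC^*_{\cC_A}\cong\bim A\cC A$. This description has Perron--Frobenius dimension $\FPdim\cC$ and center $\cC\boxtimes\rev\cC$; the same two invariants are forced on the right-hand side, since $\FPdim(\cC_A\boxtimes_\cD\cC_A^\op)=\FPdim(\cC_A\boxtimes\cC_A^\op)/\FPdim R=\FPdim\cC$ by the dimension formula $\FPdim\cE_A=\FPdim\cE/\FPdim A$ together with $\FPdim\cC_A=\FPdim\cC/\FPdim A$ and $\FPdim\cD=\FPdim\cC/(\FPdim A)^2$.

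The heart of the proof is then to identify $(\cC_A\boxtimes\cC_A^\op)_R$ with $(\cC\boxtimes\rev\cC)_{Z(A)}$, mirroring the computation in the proof of the earlier Corollary. I would use $Z(\cC_A)\cong\cC\boxtimes\rev\cD$ and $Z(\cC_A^\op)\cong\rev\cC\boxtimes\cD$ \cite[Corollary 3.30]{DaMgNiOs2013}, together with $\cC_A\boxtimes\cC_A^\op\cong\bim{A_{\cC_A}}{Z(\cC_A)}{}$ coming from the Lagrangian case \cite[Corollary 4.1]{DaMgNiOs2013} applied to $A_{\cC_A}\in Z(\cC_A)$, to realize the passage to $R$-modules as a gluing of the two centers along their common factor $\cD$: the regular algebra $R$ is the image of the canonical Lagrangian algebra $\hat\Theta$ of $Z(\cD)=\cD\boxtimes\rev\cD$, and dividing by it both contracts the diagonal via $(\cD\boxtimes\rev\cD)_{\hat\Theta}\cong\cD$ and pairs the leftover $\rev\cD$ against $\cD$ through the central functors, leaving $\cC\boxtimes\rev\cC$ with residual Lagrangian algebra $Z(A)$. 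Concretely this reduces to the single equivalence $(\cC_A\boxtimes\rev\cD)_{\hat\Theta}\cong\bim A\cC A$ --- the exact analogue of $(\bim[+]B\cC B\boxtimes\rev{\bim[0]B\cC B})_{\hat\Theta_\LR}\cong\bim B\cC B$ in the Corollary --- which I would verify using the lattice isomorphism between connected \'etale subalgebras of $A_{\cC_A}$ and fusion subcategories of $\cC_A$ \cite[Theorem 4.10]{DaMgNiOs2013}. The main obstacle is exactly this bookkeeping: tracking which copy of $\cD$ (and which braiding or reversal) each algebra sits over, reconciling the intermediate categories $\cC_A\boxtimes\rev\cD$ and $\cC_A\boxtimes\cC_A^\op$ through the central functor $\cD\to\cC_A^\op$, and checking that the central structures from the preceding Proposition force the $R$-modules to compute $\bim A\cC A$ rather than some other member of its Morita class. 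Once this equivalence is in hand, combining it with the definitional unwinding closes the argument.
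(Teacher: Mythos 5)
Your argument has a genuine gap at its central step. The ``single equivalence'' you reduce everything to, $(\cC_A\boxtimes\rev\cD)_{\hat\Theta}\cong\bim A\cC A$ (where $\cD=\cC_A^0$ and $\hat\Theta$ is the canonical Lagrangian algebra of $Z(\cD)=\cD\boxtimes\rev\cD$, embedded via $\cD\subset\cC_A$), is false: taking modules over $\hat\Theta$ merely contracts the $\rev\cD$ factor back onto $\cC_A$. Dimension count: $\FPdim\bigl((\cC_A\boxtimes\rev\cD)_{\hat\Theta}\bigr)=\FPdim(\cC_A)\cdot\FPdim(\cD)/\FPdim(\hat\Theta)=\FPdim(\cC_A)=\FPdim(\cC)/\FPdim(A)$, whereas $\FPdim\bigl(\bim A\cC A\bigr)=\FPdim(\cC)$, so the two sides differ by the factor $\FPdim(A)$. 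A formula-free counterexample: take $A$ Lagrangian, e.g.\ $\cC=Z(\Vect_G)$ with $A$ the canonical Lagrangian algebra. Then $\cD\cong\Vect$ and $\hat\Theta\cong 1$, so your left-hand side is just $\cC_A\cong\Vect_G$, while $\bim A\cC A\cong\cC_A\boxtimes\cC_A^\op\cong\Vect_G\boxtimes\Vect_G^\op$. The conceptual point is that $\bim A\cC A$ consists of two chiral halves $\bim[+]A\cC A\cong\cC_A$ and $\bim[-]A\cC A\cong\cC_A^\op$ glued over the ambichiral part $\cD$; once you replace the full factor $\cC_A^\op$ by $\rev\cD$ you have discarded the antichiral half, and no algebra of dimension $\FPdim(\cD)$ can restore it. (You imported this equivalence from the printed sketch of the Corollary in Section 5; as literally stated there it fails on the same Lagrangian example, so it cannot serve as a reduction target.)

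The paper's appendix proof is organized so that both tensor factors stay full-sized throughout, and the $\rev\cD$ discrepancy is accounted for by the difference between bimodules and one-sided modules rather than by shrinking a factor. Inside $Z(\cC_A)\cong\cC\boxtimes\rev\cD$ one has the inclusion of \'etale algebras $A\boxtimes 1\subset A_{\cC_A}$, whose dual categories are $\bim A\cC A\boxtimes\rev\cD$ and $\cC_A\boxtimes\cC_A^\op$ respectively; the extension corresponds on the dual side to the algebra $R$ (the dual algebra of $A_{\cC_A^0}$), and Morita theory gives $\bim R{(\cC_A\boxtimes\cC_A^\op)}R\cong\bim A\cC A\boxtimes\rev\cD$, an equivalence in which both sides have dimension $(\FPdim\cC/\FPdim A)^2$. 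Passing from $R$-bimodules to one-sided $R$-modules then strips off exactly the $\rev\cD$ factor, since on the subcategory $\cC_A^0\boxtimes(\cC_A^0)^\op$ one has $\bim R{(\cC_A^0\boxtimes(\cC_A^0)^\op)}R\cong\cD\boxtimes\rev\cD$ but $\bim R{(\cC_A^0\boxtimes(\cC_A^0)^\op)}{}\cong\cD$; this yields $(\cC_A\boxtimes\cC_A^\op)_R\cong\bim A\cC A$. Your other ingredients --- unwinding the definition of $\boxtimes_{\cC_A^0}$, identifying $\bim A\cC A$ with $(\cC\boxtimes\rev\cC)_{Z(A)}$ via the full center, and the dimension check for the proposition itself --- are correct, but they are the routine part; as written, the heart of your proof rests on the false reduction above.
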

\begin{proof}
  We only sketch the proof.
  
  We have $Z(\cC_A)=\cC\boxtimes  \rev{(\cC_A^0)}$.
  The algebras $A\boxtimes 1\subset A_{\cC_A}$ give the Morita equivalence
  between $\cC\boxtimes \rev{(\cC_A^0)}$ or $\bim A\cC A \boxtimes
  \rev{(\cC_A^0)}$, respectively, and $\cC_A \boxtimes \cC_A^\op$.  
  The algebra $A_{\cC_A}$ is $A\boxtimes 1$ composed with $A_{\cC_A^0}$. 
  Let $R$ be the dual algebra of $A_{\cC_A^0}$.
  Then we get $\bim R{(\cC_A \boxtimes \cC_A^\op)}R \cong \bim A\cC A \boxtimes
  \rev{(\cC_A^0)}$.
  This restricts to $\bim R {(\cC_A^0\boxtimes (\cC_A^0)^\op)}R \cong
  \cC^0_A\boxtimes \rev{(\cC_A^0)}$ and 
  $\bim R{(\cC_A^0\boxtimes (\cC_A^0)^\op)}{} \cong \cC^0_A$. 
  We conclude that ${(\cC_A \boxtimes \cC_A^\op)}_R \cong \bim A \cC A$.  
\end{proof}
Let us denote by $\bim[+] A\cC A$ the image of  
$\cC_A\boxtimes 1 \to \bim A\cC A$ 
and by $\bim[-] A\cC A$ the image of   $1\boxtimes \cC_A \to \bim A\cC A$.
It is easy to see that both inclusions restrict to 
an equivalence $\cC_A^0 \to \bim[0] A\cC A$, where 
$\bim[0]A\cC A=\bim[+]A\cC A\cap\bim[-]A\cC A$.
In this sense one can see $\bim A \cC A$ also see as  a ``fibred product''
\begin{align}
  \bim A\cC A =\bim[+]A\cC A \boxtimes_{\bim[0]A \cC A} \bim[-]A\cC A\,.
\end{align}

We get a hypergroup $K=K(\cF)\CS K(\cG)$ and we say that $\cF$ is a 
$K$-hypergraded extension of $\cG$. This generalizes the concept of $G$-graded
extensions for a finite group $G$.
\begin{rmk} 
  Let $\cF$ be a $K$-graded extension of $\cG$. 
  For every indecomposable $\cG$-module category we get 
  $\tilde\cG$ Morita equivalent to $\cG$ and 
  and an extension $\tilde \cF$ of $\tilde \cG$, 
  such that $\tilde \cF$ is Morita equivalent to $\cF$ and 
  $A_{\tilde\cF \CS\tilde \cG}\cong  A_{\cF \CS \cG}$
  in $Z(\cF)=Z(\tilde\cF)$.
  Namely, Lagrangian algebras over $A_{\cF\CS \cG}$ are in bijection with
  Lagrangian algebras in $Z(\cF)_{A_{\cF\CS\cG}}^0\cong Z(\cG)$ \cf
  \cite[Proposition
  3.16]{DaMgNiOs2013} which are in bijection with indecomposable 
  $\cG$ module categories \cite[Proposition 4.8]{DaMgNiOs2013}. Given 
  such a Lagrangian algebra $\tilde A$, we take $\tilde \cF=Z(\cF)_{\tilde A}$. 
  We conjecture that $\tilde \cF$ is a $K$-graded extension of $\tilde \cG$.
  This is true in the unitary case by Corollary
  \ref{cor:KGradedExtensionsMorita}.

  The obtained Lagrangian algebra $\tilde A\in Z(\cF)$ with 
  $A_{\cF\CS \cG}\subset \tilde A$ is a composition of $A_{\cF\CS \cG}$ and 
  $A_{\tilde \cG}$. 
  Further, $\tilde A=A_{\tilde{\cF}}$ for a fusion category $\tilde \cF$ 
  Morita equivalent to $\cF$ and $\tilde\cG\subset \tilde\cF$ is the subcategory 
  associated with the subalgebra $A_{\cF\CS \cG}\subset \tilde A$.
  To show that $K(\cF)\CS K(\cG)\cong K(\tilde\cF)\CS K(\tilde\cG)$,
  it is enough to show that the hypergrading just depends on the dual algebra of
  $A_{\cF \CS \cG}=A_{\tilde \cF \CS \cG}$, for which we just 
  have a proof in the unitary case.
\end{rmk}

\end{appendix}

\def\cprime{$'$}\newcommand{\noopsort}[1]{}
\begin{bibdiv}
\begin{biblist}

\bib{AcCe1982}{article}{
      author={Accardi, Luigi},
      author={Cecchini, Carlo},
       title={Conditional expectations in von {N}eumann algebras and a theorem
  of {T}akesaki},
        date={1982},
        ISSN={0022-1236},
     journal={J. Funct. Anal.},
      volume={45},
      number={2},
       pages={245\ndash 273},
         url={http://dx.doi.org/10.1016/0022-1236(82)90022-2},
      review={\MR{647075}},
}

\bib{ArGhGu2002}{article}{
      author={Arias, A.},
      author={Gheondea, A.},
      author={Gudder, S.},
       title={Fixed points of quantum operations},
        date={2002},
        ISSN={0022-2488},
     journal={J. Math. Phys.},
      volume={43},
      number={12},
       pages={5872\ndash 5881},
         url={http://dx.doi.org/10.1063/1.1519669},
      review={\MR{1939622}},
}

\bib{BaDeHa1988}{article}{
      author={Baillet, Michel},
      author={Denizeau, Yves},
      author={Havet, Jean-Fran{\c{c}}ois},
       title={Indice d'une esp\'erance conditionnelle},
        date={1988},
        ISSN={0010-437X},
     journal={Compositio Math.},
      volume={66},
      number={2},
       pages={199\ndash 236},
         url={http://www.numdam.org/item?id=CM_1988__66_2_199_0},
      review={\MR{945550}},
}

\bib{BcEv2000}{article}{
      author={Böckenhauer, Jens},
      author={Evans, David~E.},
       title={{Modular invariants from subfactors: {T}ype {I} coupling matrices
  and intermediate subfactors}},
        date={2000},
        ISSN={0010-3616},
     journal={Comm. Math. Phys.},
      volume={213},
      number={2},
       pages={267–289},
         url={http://dx.doi.org/10.1007/s002200000241},
      review={\MR{1785458 (2001g:46142)}},
}

\bib{BcEv1998}{article}{
      author={Böckenhauer, Jens},
      author={Evans, David~E.},
       title={{Modular invariants, graphs and {$\alpha$}-induction for nets of
  subfactors. {I}}},
        date={1998},
        ISSN={0010-3616},
     journal={Comm. Math. Phys.},
      volume={197},
      number={2},
       pages={361–386},
      eprint={arXiv:hep-th/9801171},
         url={http://dx.doi.org/10.1007/s002200050455},
      review={\MR{1652746 (2000c:46121)}},
}

\bib{BcEv1999-2}{article}{
      author={Böckenhauer, Jens},
      author={Evans, David~E.},
       title={{Modular invariants, graphs and {$\alpha$}-induction for nets of
  subfactors. {III}}},
        date={1999},
        ISSN={0010-3616},
     journal={Comm. Math. Phys.},
      volume={205},
      number={1},
       pages={183–228},
      eprint={arXiv:hep-th/9812110},
         url={http://dx.doi.org/10.1007/s002200050673},
      review={\MR{1706884 (2000j:46118)}},
}

\bib{BcEvKa2000}{article}{
      author={Böckenhauer, Jens},
      author={Evans, David~E.},
      author={Kawahigashi, Yasuyuki},
       title={{Chiral structure of modular invariants for subfactors}},
        date={2000},
        ISSN={0010-3616},
     journal={Comm. Math. Phys.},
      volume={210},
      number={3},
       pages={733–784},
         url={http://dx.doi.org/10.1007/s002200050798},
      review={\MR{1777347 (2001k:46097)}},
}

\bib{BcEvKa2001}{article}{
      author={Böckenhauer, Jens},
      author={Evans, David~E.},
      author={Kawahigashi, Yasuyuki},
       title={Longo-{R}ehren subfactors arising from {$\alpha$}-induction},
        date={2001},
        ISSN={0034-5318},
     journal={Publ. Res. Inst. Math. Sci.},
      volume={37},
      number={1},
       pages={1\ndash 35},
         url={http://projecteuclid.org/euclid.prims/1145476688},
      review={\MR{1815993 (2002d:46053)}},
}

\bib{BcEvKa1999}{article}{
      author={Böckenhauer, Jens},
      author={Evans, David~E.},
      author={Kawahigashi, Yasuyuki},
       title={{On {$\alpha$}-induction, chiral generators and modular
  invariants for subfactors}},
        date={1999},
        ISSN={0010-3616},
     journal={Comm. Math. Phys.},
      volume={208},
      number={2},
       pages={429–487},
         url={http://dx.doi.org/10.1007/s002200050765},
      review={\MR{1729094 (2001c:81180)}},
}

\bib{BrGuLo1993}{article}{
      author={Brunetti, Romeo},
      author={Guido, Daniele},
      author={Longo, Roberto},
       title={{Modular structure and duality in conformal quantum field
  theory}},
        date={1993},
        ISSN={0010-3616},
     journal={Comm. Math. Phys.},
      volume={156},
       pages={201–219},
      eprint={funct-an/9302008v1},
         url={http://dx.doi.org/10.1007/BF02096738},
}

\bib{BlHe1995}{book}{
      author={Bloom, Walter~R.},
      author={Heyer, Herbert},
       title={Harmonic analysis of probability measures on hypergroups},
      series={de Gruyter Studies in Mathematics},
   publisher={Walter de Gruyter \& Co., Berlin},
        date={1995},
      volume={20},
        ISBN={3-11-012105-0},
         url={http://dx.doi.org/10.1515/9783110877595},
      review={\MR{1312826}},
}

\bib{BiHa1996}{article}{
      author={Bisch, Dietmar},
      author={Haagerup, Uffe},
       title={Composition of subfactors: new examples of infinite depth
  subfactors},
        date={1996},
        ISSN={0012-9593},
     journal={Ann. Sci. \'Ecole Norm. Sup. (4)},
      volume={29},
      number={3},
       pages={329\ndash 383},
         url={http://www.numdam.org/item?id=ASENS_1996_4_29_3_329_0},
      review={\MR{1386923 (97e:46080)}},
}

\bib{Bi2012}{article}{
      author={Bischoff, Marcel},
       title={Models in boundary quantum field theory associated with lattices
  and loop group models},
        date={2012},
        ISSN={0010-3616},
     journal={Comm. Math. Phys.},
      volume={315},
      number={3},
       pages={827\ndash 858},
         url={http://dx.doi.org/10.1007/s00220-012-1511-2},
      review={\MR{2981815}},
}

\bib{Bi2015VFR}{inproceedings}{
      author={Bischoff, Marcel},
       title={The relation between subfactors arising from conformal nets and
  the realization of quantum doubles},
        date={2016},
   booktitle={to appear in the {P}roceedings in honor of {V}aughan {F}. {R}.
  {J}ones' 60th birthday conferences},
}

\bib{Bi2015}{article}{
      author={Bischoff, Marcel},
       title={A {R}emark on {CFT} {R}ealization of {Q}uantum {D}oubles of
  {S}ubfactors: {C}ase {I}ndex {$<4$}},
        date={2016},
        ISSN={0377-9017},
     journal={Lett. Math. Phys.},
      volume={106},
      number={3},
       pages={341\ndash 363},
         url={http://dx.doi.org/10.1007/s11005-016-0816-z},
      review={\MR{3462031}},
}

\bib{BiKaLo2014}{article}{
      author={Bischoff, Marcel},
      author={Kawahigashi, Yasuyuki},
      author={Longo, Roberto},
       title={Characterization of 2{D} rational local conformal nets and its
  boundary conditions: the maximal case},
        date={2015},
        ISSN={1431-0635},
     journal={Doc. Math.},
      volume={20},
       pages={1137\ndash 1184},
      review={\MR{3424476}},
}

\bib{BiKaLoRe2014-2}{book}{
      author={Bischoff, Marcel},
      author={Kawahigashi, Yasuyuki},
      author={Longo, Roberto},
      author={Rehren, Karl-Henning},
       title={Tensor categories and endomorphisms of von {N}eumann
  algebras---with applications to quantum field theory},
      series={Springer Briefs in Mathematical Physics},
   publisher={Springer, Cham},
        date={2015},
      volume={3},
        ISBN={978-3-319-14300-2; 978-3-319-14301-9},
         url={http://dx.doi.org/10.1007/978-3-319-14301-9},
      review={\MR{3308880}},
}

\bib{BiKaLoRe2014}{article}{
      author={Bischoff, Marcel},
      author={Kawahigashi, Yasuyuki},
      author={Longo, Roberto},
      author={Rehren, Karl-Henning},
       title={Phase {B}oundaries in {A}lgebraic {C}onformal {QFT}},
        date={2016},
        ISSN={0010-3616},
     journal={Comm. Math. Phys.},
      volume={342},
      number={1},
       pages={1\ndash 45},
         url={http://dx.doi.org/10.1007/s00220-015-2560-0},
      review={\MR{3455144}},
}

\bib{Ca2004}{article}{
      author={Carpi, Sebastiano},
       title={{On the representation theory of {V}irasoro nets}},
        date={2004},
        ISSN={0010-3616},
     journal={Comm. Math. Phys.},
      volume={244},
      number={2},
       pages={261–284},
         url={http://dx.doi.org/10.1007/s00220-003-0988-0},
      review={\MR{2031030 (2005e:81128)}},
}

\bib{Ca1998}{article}{
      author={Carpi, Sebastiano},
       title={Absence of subsystems for the {H}aag-{K}astler net generated by
  the energy-momentum tensor in two-dimensional conformal field theory},
        date={1998},
        ISSN={0377-9017},
     journal={Lett. Math. Phys.},
      volume={45},
      number={3},
       pages={259\ndash 267},
         url={http://dx.doi.org/10.1023/A:1007466420114},
      review={\MR{1641204}},
}

\bib{Ch1974}{article}{
      author={Choi, Man~Duen},
       title={A {S}chwarz inequality for positive linear maps on {$C^{\ast} \
  $}-algebras},
        date={1974},
        ISSN={0019-2082},
     journal={Illinois J. Math.},
      volume={18},
       pages={565\ndash 574},
      review={\MR{0355615}},
}

\bib{Co1973}{article}{
      author={Connes, Alain},
       title={{Une classification des facteurs de type {${\rm} III$}}},
        date={1973},
     journal={Ann. Sci. École Norm. Sup.(4)},
      volume={6},
       pages={133–252},
}

\bib{Co1977}{article}{
      author={Connes, Alain},
       title={Periodic automorphisms of the hyperfinite factor of type {II}1},
        date={1977},
        ISSN={0001-6969},
     journal={Acta Sci. Math. (Szeged)},
      volume={39},
      number={1-2},
       pages={39\ndash 66},
      review={\MR{0448101}},
}

\bib{Co1994}{book}{
      author={Connes, Alain},
       title={Non-commutative geometry},
   publisher={Academic Press, San Diego},
        date={1994},
}

\bib{DoHaRo1971}{article}{
      author={Doplicher, Sergio},
      author={Haag, Rudolf},
      author={Roberts, John~E.},
       title={Local observables and particle statistics. {I}},
        date={1971},
        ISSN={0010-3616},
     journal={Comm. Math. Phys.},
      volume={23},
       pages={199\ndash 230},
      review={\MR{0297259 (45 \#6316)}},
}

\bib{DoMa1997}{article}{
      author={Dong, Chongying},
      author={Mason, Geoffrey},
       title={On quantum {G}alois theory},
        date={1997},
        ISSN={0012-7094},
     journal={Duke Math. J.},
      volume={86},
      number={2},
       pages={305\ndash 321},
         url={http://dx.doi.org/10.1215/S0012-7094-97-08609-9},
      review={\MR{1430435}},
}

\bib{DaMgNiOs2013}{article}{
      author={Davydov, Alexei},
      author={Müger, Michael},
      author={Nikshych, Dmitri},
      author={Ostrik, Victor},
       title={{The {W}itt group of non-degenerate braided fusion categories}},
        date={2013},
        ISSN={0075-4102},
     journal={J. Reine Angew. Math.},
      volume={677},
       pages={135–177},
      review={\MR{3039775}},
}

\bib{DaNiOs2013}{article}{
      author={Davydov, Alexei},
      author={Nikshych, Dmitri},
      author={Ostrik, Victor},
       title={{On the structure of the {W}itt group of braided fusion
  categories}},
        date={2013},
        ISSN={1022-1824},
     journal={Selecta Math. (N.S.)},
      volume={19},
      number={1},
       pages={237–269},
         url={http://dx.doi.org/10.1007/s00029-012-0093-3},
      review={\MR{3022755}},
}

\bib{DoRo1989-2}{article}{
      author={Doplicher, Sergio},
      author={Roberts, John~E.},
       title={Endomorphisms of {$C^*$}-algebras, cross products and duality for
  compact groups},
        date={1989},
        ISSN={0003-486X},
     journal={Ann. of Math. (2)},
      volume={130},
      number={1},
       pages={75\ndash 119},
         url={http://dx.doi.org/10.2307/1971477},
      review={\MR{1005608}},
}

\bib{DoRo1989}{article}{
      author={Doplicher, Sergio},
      author={Roberts, John~E.},
       title={A new duality theory for compact groups},
        date={1989},
        ISSN={0020-9910},
     journal={Invent. Math.},
      volume={98},
      number={1},
       pages={157\ndash 218},
         url={http://dx.doi.org/10.1007/BF01388849},
      review={\MR{1010160 (90k:22005)}},
}

\bib{DoRo1990}{article}{
      author={Doplicher, Sergio},
      author={Roberts, John~E.},
       title={Why there is a field algebra with a compact gauge group
  describing the superselection structure in particle physics},
        date={1990},
        ISSN={0010-3616},
     journal={Comm. Math. Phys.},
      volume={131},
      number={1},
       pages={51\ndash 107},
         url={http://projecteuclid.org/euclid.cmp/1104200703},
      review={\MR{1062748}},
}

\bib{EvGa2011}{article}{
      author={Evans, David~E.},
      author={Gannon, Terry},
       title={{The exoticness and realisability of twisted {H}aagerup-{I}zumi
  modular data}},
        date={2011},
        ISSN={0010-3616},
     journal={Comm. Math. Phys.},
      volume={307},
      number={2},
       pages={463–512},
         url={http://dx.doi.org/10.1007/s00220-011-1329-3},
      review={\MR{2837122 (2012m:17040)}},
}

\bib{EvGa2014}{article}{
      author={Evans, David~E.},
      author={Gannon, Terry},
       title={Near-group fusion categories and their doubles},
        date={2014},
        ISSN={0001-8708},
     journal={Adv. Math.},
      volume={255},
       pages={586\ndash 640},
         url={http://dx.doi.org/10.1016/j.aim.2013.12.014},
      review={\MR{3167494}},
}

\bib{EtGeNiOs2015}{book}{
      author={Etingof, Pavel},
      author={Gelaki, Shlomo},
      author={Nikshych, Dmitri},
      author={Ostrik, Victor},
       title={Tensor categories},
      series={Mathematical Surveys and Monographs},
   publisher={American Mathematical Society, Providence, RI},
        date={2015},
      volume={205},
        ISBN={978-1-4704-2024-6},
         url={http://dx.doi.org/10.1090/surv/205},
      review={\MR{3242743}},
}

\bib{EtNiOs2010}{article}{
      author={Etingof, Pavel},
      author={Nikshych, Dmitri},
      author={Ostrik, Victor},
       title={Fusion categories and homotopy theory},
        date={2010},
        ISSN={1663-487X},
     journal={Quantum Topol.},
      volume={1},
      number={3},
       pages={209\ndash 273},
         url={http://dx.doi.org/10.4171/QT/6},
        note={With an appendix by Ehud Meir},
      review={\MR{2677836}},
}

\bib{EvPu2015}{article}{
      author={Evans, David~E.},
      author={Pugh, Mathew},
       title={Spectral measures for {$G_2$}},
        date={2015},
        ISSN={0010-3616},
     journal={Comm. Math. Phys.},
      volume={337},
      number={3},
       pages={1161\ndash 1197},
         url={http://dx.doi.org/10.1007/s00220-015-2293-0},
      review={\MR{3339174}},
}

\bib{FrFuRuSc2010}{incollection}{
      author={Fr{\"o}hlich, J{\"u}rg},
      author={Fuchs, J{\"u}rgen},
      author={Runkel, Ingo},
      author={Schweigert, Christoph},
       title={Defect lines, dualities and generalised orbifolds},
        date={2010},
   booktitle={X{VI}th {I}nternational {C}ongress on {M}athematical {P}hysics},
   publisher={World Sci. Publ., Hackensack, NJ},
       pages={608\ndash 613},
         url={http://dx.doi.org/10.1142/9789814304634_0056},
      review={\MR{2730830}},
}

\bib{FrJr1996}{article}{
      author={Fredenhagen, K.},
      author={Jörß, Martin},
       title={{Conformal Haag-Kastler nets, pointlike localized fields and the
  existence of operator product expansions}},
        date={1996},
     journal={Comm. Math. Phys.},
      volume={176},
      number={3},
       pages={541–554},
}

\bib{FuRuSc2004}{article}{
      author={Fuchs, J{\"u}rgen},
      author={Runkel, Ingo},
      author={Schweigert, Christoph},
       title={T{FT} construction of {RCFT} correlators. {III}. {S}imple
  currents},
        date={2004},
        ISSN={0550-3213},
     journal={Nuclear Phys. B},
      volume={694},
      number={3},
       pages={277\ndash 353},
         url={http://dx.doi.org/10.1016/j.nuclphysb.2004.05.014},
      review={\MR{2076134 (2005e:81209)}},
}

\bib{FrReSc1989}{article}{
      author={Fredenhagen, K.},
      author={Rehren, K.-H.},
      author={Schroer, B.},
       title={{Superselection sectors with braid group statistics and exchange
  algebras. {I}.\ {G}eneral theory}},
        date={1989},
        ISSN={0010-3616},
     journal={Comm. Math. Phys.},
      volume={125},
      number={2},
       pages={201–226},
         url={http://projecteuclid.org/getRecord?id=euclid.cmp/1104179464},
      review={\MR{1016869 (91c:81047)}},
}

\bib{GaFr1993}{article}{
      author={Gabbiani, Fabrizio},
      author={Fröhlich, Jürg},
       title={{Operator algebras and conformal field theory}},
        date={1993},
        ISSN={0010-3616},
     journal={Comm. Math. Phys.},
      volume={155},
      number={3},
       pages={569–640},
}

\bib{GuLo1992}{article}{
      author={Guido, Daniele},
      author={Longo, Roberto},
       title={Relativistic invariance and charge conjugation in quantum field
  theory},
        date={1992},
        ISSN={0010-3616},
     journal={Comm. Math. Phys.},
      volume={148},
      number={3},
       pages={521\ndash 551},
         url={http://projecteuclid.org/euclid.cmp/1104251044},
      review={\MR{1181069}},
}

\bib{Ha1987}{article}{
      author={Haagerup, Uffe},
       title={Connes' bicentralizer problem and uniqueness of the injective
  factor of type {${\rm III}_1$}},
        date={1987},
        ISSN={0001-5962},
     journal={Acta Math.},
      volume={158},
      number={1-2},
       pages={95\ndash 148},
         url={http://dx.doi.org/10.1007/BF02392257},
      review={\MR{880070 (88f:46117)}},
}

\bib{HuKiLe2014}{article}{
      author={Huang, Yi-Zhi},
      author={Kirillov, Alexander, Jr.},
      author={Lepowsky, James},
       title={Braided tensor categories and extensions of vertex operator
  algebras},
        date={2015},
        ISSN={0010-3616},
     journal={Comm. Math. Phys.},
      volume={337},
      number={3},
       pages={1143\ndash 1159},
         url={http://dx.doi.org/10.1007/s00220-015-2292-1},
      review={\MR{3339173}},
}

\bib{HaYa2000}{article}{
      author={Hayashi, Tomohiro},
      author={Yamagami, Shigeru},
       title={Amenable tensor categories and their realizations as {AFD}
  bimodules},
        date={2000},
        ISSN={0022-1236},
     journal={J. Funct. Anal.},
      volume={172},
      number={1},
       pages={19\ndash 75},
         url={http://dx.doi.org/10.1006/jfan.1999.3521},
      review={\MR{1749868 (2001d:46092)}},
}

\bib{IzLoPo1998}{article}{
      author={Izumi, Masaki},
      author={Longo, Roberto},
      author={Popa, Sorin},
       title={{A {G}alois correspondence for compact groups of automorphisms of
  von {N}eumann algebras with a generalization to {K}ac algebras}},
        date={1998},
        ISSN={0022-1236},
     journal={J. Funct. Anal.},
      volume={155},
      number={1},
       pages={25–63},
         url={http://dx.doi.org/10.1006/jfan.1997.3228},
      review={\MR{1622812 (2000c:46117)}},
}

\bib{Iz2000}{article}{
      author={Izumi, Masaki},
       title={{The Structure of Sectors Associated with Longo–Rehren
  Inclusions\\I. General Theory}},
        date={2000},
        ISSN={0010-3616},
     journal={Comm. Math. Phys.},
      volume={213},
       pages={127–179},
         url={http://dx.doi.org/10.1007/s002200000234},
}

\bib{Iz2001II}{article}{
      author={Izumi, Masaki},
       title={The structure of sectors associated with {L}ongo-{R}ehren
  inclusions. {II}. {E}xamples},
        date={2001},
        ISSN={0129-055X},
     journal={Rev. Math. Phys.},
      volume={13},
      number={5},
       pages={603\ndash 674},
         url={http://dx.doi.org/10.1142/S0129055X01000818},
      review={\MR{1832764 (2002k:46161)}},
}

\bib{Iz2015}{article}{
      author={Izumi, Masaki},
       title={A {C}untz algebra approach to the classification of near-group
  categories},
        date={2015},
     journal={arXiv preprint arXiv:1512.04288},
}

\bib{Iz1991}{article}{
      author={Izumi, Masaki},
       title={Application of fusion rules to classification of subfactors},
        date={1991},
        ISSN={0034-5318},
     journal={Publ. Res. Inst. Math. Sci.},
      volume={27},
      number={6},
       pages={953\ndash 994},
         url={http://dx.doi.org/10.2977/prims/1195169007},
      review={\MR{1145672}},
}

\bib{JoMoSn2014}{article}{
      author={Jones, Vaughan F.~R.},
      author={Morrison, Scott},
      author={Snyder, Noah},
       title={The classification of subfactors of index at most 5},
        date={2014},
        ISSN={0273-0979},
     journal={Bull. Amer. Math. Soc. (N.S.)},
      volume={51},
      number={2},
       pages={277\ndash 327},
         url={http://dx.doi.org/10.1090/S0273-0979-2013-01442-3},
      review={\MR{3166042}},
}

\bib{Jo1980}{article}{
      author={Jones, Vaughan F.~R.},
       title={Actions of finite groups on the hyperfinite type {${\rm
  II}_{1}$}\ factor},
        date={1980},
        ISSN={0065-9266},
     journal={Mem. Amer. Math. Soc.},
      volume={28},
      number={237},
       pages={v+70},
         url={http://dx.doi.org/10.1090/memo/0237},
      review={\MR{587749}},
}

\bib{JoSu1997}{book}{
      author={Jones, V.},
      author={Sunder, V.~S.},
       title={{Introduction to subfactors}},
      series={{London Mathematical Society Lecture Note Series}},
   publisher={Cambridge University Press},
     address={Cambridge},
        date={1997},
      volume={234},
        ISBN={0-521-58420-5},
         url={http://dx.doi.org/10.1017/CBO9780511566219},
      review={\MR{1473221 (98h:46067)}},
}

\bib{Ka1952}{article}{
      author={Kadison, Richard~V.},
       title={A generalized {S}chwarz inequality and algebraic invariants for
  operator algebras},
        date={1952},
        ISSN={0003-486X},
     journal={Ann. of Math. (2)},
      volume={56},
       pages={494\ndash 503},
      review={\MR{0051442}},
}

\bib{Ka2002}{article}{
      author={Kawahigashi, Yasuyuki},
       title={Generalized {L}ongo-{R}ehren subfactors and
  {$\alpha$}-induction},
        date={2002},
        ISSN={0010-3616},
     journal={Comm. Math. Phys.},
      volume={226},
      number={2},
       pages={269\ndash 287},
      review={\MR{1892455}},
}

\bib{KaLo2004}{article}{
      author={Kawahigashi, Y.},
      author={Longo, Roberto},
       title={{Classification of local conformal nets. Case {$c < 1$}.}},
        date={2004},
        ISSN={0003-486X},
     journal={Ann. Math.},
      volume={160},
      number={2},
       pages={493–522},
}

\bib{KaLoMg2001}{article}{
      author={Kawahigashi, Y.},
      author={Longo, Roberto},
      author={Müger, Michael},
       title={{Multi-Interval Subfactors and Modularity of Representations in
  Conformal Field Theory}},
        date={2001},
     journal={Comm. Math. Phys.},
      volume={219},
       pages={631–669},
      eprint={arXiv:math/9903104},
}

\bib{KiOs2002}{article}{
      author={Kirillov, Jr.~Alexander},
      author={Ostrik, Viktor},
       title={{On a {$q$}-analogue of the {M}c{K}ay correspondence and the
  {ADE} classification of {$\germ {sl}\_2$} conformal field theories}},
        date={2002},
        ISSN={0001-8708},
     journal={Adv. Math.},
      volume={171},
      number={2},
       pages={183–227},
         url={http://dx.doi.org/10.1006/aima.2002.2072},
      review={\MR{1936496 (2003j:17019)}},
}

\bib{Lo2003}{article}{
      author={Longo, Roberto},
       title={{Conformal Subnets and Intermediate Subfactors}},
        date={2003},
        ISSN={0010-3616},
     journal={Comm. Math. Phys.},
      volume={237},
       pages={7–30},
      eprint={arXiv:math/0102196v2 [math.OA]},
         url={http://dx.doi.org/10.1007/s00220-003-0814-8},
}

\bib{Lo1979}{article}{
      author={Longo, Roberto},
       title={{Notes on algebraic invariants for noncommutative dynamical
  systems}},
        date={1979},
        ISSN={0010-3616},
     journal={Comm. Math. Phys.},
      volume={69},
      number={3},
       pages={195–207},
         url={http://projecteuclid.org/getRecord?id=euclid.cmp/1103905488},
      review={\MR{550019 (80j:46108)}},
}

\bib{Lo1989}{article}{
      author={Longo, Roberto},
       title={{Index of subfactors and statistics of quantum fields. I}},
        date={1989},
     journal={Comm. Math. Phys.},
      volume={126},
       pages={217–247},
}

\bib{Lo1994}{article}{
      author={Longo, Roberto},
       title={{A duality for {H}opf algebras and for subfactors. {I}}},
        date={1994},
        ISSN={0010-3616},
     journal={Comm. Math. Phys.},
      volume={159},
      number={1},
       pages={133–150},
         url={http://projecteuclid.org/getRecord?id=euclid.cmp/1104254494},
      review={\MR{1257245 (95h:46097)}},
}

\bib{LoRe1995}{article}{
      author={Longo, Roberto},
      author={Rehren, Karl-Henning},
       title={{Nets of Subfactors}},
        date={1995},
     journal={Rev. Math. Phys.},
      volume={7},
       pages={567–597},
      eprint={arXiv:hep-th/9411077},
}

\bib{LoRo1997}{article}{
      author={Longo, R.},
      author={Roberts, J.~E.},
       title={{A theory of dimension}},
        date={1997},
        ISSN={0920-3036},
     journal={K-Theory},
      volume={11},
      number={2},
       pages={103–159},
      eprint={arXiv:funct-an/9604008v1},
         url={http://dx.doi.org/10.1023/A:1007714415067},
      review={\MR{1444286 (98i:46065)}},
}

\bib{Mg2003}{article}{
      author={Müger, Michael},
       title={{From subfactors to categories and topology. {I}. {F}robenius
  algebras in and {M}orita equivalence of tensor categories}},
        date={2003},
        ISSN={0022-4049},
     journal={J. Pure Appl. Algebra},
      volume={180},
      number={1-2},
       pages={81–157},
         url={http://dx.doi.org/10.1016/S0022-4049(02)00247-5},
      review={\MR{1966524 (2004f:18013)}},
}

\bib{Mg2003II}{article}{
      author={Müger, Michael},
       title={{From subfactors to categories and topology. {II}. {T}he quantum
  double of tensor categories and subfactors}},
        date={2003},
        ISSN={0022-4049},
     journal={J. Pure Appl. Algebra},
      volume={180},
      number={1-2},
       pages={159–219},
         url={http://dx.doi.org/10.1016/S0022-4049(02)00248-7},
      review={\MR{1966525 (2004f:18014)}},
}

\bib{Mg2003-MC}{article}{
      author={Müger, Michael},
       title={On the structure of modular categories},
        date={2003},
        ISSN={0024-6115},
     journal={Proc. London Math. Soc. (3)},
      volume={87},
      number={2},
       pages={291\ndash 308},
         url={http://dx.doi.org/10.1112/S0024611503014187},
      review={\MR{1990929}},
}

\bib{Mg2005}{article}{
      author={Müger, Michael},
       title={{Conformal Orbifold Theories and Braided Crossed G-Categories}},
        date={2005},
        ISSN={0010-3616},
     journal={Comm. Math. Phys.},
      volume={260},
       pages={727–762},
         url={http://dx.doi.org/10.1007/s00220-005-1291-z},
}

\bib{Mg2010}{inproceedings}{
      author={Müger, Michael},
       title={{On superselection theory of quantum fields in low dimensions}},
        date={2010},
   booktitle={{X{VI}th {I}nternational {C}ongress on {M}athematical
  {P}hysics}},
   publisher={World Sci. Publ., Hackensack, NJ},
       pages={496–503},
         url={http://dx.doi.org/10.1142/9789814304634_0041},
      review={\MR{2730815 (2012i:81165)}},
}

\bib{NiStZs2003}{article}{
      author={Niculescu, Constantin~P.},
      author={Str{\"o}h, Anton},
      author={Zsid{\'o}, L{\'a}szl{\'o}},
       title={Noncommutative extensions of classical and multiple recurrence
  theorems},
        date={2003},
        ISSN={0379-4024},
     journal={J. Operator Theory},
      volume={50},
      number={1},
       pages={3\ndash 52},
      review={\MR{2015017}},
}

\bib{Oc2001}{incollection}{
      author={Ocneanu, Adrian},
       title={Operator algebras, topology and subgroups of quantum
  symmetry---construction of subgroups of quantum groups},
        date={2001},
   booktitle={Taniguchi {C}onference on {M}athematics {N}ara '98},
      series={Adv. Stud. Pure Math.},
      volume={31},
   publisher={Math. Soc. Japan, Tokyo},
       pages={235\ndash 263},
      review={\MR{1865095 (2002j:57059)}},
}

\bib{Os2003-Rank2}{article}{
      author={Ostrik, Viktor},
       title={Fusion categories of rank 2},
        date={2003},
        ISSN={1073-2780},
     journal={Math. Res. Lett.},
      volume={10},
      number={2-3},
       pages={177\ndash 183},
         url={http://dx.doi.org/10.4310/MRL.2003.v10.n2.a5},
      review={\MR{1981895}},
}

\bib{Pa1973}{article}{
      author={Paschke, William~L.},
       title={Inner product modules over {$B^{\ast} $}-algebras},
        date={1973},
        ISSN={0002-9947},
     journal={Trans. Amer. Math. Soc.},
      volume={182},
       pages={443\ndash 468},
      review={\MR{0355613}},
}

\bib{Po1993}{book}{
      author={Popa, Sorin},
       title={Classification of subfactors and their endomorphisms},
      series={CBMS Regional Conference Series in Mathematics},
   publisher={Published for the Conference Board of the Mathematical Sciences,
  Washington, DC; by the American Mathematical Society, Providence, RI},
        date={1995},
      volume={86},
        ISBN={0-8218-0321-2},
      review={\MR{1339767 (96d:46085)}},
}

\bib{Re2000}{article}{
      author={Rehren, Karl-Henning},
       title={{Canonical tensor product subfactors}},
        date={2000},
        ISSN={0010-3616},
     journal={Comm. Math. Phys.},
      volume={211},
      number={2},
       pages={395–406},
         url={http://dx.doi.org/10.1007/s002200050818},
      review={\MR{1754521 (2001d:46093)}},
}

\bib{Re1994}{article}{
      author={Rehren, Karl-Henning},
       title={{A new view of the {V}irasoro algebra}},
        date={1994},
        ISSN={0377-9017},
     journal={Lett. Math. Phys.},
      volume={30},
      number={2},
       pages={125–130},
         url={http://dx.doi.org/10.1007/BF00939700},
      review={\MR{1264993 (95b:81194)}},
}

\bib{Su1992}{article}{
      author={Sunder, V.~S.},
       title={{${\rm II}_1$} factors, their bimodules and hypergroups},
        date={1992},
        ISSN={0002-9947},
     journal={Trans. Amer. Math. Soc.},
      volume={330},
      number={1},
       pages={227\ndash 256},
         url={http://dx.doi.org/10.2307/2154162},
      review={\MR{1049618 (92f:46076)}},
}

\bib{SuWi2003}{article}{
      author={Sunder, V.~S.},
      author={Wildberger, N.~J.},
       title={Actions of finite hypergroups},
        date={2003},
        ISSN={0925-9899},
     journal={J. Algebraic Combin.},
      volume={18},
      number={2},
       pages={135\ndash 151},
         url={http://dx.doi.org/10.1023/A:1025107014451},
      review={\MR{2002621}},
}

\bib{ScWa1986}{article}{
      author={Schellekens, A.~N.},
      author={Warner, N.~P.},
       title={Conformal subalgebras of {K}ac-{M}oody algebras},
        date={1986},
        ISSN={0556-2821},
     journal={Phys. Rev. D (3)},
      volume={34},
      number={10},
       pages={3092\ndash 3096},
         url={http://dx.doi.org/10.1103/PhysRevD.34.3092},
      review={\MR{867023}},
}

\bib{Ta3}{book}{
      author={Takesaki, Masamichi},
       title={{Theory of operator algebras. {III}}},
      series={{Encyclopaedia of Mathematical Sciences}},
   publisher={Springer-Verlag},
     address={Berlin},
        date={2003},
      volume={127},
        ISBN={3-540-42913-1},
        note={Operator Algebras and Non-commutative Geometry, 8},
      review={\MR{1943007 (2004g:46080)}},
}

\bib{Ta1972}{article}{
      author={Takesaki, Masamichi},
       title={Conditional expectations in von {N}eumann algebras},
        date={1972},
     journal={J. Functional Analysis},
      volume={9},
       pages={306\ndash 321},
      review={\MR{0303307}},
}

\bib{TaYa1998}{article}{
      author={Tambara, Daisuke},
      author={Yamagami, Shigeru},
       title={Tensor categories with fusion rules of self-duality for finite
  abelian groups},
        date={1998},
        ISSN={0021-8693},
     journal={J. Algebra},
      volume={209},
      number={2},
       pages={692\ndash 707},
         url={http://dx.doi.org/10.1006/jabr.1998.7558},
      review={\MR{1659954}},
}

\bib{Wa1990}{article}{
      author={Wassermann, A.},
       title={{Subfactors arising from positive energy representations of some
  infinite dimensional groups}},
        date={1990},
     journal={unpublished notes},
}

\bib{Wi1997}{article}{
      author={Wildberger, N.~J.},
       title={Duality and entropy of finite commutative hypergroups and fusion
  rule algebras},
        date={1997},
        ISSN={0024-6107},
     journal={J. London Math. Soc. (2)},
      volume={56},
      number={2},
       pages={275\ndash 291},
         url={http://dx.doi.org/10.1112/S0024610797005401},
      review={\MR{1489137}},
}

\bib{Xu2000-2}{article}{
      author={Xu, Feng},
       title={{Algebraic orbifold conformal field theories}},
        date={2000},
     journal={Proc. Nat. Acad. Sci. U.S.A.},
      volume={97},
      number={26},
       pages={14069},
      eprint={arXiv:math/0004150v1 [math.QA]},
}

\bib{Xu2005}{article}{
      author={Xu, Feng},
       title={{Strong additivity and conformal nets}},
        date={2005},
        ISSN={0030-8730},
     journal={Pacific J. Math.},
      volume={221},
      number={1},
       pages={167–199},
         url={http://dx.doi.org/10.2140/pjm.2005.221.167},
      review={\MR{2194151 (2007b:81132)}},
}

\bib{Xu2014}{article}{
      author={Xu, Feng},
       title={On intermediate conformal nets},
        date={2014},
        ISSN={0075-4102},
     journal={J. Reine Angew. Math.},
      volume={692},
       pages={125\ndash 151},
      review={\MR{3274549}},
}

\bib{Xu1998-2}{article}{
      author={Xu, Feng},
       title={Applications of braided endomorphisms from conformal inclusions},
        date={1998},
        ISSN={1073-7928},
     journal={Internat. Math. Res. Notices},
      number={1},
       pages={5\ndash 23},
         url={http://dx.doi.org/10.1155/S1073792898000026},
      review={\MR{1601870}},
}

\bib{Xu1998}{article}{
      author={Xu, Feng},
       title={New braided endomorphisms from conformal inclusions},
        date={1998},
        ISSN={0010-3616},
     journal={Comm. Math. Phys.},
      volume={192},
      number={2},
       pages={349\ndash 403},
         url={http://dx.doi.org/10.1007/s002200050302},
      review={\MR{1617550}},
}

\end{biblist}
\end{bibdiv}
\address

\end{document}